\newcommand{\1}{{\rm 1\hspace*{-0.4ex}%
\rule{0.1ex}{1.52ex}\hspace*{0.2ex}}}
\newtheorem{lemma}{Lemma}
\newtheorem{theorem}{Theorem}
\newtheorem{conjecture}{Conjecture}
\newtheorem{proposition}{Proposition}
\renewcommand{\b}[1]{\mathbf{#1}}
\renewcommand{\c}[1]{\mathcal{#1}}
\renewcommand{\r}[1]{\mathrm{#1}}
\newcommand{\s}[1]{\mathsf{#1}}
\newcommand{\Rl}{\mathbbm{R}}
\newcommand{\idty}{\1}
\DeclareMathOperator{\id}{id}
\DeclareMathOperator*{\tr}{Tr}
\DeclareMathOperator*{\map}{map}
\newcommand{\<}{\langle}
\renewcommand{\>}{\rangle}
\providecommand{\abs}[1]{|#1|}
\providecommand{\norm}[1]{\Vert #1 \Vert}
\title{Entropy of quantum channel in the theory of quantum information}
\author{Wojciech Roga}
\date{\today}
\begin{document}

\begin{center}
{\Large\textbf{Entropy of quantum channel in the theory of quantum information}}\\[10pt]
 (PhD Thesis)\\[10pt]

{Wojciech Roga}\\
Instytut Fizyki im.~Smoluchowskiego, Uniwersytet Jagiello{\'n}ski, PL-30-059 Krak{\'o}w, Poland\\
wojciech.roga@uj.edu.pl\\[20pt]

\end{center}

\abstract{
Quantum channels, also called  quantum operations, 
are linear, trace preserving and completely positive transformations 
in the space of quantum states. 
Such operations describe discrete time evolution of an open 
quantum system interacting with an  environment.
The thesis contains an analysis of properties of quantum channels 
and different entropies used to quantify the decoherence introduced into the system by a given operation.

Part \ref{partone} of the thesis provides a general introduction to the subject. 
In Part \ref{parttwo}, the action of a quantum channel is treated
as a process of preparation of a quantum ensemble. 
The Holevo information associated with this ensemble is shown to be bounded by
the entropy exchanged during the preparation process between the initial state and the environment.
A relation between the Holevo information and the entropy of  
  an auxiliary  matrix consisting of square root fidelities between the elements of the ensemble is proved in some special cases. 
Weaker bounds on the Holevo information are also established.

The entropy of a channel, also called the map entropy, is defined as the entropy of the state 
corresponding to the channel by the Jamio{\l}kowski isomorphism.
In Part \ref{partthree} of the thesis, the additivity of the entropy of a channel is proved.
The minimal output entropy, which is difficult  to compute, is estimated by
an entropy of a channel which is much easier to obtain. 
A class of quantum channels is specified, for which  additivity of channel capacity
is conjectured.

The last part of the thesis contains characterization of 
Davies channels, which correspond to an interaction of a state
with a thermal reservoir in the week coupling limit, under the condition
of quantum detailed balance and independence of rotational and dissipative evolutions.
The Davies channels are characterized for one--qubit and one--qutrit systems.
}%\addcontentsline{toc}{chapter}{Abstract}

\newpage
 \medskip\medskip\medskip
 {\Large \textbf{Acknowledgements}}%\addcontentsline{toc}{chapter}{Acknowledgements}
 \medskip\medskip\medskip
 
 I would sincerely like to thank my supervisor
 Professor Karol \.Zyczkowski for motivation and support 
 in all the time of research and writing of this thesis.
 I would like to express my gratitude to Professor Mark Fannes
 for working together on diverse exciting projects. 
 Special thanks to my fellow-worker and friend Fernando de Melo.
 It is also pleasure to thank Professor Ryszard Horodecki, Professor Pawe\l{} Horodecki,
 Professor Micha\l{} Horodecki and Professor Robert Alicki for many
 opportunities to visit National Quantum Information Centre of Gda\'{n}sk 
 and helpful discussions.
 I would like to show my special gratitude to Professor Tomasz Dohnalik and Professor Jakub Zakrzewski from
 the Atomic Optics Department for the support and trust in me.
 I would like to thank my colleagues Piotr Gawron, Zbigniew Pucha\l{}a, Jaros\l{}aw Miszczak,
 Wojciech Bruzda, \L{}ukasz Skowronek and Marek Smaczy\'{n}ski for fruitful collaboration. 
 
 \newpage

\newpage

\tableofcontents

\newpage

\part{Introduction}\label{partone}
\section{Preliminary information}
\subsection{Preface}
It is not easy to give a satisfactory definition 
of information in sense in which this word is used in everyday life. 
For instance one could ask, how much information is contained in an allegorical baroque painting of Vermeer. 
There exist, of course, many interpretations and therefore, many kinds of information concerning this picture.
%{or in the poetry based on many associations which can be understood for one and completely lost for the other}
However, nowadays we are willing to distinguish some sort of information 
necessary to communicate a message 
independently on the interpretation. 
Due to our experience with computers we are used to problems 
how to encode the information into a string of digital symbols, 
transmit it and decode  it in order to obtain  the original message in another place. 
Imagine that we need to send the information contained in the Vermeer's picture. 
We have to encode it into digital data, transmit it 
to the other place and recover the picture on the screen of the receiver's computer. 
In a sense we send almost all the information without  knowing what interpretations it may carry. 

The problem rises what is the minimal amount of information 
measured in binary digits that enable the receiver to reliably recover the original message.  
In considered example we can divide the image of the Vermeer's picture into small pieces, 
decode colours of each piece into digital strings and transmit the description of colours one after another. 
However, we can also save some amount of digits when we menage  
to describe shapes of regions of the same colours in the picture 
and send only information about colours, shapes and patterns.  
How to do that in the most efficient way? This is a major  problem for experts 
 working on the information theory and computer graphics.
Some rules of the optimal coding were used intuitively during construction of the Morse alphabet. 
The letters which occur in the  English language more frequently are encoded by a smaller amount of symbols. 

In communication and computer sciences the problem 
of data compression is a subject of a great importance. 
To what extend the data can be compressed to still remain useful? 
Claude Shannon worked 
on the problem of transmission of messages through telecommunication channels. 
In 1958 he published his famous paper \cite{Shannon} opening the new 
branch of knowledge known as the theory of information. 
In this theory a message is composed of letters occurring with specified frequencies related to
probabilities. Every letter of a message can be encoded as a string of 
digital units. Every digital unit can appear in one of $r$ possible configurations.
Shannon found what is the minimal average 
amount of digital units per symbol which encodes a given message. 
This smallest average number of digital units is related to the information contained in the message 
and is characterized by a function of the probability distribution $P=\{p_1,...,p_n\}$ of letters, now called
the {\it Shannon entropy},
\begin{equation}
H(P)=-\sum_{i=1}^n p_i\log_r p_i,
 \label{shannon}
\end{equation}
where $0\log_r0\equiv 0$, $n$ is a number of letters and
the base of the logarithm $r$ characterizing the amount of configurations of
a chosen digital unit can be chosen arbitrary.
If the base is equal to $2$, the unit of entropy is called {\it binary unit} or {\it bit}.

%
%The theory of information given by Shannon provides a solution 
%to the problem of the most efficient coding of the letters occurring in 
%a message in terms of strings of binary digits (bits).
The idea of efficient coding concerns in replacing more frequent letters
by means of a smaller amount of bits.
%Shannon asked the question, what is the shortest average string of bits per letter
%possible to encode reliably a given message.
Shannon treated the message as a sequence of letters generated independently 
according to the probability distribution $P$ specified for a given language.
The original reasoning of Shannon proceeds as follows.
There are so many possible messages as the amount of
 typical sequences of letters with a  given probability distribution in the string of length $k\rightarrow\infty$.
Atypical sequences such as strings of letters $a$ repeated $k$
times are unlikely and are not taken into account.
The amount of possible messages is given
by the amount of typical sequences, which is of order of  $2^{kH(P)}$
if the base of the logarithm is equal to 2. 
This number is justified by methods of combinatorics.
Hence, every typical message of length $k$ can be represented by a
string of bits of size $kH(P)$. Therefore, the entropy
$H(P)$ can be interpreted as the smallest average amount
of bits per letter needed to reliably encode each typical message.

The information theory treats, as well, the information as a measure of uncertainty about the outcome of a random experiment.
Looking for a function which is suitable as a measure 
of the uncertainty about the concrete result of experiment, provided the 
probabilities of all experimental outcomes are given, Shannon formulated
a few postulates for the information measure \cite{Shannon}:
\begin{itemize}
\item It is a continues function of the probability distribution.
\item If all events are equally likely the function of uncertainty is an increasing function of their number.
\item If one of the events is split into two, the new function of uncertainty is equal to the sum of the original uncertainty
and the uncertainty of the new division weighted by the probability of the divided event.
%[str 10, pomyslec jak to ladnie ujac]
\end{itemize}
The only function which satisfies these postulates is the Shannon entropy $H(P)$.
Therefore, the uncertainty or lack of information on the outcome of 
an experiment is the second interpretation of the entropy $H(P)$.

Taking a  weaker set of axioms allows one to generalize the definition of the measure of uncertainty and 
to find other functions of probability vector $P$, which in special case converge to
the Shannon entropy (\ref{shannon}). 
For instance, R\'{e}nyi introduced one parameter family of generalized entropy functions. 
Since,  
the information of an experiment consisting of two independent experiments should
be given by the sum of the information gained in both experiments,
the measure of information should be additive. 
The Shannon entropy of the joint probability distribution of two independent variables 
is additive. 
R\'{e}nyi noticed \cite{Renyi102} that the additivity of information 
is not equivalent to the third postulate of Shannon. However, 
if one replaces the third postulate by additivity of information of independent events, yet
another axiom should be postulated to obtain back the Shannon's formula (\ref{shannon}).
This additional postulate specifies the way of calculating the mean values.
% use the notion of non--complete probability distribution and
% states that the uncertainty of union of two noncomplete distribution is the mean
% value of uncertainty about each partial probability distrubution with weights 
% taken as the sum of probabilities in each non--complete distribution.
If one considers the linear mean, the Shannon entropy is singled out by this 
set of postulates.
However, other definition of the mean value also can be taken.
%(defined by function $g_{\alpha}(x)=2^{\alpha-1}x$) [jak to]
In consequence, the new set of postulates implies an one parameter family of generalized entropy functions known as the {\it R\'{e}nyi entropy} of order $q$:
\begin{equation}
H_q(P)=\frac{1}{1-q}\log\sum_{i=1}^np_i^q.
\end{equation}
Here, $q$ denotes the free parameter depending on the definition of the average.
%The Hungarian investigator stated the Shannon formula as too week 
%to cover all possible applications of information theory [zajrzec do artykulu oryginalnego 
%I sprobowac uzasadnic, zrodlo → cytowane na pierwszej stronie u ingardena]. 
Another generalization of entropy function was analysed by Tsallis \cite{tsallis102, tsallis103}. 
The Tsallis entropy of order $q$ is defined as follows,
\begin{equation}
T_q=\frac{1}{q-1}(1-\sum_i^n p_i^q).
\end{equation}

%The theory of information is sometimes described to be more general than probability theory, 
%probability theory can be considered as a branch of information theory, 
%and not vice versa, as has been held hitherto [co to znaczy hitherto]. 
%The authors show how to derive probability by means of information theorem. 
%They suggest that it is similar situation as if considered that the probability 
%distributions can be derived from expected values by means of momentum analysis. 
%Indeed, information seems intuitively a much simpler and more elementary notion than that of probability. 
%It gives a cruder and global description of some situations physical or other than probability does. 
%Therefore, information represents a more primary step of knowledge  
%than that of cognition of probabilities 
%(just as probability description is cruder and more global than deterministic description) 
%[Ingarden Urbanik, Colloquium Mathematicum vol. IX 131 (1962)].

Hence the information theory concerns entropies, however,
it also investigates  communication sources and communication channels
which can introduce some errors to messages.
Information theory defines such quantities as the {\it relative entropy} and the {\it mutual information} \cite{Shannon}. 
Using these concepts the {\it channel capacity} is defined. It is the maximal rate of 
 information which can be reliably decoded after passing through the
channel. The channel capacity is measured in bits per a unit of time.

The theory of quantum information, 
which considers quantum systems as carriers of information,
 should enable one to generalize the notions 
of classical information theory such as the channel capacity.  
To  formulate a quantum counterpart of the Shannon 
concepts such as the relative entropy or channel capacity
the theory of open quantum systems,  
quantum statistical processes, statistical operators, density matrices, 
partial traces and generalized measurements should be applied. 
In the early stage of the stochastic theory of open quantum systems, it  was developed by 
Davies \cite{daviesstoch},  and Kossakowski \cite{kossakowski}. 
Moreover, other important results on 
accessible information transmitted 
through a noisy quantum channel were obtained by Holevo  \cite{holevo}.

%The theory of quantum information tries to use quantum sources to encode, 
%manipulate and send the information to the receiver in the most reliable way. 
There are many advantages of using quantum resources to transform and 
transmit the information \cite{nielsen}. 
In particular, there exist a famous protocols  of superdense coding \cite{bennett} of information into a quantum carrier. 
Furthermore, some computational problems can be solved in framework of the quantum information processing faster than classically 
\cite{deutsh,grover,shorfactor}. 
Quantum world gives also new communication protocols like 
quantum teleportation \cite{bennett,hortele} which is possible due to quantum entanglement   \cite{horo4,KZ}. 
In quantum case, entangled states 
can increase the joint capacity of two channels 
with respect
to the sum of the two capacities \cite{hastings,horo,kingostatni}.
Also a new branch of cryptography was developed due to the quantum theory \cite{bennettbrassard}.
Although, these new possibilities are promising, manipulation of quantum resources is difficult in practice. 
In particular, the quantum states carrying the information are very sensible to noise, which can completely destroy 
 quantum information. 
Moreover, probabilistic nature of quantum theory does not 
allow us to extract uniquely the information from quantum sources.
Many restrictions and laws of quantum
information theory
are formulated in terms of inequalities of quantum entropies.
The most significant quantum entropy is the one defined by von Neumann \cite{wehrl,ruskaijakies},
which is a counterpart of the Shannon entropy.
However, the other quantum entropies such like the R\'{e}nyi entropy \cite{Renyi102}
or Tsallis entropy are also considered \cite{tsallis102,tsallis103,Renyitsallis}.

% {definition from the paper of Caves PRE 47 4010 
% [tu jest tez o fizycznym znaczeniu entropii] 
% Information – The length (in bits) of the shortest program on a universeal 
% computer that can generate a complete description of the state of interest. 
% [to jest wlasciwie definicja tzw informacji algorytmicznej.]}

%\chapter{Preface and structure of the theses}

The issue of transmitting  a classical information
through a noisy channel is an important issue in the information
theory \cite{Shannon,shannon2,hartley}.
% since its begins in the mid 50's of XX century up
%to these days. 
Among problems concerning the information channels
one can specify the following questions: 
How to encode the
information  in order to transmit it reliably through the channel in the most efficient way \cite{Shannon,schumacher}?
What is the maximal rate of
the information transmission? What is the capacity
of a given communication channel \cite{desurvire,holevocapa,schumwest,szuma}? Which states are the
most resistant to errors occurring in the a channel \cite{fuchs,hayashi}?
What are the efficient strategies of the error correction \cite{knill}?

Similar questions can also be formulated  in the framework of quantum 
 information theory.
The quantum channels, 
also called {\it quantum operations},
are transformations 
in the set of states \cite{choi,jam,kraus2,depillis}.
They describe evolution 
of an open quantum system interacting with
an environment in discrete time.

% Quantum information theory 
% provides more possibilities than the classical theory
% in the field of communication channels.
% For instance, in quantum case, entangled states \cite{horo4}, \cite{KZ} 
% can increase the joint capacity of two channels 
% with respect
% to the sum of the two capacities \cite{hastings}, \cite{horo}, \cite{kingostatni}.
% Another phenomenon of quantum mechanics is 
% a quantum teleportation \cite{hortele}, \cite{bennett}.
% It provides a new communication channel through  entangled particles.

The set of all quantum channels 
is still not completely understood. 
Merely the set of one--qubit channels
is satisfactory explored \cite{fujivara,ruskaiszarek}. 
%Up to now only the set of one--qubit quantum channel
%is well described. 
However, even in this simplest case
some interesting problems are open.
For instance, it is not clear, whether the capacity of one--qubit channels is additive \cite{kingostatni}.
Another approach to  quantum channels
suggests to analyse only certain special classes of them, motivated by some  
physical models \cite{shor,amosov,king2,ruskai,lindbl}.
 
Quantum channels are also formally related to 
measurement processes in quantum theory \cite{kraus2,breuer}. 
As a measurement process changes the quantum
state and in general cannot perfectly
distinguish measured states, there is a fundamental
restriction on the information which can be obtained from
the message encoded into quantum states \cite{holevo}. 
These restrictions are also formulated in terms
of entropies. 

The different aspects of quantum channels
mentioned above suggest that entropies which characterize the channels
play an important role in the information theory.
This thesis is devoted to investigation of
quantum channels and some entropies used to characterize them:
the minimal output entropy \cite{shor,kingostatni}, the map entropy \cite{roga0,verstaete,ziman}
and the exchange entropy \cite{szuma}.

\subsection{Structure of the thesis}

The thesis is partially
based on results already published in articles \cite{roga0,roga,roga2,roga3,roga5,roga4}, which are
enclosed at the end of the thesis.
In a few cases some issues from these papers are discussed here in a more detailed manner. 
The thesis contains also some new, unpublished results and technical considerations not included in the published articles.

The structure of the thesis is the following. 
The thesis is
divided into three parts. The first part is
mostly introductory and contains a short review of the literature. 
This part provides basic 
information useful in the other parts of the thesis and fixes 
notation used in the entire work.
In part \ref{partone} only the result from 
Section \ref{sectioncomp} concerning the Kraus representation
of a  complementary channels 
and Section \ref{caneleone} on the Kraus operators constructed for
an ensemble of states
are obtained by the author.

Part \ref{parttwo} contains results 
based on papers \cite{roga,roga5,roga0},
not known before in the literature.
However, some results not published previously are also analysed there.

Chapter \ref{sec:prl} contains the most important result of the
thesis --  the inequality between  the Holevo information 
related to an ensemble of quantum states and the entropy of
the state of environment taking part  in preparation of the ensemble.
As the entropy of the environment can be 
treated equivalently as the entropy of an output of the complementary channel,
or the entropy of a correlation matrix, or the entropy of a Gram matrix 
of purifications of mixed states, or as the entropy exchange, this relation
might be considered as a new and universal
result in the theory of quantum information. 
%As the proof is relatively easy, some form of the 
%inequality between the Holevo quantity and the entropy of an
%environment could occur in literature independently (see discussion
%in Chapter \ref{sec:prl}). However, this inequality has not been 
%announced as a fundamental entropic inequality in 
%quantum information theorem before the article \cite{roga}.
Consequences of this inequality  
have not been analysed so far.
Chapter \ref{sec:prl} contains also the discussion of the particular cases for which
the inequality is saturated. This result has not 
been published before. 
Section \ref{los} describes proofs of known entropic inequalities
which are related to the bound on the Holevo quantity. Some new and unpublished consequences of 
these inequalities are presented in Section \ref{consek}.
Original, new results are also contained in Sections \ref{linneusz} and
\ref{relat}.
%Also some not published results are presented in Section ,
%where relation of our inequalities to the Lindblad inequality
%and to the inequality from earlier paper \cite{roga0}
%are discussed. 

Part \ref{parttwo} contains, moreover, 
the conjecture on the inequality between the Holevo information
of a quantum ensemble and the entropy of the matrix of square root of fidelities.
Several weaker inequalities are analysed here in a greater detail than it was done in \cite{roga5}. 
Section \ref{sadelko} presents a confirmation of the conjecture for a special
class of ensembles of quantum states.

Part \ref{partthree} of the thesis is based on the
results presented in \cite{roga3,roga2}.
Article \cite{roga3} described partially in Chapter \ref{sec:ent}
considers the relation between minimal output entropy and
the map entropy.
Section \ref{depol} contains a proof of additivity of 
the map entropy with respect to the tensor product of two maps, already published in our work \cite{roga3}.
These results allow us to specify a class of quantum channels for which
 additivity of the minimal output entropy is conjectured.

The Davies maps acting on one--qubit and one--qutrit quantum systems
are analysed in Chapter \ref{bida}.
Conditions for the matrix entries of a quantum operation
representing a Davies map are given along the lines formulated in our work \cite{roga2}.
Multiplicativity of the maximal output norm of one--qubit Davies maps, 
entirely based on the analogical proof for 
bistochastic maps \cite{king}, is presented in Section \ref{multipy}.
However, this result cannot be treated as a new one, since
multiplicativity of the maximal output two norm was proved earlier for all one--qubit quantum channels \cite{kingostatni}. 
Section \ref{qut} contains graphical representations 
of stochastic matrices of order three which correspond to quantum Davies maps,
which has not been 
published yet.

%The last chapter of Part \ref{partthree}
%contains discussion on the random 
%quantum channels based on the paper \cite{roga4}.
%It contain the new results of numerical tests
%which justifies the description of the spectral properties
%of random quantum operations using methods
%of random matrix theory. The numerical
%tests concerning the spectral properties
%of products of the same random quantum operations
%has not been published before.

%One of the theses' aim was to  make
%review of literature on the considered
%subjects. This review, however,
%is not contained in the separate chapter,
%but occurs during entire work in 
%introductory parts of each chapters.

\subsection{A short introduction to quantum mechanics}

The formalism of quantum mechanics
can be derived from a few postulates (axioms) which
are justified by experiments. The set of axioms defining the
quantum theory differs depending on the author \cite{agarval}.
However, some features occur common in
every formulation, either as axioms or as their consequences.
One of such key features is the {\it superposition principle}.
It is justified by several experimental data as 
interference pattern in double slit experiment with 
 electrons or interference of a single photon in 
the Mach--Zender interferometer \cite{hong}.
The superposition principle states that
the state of a quantum system, 
which is denoted in Dirac notation by $|\psi\>$,
 can be represented by a coherent combination
of several states $|\psi_i\>$ with complex coefficients $a_i$,
\begin{equation}
|\psi\>=\sum_ia_i|\psi_i\>.
\label{lada}
\end{equation}
The quantum state $|\psi\>$ of an $N$ level system is represented by
a vector from the complex Hilbert space $\c H_N$. The inner product
$\<\psi_i|\psi\>$ defines the coefficients $a_i$ in (\ref{lada}).
The square norm of $a_i$ is interpreted as the probability
that the system described by $|\psi\>$ is in the state $|\psi_i\>$.
To provide a proper probabilistic interpretation a 
vector used in quantum mechanics is normalized by the condition $\<\psi|\psi\>=||\psi||^2=\sum_i|a_i|^2=1$.

Quantum mechanics is a probabilistic theory.
One single measurement does not provide much
information on the prepared system.
However, several measurements on identically prepared
quantum systems allow one to characterize
the quantum state. 

A physical quantity is represented
by a linear operator called an {\it observable}. 
An observable $A$ is a Hermitian operator, $A=A^{\dagger}$, 
which can be constructed by a set of 
real numbers $\lambda_i$ 
(allowed values of the physical quantity) 
and a set of states $|\varphi_i\>$ determined by the measurement, $A=\sum_i\lambda_i|\varphi_i\>\<\varphi_i|$.
The physical value corresponds to the average of the observable in
the state $|\psi\>$,
\begin{equation}
\<A\>_{\psi}=\sum_i\lambda_i|\<\psi|\varphi_i\>|^2=\<\psi|A|\psi\>.
\end{equation}
One can consider
the situation in which a state $|\psi\>$ is not known exactly. Only
 a statistical mixture of several quantum states $|\phi_i\>$
which occur with probabilities $p_i$ is given.
In this case the average value of an observable has the form
\begin{equation}
\<A\>_{\{p_i,\phi_i\}}=\sum_ip_i\<\phi_i|A|\phi_i\>,
\end{equation}
which can be written in terms of an operator on $\c H_N$ called a {\it density matrix} 
$\rho=\sum_ip_i|\phi_i\>\<\phi_i|$
as
\begin{equation}
\<A\>_{\{p_i,\phi_i\}}=\tr \rho A.
\end{equation}
A density matrix describes a so called {\it mixed state}.
In a specific basis the density matrices characterizing an $N$ level quantum system
 are represented by 
$N\times N$ matrices $\rho$ which are Hermitian, have trace equal to unity and are positive.
Let us denote the set of all such matrices  by $\c M_N$, 
\begin{equation}
\c M_N=\{\rho: {\rm dim}\rho=N, \rho=\rho^{\dagger}, \rho\geq 0, \tr\rho=1 \}.
\end{equation}
This set is convex. Extremal points of this set are formed by 
projectors of the form $|\psi\>\<\psi|$ called {\it pure states}, 
which correspond to vectors $|\psi\>$ of the Hilbert space. 

The state of composed quantum system which consists of 
one $N_1$--level system and one $N_2$--level system
is represented by a vector of size $N_1N_2$ from 
the Hilbert space which has a tensor product structure,
$\c H_{N_1N_2}=\c H_{N_1}\otimes\c H_{N_2}$.
Such a space contains also states which cannot be 
written as tensor products 
of vectors from separate spaces,
\begin{equation}
|\psi_{12}\>\neq|\psi_1\>\otimes|\psi_2\>.
\label{spaltane}
\end{equation}
and are called {\it entangled states}.
States with a tensor product structure are called {\it product states}.
If the state of only one subsystem is considered one
has to take an average over the second subsystem.
Such an operation is realized by taking the partial trace over 
the second subsystem and leads to 
a reduced density matrix,
\begin{equation}
\rho_1={\rm Tr}_{2}\rho_{12}.
\end{equation}
A density matrix describes therefore the state of an {\it open quantum system}.

The evolution of a normalized vector in the Hilbert space is
determined by a unitary operator $|\psi'\>=U|\psi\>$. 
The transformation $U$ is related to Hamiltonian evolution
due to the Schr\"{o}dinger equation,
\begin{equation}
i\hbar\frac{d}{dt}|\psi\>=H|\psi\>,
\label{zbik}
\end{equation}
where $H$ denotes the Hamiltonian operator of the system, while $t$ 
represents time and $2\pi\hbar$ is the Planck constant.
A discrete time evolution of an open quantum system characterized by a density operator $\rho$ is described 
by a {\it quantum operation} which will be considered in Chapter \ref{trombalski}.

%Repeating measurements of some physical quantity, for instance spin,
%on quantum systems identically prepared 
%can give different outcomes every time.
%In the classical physics such situation means
%that the prepared system is the statistical mixture
%of many systems of different values of measuring quantities.
%However, the quantum theorem  allows for
%a different type of mixture.
%The quantum system can be prepared in a state which is
%a superposition of many different states.
%The superposition can be distinguished from
%the statistical mixture by the fact that
%at least in principle, there exist such measurement 
%of the physical quantity which gives the same output value
%every time while the measurement is repeated. 

According to a general approach to quantum measurement \cite{kraus2,daviesmierzy},
it can be defined by a set of $k$ operators $\{E^i\}_{i=1}^k$ forming a {\it positive operator valued measure} (POVM).
The index $i$ is related to a possible measurement result, for instance 
the value of the measured quantity. The operators $E^i$ are positive
and satisfy the identity resolution,
\begin{equation}
\sum_{i=1}^kE^i=\idty.
\label{kroliczek}
\end{equation}
The quantum state is changing during the measurement process.
After the measurement process that gives the outcome $i$ as a result, 
the quantum state $\rho$ is transformed into
\begin{equation}
\rho'_i=K^i\rho K^{i\dagger}/\tr(K^i\rho K^{i\dagger}),
\end{equation}
where $K^{i\dagger}K^i=E^i\geq 0$.
The probability $p_i$ of the outcome $i$ is given by
$p_i=\tr(K^i\rho K^{i\dagger})$.
Due to relation (\ref{kroliczek}), the probabilities of all outcomes
sum up to unity.

A quantum state characterizing a $2$--level system
is called {\it qubit} and its properties are discussed in more
detail in Section \ref{richibruk}.

\subsection{Schmidt decomposition}

The theorem known as {\it Schmidt decomposition} \cite{schmidt}
provides a useful representation of a pure state of a bi--partite quantum system.
\begin{theorem}[Schmidt]
Any quantum state $|\psi_{12}\>$ from the Hilbert space composed 
of the tensor product of two Hilbert spaces $\c H_{1}\otimes \c H_2$ of dimensions 
$d_1$ and $d_2$, respectively, can be represented
as 
\begin{equation}
|\psi_{12}\>=\sum_{i=1}^{d}\lambda_i|i_1\>\otimes|i_2\>,
\end{equation}
where $\{|i_1\>\}_{i=1}^{d_1}$ and $\{|i_2\>\}_{i=1}^{d_2}$ are orthogonal basis of the Hilbert spaces
$ \c H_{1}$ and $ \c H_{2}$ respectively, and $d=\min\{d_1,d_2\}$.
\end{theorem}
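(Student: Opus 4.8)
The plan is to prove the Schmidt decomposition by diagonalizing one of the reduced density matrices and transferring the resulting eigenbasis structure onto the other factor. I would begin by fixing arbitrary orthonormal bases $\{|e_i\>\}_{i=1}^{d_1}$ of $\c H_1$ and $\{|f_j\>\}_{j=1}^{d_2}$ of $\c H_2$ and expanding the state as $|\psi_{12}\>=\sum_{i,j}c_{ij}|e_i\>\otimes|f_j\>$ for a complex coefficient matrix $C=(c_{ij})$ of size $d_1\times d_2$. The entire theorem then becomes a linear-algebra statement about the matrix $C$: I want to show that after suitable unitary changes of basis on each side, $C$ becomes diagonal with nonnegative entries.

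The key tool is the singular value decomposition of $C$. Writing $C=U\Sigma V^{\dagger}$ with $U$ a $d_1\times d_1$ unitary, $V$ a $d_2\times d_2$ unitary, and $\Sigma$ a $d_1\times d_2$ matrix whose only nonzero entries are the nonnegative singular values $\lambda_1,\dots,\lambda_d$ on the diagonal (where $d=\min\{d_1,d_2\}$), I would substitute $c_{ij}=\sum_{k}U_{ik}\lambda_k (V^{\dagger})_{kj}$ back into the expansion. Defining the new vectors $|i_1\>=\sum_i U_{ik}|e_i\>$ and $|i_2\>=\sum_j \overline{V_{jk}}|f_j\>$, the unitarity of $U$ and $V$ guarantees that $\{|i_1\>\}$ and $\{|i_2\>\}$ are orthonormal sets, and the sum collapses to $|\psi_{12}\>=\sum_{k=1}^{d}\lambda_k|k_1\>\otimes|k_2\>$, which is exactly the claimed form.

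Alternatively, to keep the argument closer to the physics of the thesis, I could avoid invoking the SVD as a black box and instead construct it: form the reduced density matrix $\rho_1=\tr_2|\psi_{12}\>\<\psi_{12}|=CC^{\dagger}$, which is Hermitian and positive, diagonalize it as $\rho_1=\sum_k \lambda_k^2 |k_1\>\<k_1|$ with orthonormal eigenvectors $|k_1\>$ and eigenvalues $\lambda_k^2\geq 0$, and then for each $k$ with $\lambda_k>0$ define $|k_2\>=\lambda_k^{-1}\<k_1|\psi_{12}\>$ (a partial inner product over the first factor). A short computation using the eigenvalue equation shows the resulting $|k_2\>$ are orthonormal, and reassembling $|\psi_{12}\>$ in terms of these vectors yields the decomposition.

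The only delicate point, and the one I would be careful about, is the degenerate and rank-deficient case: when some singular values vanish or coincide the eigenvectors are not unique, and in the second approach the vectors $|k_2\>$ are only well defined for the strictly positive $\lambda_k$. This requires a remark that the zero-eigenvalue directions can be padded with any orthonormal completion without affecting the sum, and that the number of nonzero terms (the Schmidt rank) is at most $d=\min\{d_1,d_2\}$ since $\rho_1$ and $\rho_2$ share the same nonzero spectrum. Everything else is routine bookkeeping, so I expect the bulk of the write-up to be verifying orthonormality of the constructed bases rather than any conceptual difficulty.
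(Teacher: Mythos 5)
Your main argument---expanding $|\psi_{12}\>$ in arbitrary product bases and applying the singular value decomposition $C=U\Sigma V^{\dagger}$ to the coefficient matrix, with the unitaries defining the new orthonormal bases---is exactly the proof given in the paper, and your version is in fact more careful (the paper is silent on the rank-deficient case and on why the new vectors are orthonormal, which you address explicitly). Your alternative construction via diagonalizing $\rho_1=CC^{\dagger}$ is a correct, self-contained route the paper does not take, but since your primary argument already matches the paper's, no further comparison is needed.
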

\begin{proof}
Choose any orthogonal basis $\{|\phi^k_1\>\}_{k=1}^{d_1}$ of  $ \c H_{1}$ 
and any orthogonal basis  $\{|\phi^j_2\>\}_{j=1}^{d_2}$ of  $ \c H_{2}$.
In this product basis, the bi--partite state  $|\psi_{12}\>$ reads
\begin{equation}
|\psi_{12}\>=\sum_{0\leq k\leq d_1,\ 0\leq j\leq d_2}a_{kj}|\phi^k_1\>\otimes|\phi^j_2\>.
\end{equation}
Singular value decomposition of a matrix $A$ of size $d_1\times d_2$ with entries $a_{kj}$ 
gives $a_{kj}=\sum_i u_{ki}\lambda_i v_{ij}$. Here $u_{ki}$ and $v_{ij}$ are
entries of two unitary matrices,
while $\lambda_i$ are singular values of $A$.
Summation over indexes $k$ and $j$ cause changes of two orthogonal bases
into
\begin{eqnarray}
|i_1\>=\sum_k u_{ki}|\phi^k_1\>,\\
|i_2\>=\sum_j v_{ij}|\phi^j_2\>.
\end{eqnarray}
The number o nonzero singular values is not larger than the smaller one  of
the numbers $(d_1,d_2)$. 
\end{proof}
The Schmidt decomposition 
implies that both partial traces of any bi--partite pure state
have the same nonzero part of the spectrum:
\begin{eqnarray}
{\rm Tr}_1|\psi_{12}\>\<\psi_{12}|=\sum_{i=1}^d\lambda_i^2|i_2\>\<i_{2}|,\\
{\rm Tr}_2|\psi_{12}\>\<\psi_{12}|=\sum_{i=1}^d\lambda_i^2|i_1\>\<i_{1}|.
\end{eqnarray}
The Schmidt coefficients $\lambda_i$ are invariant under
local unitary transformations $U_1\otimes U_2$ applied to $|\psi_{12}\>$.
The number of non--zero coefficients $\lambda_i$ is called the {\it Schmidt number}.
Any pure state which has the Schmidt number greater than 1 is called {\it entangled state}.
A pure state for which all Schmidt coefficients $\lambda_i$ are equal to $1/\sqrt{d}$ is
called a {\it maximally entangled state}.
 
Another important consequence of the Schmidt decomposition is
that for any mixed state $\rho$ there is a pure state $|\psi\>$
of a higher dimensional Hilbert space such that 
 $\rho$ can be obtained by taking the partial trace,
\begin{equation}
\rho={\rm Tr}_1|\psi\>\<\psi|.
\end{equation}
Such a state $|\psi\>$ is called a {\it purification} of $\rho$.
The Schmidt decomposition gives the recipe for the purification procedure.
It is enough to take square roots of eigenvalues of $\rho$
 in place of $\lambda_i$ and its
eigenvectors in place of $|i_1\>$. Any orthogonal basis in $\c H_2$ provides
a purification of $\rho$, which can be written as
\begin{equation}
|\psi\>=\sum_i(U_1\otimes\sqrt{\rho})|i_1\>\otimes|i_2\>,
\label{paproc}
\end{equation}
where $U_1$ is an arbitrary unitary transformation and $\sqrt{\rho}|i_2\>=\lambda_i|i_2\>$.

%%%%%%%%%%%%%%%%%%%%%%%%%%

\subsection{Von Neumann entropy and its properties}

Many theorems concerning the theory of quantum information can be
formulated in terms of the
 {\it von Neumann entropy} \cite{vonneumann} of a quantum state,
\begin{equation}
S(\rho)=-\tr\rho\log\rho,
\end{equation}
which is equivalent to the Shannon entropy (\ref{shannon})
of the spectrum of $\rho$.
The entropy characterizes the 
degree of mixing of a quantum state.
Assume that $\rho$ is a density matrix of size $N$.
The value of $S(\rho)$ is equal to zero
if and only if the state $\rho$ is pure. It gains its
maximal value $\log N$ for the maximally mixed
state $\rho_*=\frac{1}{N}\idty$ only. 

Von Neumann entropy has  also an important interpretation in
quantum information theory, as it plays the role similar to the Shannon entropy in the classical theory
of optimal compression of a message \cite{schumacher}. 
Let the letters $i$ of the message, which occur with probabilities $p_i$,  
be encoded into 
pure quantum states $|\psi_i\>$ from the Hilbert space $\c H_N$. 
Sequences of $k$ letters are encoded into a Hilbert space of dimension $N^k$.
A long message can be divided into sequences of size $k\rightarrow\infty$.
Among them one can distinguish sequences in typical 
subspaces and such which occur with negligible probability.
A unitary transformation applied to the sequence of quantum systems can
transmit almost all the information into a typical subspace.
The space of a typical sequence
has the smallest dimensionality required to 
encode the message reliably with negligible probability of an error. 
This smallest dimensionality 
per symbol is shown \cite{schumacher} to be equal to the von Neumann entropy 
of the state $\rho=\sum_ip_i|\psi_i\>\<\psi_i|$. Therefore,
quantum coding consists in taking states from the smaller subspace of dimension $2^{kS(\rho)}$
instead of a space of dimension $N^k$ to encode the same message.
If the state $\rho$ represents completely random set of states
there is no possibility to compress the message,
since $S(\rho)=S(\rho_*)=\log_2 N$, where logarithm is of base 2.
The entropy, therefore, describes the capability of compression
of the message encoded in a given set of states,
or the smallest amount of qubits needed to 
transmit a given message.

The von Neumann entropy, as the entropy of eigenvalues
of a density matrix, describes also the uncertainty of 
measuring a specific state from the set of the eigenvectors.
The most important properties of the von Neumann entropy
are \cite{wehrl}:
\begin{itemize}
\item The von Neumann entropy is a non negative function of any $\rho$.
\item It is invariant under unitary transformations, $S(\rho)=S(U\rho U^{\dagger})$.
\item It is a concave function of its argument, $\sum_{i=1}^kp_iS(\rho_i)\leq S(\sum_{i=1}^kp_i\rho_i)$, where
$p_i\geq 0$ for any $i$ and $\sum_i^{k}p_i=1$.
\item It is subadditive 
\begin{equation}
S(\rho_{12})\leq S(\rho_1)+S(\rho_2),
\end{equation}
where $\rho_{12}$ is a bi--partite state of a composite system and 
the partial traces read
$\rho_1=\tr_2\rho_{12}$ and $\rho_2=\tr_1\rho_{12}$.
\item The von Neumann entropy satisfies the relation of strong subadditivity \cite{liebruskai},     
\begin{equation}
S(\rho_{123})+S(\rho_2)\leq S(\rho_{12})+S(\rho_{23}),
\end{equation}
where the state $\rho_{123}$ is a composite state of three subsystems $(1,2,3)$ 
and the other states are obtained by its partial traces.
\end{itemize}

% [... rozwaz czy ten rozdzial jest potrzebny.......]
% \subsection{Measures of information}
% 
% Mutual information is a difference between the information contains in probability distributions of two 
% separate random variables (quantum states) and the information in joint probability distribution 
%  corresponding to, these variable (joint quantum state). It is always positive (vor von Neumann entropy) since
% this entropy is subadditive.
% 
% Conditional entropy in quantum case can be not positive, hence the intuition (?) fails.
% 
% Holevo quantity provide a measure of distinguishability. In general one can describe 
% the distinguishability in the set of quantum states by several numbers
% using for example fidelity function between each two states. Holevo quantity
% provide not such detailed information but still useful. 
% 
% Some geometrical properties of the ensemble can be obtained from
% the correlation matrix which is a Gram matrix. ...zobaczyc nasza prace o three fidelities tam Mark ladnie o tym pisze,
% zacytowac tu jozse i michonna...
% in contrast or for completeness to the probabilistic properties contain in the Holevo quantity.
% Moreover \cite{Coles10} Holevo quantity sais only on the amount of information not 
% about the kind of informtation ... they describe kinds of information by different typs of
% decompositions of identity.
% 
% ...korzystac w tym rozdziale z quant-ph/1006.4859\cite{Coles10}, ktory to artykul przedstawia
% trzyczastkowy obraz przydatny do opisu opracji kwantowej i wszystkiego co robimy. Tam
% tez opis roznych entropii i ich wlasnosci.
% 

\subsection{Quantum channels and their representations}\label{trombalski}

One distinguishes two approaches 
to describe time evolution of an open
quantum system. One of them starts
from a concrete physical model defined by a given 
Hamiltonian
which determines the Shr\"{o}dinger equation (\ref{zbik}) 
or the master equation, \cite{breuer}. Solving them one may find the state of the 
quantum system at any moment at time. 
An alternative approach to the dynamics of an open quantum system 
relies on a stroboscopic picture and a discrete time evolution. It
starts from a mathematical construction
of a quantum map, $\rho'=\Phi(\rho)$, allowed by the general
laws of quantum mechanics. 
This approach is often 
used in cases in which the physical 
model of the time evolution is unknown. 
This fact justifies the name "black box" model to describe the
evolution characterized by a quantum map $\Phi$. 
% The first describes the situation
% from point of view of specific physical problem.
Such a model is also considered  
if one wants to investigate
the set of 
all possible operations independently on whether the physical context is specified. 
Main features and some representations 
 of the map $\Phi$, which describes a "black box" model of non--unitary
 quantum evolution, are given below.

The quantum map $\Phi$ describes 
the dynamics of a quantum system $\rho$ which interacts with 
an environment.  It is given by a nonunitary  
quantum map $\Phi:\rho\rightarrow \rho'$.  
Any such map is completely positive, and trace preserving \cite{choi,jam,kraus2,depillis}.
"Complete positivity" means that an extended map $\Phi\otimes\idty_M$, which is a
trivial extension of $\Phi$ on the space of any dimension $M$, transforms the set of 
positive operators into itself. 
A completely 
positive and trace preserving quantum map is called {\it quantum operation} 
or {\it quantum channel}. 

Due to the theorem of Jamio{\l}kowski \cite{jam} and Choi \cite{choi}
the complete positivity of a map is equivalent
to positivity of a state corresponding to the 
map by the {\it Jamio{\l}kowski isomorphism}. 
This isomorphism determines the correspondence 
between a quantum operation $\Phi$ acting on $N$ dimensional 
matrices and density matrix $D_{\Phi}/N$ of dimension $N^2$ which 
is called Choi matrix or the Jamio{\l}kowski state
\begin{equation}
\frac{1}{N}D_{\Phi}=[\id_N\otimes\Phi]\big(|\phi^+\left.\right\rangle\left\langle\right. \phi^+|\big),
\label{miopio}
\end{equation}
where $|\phi^+\left.\right\rangle=\frac{1}{\sqrt{N}}\sum_{i=1}^N |i\left.\right\rangle\otimes|i\left.\right\rangle$ 
is a maximally entangled state.
The dynamical matrix $D_{\Phi}$ corresponding to a trace preserving 
operation satisfies the partial trace condition
\begin{equation}
{\rm Tr}_{2} D_\Phi=\idty.
\label{partialtrace}
\end{equation}
%Positivity of a dynamical matrix $D_\Phi$ guarantees completely positive of $\Phi$.

The quantum operation $\Phi$ can be represented as
 {\it superoperator matrix}. It is a matrix which acts on  
the vector of length $N^2$, which contains the entries $\rho_{ij}$ of the density matrix ordered
lexicographically. Thus the
superoperator $\Phi$ is represented by a square matrix of size $N^2$.
The superoperator in some orthogonal product basis $\{|i\>\otimes|j\>\}$
 is represented by a matrix indexed by  four indexes,
\begin{equation}
\Phi_{\!\!\!\!\footnotesize{
\begin{array}{cc}
i&\!\!\!\!\!\! j\\
k&\!\!\!\!\!\! l
\end{array}}
}=\<i|\otimes\<j|\Phi|k\>\otimes|l\>.
\end{equation} 
The matrix representation of the dynamical matrix is related to 
the superoperator matrix 
by the reshuffling formula \cite{KZ} as follows
\begin{equation}
\<i|\otimes\<j|D_{\Phi}|k\>\otimes|l\>=\<i|\otimes\<k|\Phi|j\>\otimes|l\>.
\label{zyrafa}
\end{equation}

To describe a quantum operation, one may use the Stinespring's dilation theorem \cite{sting}. 
Consider a quantum system, described by the state $\rho$ on $\c H_N$,  
interacting with its environment characterized by a state on $\c H_M$. 
The joint evolution of the two states is 
described by a unitary operation $U$.
Usually it is assumed that the joint state of the 
system and the environment is initially not entangled. Moreover, due
to the possibility to purification the environment, its 
initial state is given by a pure one.
The evolving joint state is therefore:
\begin{equation}
\omega=U\Big(\left|1\right\rangle\left\langle 1\right|\otimes \rho \Big)U^{\dagger},
\label{omom}
\end{equation}
where $|1\>\in \c H_M$ and $U$ is a unitary matrix of size $NM$.
The state of the system  after the operation is obtained by tracing out the environment, 
\begin{equation}
\rho'=\Phi(\rho)={\rm Tr}_M \Big[U\big(\left|1\right\rangle\left\langle 1\right|\otimes \rho\big) U^{\dagger}\Big]=\sum_{i=1}^M K^i\rho K^{i\dagger},
\label{quantop}
\end{equation}
where the Kraus operators read, $K^i=\left\langle i\right| U\left|1\right\rangle$. 
In matrix representation the Kraus operators
are formed by successive blocks of the first block--column of the unitary evolution matrix $U$. 
Here the state $\omega$ can be equivalently given as
\begin{equation}
\omega=\sum_{i,j=1}^MK^i\rho K^{j\dagger}\otimes\left|i\right\rangle\left\langle j\right|.
\label{stephany}
\end{equation}
A transformation $\rho\rightarrow\omega$ is obtained by an isometry $F:\c H_N\rightarrow\c H_{NM}$, where
\begin{equation}
F\left|\phi\right\rangle=\sum_i (K^i\left|\phi\right\rangle)\otimes\left|i\right\rangle.
\end{equation}
Due to the Kraus theorem \cite{kraus2}
any completely positive map $\Phi$
can be written in the Kraus form,
\begin{equation}
\rho'=\Phi(\rho)=\sum_{i=1}^M K^i\rho K^{i\dagger}.
\label{mrowkojad}
\end{equation}
The opposite relation is also true, any map of the Kraus form (\ref{mrowkojad}) is completely positive.

%\item
%Superoperator matrix of quantum operation can be written as:
%\begin{equation}
%\Phi=
%\begin{pmatrix}
%\Phi_{11} &\Phi_{12} &\bar{\Phi_{12}} &\Phi_{14} \\
%\Phi_{21} &\Phi_{22} &\bar{\Phi_{32}} &\Phi_{24} \\
%\bar{\Phi_{21}} &\Phi_{32} &\bar{\Phi_{22}} &\bar{\Phi_{34}} \\
%1-\Phi_{11} &-\Phi{12} &-\bar{\Phi_{12}} & 1-\Phi_{14}
%\end{pmatrix},
%\label{operacja}
%\end{equation}
%where $\Phi_{11}$ and $\Phi_{14}$ are real. This kinds of 
%maps guarantees that their dynamical matrices are Hermitian 
%and the trace preserving condition (\ref{partialtrace}) is satisfied.

%
%These two constraints leaves a lot of freeness. 
%For instance 
%all possible quantum operations on one--qubit states form a
%$12$--dimensional set.

\subsubsection{Representation of a complementary channel}\label{sectioncomp}

Consider a quantum channel $\Phi$ described by the Kraus operators $K^i$, 
\begin{equation}
\Phi(\rho)={\rm Tr}_M\omega=\sum_{i=1}^{M}K^i\rho K^{i\dagger},
\end{equation}
where notation from Section \ref{trombalski} is used.
The channel $\tilde{\Phi}$ {\it complementary} to $\Phi$ is defined by
\begin{equation}
\tilde{\Phi}(\rho)={\rm Tr}_N\omega=\sum_{i=1}^{N}\tilde{K}^i\rho \tilde{K}^{i\dagger}
\end{equation}
and it describes the state of the $M$--dimensional environment
after the interaction with the principal system $\rho$.
One can derive the relation between operators $\{\tilde{K}^j\}_{j=1}^N$ 
and $\{K^i\}_{i=1}^M$ from the last equation by substituting $\omega$ as in (\ref{stephany}). This relation can be rewritten as
\begin{equation}
\sum_{i,j=1}^M(\tr K^i\rho K^{j\dagger})\left|i\right\rangle\left\langle j\right|=\sum_{i=1}^{N}\tilde{K}^i\rho \tilde{K}^{i\dagger}.
\label{sigmal}
\end{equation}
Comparison of the matrix elements of both sides gives
\begin{equation}
 \sum_{\alpha=1}^N\tilde{K}^{\alpha}_{im}\rho_{mn}\tilde{K}^{\alpha\dagger}_{nj}=\sum_{\alpha=1}^N K^{i}_{\alpha m}\rho_{mn}K^{j\dagger}_{n\alpha},
\end{equation}
where matrix elements are indicated by lower indexes
and the Einstein summation convention is applied.
Hence, for any quantum channel $\Phi$ given by a set of Kraus operators 
$K^i$, one can define the Kraus operators $\tilde{K}^{\alpha}$  representing the complementary channel $\tilde{\Phi}$ as
\begin{equation}
\tilde{K}^{\alpha}_{ij}=K^{i}_{\alpha j},\qquad i=1,...,M,\qquad j,\alpha=1,...,N.
\label{pantalon}
\end{equation}

\subsection{One--qubit channels}\label{richibruk}

One--qubit channels acting on density matrices of size $2$ 
have many special features which cause that the set of these
channels is well understood \cite{fujivara,ruskaiszarek,king}.
However, many properties of one--qubit maps are not shared
with the quantum maps acting on higher dimensional systems.
Since one--qubit quantum channels are often considered 
in this thesis, the following section presents a brief review of their basic properties.

A quantum two level state is called {\it quantum bit} or {\it qubit}.
It is represented by a $2\times 2$ density matrix.
Any Hermitian matrix of size two can be represented in the basis of 
identity matrix and the three Pauli matrices $\vec{\sigma}=\{\sigma_1,\sigma_2,\sigma_3\}$,
\begin{equation}
 \sigma_1=\begin{pmatrix}
           0&1\\
1&0
          \end{pmatrix},\qquad
\sigma_2=\begin{pmatrix}
           0&-i\\
i&0
          \end{pmatrix},\qquad
\sigma_3=\begin{pmatrix}
           1&0\\
0&-1
          \end{pmatrix}.
\label{malpa}
\end{equation}
One qubit state $\rho$ decomposed in the mentioned basis is given by the formula
\begin{equation}
\rho=\frac{1}{2}(\id+{\vec r}\cdot\vec{\sigma}),\quad {\vec r}\in \Rl^3.
\label{bobr}
\end{equation}
Positivity condition, $\rho\geq 0$, implies that $|\vec{r}|\leq 1$.
The vector $\vec{r}$ is called the {\it Bloch vector}. All possible Bloch vectors 
representing quantum states form the {\it Bloch ball}. Pure one--qubit states form a sphere of radius $|\vec{r}|=1$. 
%Density matrices of dimension $2$, describing qubits, 
%can be parameterized by real three dimensional vector $\vec{r}$. 
%Because of positivity condition we have $\left|\vec{r}\right|\leq 1$. 
%Therefore the set of two dimensional density matrix form a ball called 
%Bloch ball. 

Any linear one--qubit quantum operation $\Phi$ transforms the Bloch ball 
into the ball or into an ellipsoid inside the ball.
The channel $\Phi$ transforms the 
Bloch vector $\vec{r}$ representing the state $\rho$ into $\vec{r}\,'$ which corresponds to $\rho'$. 
This transformation
 is described by
\begin{equation}
\vec{r}\,'=W\vec{r}+\vec{\kappa}.
\end{equation}
Here the matrix $W$ is a square real matrix of size $3$.
A procedure analogous to the singular value decomposition 
of the matrix $W$ gives $W=O_1DO_2$,
where $O_i$ represents an orthogonal rotation and $D$ is diagonal.
Up to two orthogonal rotations, one before the transformation $\Phi$ and
one after it, the one--qubit map $\Phi$ can be represented by the following matrix
\begin{equation}
\Phi=\bordermatrix{        
 &{}\cr
& 1  & 0 & 0 & 0\cr 
& \kappa_{1}  & \eta_{1} & 0 & 0\cr
& \kappa_{2}  & 0 & \eta_{2} & 0\cr
& \kappa_{3} & 0 & 0 & \eta_{3}\cr}.
\label{bloch}
\end{equation}
The absolute values of the parameters $\eta_{i}$ are 
interpreted as the lengths of the axes of the ellipsoid which is the image 
of the Bloch ball transformed by the map. 
The parameters $\kappa_{i}$ form the vector $\vec{\kappa}$ of translation 
of the center of the ellipsoid with respect to the center of the Bloch ball. 

Due to complete positivity of the map $\Phi$ and the trace preserving property,
the vectors $\vec{\eta}$ and $\vec{\kappa}$ are subjected to several constraints.
They can be derived from the positivity condition of a dynamical matrix 
given by \cite{fujivara,KZ}:
\begin{equation}
D_{\Phi}=\frac{1}{2}\bordermatrix{         &{}\cr
 & 1+\eta_{3}+\kappa_{3}  & 0 &\kappa_{1}+i*\kappa_{2} & \eta_{1}+\eta_{2}\cr 
& 0  & 1-\eta_{3}+\kappa_{3} & \eta_{1}-\eta_{2} & \kappa_{1}+i*\kappa_{2}\cr
& \kappa_{1}-i*\kappa_{2}  & \eta_{1}-\eta_{2} & 1-\eta_{3}-\kappa_{3} & 0\cr
& \eta_{1}+\eta_{2} & t_{1}-i*\kappa_{2} & 0 & 1+\eta_{3}-\kappa_{3}\cr}.
\label{dyn}          
\end{equation}    

The channels which preserve the maximally mixed state
are called {\it bistochastic} channels.
The structure of one--qubit bistochastic channels is
discussed in more detail in Section \ref{eryk}.

\subsection{Correlation matrices}\label{mrowka}

%If there is no exact knowledge which
%of pure state  $|\phi_i\>\<\phi_i|$ occurs,
%one defines the density matrix 
%which is 
%an average state
%of the set 
%\begin{equation}
%\rho=\sum_ip_i|\phi_i\>\<\phi_i|,
%\end{equation}
%where $p_i$ are probabilities. 
%The density matrix is used then to compute averages
%by trace formula (\ref{}).
%It is said that density matrix
%provides more
%information on the ensemble 
%than it is contained in probability distribution of 
%states. Information containing in a density matrix
%takes into account also the feature 
%that nonorthogonal states can not be
%perfectly distinguished. 
%
%
%As a probability distribution carry the entire information 
%about the measurement procedure in a classical case or von Neumann measurement 
%(measurement by means of projective operators $K_i=K_i^2$)
%correlation matrix is generalization of this conception.
%In similar way a density matrix is generalization of probability distribution (probability measure).[wyjasnic to lepiej]
%Therefore it is said that the correlation matrix contain more information
%about the measurement (nonprojective) \cite{AlickiFannes} pp.188 and therefore
%is more proper to describe a measurement process.
%
%
%Similarly, correlation matrix 
%contains more information on
%the measurement than that which is given by the
%probability distribution of measurement outputs.
%^^^^^^^^^^^^^^^^^^^^^^^^^^^^

A general measurement process is described in quantum mechanics 
  by operators  forming a 
{\it positive operator valued measure} (POVM).
Products of matrices $K^{i\dagger}K^i$ representing the POVM 
are positive and determine the identity resolution, $\idty=\sum_{i=1}^k K^{i\dagger}K^i$.
During the measurement of a quantum state $\rho$
the output $\rho_i=\frac{K^i\rho K^{i\dagger}}{\tr K^i\rho K^{i\dagger}}$
occurs with probabilities $p_i=\tr K^i\rho K^{i\dagger}$.
The identity resolution guarantees that $\sum_{i=1}^kp_1=1$.

The outcomes of a quantum measurement
are not perfectly distinguishable, unless 
different POVM operators project on orthogonal subspaces,
$K^{i\dagger}K^iK^{j\dagger}K^j=\delta_{ij}K^{i\dagger}K^i$. 
Probability distribution of the outcome states
does not contain any information on indistinguishability of outcomes.
Therefore, a better characterization of the measurement process
is given by the following {\it correlation matrix}
 $\sigma$  with entries
\begin{equation}
\sigma_{ij}=\tr K^i\rho K^{j\dagger}, \qquad i,j=1,...,k.
\label{mackor}
\end{equation}
Its diagonal contains the probabilities of
measurement outputs, while the off--diagonal
entries are related to probabilities
that the state $i$
has been determined by the measurement as the state $j$.
The correlation matrix depends on both,
the measured state and the 
measurement process.

The operators $K^i$, satisfying $\sum_{i=1}^k K^{i\dagger}K^i=\idty$,
can also be treated as Kraus operators (\ref{quantop})
characterizing the
quantum channel, $\Phi(\rho)=\sum_{i=1}^k K^i\rho K^{i\dagger}$.
In such an interpretation of operators $K^i$,
the correlation matrix (\ref{mackor}) is equivalent to
the state of environment given by  the
output of the complementary channel $\tilde{\Phi}(\rho)$ specified in Eq. (\ref{sigmal}).

The entropy of the state $\sigma$ produced by a 
complementary channel $\tilde{\Phi}$ is called the {\it exchange entropy},
since, if the initial states of the system and the environment are pure, then $S(\sigma)$ 
is equal to the entropy which is gained by both the state and the 
environment \cite{szuma}. If the initial state is maximally mixed,
$\rho=\rho_*=\frac{1}{N}\idty$, where $N$ is the dimensionality of $\rho$,
 the entropy of 
the output of the complementary channel is equal to the {\it map entropy} $S^{\map}(\Phi)$ \cite{roga0} (see also discussion in Section \ref{consek}),
\begin{equation}
S^{\map}(\Phi)=-\frac{1}{N}D_{\Phi}\log\Big(\frac{1}{N}D_{\Phi}\Big),
\label{pusia2000}
\end{equation}
where the dynamical matrix $D_{\Phi}$ is given by Eq. (\ref{miopio}).
This entropy is equal to zero if $\Phi$ represents any unitary transformation.
It attains the largest value $\log{2N}$ for completely depolarizing channel which 
transform any state into the maximally mixed state. Therefore the map entropy can 
characterize the decoherence caused by the channel.

%Due to relation (\ref{povm}), the Kraus operators can be 
%interpreted as Positive Operator Valued Measure (POVM), see \cite{nielsen}. 
%Operators POVM describe measurement of a quantum state $\rho$ 
%in such way that the output states $\rho_i=\frac{K_i\rho K_i^{\dagger}}{\tr K_i\rho K_i^{\dagger}}$ 
%are obtained with the probability $p_i=\tr K_i\rho K_i^{\dagger}$. 
%In this context matrix $\sigma$ in (\ref{sigma}) is called a {\it correlation matrix}.

Due to the polar decomposition of an arbitrary non normal operator
$X=HU$, we can write $K^i\rho^{1/2}=h_iU_i$, where $h_i$ 
is a Hermitian matrix and $U_i$ is  unitary. 
One can observe that 
$h_i^2=K^i\rho K^{i\dagger}=p_i\rho_i$. 
Therefore the entries of the correlation matrix
 (\ref{mackor}) can be written as:
\begin{equation}
\sigma_{ij}= \tr K^i\rho K^{j\dagger}=p_i^{\frac{1}{2}}p_j^{\frac{1}{2}}\tr\rho_i^{\frac{1}{2}}U_iU_j^{\dagger}\rho_j^{\frac{1}{2}}.
\label{krowiatko}
\end{equation}

%\subsection{Correlation matrix as a Gram matrix}
 
%In the section \ref{sec:prl} the proof that a correlation matrix 
%$\sigma_{ij}=\tr\sqrt{\rho_i}\sqrt{\rho_j}U_j^{\dagger}U_i$ 
%bounds the Holevo quantity $\chi(\{p_i,\rho_i\})$ is presented. 
%
%The matrix $\sigma$ was derived as an output of the complementary channels 
%$\tilde{\Phi}$ what can by interpreted as a state of an environment 
%associated with the system during the evolution. 
%The channel $\Phi$ was defined there by a set of Kraus operators $K_i$. 
%These operators can also define POVM measurement which provide 
%he second interpretation of objects occurred in the proposition 
%\ref{prop:holevo:prop1}. The states $\rho_i$ and probabilities 
%$p_i$ are an output of measurement made on the initial state $\rho$.

As noticed above, the correlation matrix characterizing the
quantum measurement can be equivalently treated as 
the state of an environment after evolution given by 
a quantum channel.
The following section indicates a third possible interpretation
of the correlation matrix $\sigma$.
It can be formally treated as 
a Gram matrix of purifications of mixed states $\rho_i$. 

{\it Purification} of a given state $\rho_i\in\c M_N$  is given by a pure state 
$|\Psi_i\>$ (see Eq. (\ref{paproc})),
\begin{equation}
 {\rm Tr}_1\left|\Psi_i\right\rangle\left\langle\Psi_i\right|=\rho_i.
\end{equation}
The purification $|\Psi_i\>$ of given state $\rho_i$ 
can be written explicitly, 
\begin{equation}
\left|\Psi_i\right\rangle=\sum_{r=1}^N\Big(U_i\otimes\sqrt{\rho_i}\Big)\left|r\right\rangle\otimes\left|\phi^i_r\right\rangle,
\label{def:fid:purif}
\end{equation}
where $\{\left|\phi^i_r\right\rangle\}_{r=1}^N$ are eigenvectors of $\rho_i$. 
Notice that a purification of a given state $\rho_i$ is not unique. 
The degree of freedom is introduced by the unitary transformation $U_i$.
%It can be chosen up to unitary transformation $U$.
Moreover, any purification of given state $\rho_i$ 
can be given by such a form. 
Since eigenvectors of $\rho_i$ denoted by $\left|\phi^i_r\right\rangle$ 
form an orthonormal basis in the Hilbert space, a unitary transformation $V_i$ 
can transform it into the canonical basis $\{\left|r\right\rangle\}_{i=1}^N$. 
The purification (\ref{def:fid:purif}) can be described as
\begin{equation}
\left|\Psi_i\right\rangle=\sum_{r=1}^N\Big(U_i\otimes\sqrt{\rho_i}V_i\Big)\left|r\right\rangle\otimes\left|r\right\rangle.
\label{eq:fid:purif}
\end{equation}
The overlap between two purifications of states  $\rho_i$ and $\rho_j$ emerging from a POVM measurement  is given by
\begin{equation}
\abs{\left\langle \Psi_j\right|    \left.\Psi_i\right\rangle}^2=\abs{\left\langle m\right|    (U_j^{\dagger}U_{i}\otimes V_j^{\dagger}\sqrt{\rho_j}\sqrt{\rho_i}V_i)\left|m\right\rangle}^2,
\label{eq:fid:overlap}
\end{equation}
where $|m\>=\sum_r|r\>\otimes|r\>$.
For any two operators
 $A$ and $B$ the following relation holds, $\left\langle m\right|A\otimes B\left| m\right\rangle=\tr A^{\dagger}B$ \cite{uhlmann76}. 
Hence the overlap (\ref{eq:fid:overlap}) reads
\begin{equation}
\abs{\left\langle \Psi_j\right|    \left.\Psi_i\right\rangle}^2=\abs{\tr W\sqrt{\rho_j}\sqrt{\rho_i}}^2,
\label{eq:fid:overlap2}
\end{equation}
where the unitary matrix $W=V_iU_i^{\dagger}U_{j}V_j^{\dagger}$. 
Therefore the
matrix elements of $\sigma$ (\ref{krowiatko}) 
are equal to the scalar product of purifications  
of respective mixed states $\rho_i$ and $\rho_j$ as follows $\sigma_{ij}=\sqrt{p_ip_j}\<\Psi_j|\Psi_i\>$.

%Hence other formulation of the proposition \ref{prop:holevo:prop1} 
%equivalently states that the Holevo quantity is bounded by 
%the entropy of a Gram matrix $G_{ij}=\left\langle\Psi_i\right|\left.\Psi_j\right\rangle$ 
%multiplied by a matrix of probabilities $P_{ij}=\sqrt{p_ip_j}$,
%\begin{equation}
%\chi(\{p_i,\rho_i\})\leq G\odot P,
%\end{equation}
%where $ G\odot P$ is the Hadamard product. Due to non uniqueness of 
%a purification state the problem with optimization of 
%the upper bound of the Holevo quantity is present also in this formulation.

\subsubsection{Gram matrices and correlation matrices}\label{koza}

In previous chapter it was shown that the correlation
matrix can by defined by the set of purifications of states
emerging from the quantum measurement. 
Therefore, the correlation matrix can be
identified with the normalized Gram matrix of the purifications.

The Gram matrix is an useful tool in many fields.
It can receive a geometrical interpretation,
as it consists of the overlaps of normalized vectors.
If vectors are real the determinant of their Gram
matrix defines the volume of the parallelogram 
spanned by the vectors \cite{kokk,bahr}.
The Gram matrix of the evolving pure state is analyzed
in \cite{AlickiFannes}. The spectrum of this matrix can 
determine whether the evolution is regular or chaotic. 

The Gram matrix $\sigma$, 
\begin{equation}
\sigma_{ij}=\sqrt{p_ip_j}\<\psi_i|\psi_j\>
\label{mysz}
\end{equation}
has the same eigenvalues as 
\begin{equation}
\rho=\sum_ip_i|\psi_i\>\<\psi_i|.
\label{maus}
\end{equation}
The proof of this fact \cite{szli} 
uses the pure state, 
\begin{equation}
 |\phi\>=\sum_i\sqrt{p_i}|\psi_i\>\otimes|e_i\>,
\label{jelonek}
\end{equation}
where states $|e_i\>$ form the set 
of orthogonal vectors.
Since the state (\ref{jelonek}) is pure,
its complementary partial traces equal to 
(\ref{mysz}) and (\ref{maus}) have the same entropy
\begin{equation}
S\left([\sqrt{p_ip_j}\<\psi_i|\psi_j\>]_{ij}\right)=S\left(\sum_ip_i|\psi_i\>\<\psi_i|\right).
\label{slon}
\end{equation}
The entropy of the Gram matrix (\ref{mysz}) can be used
in quantum information theory to describe
the ability of compression of quantum information \cite{mitch}.
The authors of \cite{mitch} describe the fact that 
it is possible to enlarge the information transmitted by means
of set of states which are pairwise less orthogonal and thus  more
indistinguishable. 
This fact encourages us to consider global properties of 
quantum ensemble which, sometimes, are not 
reduced to joint effects of each pair considered
separately.
In Chapter \ref{dusiu} some efforts will be made to
define the quantity characterizing fidelity between three
states.

\subsection{Kraus operators constructed for an ensemble of states}\label{caneleone}

The previous section concerns the ensembles $\c E=\{p_i,\rho_i\}_{i=1}^k$ formed by
the outputs of a given quantum channel and a given input state. 
In the following section it will be shown that for any ensemble
$\c E$ the suitable Kraus operators $K^i$ can be constructed and the corresponding initial state $\rho$ can be found. %which give this ensemble as the set of output 

Initial state is constructed from the states of the ensemble by taking
\begin{equation}
\rho=\sum_{i=1}^k p_iU_i^{\dagger}\rho_iU_i,
\label{initial}
\end{equation}
where the unitary matrices $U_i$ are arbitrary.
The Kraus operators constructed for ensemble $\c E$ and unitaries $U_i$ are defined by
\begin{equation}
K^i=\sqrt{p_i\rho_i}U_i\frac{1}{\sqrt{\rho}}.
\end{equation}
Notice that $K^i\rho K^{i\dagger}=p_i\rho_i$ 
and the Hermitian conjugation, $K^{i\dagger}=\frac{1}{\sqrt{\rho}}U_i^{\dagger}\sqrt{p_i\rho_i}$. 
Due to the choice of $\rho$ in (\ref{initial}) the identity resolution holds,
\begin{equation}
\sum_{i=1}^{k}K^{i\dagger}K^i=\sum_{i=1}^k p_i\frac{1}{\sqrt{\rho}}U_i^{\dagger}\rho_iU_i\frac{1}{\sqrt{\rho}}=\idty.
\end{equation}

In the special case of $k=2$ states in an ensemble,
by choosing
\begin{equation}
U_2=U_1\frac{1}{\sqrt{\sqrt{\rho_1}\rho_2\sqrt{\rho_1}}}\sqrt{\rho_1}\sqrt{\rho_2},
\label{unita}
\end{equation}
one obtains $\sigma_{12}$ equal to square root fidelity between states $\rho_1$ and $\rho_2$, as follows  $\sqrt{F(\rho_1,\rho_2)}=\tr\sqrt{\sqrt{\rho_1}\rho_2\sqrt{\rho_1}}$.
 
In consequence of the above considerations one can say that
 the ensemble emerging from POVM measurement can be arbitrary and
for any ensemble $\c E$ we can construct the set of operators $K^i$ and the corresponding  initial state $\rho$.

%
%Consider an ensemble of quantum states of size $N$, $\{\rho_i\}_{i=1,...,K}$ 
%occurring with probabilities $p_i$, where $\sum_{i=1}^K p_i=1$ and $p_i\geq0$. 
%Consider the following matrix $\sigma$ of size $k$ 
%\begin{equation}
%\sigma_{ij}=\sqrt{p_ip_j}\tr\sqrt{\rho_i}\sqrt{\rho_j}U_j^{\dagger}U_i,\qquad i,j=1,...,k
%\label{correl}
%\end{equation}
%where unitary matrices $U_i$ are arbitrary. It is possible to show that 
%this matrix is normalized and positive so it can be considered as a 
%correlation matrix. The above definition is equivalent to
%\begin{equation}
%\sigma_{ij}=\sqrt{p_ip_j}\left\langle\psi_i\right|\left.\psi_j\right\rangle,
%\end{equation}
%where $\left|\psi_i\right\rangle\in\c H_d\otimes\c H_d$ 
%is a purification of $\rho_i$, so that ${\rm Tr}_1\left|\psi_i\right\rangle\left\langle\psi_i\right|=\rho_i$.
%The purification can by chosen arbitrary up to a 
%local unitary transformations $U_i\otimes\id$.
%For the case $k=2$ one can choose the unitaries 
%in such a way that $\sigma_{12}$ is proportional 
%to the maximal absolute value of the overlap of the 
%purifications of $\rho_1$ and $\rho_2$, which is known 
%as square root fidelity,
%\begin{equation}
%\sqrt{F(\rho_1,\rho_2)}:=\max_{\psi_1}\left\langle\psi_1\right|\left.\psi_2\right\rangle=\tr\sqrt{\sqrt{\rho_1}\rho_2\sqrt{\rho_1}}.
%\end{equation}

\subsection{ Quantum fidelity }

An important problem in the theory of probability is how to 
distinguish between two probability distributions. 
The so called {\it fidelity} is a quantity used for this purpose.
Assume that $P=(p_1,p_2,...,p_N)$ and $Q=(q_1,q_2,...,q_N)$ 
are two probability distributions. The fidelity between $\bold{p}$ and $\bold{q}$ 
is defined as,
\begin{equation}
F(P,Q)=\left(\sum_{i=1}^N \sqrt{p_iq_i}\right)^2.
\label{def:fid:clas}
\end{equation}
This function has several properties:
\begin{itemize}
\item it is real, 
\item positive, $F(P,Q)\geq 0$, 
\item symmetric, $F(P,Q)=F(Q,P)$, 
\item smaller or equal to unity, $F(P,Q)\leq 1$. 
\item equal to one if and only if two distributions are the same,\\ 
\mbox{$\left(F(P,Q)=1\right)\Leftrightarrow(P=Q)$}. 
\end{itemize}
These properties are shared by fidelities defined for quantum states given below.

Quantum counterpart of the fidelity for the pure states 
$\left|\phi_1\right\rangle\in \c H_N$ and $\left|\phi_2\right\rangle\in \c H_N$ 
is given by the overlap
\begin{equation}
F(\left|\phi_1\right\rangle,\left|\phi_2\right\rangle)=\left|\left\langle\phi_1\left|\right.\phi_2\right\rangle\right|^2.
\label{def:fid:pure}
\end{equation}
A probability distribution can be considered as a  diagonal density matrix. 
Generalization of two formulas (\ref{def:fid:clas}) and (\ref{def:fid:pure}) 
for arbitrary mixed states $\rho_1\in \c M_N$ and $\rho_2\in\c M_N$ is given by
\begin{equation}
 F(\rho_1,\rho_2)=\Big(\tr\sqrt{\sqrt{\rho_1}\rho_2\sqrt{\rho_1}}\,\Big)^2.
\label{def:fid:mixed}
\end{equation}
To show a relation to previous definitions of fidelity consider two commuting quantum states. 
They can be given, in the same basis, as 
$\rho_1=\sum_i^N r_i\left|i\right\rangle\left\langle i\right|$, and $\rho_1=\sum_i^N s_i\left|i\right\rangle\left\langle i\right|$. 
Hence the fidelity between them reads
\begin{equation}
\Big(\tr\sqrt{\sqrt{\rho_1}\rho_2\sqrt{\rho_1}}\,\Big)^2=\left(\tr \sqrt{\sum_{i=1}^N r_is_i\left|i\right\rangle\left\langle i\right|}\,\right)^2=\left(\sum_{i=1}^N\sqrt{r_is_i}\right)^2.
\end{equation}
This gives a relation between fidelity between mixed  quantum states 
(\ref{def:fid:mixed}) and fidelity of probability distributions 
which are composed by the eigenvalues of the states (\ref{def:fid:clas}). 
Consider now pure states,  
$\left|\Psi_1\right\rangle, \left|\Psi_2\right\rangle\in\c H_{N}\otimes\c H_N$ such that 
the partial trace over the first subspace reads,
 ${\rm Tr}_1\left|\Psi_i\right\rangle\left\langle\Psi_i\right|=\rho_i$.
There exists a relation between formula (\ref{def:fid:mixed}) 
for fidelity between two mixed states and overlaps of their 
purifications.
\begin{theorem}[Uhlmann \cite{uhlmann76}]
Consider two quantum states  $\rho_1$ and $\rho_2$ 
and their purifications $\left|\Psi_1\right\rangle$ and $\left|\Psi_2\right\rangle$. 
Then 
\begin{equation}
\Big(\tr\sqrt{\sqrt{\rho_1}\rho_2\sqrt{\rho_1}}\,\Big)^2=\max_{\left|\Psi_1\right\rangle}\left|\left\langle\Psi_1\left|\right.\Psi_2\right\rangle\right|^2,
\end{equation}
where the maximization is taken over all purifications $\left|\Psi_1\right\rangle$ of the state $\rho_1$.
\end{theorem}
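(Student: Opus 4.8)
The plan is to prove Uhlmann's theorem in two directions: first establish an explicit formula for the overlap of two \emph{specific} purifications, then show that varying the purification of $\rho_1$ amounts to inserting an arbitrary unitary, and finally maximize over that unitary. The key tool is the fact, already recorded in the excerpt, that any purification of $\rho_i$ can be written in the form $\left|\Psi_i\right\rangle=\sum_{r}(U_i\otimes\sqrt{\rho_i})\left|r\right\rangle\otimes\left|r\right\rangle$ after absorbing the eigenbasis rotation, together with the identity $\left\langle m\right|A\otimes B\left|m\right\rangle=\tr A^{\dagger}B$ for $|m\>=\sum_r|r\>\otimes|r\>$.

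First I would fix a purification $\left|\Psi_2\right\rangle$ of $\rho_2$ and parametrize all purifications of $\rho_1$. Writing both purifications in the canonical form above with unitary freedoms $U_1$ and $U_2$, the overlap computed via the trace identity becomes
\begin{equation}
\left\langle\Psi_1\left|\right.\Psi_2\right\rangle=\tr\big(W\sqrt{\rho_1}\sqrt{\rho_2}\big),
\end{equation}
exactly as in Eq.~(\ref{eq:fid:overlap2}), where $W=U_2^{\dagger}U_1$ ranges over all unitary matrices as $U_1$ varies over the purifications of $\rho_1$ (with $\left|\Psi_2\right\rangle$, hence $U_2$, held fixed). Thus the maximization over purifications of $\rho_1$ is precisely the maximization of $\abs{\tr(W\sqrt{\rho_1}\sqrt{\rho_2})}$ over all unitaries $W$.

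The heart of the argument is then the optimization $\max_{W}\abs{\tr(W X)}$ for a fixed operator $X=\sqrt{\rho_1}\sqrt{\rho_2}$, taken over unitary $W$. The plan is to apply the polar decomposition $X=\abs{X}\,V$ with $\abs{X}=\sqrt{X^{\dagger}X}$ and $V$ unitary, so that $\tr(WX)=\tr(WV\abs{X})$. Since $WV$ is again an arbitrary unitary, the problem reduces to maximizing $\abs{\tr(\tilde{W}\abs{X})}$ over unitary $\tilde{W}$. Because $\abs{X}$ is positive with eigenvalues (its singular values) being the $\lambda_i\geq 0$, I would bound $\abs{\tr(\tilde{W}\abs{X})}=\abs{\sum_i\lambda_i\left\langle e_i\right|\tilde{W}\left|e_i\right\rangle}\leq\sum_i\lambda_i\abs{\left\langle e_i\right|\tilde{W}\left|e_i\right\rangle}\leq\sum_i\lambda_i=\tr\abs{X}$, using that diagonal entries of a unitary have modulus at most one, and noting the bound is attained at $\tilde{W}=\idty$. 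Finally I would identify $\tr\abs{X}=\tr\sqrt{X^{\dagger}X}=\tr\sqrt{\sqrt{\rho_2}\rho_1\sqrt{\rho_2}}$, which equals $\tr\sqrt{\sqrt{\rho_1}\rho_2\sqrt{\rho_1}}$ since $AB$ and $BA$ share nonzero singular values; squaring gives the claimed fidelity.

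The main obstacle I anticipate is the bookkeeping of which unitary freedoms are genuinely free once $\left|\Psi_2\right\rangle$ is fixed, and confirming that $W$ indeed sweeps the full unitary group so that no achievable overlap is excluded. A secondary subtlety is the attainability claim: one must check that the optimal $\tilde{W}=\idty$ corresponds to a legitimate purification, i.e. that the maximizing $U_1$ exists and produces a normalized pure state with the correct partial trace. Both points are routine once the polar-decomposition reduction is in place, so the essential content is the singular-value bound $\abs{\tr(\tilde{W}\abs{X})}\leq\tr\abs{X}$.
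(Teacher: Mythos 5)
Your proposal is correct and follows essentially the same route as the paper's proof: parametrize both purifications in the canonical form $(U_i\otimes\sqrt{\rho_i}V_i)|m\rangle$, reduce the maximization over purifications to $\max_W\abs{\tr(W\sqrt{\rho_1}\sqrt{\rho_2})}$ over unitaries, and attain the bound $\tr\abs{\sqrt{\rho_1}\sqrt{\rho_2}}$ via the polar decomposition. The only difference is cosmetic: where the paper invokes the inequality $\abs{\tr A\,B}\le\norm{A}\,\tr\abs{B}$ as known, you supply an elementary proof of it via the singular-value expansion and the fact that diagonal entries of a unitary have modulus at most one.
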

\begin{proof}
The proof starts from purification formula (\ref{eq:fid:purif}), 
\begin{equation}
\left|\Psi_i\right\rangle=(U_{i}\otimes\sqrt{\rho_i}V_i)\left|m\right\rangle,
\end{equation}
where $|m\>$ is an unnormalized vector, 
$\left|m\right\rangle=\sum_{i=1}^N\left|r\right\rangle\otimes\left|r\right\rangle$. 
The overlap of two purifications  (\ref{eq:fid:overlap}) is given by
\begin{equation}
\abs{\left\langle \Psi_j\right|    \left.\Psi_i\right\rangle}^2=\abs{\tr W\sqrt{\rho_j}\sqrt{\rho_i}}^2,
\end{equation}
where the unitary matrix $W=V_iU_i^{\dagger}U_{j}V_j^{\dagger}$. 
The maximization over purifications is equivalent 
to maximization over the unitary matrix $W$. 
An inequality  $\abs{\tr A\,B} \le \norm A\, \tr \abs B$ provides the required lower bound
\begin{equation}
\abs{\tr W\sqrt{\rho_j}\sqrt{\rho_i}}^2\leq \left(\tr\abs{\sqrt{\rho_j}\sqrt{\rho_i}}\right)^2.
\label{krasnal}
\end{equation}
The upper bound is attained by the unitary matrix $W^{\dagger}$ 
equal to the unitary part of the polar decomposition of $\sqrt{\rho_j}\sqrt{\rho_i}$. 
This finishes the proof.
\end{proof}

\subsubsection{Geometrical interpretation of  fidelity}

Consider two one--qubit states in the Bloch representation (\ref{bobr}),
\begin{eqnarray}
\rho_x=\frac{1}{2}(\id+{\vec x}\cdot\vec{\sigma}),\\
\rho_y=\frac{1}{2}(\id+{\vec y}\cdot\vec{\sigma}),
\end{eqnarray}
where $\vec{\sigma}$ is the vector of Pauli matrices (\ref{malpa}).
%Consider the second one--qubit state
%$\rho_y$ defined analogously and characterized by the vector $\bold{y}$. 
Fidelity of the pair of states  $\rho_x$ and $\rho_y$ reads
\begin{equation}
F(\rho_x,\rho_y)=\frac{1}{2}(1+{\vec x}\cdot{\vec y}+\sqrt{1-\lVert{\vec x}\rVert^2}\sqrt{1-\lVert{\vec y}\rVert^2}).
\end{equation}
If the states $\rho_x$ and $\rho_y$ are both pure then $\lVert{\vec x}\rVert=\lVert{\vec y}\rVert=1$ and the fidelity can be given by
\begin{equation}
F(\rho_x,\rho_y)=\cos^2{\frac{\alpha}{2}},
\end{equation}
where the angle $\alpha$ is formed by two Bloch vectors which represent the pure states $\rho_x$ and $\rho_y$ 
at the Bloch sphere. One can use this statement to define the angle between two states as a function of the fidelity. 
The generalization of such an angle for arbitrary two mixed states is given by
\begin{equation}
A(\rho_1,\rho_2):=\arccos \sqrt{F(\rho_1,\rho_2)}.
\end{equation}
It was proved \cite{wooters} that such an angle satisfies the axioms of a distance and leads to a metric.

\subsection{Mutual information}\label{bolek}

The goal of quantum information is to efficiently 
apply quantum resources for information processing. 
Consider the following situation.
A sender transmits the letters of the message from the set $X=\{a_1,a_2,...,a_k\}$. The letters occur with probabilities $p_i$, 
where $i=1,...,k$. 
The message is transmitted by a communication
channel, which can be noisy and can change
some of the letters.
The receiver performs a measurement and obtains outputs $Y$ 
with a possibly different probability distribution. 
According 
to the Shannon information theory \cite{Shannon}
the amount of information 
contained in the message characterized by
probability distribution $p_i$
 is given by the
 entropy $H(X)=-\sum_i p_i \log p_i$.
Entropy describes the average amount
of digits per letter necessary to transmit the message characterized
by this probability distribution in an optimal 
encoding scheme.  

The receiver knowing the letters $Y$ has
only a part of information contained in the 
original message $X$.
The information which $Y$ and $X$ have in common 
is characterized by the {\sl mutual information} $H(X:Y)$ defined by
\begin{equation}
H(X:Y) = H(X) + H(Y) - H(X,Y),
\label{byczek}
\end{equation}
where $H(X,Y)$ is the Shannon entropy of the 
joint probability distribution of the pairs of letters, one from $X$ and one from $Y$.

%Mutual information quantifies the dependence of 
%decoded on typical encoded messages and is 
%therefore a figure of merit of the transmission process.

 The 
errors caused by a channel can be perfectly corrected if the mutual information is equal to the entropy of the initial probability distribution.  
Otherwise the mutual information is bounded by 
the entropy of an initial distribution \cite{nielsen},
\begin{equation}
H(X:Y)\leq H(X).
\label{pajak}
\end{equation}
Following properties of the mutual information hold \cite{nielsen}: 
\begin{itemize}
\item Mutual information does not change 
$H(X:Y,Z)=H(X:Y)$ if the system $Z$ is uncorrelated with $Y$.
\item Mutual information does not increase 
if any process is made on each part, $H(X:Y)\geq H(X':Y')$, 
where prime denotes the states after the transformation.
\item If part of a system is discarded the mutual information decreases \newline\mbox{$H(X:Y,Z)\geq H(X:Z)$}. 
\end{itemize}

Mutual information can also be defined for quantum composite systems in terms of the
von Neumann entropy . The definition
is analogous to (\ref{byczek}):
\begin{equation}
S(\rho_P:\rho_Q)=S(\rho_P)+S(\rho_Q)-S(\rho_{PQ}),
\end{equation}
where states of subsystems are given by partial traces, for example, $\rho_P={\rm Tr}_Q\rho_{PQ}$.
Mutual information $S(\rho_P:\rho_Q)$ for quantum states
satisfies properties analogous to these listed above for the classical mutual information $H(X,Y)$.

\subsection{Holevo quantity}

{\sl Holevo $\chi$ quantity} (Holevo information) of the ensemble 
$\c E=\{q_i,\rho_i\}_{i=1}^k$ is defined by the formula
\begin{equation}
\chi( \{q_i,\rho_i\})\equiv S\left( \sum_{i=1}^k q_i \rho_i \right) - \sum_{i=1}^k q_i S(\rho_i).
\label{chichot}
\end{equation}
It  plays an important role
in quantum information theory. 
As the bound on the mutual information \cite{holevo},
Holevo quantity is related to fundamental restriction
on the information achievable from
measurement allowed by quantum mechanics.
It directly reflexes these features of quantum mechanics
which distinguishes this theory from classical physics.
In classical information theory 
the mutual information between the sender 
and the receiver is bounded only by the Shannon 
entropy of the probability distribution describing the original message.
In the case of an ideal channel between two parts
the mutual information is  equal to 
the upper bound.
In quantum case, even without any noise 
present during the transmission process, 
the mutual information is restricted
by the Holevo quantity which is smaller than
the entropy associated with the original message, unless
the states used to encode the message are
orthogonal. 

%....
%The measurements  on quantum objects which 
%receiver has to do do not give the 
%unique answer which letter were encoded.
%That is the reason why the mutual information 
%between sender and receiver of the message 
%encoded on quantum objects is restricted. 
%One of the fundamental results on transmitting 
%classical information by quantum means is 
%the {\sl Holevo bound}: 
%it is an upper bound for the mutual information. 

%Holevo's theorem~\cite{holevo} 
%provides a fundamental upper bound on 
%the mutual information in terms of the ensemble $\{q_i,\rho_i\}$. 
%The bound not only depends on the statistics of the 
%source but also on the encoding of the source alphabet 
%in the density matrices $\{\rho_i\}$.

The theorem of Holevo \cite{holevo} is presented
below together with its proof.
%Some quantities defined during the proof
%and some techniques will be used in the further parts
%of the theses.

\begin{theorem}[Holevo]
Let $\{\rho_i\}_{i=1}^k$ be a set of quantum states produced 
with probabilities $p_i$ from the distribution $P$. 
Outcomes of a POVM measurement performed on 
these states are encoded into symbols with probabilities $q_j$ from probability distribution $Q$.  
Whichever measurement is 
done, the accessible mutual information is bounded from above,
\begin{equation}
H(P:Q) \le S\left( \sum_{i=1}^k p_i \rho_i \right) - \sum_{i=1}^k p_i S(\rho_i).
\label{ineq:hol}
\end{equation}
\end{theorem}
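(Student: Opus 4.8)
The plan is to model the entire protocol—state preparation, transmission, and measurement—as a sequence of quantum operations on a tripartite system, and then exploit monotonicity of quantum mutual information (equivalently, strong subadditivity of the von Neumann entropy, which is available to us from the list of properties above) to convert the bound on classical mutual information into a statement about von Neumann entropies. First I would introduce three registers: a classical preparation register $P$ recording which letter $i$ was sent, the quantum register $Q$ carrying the states $\rho_i$, and a measurement-outcome register $M$ that will hold the POVM result. The joint state after preparation is
\begin{equation}
\rho_{PQ}=\sum_{i=1}^k p_i\,|i\>\<i|\otimes\rho_i,
\label{plan:PQ}
\end{equation}
where $\{|i\>\}$ is an orthonormal basis of $P$. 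A direct computation gives $S(\rho_{PQ})=H(P)+\sum_i p_i S(\rho_i)$ and $S(\rho_Q)=S(\sum_i p_i\rho_i)$, so that the quantum mutual information of this state is exactly the right-hand side of (\ref{ineq:hol}):
\begin{equation}
S(\rho_P:\rho_Q)=S\!\left(\sum_{i=1}^k p_i\rho_i\right)-\sum_{i=1}^k p_i S(\rho_i).
\label{plan:chi}
\end{equation}

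Next I would append the outcome register $M$ and apply to the $Q$--$M$ part the quantum operation that implements the POVM: a completely positive trace-preserving map $\Phi$ built from the measurement operators $K^j$, sending $\rho_i\mapsto\sum_j |j\>\<j|_M\otimes K^j\rho_i K^{j\dagger}$ (up to the appropriate isometric dilation, as in Section \ref{trombalski}). Because this map acts only on $Q$ (together with the fresh ancilla $M$) and leaves $P$ untouched, the monotonicity property of mutual information listed in Section \ref{bolek}—namely $S(\rho_P:\rho_Q)\ge S(\rho_{P'}:\rho_{Q'})$ under local processing—gives
\begin{equation}
S(\rho_P:\rho_Q)\ge S(\rho_P:\rho_M).
\label{plan:mono}
\end{equation}
Finally I would observe that after the measurement the register $M$ is effectively classical, and that the reduced state on $P\otimes M$ is diagonal in the product basis $\{|i\>\otimes|j\>\}$ with diagonal entries $p_i\,\mathrm{Tr}(K^j\rho_i K^{j\dagger})$, i.e. exactly the joint distribution of sent letter and measured outcome. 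For such a classically-correlated state the von Neumann mutual information reduces to the Shannon mutual information $H(P:Q)$ of that joint distribution, whence $S(\rho_P:\rho_M)=H(P:Q)$. Chaining this with (\ref{plan:mono}) and (\ref{plan:chi}) yields the claimed inequality (\ref{ineq:hol}).

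The main obstacle, and the step deserving the most care, is the monotonicity inequality (\ref{plan:mono}): one must justify that adjoining the ancilla $M$ and applying the measurement channel is a legitimate local operation on the $Q$ side, and that the resulting decrease of mutual information is precisely an instance of strong subadditivity. Concretely, I expect to derive (\ref{plan:mono}) by writing the global post-measurement state on $P\otimes Q\otimes M$, noting $S(\rho_{PQM})+S(\rho_M)\le S(\rho_{PM})+S(\rho_{QM})$ from strong subadditivity, and rearranging; the bookkeeping that turns this into a clean mutual-information monotonicity statement, together with checking that the initial adjoining of an uncorrelated ancilla does not change the mutual information (the first property of $H(X:Y)$ in Section \ref{bolek}), is where the real work lies. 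The remaining identifications in (\ref{plan:PQ}), (\ref{plan:chi}), and the classical reduction of $S(\rho_P:\rho_M)$ are routine entropy computations for block-diagonal states.
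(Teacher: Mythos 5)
Your proposal is correct and follows essentially the same route as the paper's proof: both encode the protocol in a tripartite state on registers $P$, $Q$, $M$, apply the measurement channel $\rho\otimes|0\>\<0|\mapsto\sum_j K^j\rho K^{j\dagger}\otimes|j\>\<j|$ to the $QM$ part, and invoke monotonicity of quantum mutual information under local processing and discarding (rooted in strong subadditivity) to obtain $S(\omega_P:\omega_Q)\ge S(\omega_{P'}:\omega_{M'})$, identifying the two sides as the Holevo quantity and $H(P:Q)$ respectively. The only cosmetic difference is that the paper initializes the apparatus register in $|0\>\<0|$ from the outset, whereas you adjoin the ancilla $M$ afterwards and justify this by the property that adding an uncorrelated system leaves the mutual information unchanged.
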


%Note that $\chi$ is independent of the measurement.
%Therefore quantum distinguishability 
%between initial states cannot be increased by 
%measuring them~\cite{nielsen}.

\begin{proof}
Consider a three partite state, where its parts 
are denoted by the letters $P, Q$ and $M$
\begin{equation}
\omega_{PQM}=\sum_ip_i\left|i\right\rangle\left\langle i\right|\otimes\rho_i\otimes\left|0\right\rangle\left\langle 0\right|.
\label{state:holera}
\end{equation}
Three parts of the system $P$, $Q$ and $M$ 
can be associated with the preparation state, 
quantum systems, and the measurement apparatus respectively.
The state $\omega_{PQM}$ describes the 
quantum system before the measurement, since
the state of the apparatus  is independent on 
the quantum states.

Assume that the state $\omega_{PQM}$ is 
subjected to the quantum operation acting on the subsystem $QM$ as follows,
$\Phi(\rho\otimes\left|0\right\rangle\left\langle 0\right|)=\sum_j K^j\rho K^{j\dagger}\otimes\left|j\right\rangle\left\langle j\right|$. 
The Kraus operators of this quantum operation 
form a POVM measurement since $\sum_jK^{j\dagger}K^j=\idty$.
The state after this measurement is given by
\begin{equation}
\omega_{P'Q'M'}=\sum_{ij}p_i\left|i\right\rangle\left\langle i\right|\otimes K^j\rho_iK^{j\dagger}\otimes\left|j\right\rangle\left\langle j\right|.
\label{zucek}
\end{equation}

%Notice that following inequalities concerning the mutual 
%information are satisfied,
%\begin{itemize}
%\item $S(P:Q)=S(P:Q,M)$ since system $M$ is initially uncorrelated, 
%\item $S(P:Q,M)\geq S(P':Q',M')$, 
%since mutual information cannot increase with an operation, 
%\item $S(P':Q',M')\geq S(P':M')$, 
%since trace out of part of the system cannot 
%cause growth of mutual information. 
%\end{itemize}
Properties of the mutual information 
listed in section \ref{bolek} imply
the key inequality of the proof:
\begin{equation}
S(\omega_P:\omega_Q)\geq S(\omega_{P'}:\omega{M'}).
\label{ineq:hol:key}
\end{equation}
To prove inequality (\ref{ineq:hol}) it is 
enough to calculate the quantities occurring in (\ref{ineq:hol:key}) 
for the state (\ref{state:holera}) and (\ref{zucek}) respectively. 
Since $\omega_{PQ}={\rm Tr}_M\omega_{PQM}=\sum_ip_i\left|i\right\rangle\left\langle i \right|\otimes\rho_i$,
the left hand side of (\ref{ineq:hol:key}) is given by
\begin{equation}
S(\omega_P:\omega_Q)=S(\omega_P)+S(\omega_Q)-S(\omega_{PQ})=S(\rho')-\sum_{i=1}^k p_iS(\rho_i),
\end{equation}
where $\rho'=\sum_ip_i\rho_i$.
This is the Holevo quantity which does not depend on the measurement operators $K^i$. 
To compute the right hand side of (\ref{ineq:hol:key}), $S(\omega_{P'}:\omega_{M'})$, consider a state (\ref{zucek}).
The observation that $p(x,y)=p_xp(y|x)=p_x\tr{K^{y\dagger}K^y\rho_x}$ leads to
\begin{equation}
S(\omega_{P'}:\omega_{M'})=H(P:Q),
\end{equation}
where $Q=\{q_y\}_y$ and $q_y=\tr K^y\rho'K^{y\dagger}$. 
This is the mutual information between the probability distributions describing the outcomes of the measurement and the original message.
That finishes the proof of the Holevo bound on 
the mutual information of message encoded into quantum systems.
\end{proof}

Above theorem is one of the most
important applications of the Holevo quantity.
Quantum information theory 
uses also the Holevo quantity $\chi$ 
to define channel capacity. 
There exist several definitions of quantum capacity of a channel depending on whether the entanglement between the
input states is allowed or not. In the case that quantum 
states in a message are not entangled 
the {\it Holevo capacity} of channel $\Phi$ is defined by
\begin{equation}
C_H(\Phi)=\max_{\c E=\{p_i,\rho_i\}_{i=1}^{k}} \left[S\left(\sum_{i=1}^kp_i\Phi(\rho_i)\right)-\sum_{i=1}^kp_iS\left(\Phi(\rho_i)\right)\right].
\label{kunka}
\end{equation}
The Holevo  quantity $\chi(\c E)$, which can be interpreted as the Holevo capacity of the
identity channel, bounds the capacity $C_H$ for any channel \cite{nielsen}: 
%[moze to bedzie konsekwencja nierownosci donalda]
\begin{equation}
C_H\leq \chi(\c E).
\end{equation}
%In other words, the identity channel has greater channel capacity than any other channel. 

Yet another application of the Holevo quantity 
concerns the ensembles of quantum states.
Formula (\ref{chichot}) can be given by
the average relative entropy 
\begin{equation}
\sum_{i=1}^k p_iD\left(\rho_i,\sum_{j=1}^k p_j\rho_j\right)=S\left(\sum_{i=1}^k p_i\rho_i\right)-\sum_{i=1}^k p_iS(\rho_i),
\label{picolo}
\end{equation}
where the relative entropy is defined as $D(\rho_1,\rho_2)\equiv \tr\rho_1(\log\rho_1-\log\rho_2)$.
It defines an average divergence of every state 
from the average state.
Average (\ref{picolo}) is known as the quantum Jensen Shannon divergence \cite{topsoe}.
Its classical version, 
for probability measures, is considered in \cite{topsss}. 
From mathematical point of view, the
Holevo quantity can be treated as a
quantity which characterizes the concavity
of the entropy function.
 
The Holevo information will be the main object 
 considered in Part \ref{parttwo} of this thesis.

\part{Bounds on the Holevo quantity}\label{parttwo}

\section{Holevo quantity and the correlation matrix}\label{sec:prl}

In the following chapters several 
inequalities for the Holevo information (Holevo quantity) will be given.
It is well-known~\cite{nielsen} that the Shannon 
entropy of the probability vector $P=\{p_1,...,p_k\}$ 
is an upper bound for the Holevo quantity of an ensemble $\c E=\{p_i,\rho_i\}_{i=1}^k$:
\begin{equation*}
\chi\bigl( \c E \bigr) \ \le \  H(P).
\end{equation*}
Since the Holevo quantity forms a bound on
accessible mutual information,
the difference between entropy of probability vector $H(P)$
and the Holevo quantity specifies how the chosen set 
of density matrices 
differs from 
the ideal code, which can be decoded 
perfectly by the receiver. The upper bound on the
Holevo quantity can be used for estimating this
difference. One of the estimation for the 
Holevo quantity is presented in the following section.

% %Holevo quantity is the upper bound for the mutual information between 
% %sender of the message encoded in quantum carriers and receiver.
%  %It is known that the Holevo quantity is non-negative and bounded from above
%  %by Shannon entropy of probability distribution of  quantum carriers. 
% In the following chapter it is suggested to look at the message letters as 
% %the output of quantum measurement or 
% the outputs of a quantum channel which describes the interaction of the input 
% system with an environment. Such interpretations allows 
% one to estimate the Holevo quantity by the entropy of 
% %correlation matrix of the measurement or 
% the state of environment. 
% %Our bound is the consequence of the strong subadditivity theory for the von Neumann entropy. 
% %Our results are exemplified in some problems concerning the concavity of entropy.

As discussed in Section \ref{mrowka} the  
correlation matrix $\sigma$ can be equivalently interpreted in several  ways.
If the set of  the Kraus operators $K^i$ defines a quantum channel,
$\Phi(\rho)=\sum_{i=1}^k K^i\rho K^{i\dagger}$,
the correlation matrix $\sigma$ characterizes the output state
of the complementary channel, $\sigma=\tilde{\Phi}(\rho)$, 
or the state of the environment after the quantum operation.
As mentioned in Section \ref{koza}, $\sigma$
defines also  the Gram matrix of 
purifications of the states $\{\rho_i\}_{i=1}^k$. 
The entropy $S(\sigma)$ is related to the
exchange entropy or the entropy which the
environment gains during a quantum operation
provided the initial state of the environment is pure.
In the following analysis a quantum channel $\Phi(\rho)=\sum_iK^i\rho K^{i\dagger}$
is treated as a device preparing an ensemble of quantum states $\c E=\{p_i,\rho_i\}_{i=1}^k$, where
\begin{equation}
 p_i=\tr K^i\rho K^{i\dagger}, \qquad {\rm and}\qquad \rho_i=\frac{K^i\rho K^{i\dagger}}{\tr K^i\rho K^{i\dagger}}.
\end{equation}
The described situation is illustrated in Fig. \ref{fig:bara}.
\begin{figure}[ht]
\centering
\scalebox{.8}{\includegraphics{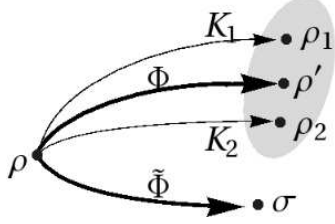}}
\caption{
A quantum channel $\Phi$ represents a device preparing the 
ensemble of quantum states $\c E=\{p_i,\rho_i\}_{i=1}^2$. 
The average of this ensemble is denoted as $\rho'=\Phi(\rho)=\sum_{i=1}^2K^i\rho K^{i\dagger}$.
The complementary channel $\tilde{\Phi}$ transforms an initial state $\rho$ into 
the state  $\sigma$ of the environment.
}
\label{fig:bara}
\end{figure}

Independently of the interpretation of the Kraus operators $K^i$ the
following theorem proved in \cite{roga} holds.

\begin{theorem}
\label{prop1}
Let $\sum_{i=1}^k K^{i\dagger}K^i=\idty$ be the identity decomposition and  
$\rho$ an arbitrary quantum state.
Define the probability 
distribution  $p_i=\tr K^i\rho K^{i\dagger}$ 
and a set of density matrices 
$\rho_i=\frac{K^i\rho K^{i\dagger}}{\tr K^i\rho K^{i\dagger}}$.
The Holevo quantity $\chi(\{\rho_i,p_i\})$ 
is bounded by the entropy of the correlation matrix, 
$\sigma=\sum_{i,j=1}^k \tr K^i\rho K^{j\dagger}|i\>\<j|$:
\begin{equation}
\chi(\{\rho_i,p_i\}) =S\big(\sum_{i=1}^k p_i\rho_i\big)-\sum_{i=1}^k p_i S(\rho_i)\leq S(\sigma)\leq H(P),
\label{propeq1}
\end{equation}
where $H(P)$ is the Shannon entropy of the probability distribution
$P=\{p_1,...,p_k\}$.
\end{theorem}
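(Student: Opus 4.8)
The plan is to prove the two inequalities in (\ref{propeq1}) separately, disposing of the right-hand one $S(\sigma)\le H(P)$ first since it is elementary. The diagonal entries of the correlation matrix are exactly the probabilities, $\sigma_{ii}=\tr K^i\rho K^{i\dagger}=p_i$, so $\sigma$ and the diagonal matrix built from $P$ share the same diagonal. By the Schur--Horn theorem the spectrum of $\sigma$ is majorised by $P$, and since the von Neumann entropy is Schur-concave, majorisation reverses the entropy ordering and yields $S(\sigma)\le H(P)$. Equivalently, one can invoke non-negativity of the relative entropy $D(\sigma\,\|\,\Delta\sigma)\ge 0$, where $\Delta$ is the pinching (dephasing) in the basis $\{|i\>\}$: since $\tr\sigma\log\Delta\sigma=\sum_i p_i\log p_i=-H(P)$, this gives $H(P)-S(\sigma)\ge 0$ at once.

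The substantial part is the left inequality $\chi(\{\rho_i,p_i\})\le S(\sigma)$, and here I would pass to a Stinespring dilation. Using the isometry $F|\phi\>=\sum_i (K^i|\phi\>)\otimes|i\>$ from Section \ref{trombalski}, form $\omega=F\rho F^\dagger=\sum_{ij}K^i\rho K^{j\dagger}\otimes|i\>\<j|$ on $\c H_N\otimes\c H_M$, whose partial traces are precisely $\tr_M\omega=\rho'=\sum_i p_i\rho_i$ and $\tr_N\omega=\sigma$. Next I would purify the input: choosing a purification $|\psi\>_{RN}$ of $\rho$ and setting $|\Omega\>_{RQE}=(\idty_R\otimes F)|\psi\>$ produces a pure tripartite state (with $Q$ the output $\c H_N$ and $E$ the environment $\c H_M$) satisfying $\tr_R|\Omega\>\<\Omega|=\omega$. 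Purity then supplies the book-keeping identities $S(\omega_{RE})=S(\omega_Q)=S(\rho')$ and $S(\omega_E)=S(\sigma)$, so that the conditional entropy of $R$ given the environment is $S(R|E)_{\omega}=S(\omega_{RE})-S(\omega_E)=S(\rho')-S(\sigma)$.

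The key step is to recognise the Holevo term $\sum_i p_i S(\rho_i)$ as the same conditional entropy after a von Neumann measurement of the environment. Measuring $E$ in the basis $\{|i\>\}$ collapses $|\Omega\>_{RQE}$, with probability $p_i$, onto a pure state $|\phi_i\>_{RQ}$ whose reduction onto $Q$ is $\rho_i$; by purity its reduction $\tau_i$ onto $R$ obeys $S(\tau_i)=S(\rho_i)$. The dephased state $\tilde\omega_{RE}=\sum_i p_i\,\tau_i\otimes|i\>\<i|$ therefore has $S(R|E)_{\tilde\omega}=\sum_i p_i S(\tau_i)=\sum_i p_i S(\rho_i)$. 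The proof is then completed by the observation that conditional entropy cannot decrease under a channel acting on the conditioning system, $S(R|E)_{\omega}\le S(R|E)_{\tilde\omega}$, which is exactly strong subadditivity (listed among the properties of $S$ in the excerpt) applied to the Stinespring dilation of the dephasing map. Combining the two expressions gives $S(\rho')-S(\sigma)\le\sum_i p_i S(\rho_i)$, that is $\chi\le S(\sigma)$.

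The main obstacle I anticipate is precisely this last monotonicity step: one must arrange the registers so that dephasing acts only on $E$ while $R$ is left untouched, and then verify that $S(R|E)_\omega\le S(R|E)_{\tilde\omega}$ is genuinely an instance of strong subadditivity rather than a weaker mutual-information bound. Replacing conditional entropy by mutual information here would only yield the looser estimate $\chi\le S(\rho')+S(\sigma)-S(\rho)$, which is why the conditional-entropy formulation, with the auxiliary system $R$ purifying $\rho$, is essential. Everything else is routine once the dilation and the purification of the input are in place.
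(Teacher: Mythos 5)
Your proof is correct and is, underneath, the paper's own argument in different bookkeeping: the reference system $R$ purifying $\rho$ is exactly the purification the paper invokes to turn standard strong subadditivity into the permuted form $S(\omega_3)+S(\omega_1)\le S(\omega_{12})+S(\omega_{23})$, and dilating your dephasing of $E$ via $|i\rangle\mapsto|i\rangle\otimes|i\rangle$ reconstructs precisely the paper's doubled-index state $\omega_{123}=\sum_{ij}|ii\rangle\langle jj|\otimes K^i\rho K^{j\dagger}$, so your inequality $S(R|E)_\omega\le S(R|E)_{\tilde\omega}$ is the same relation among the same four entropies ($S(\rho')-S(\sigma)\le\sum_i p_i S(\rho_i)$). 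The only slip is in the easy half, where Schur--Horn gives $P\prec\mathrm{spec}(\sigma)$ (the diagonal is majorized by the spectrum), not the reverse as you wrote; it is immaterial because your pinching argument $D(\sigma\,\|\,\Delta\sigma)\ge 0$ establishes $S(\sigma)\le H(P)$ correctly, matching the paper's majorization step.
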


\begin{proof}
The right hand side of the inequality: 
$S(\sigma)\leq H(P)$, is a consequence of 
the majorization theorem, see e.g. \cite{KZ}. 
Since the probability vector $P$
forms a diagonal of a correlation matrix, we have  
$S(\sigma)\leq S({\rm diag}(\sigma))=H(P)$.
The left hand side of the inequality (\ref{propeq1})
is proved due to the strong subadditivity of the von Neumann entropy \cite{liebruskai}.
The multipartite state $\omega_{123}$ 
is constructed in such a way that entropies of its partial traces 
are related to specific terms of (\ref{propeq1}). 

The multipartite state $\omega_{123}$ is constructed by using
an isometry $F|\phi\rangle=\sum_{i=1}^k |i\rangle\otimes|i\rangle\otimes K^i|\phi\rangle$.
The state $\omega_{123}=F\rho F^{\dagger}$ 
is given explicitly by the formula
\begin{equation}
 \omega_{123}=F\rho F^{\dagger}=\sum_{i,j=1}^{k}|i\rangle\langle j|\otimes|i\rangle\langle j|\otimes K^i \rho K^{j\dagger}.
\label{om123}
\end{equation}
%An isometry does not change the nonzero part of spectrum, 
%therefore  $S(\omega_{123})=S(\rho)$.
States of the subsystems $\omega_{i}$ are given by partial traces
over the remaining subsystems, for example,
 $\omega_1={\rm Tr}_{23}\omega_{123}$ and so on.
 
Let us introduce the following notation 
$A_{ij}=K^i \rho K^{j\dagger}$. 
In this notation the quantities from the Theorem 
\ref{prop1} take the form $p_i=\tr A_{ii}$ and $\rho_i=A_{ii}/p_i$. 
Notice that
\begin{align}
 &S(\omega_{12})=S(\sigma),
\label{om12}\\
&S(\omega_{3})=S\big(\sum_{i=1}^k p_i\rho_i\big).
\end{align}
Moreover
\begin{align}
-\sum_{i=1}^k p_i S(\rho_i)&=\sum_{i=1}^k \tr A_{ii}\log A_{ii} - \sum_{i=1}^k \tr(A_{ii})\log \tr(A_{ii})\nonumber\\
&=S(\omega_1)-S(\omega_{23}).
\label{average}
\end{align}
The strong subadditivity relation
in the form which is used most frequently
\begin{equation}
S(\omega_{123})+S(\omega_2)\leq S(\omega_{12})+S(\omega_{23})
\label{ssassa}
\end{equation}
does not lead to the desired form (\ref{propeq1}). However,
due to the purification procedure and the fact that
a partial trace of a pure state has the same entropy 
as the complementary partial trace, inequality 
(\ref{ssassa}) can be rewritten in an alternative form \cite{ruskaijakies}:
\begin{equation}
 S(\omega_{3})+S(\omega_1)\leq S(\omega_{12})+S(\omega_{23}).
\label{ssa2}
\end{equation}
This inequality applied to the partial traces of the state (\ref{om123})
proves \mbox{Theorem \ref{prop1}.}
\end{proof}

For an ensemble of pure states $\rho_i=|\psi_i\>\<\psi_i|$,
the left hand side of (\ref{propeq1}) 
consists of the term $S(\sum_ip_i|\psi_i\>\<\psi_i|)$ only.
The correlation matrix $\sigma$ in the case of pure
states is given by the Gram matrix.
Due to the simple observation (\ref{slon}), the left inequality (\ref{propeq1}) is saturated 
in case of any ensemble $\c E$  consisting of pure states only.

Using a different method an inequality analogous to 
Theorem \ref{prop1} has been recently proved in
\cite{holsir} for the case of infinite dimension. 
It can be also
found in \cite{private} in context 
of quantum cryptography.
The authors analyse there the security of a cryptographic key   
created by using so called 'private qubits'.
In such a setup an inequality analogous to 
 (\ref{propeq1}) appears as a bound
on the information of the eavesdropper.

\subsection{Other inequalities for the Holevo quantity}\label{los}

Methods similar to that used to prove Theorem \ref{prop1}
can be applied to prove other useful bounds.
%Some applications of the propositions from the 
%following chapter together with Theorem \ref{prop1}
%are presented in the next chapter.

\begin{proposition}
Consider a 
POVM measurement characterized by operators  
$\sum_{i=1}^k K^{i\dagger}K^i=\idty$ which define the
outcome states, 
 $\rho_i=\frac{K^i\rho K^{i\dagger}}{\tr K^i\rho K^{i\dagger}}$
 and their probabilities,
 $p_i=\tr K^i\rho K^{i\dagger}$.
The average entropy of the output states 
is smaller than entropy of the initial state,
\begin{equation}
 \sum_{i=1}^k p_i S(\rho_i)\leq S(\rho).
\end{equation}
\label{prop2}
\end{proposition}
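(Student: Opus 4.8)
The plan is to reuse the tripartite state $\omega_{123}=F\rho F^{\dagger}$ built from the isometry $F|\phi\rangle=\sum_{i=1}^k|i\rangle\otimes|i\rangle\otimes K^i|\phi\rangle$ introduced in the proof of Theorem \ref{prop1}, and to read off the target quantity from the entropies of its marginals. The first observation I would record is that $F$ is a genuine isometry: the resolution $\sum_i K^{i\dagger}K^i=\idty$ yields $F^{\dagger}F=\idty$, so conjugation by $F$ preserves the nonzero spectrum and hence $S(\omega_{123})=S(\rho)$. This is the step that brings the \emph{input} state $\rho$ into the picture, rather than the output $\sum_i p_i\rho_i$, and it is precisely what separates the claimed bound from the weaker concavity inequality $\sum_i p_i S(\rho_i)\le S(\sum_i p_i\rho_i)$.

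Next I would recycle the computation (\ref{average}) from the previous proof, namely $\sum_{i=1}^k p_i S(\rho_i)=S(\omega_{23})-S(\omega_1)$, where $\omega_{23}=\sum_i p_i|i\rangle\langle i|\otimes\rho_i$ and $\omega_1=\mathrm{diag}(P)$. Combined with $S(\omega_{123})=S(\rho)$, the statement $\sum_i p_i S(\rho_i)\le S(\rho)$ becomes exactly
\begin{equation*}
S(\omega_{23})-S(\omega_1)\le S(\omega_{123}),
\end{equation*}
which is the Araki--Lieb triangle inequality for the bipartite cut $(23)\,|\,1$.

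The point worth stressing is that, in contrast to Theorem \ref{prop1}, strong subadditivity applied to $\omega_{123}$ does \emph{not} deliver this estimate: its standard form (\ref{ssassa}) only reproduces the dual relation $S(\rho)\le S(\sigma)+\sum_i p_i S(\rho_i)$, and its alternative form (\ref{ssa2}) only reproduces the Holevo-type bound $\chi\le S(\sigma)$. What is genuinely required is the triangle inequality, so the main obstacle is to establish $S(\omega_{23})-S(\omega_1)\le S(\omega_{123})$ from the tools already at hand. I would do this in the same spirit as the purification argument used earlier for the alternative form of strong subadditivity: purify $\omega_{123}$ by an ancilla $4$ to a pure state $\omega_{1234}$, so that the complementary-trace identity gives $S(\omega_{23})=S(\omega_{14})$ and $S(\omega_{123})=S(\omega_4)$; ordinary subadditivity on the cut $1\,|\,4$ then reads $S(\omega_{14})\le S(\omega_1)+S(\omega_4)$, which rearranges into precisely $S(\omega_{23})\le S(\omega_1)+S(\omega_{123})$.

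Chaining these pieces together, $\sum_i p_i S(\rho_i)=S(\omega_{23})-S(\omega_1)\le S(\omega_{123})=S(\rho)$, closing the argument. The only conceptual hurdle is recognising that the routine strong-subadditivity route of Theorem \ref{prop1} stalls here and that one must instead invoke, or re-derive from subadditivity and purification, the Araki--Lieb triangle inequality; once that is in place, the remaining steps are the same marginal-entropy bookkeeping already performed for the previous proof.
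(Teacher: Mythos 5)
Your proposal is correct and follows essentially the same route as the paper: the same isometry $F$ and tripartite state $\omega_{123}$ giving $S(\omega_{123})=S(\rho)$, the same marginal-entropy identity (\ref{average}) giving $\sum_i p_i S(\rho_i)=S(\omega_{23})-S(\omega_1)$, and the Araki--Lieb triangle inequality on the cut $1\,|\,23$ to conclude. The only difference is cosmetic: the paper simply cites Araki--Lieb (and, incidentally, your sign convention $S(\omega_{23})-S(\omega_1)\le S(\omega_{123})$ is the one actually needed, where the paper's displayed form has the marginals transposed), whereas you re-derive it from subadditivity plus purification, making the argument self-contained.
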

\begin{proof}
Due to the fact that the transformation $F$ in Eq. (\ref{om123}) is 
an isometry, the three-partite state $\omega_{123}$ has 
 the same nonzero spectrum as the initial state $\rho$. 
Hence $\omega_{123}$ and $\rho$ have the same entropy. 
Due to  equality (\ref{average}) and the Araki--Lieb inequality \cite{arakilieb}:
\begin{equation}
 S(\omega_{1})-S(\omega_{23})\leq S(\omega_{123}),
\end{equation}
one completes the proof of Proposition \ref{prop2}.
\end{proof}
Note that  concavity of entropy implies also another inequality 
$\sum_{i=1}^k p_i S(\rho_i)\leq S(\rho')=S(\sum_{i=1}^k p_i \rho_i)$.
Proposition \ref{prop2} has been known before \cite{blady}
as the {\it quantum information gain}.
%It is proven in this chapter because it 
%will be used in the next section.

Definition of the channel capacity (\ref{kunka})
encourages one to consider bounds
on the Holevo quantity for the concatenation
of two quantum operations.
Treating the probabilities $p_i$ and
states $\rho_i$ as the outputs from the
first channel one can replace
maximization over $\c E=\{\rho_i,p_i\}_{i=1}^k$
in (\ref{kunka}) by maximization 
over the initial state $\rho$ and the quantum operation $\Phi_1$.
%[...zastanowic sie czy to zmienia zlozonosc problemu czy nie...]
% Defining $\Phi_1$ by terms of Kraus operators $K_i$
% does not change complexity of minimization problem 
% since $K_i$ can be defined by using $\rho_i$, $p_i$ and 
% initial state $\rho$. 
The strategy similar to that used in Theorem \ref{prop1}
allows us to prove the following relations.

\begin{proposition}{Consider two quantum operations:
$\Phi_1(\rho)=\sum_{i=1}^{k_1} K_1^i\rho K^{i \dagger}_1$ and $\Phi_2(\rho)=\sum_{i=1}^{k_2} K^i_2\rho K^{i \dagger}_2$. 
Define $p_i=\tr K^i_1\rho K_1^{i\dagger}$ and 
$\rho_i=\frac{K^i_1\rho K_1^{i\dagger}}{\tr K^i_1\rho K_1^{i\dagger}}$. 
The following inequality holds:
\begin{equation}
 S\big(\Phi_2\circ\Phi_1(\rho)\big)-\sum_{i=1}^{k_1} p_i S\big(\Phi_2(\rho_i)\big)\leq S(\Phi_1(\rho))-\sum_{i=1}^{k_1} p_i S(\rho_i).
\label{pierw1}
\end{equation}}
\label{prop3}
\end{proposition}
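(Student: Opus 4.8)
The plan is to recognise that both sides of (\ref{pierw1}) are Holevo quantities and that passing from the right-hand side to the left-hand side is precisely the action of the channel $\Phi_2$ on every state of the ensemble with the weights $p_i$ held fixed. First I would use linearity of $\Phi_2$ together with $\Phi_1(\rho)=\sum_i p_i\rho_i$ to write $\Phi_2\circ\Phi_1(\rho)=\sum_{i=1}^{k_1}p_i\,\Phi_2(\rho_i)$, so that the left-hand side of (\ref{pierw1}) equals $\chi(\{p_i,\Phi_2(\rho_i)\})$ while the right-hand side equals $\chi(\{p_i,\rho_i\})$. The statement is then the monotonicity of the Holevo quantity when a common channel is applied to each member of an ensemble.

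To make this monotonicity transparent, following the strategy of Theorem \ref{prop1} I would encode each ensemble as a classical--quantum state. Define $\Omega_{PQ}=\sum_{i=1}^{k_1}p_i\,|i\rangle\langle i|\otimes\rho_i$ on $\c H_P\otimes\c H_Q$. A direct computation, identical to the one in the Holevo theorem, gives $S(\Omega_P)=H(P)$, $S(\Omega_Q)=S(\Phi_1(\rho))$ and $S(\Omega_{PQ})=H(P)+\sum_i p_iS(\rho_i)$, so the quantum mutual information $S(\Omega_P:\Omega_Q)$ reproduces the right-hand side of (\ref{pierw1}). Applying $\Phi_2$ to the $Q$-factor only, $\Omega'_{PQ}=(\id_P\otimes\Phi_2)\Omega_{PQ}=\sum_i p_i|i\rangle\langle i|\otimes\Phi_2(\rho_i)$, and the same computation shows that $S(\Omega'_P:\Omega'_Q)$ equals the left-hand side.

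It then remains to compare the two mutual informations. The cleanest route is to invoke the monotonicity of the mutual information under a local operation on one part, which is one of the properties listed in Section \ref{bolek} and holds verbatim in the quantum case; this yields $S(\Omega'_P:\Omega'_Q)\le S(\Omega_P:\Omega_Q)$, which is exactly (\ref{pierw1}). To remain fully within the strong-subadditivity framework used earlier, I would instead dilate $\Phi_2$ by a Stinespring isometry $V:\c H_Q\to\c H_{Q'}\otimes\c H_E$, form $\Omega'_{PQ'E}=(\id_P\otimes V)\Omega_{PQ}(\id_P\otimes V)^{\dagger}$, and note that isometric invariance of the entropy gives $S(\Omega'_P:\Omega'_{Q'E})=S(\Omega_P:\Omega_Q)$. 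Strong subadditivity with the identification $A=P$, $B=Q'$, $C=E$ then reads $S(\Omega'_P:\Omega'_{Q'})\le S(\Omega'_P:\Omega'_{Q'E})$, and since $\Omega'_{PQ'}=\tr_E\Omega'_{PQ'E}=\sum_i p_i|i\rangle\langle i|\otimes\Phi_2(\rho_i)$ coincides with the state $\Omega'_{PQ}$ above, the argument is complete.

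The proof carries no deep obstacle: both inequalities I rely on (monotonicity of the mutual information, equivalently strong subadditivity) are already available in the excerpt. The only point requiring care is the bookkeeping — checking that $\Phi_2$ acts solely on the $Q$-register, so that $\Phi_2$ applied to the ensemble coincides with $(\id_P\otimes\Phi_2)$ applied to $\Omega_{PQ}$, and that the probabilities $p_i$, fixed by $\Phi_1$ and $\rho$, are genuinely unchanged by $\Phi_2$. A purely equivalent alternative would be to write both Holevo quantities via the average relative entropy as in (\ref{picolo}) and apply the Lindblad--Uhlmann monotonicity of relative entropy termwise, $D(\Phi_2(\rho_i),\Phi_2(\bar\rho))\le D(\rho_i,\bar\rho)$ with $\bar\rho=\Phi_1(\rho)$, and then sum against $p_i$.
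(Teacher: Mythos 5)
Your proof is correct, but it takes a genuinely different route from the paper's. The paper proves (\ref{pierw1}) by building one explicit four--partite state $\omega'_{1234}=\sum_{n,l}\sum_{i,j}|i\rangle\langle j|\otimes|nn\rangle\langle ll|\otimes K^i_2K^n_1\rho K^{l\dagger}_1 K^{j\dagger}_2$ out of the Kraus operators of \emph{both} channels, identifies each term of the inequality with (combinations of) entropies of its marginals --- e.g.\ $S(\omega'_{14})=S(\Phi_1(\rho))$ and $S(\omega'_3)-S(\omega'_{124})=-\sum_i p_iS(\rho_i)$, using isometric invariance of the spectrum --- and then applies strong subadditivity once, in the form $S(\omega'_{124})+S(\omega'_4)\leq S(\omega'_{14})+S(\omega'_{24})$. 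You instead first recognize both sides as Holevo quantities of the ensembles $\{p_i,\rho_i\}$ and $\{p_i,\Phi_2(\rho_i)\}$, encode them as classical--quantum states $\Omega_{PQ}$, and reduce the statement to data processing of the quantum mutual information under $\id_P\otimes\Phi_2$, which you then derive from strong subadditivity via a Stinespring dilation of $\Phi_2$ alone. The ultimate inequality invoked is thus the same, but the auxiliary constructions differ: you never need the Kraus representation of $\Phi_1$ (only the identity $\Phi_1(\rho)=\sum_ip_i\rho_i$), and you keep the flag register classical rather than doubling it into $|nn\rangle\langle ll|$; this is more modular and matches the remark the paper itself makes immediately after its proof, namely that (\ref{pierw1}) is the known monotonicity $\chi(p_i,\rho_i)\geq\chi(p_i,\Phi(\rho_i))$. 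Your closing alternative --- writing both sides via (\ref{picolo}) and applying Lindblad--Uhlmann monotonicity of relative entropy termwise, $D(\Phi_2(\rho_i),\Phi_2(\bar\rho))\leq D(\rho_i,\bar\rho)$ with $\bar\rho=\Phi_1(\rho)$ and $\sum_ip_i\Phi_2(\rho_i)=\Phi_2(\bar\rho)$ by linearity --- is the shortest correct route of all, resting on a different (though equivalent in strength) key lemma. What the paper's heavier construction buys in exchange is uniformity with its proof of Theorem \ref{prop1} and reusable intermediate identities: the same state $\omega'_{1234}$ and its partial traces feed directly into Proposition \ref{prop4} and the subsequent degradable--channel discussion, which your self-contained argument does not supply.
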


\begin{proof}
 Let us consider the four--partite state:
\begin{equation}
 \omega'_{1234}=\sum_{n,l=1}^{k_1}\sum_{i,j=1}^{k_2}|i\rangle\langle j|\otimes|nn\rangle\langle ll|\otimes K^i_2K^n_1\rho K^{l\dagger}_1 K^{j\dagger}_2,
\label{omega1234}
\end{equation}
where $|nn\>\equiv |n\>\otimes|n\>$,
and  the strong subadditivity relation in the form
\begin{equation}
 S(\omega'_{124})+S(\omega'_4)\leq S(\omega'_{14})+S(\omega'_{24}).
\label{ssa22}
\end{equation}
%Add to both side the entropy $S(\omega_{3})$ and rewrite the inequality
%\begin{equation}
% S(\omega'_{4})+S(\omega'_3)-S(\omega'_{24})\leq S(\omega'_{14})+S(\omega'_{3})-S(\omega'_{124}).
%\label{ssa23}
%\end{equation}
Notice that 
\begin{eqnarray*}
& S(\omega'_4)=S(\Phi_2\circ\Phi_1(\rho)),\\
& S(\omega'_3)-S(\omega'_{24})=-\sum_i p_i S\big(\Phi_2(\rho_i)\big),\\
%& S(\omega'_3)-S(\omega'_{24})=\sum_i \tr A_{ii}\log A_{ii} - \sum_{i} \tr(A_{ii})\log \tr(A_{ii})=-\sum_i p_i S\big(\Phi_2(\rho_i)\big),\\
& S(\omega'_{14})=S(\sum_{i,j=1}^{k_2}|i\rangle\langle j|\otimes K^i_2\Phi_{1}(\rho) K^{j\dagger}_2)=S(\Phi_{1}(\rho)).
\end{eqnarray*}
The third equality is due to the fact that an isometry, $F_2|\phi\rangle=\sum_{i=1}^{k_2} |i\rangle\otimes K^i_2|\phi\rangle$,
does not change the nonzero part of spectrum. This property
 is also used to justify the following equation
\begin{equation}
 S(\omega'_3)-S(\omega'_{124})=-\sum_{i=1}^{k_1} p_i S(\rho_i).
\end{equation}
Substituting these quantities to the strong subadditivity relation (\ref{ssa22})
we finish the proof.
\end{proof}

Inequality \ref{pierw1} is known \cite{nielsen} as
the property that the Holevo quantity 
decreases under a quantum operation
 $\chi(p_i,\rho_i)\geq\chi(p_i,\Phi(\rho_i))$.
%The proof can be also derived in a bit different formulation.
%The monotonicity of the Holevo quantity \cite{} [nielsen]
%gives the formula:
%\begin{equation}
% S\big(\Phi_2\circ\Phi_1(\rho)\big)-\sum_{ij} p_iq_j S\big(\rho_{ij}\big)\leq S(\Phi_1(\rho))-\sum_i p_i S(\rho_i).
%\label{krowka}
%\end{equation}

Consider notation used in the proof of Proposition \ref{prop3}.
Concavity of the entropy gives 
\begin{equation}
\sum_{i=1}^{k_1}p_iS\Big(\Phi_2(\rho_i)\Big)=\sum_{i=1}^{k_1}p_iS\Big(\sum_{j=1}^{k_2}q_j\rho_{ij}\Big)
\geq\sum_{i=1}^{k_1}\sum_{j=1}^{k_2}p_iq_jS(\rho_{ij}).
\label{konik}
\end{equation}
where $\rho_{ij}=\frac{K^j_2K^i_1\rho K^{i\dagger}_1 K^{j\dagger}_2}{\tr K^j_2K^i_1\rho K^{i\dagger}_1 K^{j\dagger}_2}$ 
and probabilities $p_iq_j=\tr K^j_2K^i_1\rho K^{i\dagger}_1 K^{j\dagger}_2$.
%This inequality allows one to decrease the left hand side of 
%inequality (\ref{krowka}) and prove (\ref{pierw1}).
Using Theorem \ref{prop1} and concavity of entropy
(\ref{konik}) one proves:

\begin{proposition}{Consider two quantum operations: 
$\Phi_1(\rho)=\sum_{i=1}^{k_1} K^i_1\rho K^{i \dagger}_1$ and $\Phi_2(\rho)=\sum_{i=1}^{k_2} K^i_2\rho K^{i \dagger}_2$. 
Define $p_i=\tr K^i_1\rho K_1^{i\dagger}$ and $\rho_i=\frac{K^i_1\rho K_1^{i\dagger}}{\tr K^i_1\rho K_1^{i\dagger}}$. 
The following inequality holds:
\begin{equation}
S\Big(\Phi_2\circ\Phi_1(\rho)\Big)-\sum_{i=1}^{k_1} p_i S\Big(\Phi_2(\rho_i)\Big)\leq S(\sigma_{II}),
\label{drug1}
\end{equation}
where the output of the complementary channel to $\Phi_2\otimes\Phi_1$ is denoted as $\sigma_{II}=\widetilde{\Phi_2\circ\Phi_1}(\rho)$.}
% is a state of environment defined as in (\ref{sigma}).}
\label{prop4}
\end{proposition}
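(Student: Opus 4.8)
The plan is to apply Theorem \ref{prop1} to the concatenated channel $\Phi_2\circ\Phi_1$ and then close the remaining gap with the concavity estimate (\ref{konik}). First I would observe that the composition is itself a quantum operation whose Kraus operators are the products $K^j_2K^i_1$, so that $\Phi_2\circ\Phi_1(\rho)=\sum_{i,j}K^j_2K^i_1\rho K^{i\dagger}_1K^{j\dagger}_2$ and these products form an identity resolution, $\sum_{i,j}K^{i\dagger}_1K^{j\dagger}_2K^j_2K^i_1=\idty$. Hence Theorem \ref{prop1} applies verbatim to this single combined channel, with outcome probabilities $p_iq_j=\tr K^j_2K^i_1\rho K^{i\dagger}_1K^{j\dagger}_2$ and outcome states $\rho_{ij}$ exactly as defined just before the proposition. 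The correlation matrix associated with the products $K^j_2K^i_1$ is precisely the output $\sigma_{II}=\widetilde{\Phi_2\circ\Phi_1}(\rho)$ of the complementary channel.

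Applying Theorem \ref{prop1} to this combined channel yields directly
\begin{equation*}
S\Big(\sum_{i,j}p_iq_j\rho_{ij}\Big)-\sum_{i,j}p_iq_jS(\rho_{ij})\leq S(\sigma_{II}).
\end{equation*}
Since $\sum_{i,j}p_iq_j\rho_{ij}=\Phi_2\circ\Phi_1(\rho)$, the first term is exactly $S(\Phi_2\circ\Phi_1(\rho))$, so this already reproduces the left-hand side of (\ref{drug1}), but with the \emph{finer} double-index average $\sum_{i,j}p_iq_jS(\rho_{ij})$ subtracted rather than $\sum_i p_iS(\Phi_2(\rho_i))$.

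It then remains only to compare the two averages of output entropies. Here I would invoke the concavity inequality (\ref{konik}), which gives $\sum_{i}p_iS(\Phi_2(\rho_i))\geq\sum_{i,j}p_iq_jS(\rho_{ij})$. Subtracting the larger quantity can only decrease the left-hand side, so
\begin{equation*}
S(\Phi_2\circ\Phi_1(\rho))-\sum_{i}p_iS(\Phi_2(\rho_i))\leq S(\Phi_2\circ\Phi_1(\rho))-\sum_{i,j}p_iq_jS(\rho_{ij})\leq S(\sigma_{II}),
\end{equation*}
which is exactly (\ref{drug1}).

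I anticipate no serious obstacle here: the argument is in essence the observation that Theorem \ref{prop1} is stated for an \emph{arbitrary} identity decomposition and therefore covers the composite channel automatically. The only points requiring minor care are the bookkeeping checks that the correlation matrix of the products $K^j_2K^i_1$ coincides with the environment state of the complementary channel $\widetilde{\Phi_2\circ\Phi_1}$, and that the single-index average $\sum_ip_iS(\Phi_2(\rho_i))$ dominates the double-index one---the latter being precisely the content of (\ref{konik}), so the inequality goes in the direction needed.
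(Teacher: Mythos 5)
Your proposal is correct and follows essentially the same route as the paper: the text introduces Proposition \ref{prop4} with the words ``Using Theorem \ref{prop1} and concavity of entropy (\ref{konik}) one proves,'' i.e.\ exactly your argument of applying Theorem \ref{prop1} to the composed channel with product Kraus operators $K^j_2K^i_1$ (whose correlation matrix is $\sigma_{II}=\widetilde{\Phi_2\circ\Phi_1}(\rho)$) and then replacing the double-index average $\sum_{i,j}p_iq_jS(\rho_{ij})$ by the larger single-index average $\sum_i p_iS\big(\Phi_2(\rho_i)\big)$ via (\ref{konik}). Your bookkeeping checks (the identity resolution $\sum_{i,j}K^{i\dagger}_1K^{j\dagger}_2K^j_2K^i_1=\idty$ and the identification of the correlation matrix with the environment state) are precisely the details the paper leaves implicit.
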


%\begin{proof}
%One can consider the state (\ref{omega1234}) and use the strong subadditivity to its partitions in the following form:
%\begin{equation}
 % S(\omega'_4)+S(\omega'_{13})\leq S(\omega'_{123})+S(\omega'_{24}).
%\end{equation}
%The inequality (\ref{drug1}) which we want to prove can be rewritten in the form:
%\begin{equation}
 %S(\omega'_4)+S(\omega'_{3})\leq S(\omega'_{123})+S(\omega'_{24}).
%\end{equation}
%Therefore suffice is to prove that $S(\omega'_{3})\leq S(\omega'_{13})$. Because the matrix  $\omega'_{3}$ 
%is a diagonal one consisting of traces of blocks of of the  block diagonal matrix 
%$\omega'_{13}$ therefore $\omega'_{3}$ majorizes $\omega'_{13}$ and has smaller entropy.
%\end{proof}

%%%%%%%%%%%%%%%%%%%%%%%%%%%%%%%%%%%%%%%%%%%

\subsubsection{Some consequences}\label{consek}

This section provides three applications 
of theorems proved in Sections \ref{sec:prl} and \ref{los}.
One of them concerns the {\it coherent information}.
This quantity is defined for a given quantum operation $\Phi$ 
and an initial state $\rho$ as follows  \cite{szum}
\begin{equation}
I_{coh}(\Phi,\rho)=S\big(\Phi(\rho)\big)-S\big(\tilde{\Phi}(\rho)\big),
\label{zwierzyna}
\end{equation}
where $\tilde{\Phi}(\rho)$ is the output state of the channel complementary
to $\Phi$. 
To some extent, coherent information
in quantum information theory plays a
similar role to mutual information
in classical information theory.
It is known \cite{nielsen} % page 564 
that $I_{coh}(\Phi,\rho)\leq S(\rho)$. 
That is a relation similar to (\ref{pajak}).
Moreover, it has been shown that only if $I_{coh}(\Phi,\rho)=S(\rho)$
the process $\Phi$ can be perfectly reversed.
In this case the perfect quantum error correction is possible \cite{szum}.  
The coherent information is also used to
define the {\it quantum capacity} of a quantum channel \cite{lloyd}
\begin{equation}
C_Q(\Phi)=\max_{\rho}I_{coh}(\Phi,\rho).
\label{katemidlton}
\end{equation}

The definition of the coherent information (\ref{zwierzyna})
can be formulated alternatively \cite{szum} 
by means of an extended quantum operation $\Phi\otimes\id$ acting on
a purification $|\psi\>\in\c H_2\otimes\c H_3$ of an initial state, $\rho={\rm Tr}_3|\psi\>\<\psi|$. 
This fact is justified as follows. The purification of $\rho$ determines as well
the purification $\Omega_{123}\in\c H_{1}\otimes\c H_{2}\otimes\c H_{3}$ of the state $\omega\in\c H_{1}\otimes\c H_{2}$ in (\ref{omom}),
\begin{equation}
\Omega_{123}=U_{12}\otimes\id_3\Big(|1\>\<1|_1\otimes|\psi\>\<\psi|_{23}\Big)U_{12}^{\dagger}\otimes\id_{3}.
\end{equation}
The partial trace over the environment (subspace $\c H_{1}$) reads
\begin{equation}
\Omega_{23}=[\Phi\otimes\id]\left(|\psi\>\<\psi|\right).
\end{equation}
It has the same entropy as the partial trace over the second and third subspace,
$
\Omega_{1}=\sigma,
$
which is a state of environment after evolution, 
\begin{equation}
S(\sigma)=S([\Phi\otimes\id]\left(|\psi\>\<\psi|\right)),
\label{smark}
\end{equation}
and $S(\sigma)=S(\tilde{\Phi}(\rho))$.

Coherent information (\ref{zwierzyna}) can be written as 
\begin{equation}
I_{coh}(\Phi,\rho)=S({\rm Tr}_{3}\Omega_{23})-S(\Omega_{23}). 
\label{motyl}
\end{equation}
The classical counterpart of the coherent information can be 
defined by using the Shannon entropy instead of the von Neumann entropy
and probability vectors instead of
density matrices in Eq. (\ref{motyl}).
The classical coherent information is always negative,
since the entropy of a  joint probability distribution 
cannot be smaller than its marginal distribution.

Inequalities proved in Theorem \ref{prop1} and 
Proposition \ref{prop2} together provide  the following bound on 
the coherent information,
\begin{equation}
 I_{coh}(\Phi,\rho)\leq \sum_{i=1}^k p_i S(\rho_i)\leq S(\rho),
\label{icoh}
\end{equation}
where $p_i=\tr K^i\rho K^{i\dagger}$ and $\rho_i=K^i\rho K^{i\dagger}/p_i$ are defined by  
Kraus representations  of the channel, $\Phi(\rho)=\sum_{i=1}^k K^i\rho K^{i\dagger}$.
The equality between coherent information and
the entropy of initial state $S(\rho)$ guarantees that 
$\Phi$ is reversible. 
Inequality (\ref{icoh}) implies a similar,  weaker statement: 
%can be said on equality $\sum_i p_i S(\rho_i)= S(\rho)$: only if 
 only if the following equality holds $\sum_{i=1}^k p_iS(\rho_i)=S(\rho)$, 
the quantum operation $\Phi$ can be reversed.

Another consequence of inequalities proved in Section \ref{los}
concerns the so called {\it degradable channels}.
These channels are considered in quantum information theory in the context 
of their capacity \cite{ruskai}. A channel $\Phi_{deg}$ is called {\it degradable} if there 
exists a channel  $\Psi$ such that $\Psi\circ\Phi_{deg}=\tilde{\Phi}_{deg}$. 
Substituting the
degradable channel $\Phi_1=\Phi_{deg}$ and 
the  additional channel $\Phi_2=\Psi$ to inequality in Proposition \ref{prop3}  one 
 obtains a lower bound for the average entropy of $\Psi(\rho_i)$, where $\rho_i$ 
are output states from the channel $\Phi_{deg}$,
\begin{equation}
0\leq \sum_{i=1}^k p_i S(\rho_i)-I_{coh}(\Phi_{\deg},\rho)\leq \sum_{i=1}^k p_i S\big(\Psi(\rho_i)\big),
\end{equation}
where $I_{coh}(\Phi,\rho)=S\big(\Phi_{deg}(\rho)\big)-S\big(\tilde{\Phi}_{deg}(\rho)\big)$. 
The left   inequality is due to  inequality (\ref{icoh}). Therefore  
Proposition \ref{prop3} provides some characterization of the 
channel $\Psi$ which is associated with a degradable channel.

The third application of propositions from Section \ref{los}
is given as follows.
The Jamio{\l}kowski isomorphism \cite{jam} gives a representation of 
a quantum map $\Phi$ which acts on $N$ dimensional system by 
a density matrix on the extended space of size $N^2$. 
This state can be written as:
\begin{equation}
\sigma_{\Phi}=[\id\otimes\Phi]\big(\left|\phi^{+}\right\rangle\left\langle \phi^{+}\right|\big),
\label{dyn}
\end{equation}
where $\left|\phi^{+}\right\rangle=\frac{1}{\sqrt{N}}\sum_{i=1}^{N}\left|i\right\rangle\otimes\left|i\right\rangle$ 
is the maximally entangled state. A rescaled state $D_{\Phi}=N\sigma_{\Phi}$ is called the \textit{dynamical matrix}. 
In the special case, if the initial state is maximally mixed, $\rho=\frac{1}{N}\idty$, 
the entropy of the correlation matrix $\sigma$ written in (\ref{smark}) 
is equal to the entropy of the dynamical matrix. 

A quantum map $\Phi$ can by defined using its Kraus representation 
(\ref{quantop}). Since the Kraus representation is not unique \cite{KZ}, one can 
associate  many different 
correlation matrices with a given quantum operation $\Phi$ depending on both, the initial state and the set of Kraus operators. 
However the entropy of the dynamical matrix $D_{\Phi}$ is invariant under 
different decompositions. 
This entropy characterizes the quantum operation and 
is called the \textit{entropy of a map} \cite{roga}, 
denoted by $S^{\map}(\Phi)$ as defined in Eq. (\ref{pusia2000}).

Due to Theorem \ref{prop1} the entropy of a map has the following interpretation. 
It determines an upper bound 
on the Holevo quantity (\ref{chichot}) for a POVM measurement defined by the Kraus operators of $\Phi$  if
the initial state is
maximally mixed $\rho=\rho_{*}=\frac{1}{N}\idty$. 
Moreover, the entropy of a map 
is an upper bound for the Holevo quantity for POVM given by any 
set of Kraus operators $\{K^{i\dagger}K^i\}$ 
which realize the same quantum operation $\Phi$,
\begin{equation}
 \max_{\{K^i\}}\quad \chi\Big(\{p_i=\tr K^i\rho_{*} K^{i\dagger},\ \ \rho_i=\frac{K^i\rho_{*} K^{i\dagger}}{\tr K^i\rho_{*} K^{i\dagger}}\}\Big)\leq S(\Phi),
\label{entop}
\end{equation}
where $\rho'=\Phi(\rho)=\sum_{i=1}^k K^i\rho K^{i\dagger}$.

Proposition \ref{prop4} provides also an alternative lower bound for the entropy 
of composition of two quantum maps given by Theorem 3 in \cite{roga0}. 
The inequality for the entropy of composition of two maps can be now stated as
\begin{equation}
 0\leq{\rm MAX}\Big\{S(\Phi_2\circ\Phi_1(\rho_*))-\sum_{i=1}^k p_i S(\Phi_2(\rho_i)), \ S(\Phi_1)+\Delta\Big\}\leq S(\Phi_2\circ\Phi_1),
\end{equation}
where $\Delta=S\big(\Phi_2\circ\Phi_1(\rho_*)\big)-S\big(\Phi_1(\rho_*)\big)$ and $\Phi(\rho)=\sum_{i=1}^k p_i \rho_i$. 
The lower bound proved in our earlier paper  \cite{roga0} could be smaller than $0$. 
The improved bound is always greater than $0$ due to concavity of entropy.

%%%%%%%%%%%%%%%%%%%%%%%%%%%%%%%%%%%%%%%%%%%%%%%%%

\subsection{Discussion on the Lindblad inequality}\label{linneusz}

Lindblad \cite{lindblad} proved an inequality which relates 
the von Neumann entropy of a state $\rho$, its image $\rho'=\Phi(\rho)=\sum_{i=1}^k p_i\rho_i$ 
and the entropy of the correlation matrix $\sigma$ equal to the
output state of the complementary channel $\sigma=\tilde{\Phi}(\rho)$, 
\begin{equation}
 |S(\rho')-S(\rho)|\leq S(\sigma)\leq S(\rho')+S(\rho).
\label{lin}
\end{equation}
Another two Lindblad inequalities are obtained by permuting the states $\rho, \rho'$ and $\sigma$ in this formula.
The proof of Lindblad proceeds in a similar way to the
proof of Theorem \ref{prop1}.
It involves
a bi--partite auxiliary state 
$\omega''=\sum_{i,j=1}^k|i\rangle\langle j|\otimes K_i \rho K_j^{\dagger}$, where
the identity $S(\rho)=S(\omega'')$ is due to an isometry
similar to $F$ in (\ref{om123}). 
The Araki--Lieb inequality \cite{arakilieb}, $|S(\rho_1)-S(\rho_2)|\leq S(\rho_{12})$ 
applied to $\omega''$ proves the left hand side inequality of (\ref{lin}), 
while the 
subadditivity relation $S(\rho_{12})\leq S(\rho_1)+S(\rho_2)$
applied to $\omega''$ proves the right hand side inequality of (\ref{lin}).

Inequalities from Theorem \ref{prop1} and Proposition \ref{prop2}
\begin{eqnarray}
 S(\rho')-\sum_{i=1}^k p_i S(\rho_i)\leq S(\sigma),\label{wynik1}\\
\sum_{i=1}^k p_i S(\rho_i)\leq S(\rho)\label{wynik2}
\end{eqnarray}
use a three--partite auxiliary state 
$\omega=\sum_{i,j=1}^k|ii\rangle\langle jj|\otimes K_i \rho K_j^{\dagger}$. 
As in the case of the Lindblad inequality (\ref{lin}), 
the identity $S(\rho)=S(\omega)$ holds due to isometry.
The strong subadditivity relation applied to $\omega$ 
proves inequality (\ref{wynik1}), while 
the Araki--Lieb inequality applied for $\omega$ proves inequality (\ref{wynik2}).
Notice that an extension of the auxiliary state and application of
the strong subadditivity relation allows one to use the average entropy 
to  new inequalities for interesting quantities:
the entropy of the initial state, the entropy of the output state of 
a quantum channel $\rho'=\Phi(\rho)$ and the 
entropy of the output state of the complementary channel $\tilde{\Phi}(\rho)$.

 In the case $S(\rho')\geq S(\rho)$ (e.g. for any bistochastic operations) 
the result (\ref{wynik1}) gives a better lower constraints for $S(\sigma)$ 
than the Lindblad bound (\ref{lin}). In this case
\begin{equation}
S(\rho')-S(\rho)\leq S(\rho')-\sum_{i=1}^k p_i S(\rho_i)\leq S(\sigma),
\end{equation}
due to Prop. \ref{prop2}. However, if $S(\rho')\leq S(\rho)$ the result of 
Lindblad can be more precise depending on the values of $S(\rho)$, $S(\rho')$ 
and the average entropy $\sum_{i=1}^k p_i S(\rho_i)$. In consequence, due to Lindblad inequality 
(\ref{lin}) and the inequality (\ref{entop}) one obtains another lower bound for the entropy of a map:
{\small
\begin{equation}
 {\rm MAX} \Big\{\log(N)-S(\Phi(\rho_*)),\ \ \  \max_{\{K^i\}}\ \chi\Big(p_i=\tr K^i\rho_{*} K^{i\dagger},\ \ \rho_i=\frac{K^i\rho_{*} K^{i\dagger}}{\tr K^i\rho_{*} K^{i\dagger}}\Big)\Big\}\leq S^{\map}(\Phi),
\end{equation}}
where $\rho'=\Phi(\rho)=\sum_{i=1}^k K^i\rho K^{i\dagger}=\sum_{i=1}^k p_i\rho_i$.

%%%%%%%%%%%%%%%%%%%%%%%%%%%%%%%%%%%%%%%%%%%%%%%%%%
\subsection{Inequalities for other entropies}\label{relat}

Inequality (\ref{propeq1}) uses the strong subadditivity relation
in the form (\ref{ssa2}) which is a specific feature of
the von Neumann entropy. Relation (\ref{ssa2})
can be equivalently formulated in terms of
relative von Neumann entropies.

The relative von Neumann entropy $D(\rho_1,\rho_2)$ 
is defined as follows
\begin{equation}
D(\rho_1,\rho_2)=\tr \rho_1\big[\log{\rho_1}-\log(\rho_2)\big]
\label{nogacz}
\end{equation}
and is finite for $\rho_2\in {\rm supp}(\rho_1)$, otherwise
it becomes infinite.

Monotonicity of relative entropy 
states that for any three--partite quantum state 
$\omega_{123}$ and its partial traces the following inequality holds:
\begin{equation}
D(\omega_{23},\omega_{2}\otimes\omega_3)\leq D(\omega_{123},\omega_{12}\otimes\omega_3).
\label{monotonicity}
\end{equation}
It is an important and 
nontrivial property of the von Neumann entropy \cite{liebruskai}, \cite{uhlma}. 
Monotonicity of the von Neumann entropy (\ref{monotonicity}) 
rewritten using the definition (\ref{nogacz}) leads to
the strong subadditivity relation:
\begin{equation}
S(\omega_{123})+S(\omega_{3})\leq S(\omega_{13})+S(\omega_{23}).
\label{gug}
\end{equation}
Complementary partial traces of any multipartite pure state
have the same entropy. This fact can be applied to
purifications of $\omega_{123}$. Therefore,
relation (\ref{gug}) is equivalent to (\ref{ssa2}) which 
can be applied to the specific three--partite state (\ref{om123})
\begin{equation}
\omega_{123}=\sum_{i,j=1}^k|i\>\<j|\otimes|i\>\<j|\otimes K_i\rho K_{j}^{\dagger}
\label{lucky}
\end{equation}
and used to prove the upper bound on the Holevo quantity in terms of a correlation matrix $\chi\leq S(\sigma)$. 
Hence, inequality (\ref{propeq1}) is 
a consequence of the monotonicity of the relative von Neumann entropy. 

Monotonicity of entropy holds also for some generalized entropies e.g. 
Tsallis entropies of order $0\leq \alpha<1$ \cite{monotsallis} 
or R\'{e}nyi entropies of order $0\leq q\leq 2$ \cite{monoRenyi}. 
Direct generalization of $\chi\leq S(\sigma)$ is not so easy, since 
the key step in the proof was the strong subadditivity form (\ref{ssa2}). 
In case of generalized entropies such a form cannot be obtained from the monotonicity of relative entropy.

The Holevo quantity can be expressed by the  
relative entropy. Consider the state (\ref{lucky}) 
and the notation: 
 $K^i\rho K^{i\dagger}=p_i\rho_i$, and $\sum_{i=1}^k p_i\rho_i=\rho'$.
The relative entropy reads:
\begin{eqnarray}
& &D(\omega_{23},\omega_{2}\otimes\omega_3)=\\
\!\!\!\!\!\!&=&\tr \omega_{23}\log{\omega_{23}}-\tr\omega_{23}\log{\omega_2}-\tr \omega_{23}\log\omega_3 \\
&=&\sum_{i=1}^k\tr p_i\rho_i\log{p_i\rho_i}-\sum_{i=1}^k p_i\log{p_i}-\tr \rho'\log{\rho'}\\
&=&\sum_{i=1}^k p_i\tr \rho_i\log{\rho_i}-\tr \rho'\log{\rho'}\\
&=&S(\rho')-\sum_{i=1}^k p_iS(\rho_i)
=\sum_{i=1}^k p_iD(\rho_i,\rho')=\chi.
\end{eqnarray}
The equality between the Holevo quantity and
relative entropy holds also for the Tsallis entropies of any order $q$
\begin{equation}
T_{\alpha}(\rho)=\frac{1}{1-\alpha}\Big[1-\tr \rho^{\alpha}\Big],
\end{equation} 
where the relative Tsallis entropy $D^T_{\alpha}$ of order ${\alpha}$ is defined as \cite{monotsallis} 
\begin{equation}
D^T_{\alpha}(\rho_1,\rho_2)=\frac{1}{{\alpha}-1}\Big[1-\tr \rho_1^{\alpha}\rho_2^{1-{\alpha}}\Big].
\end{equation}
It is now possible to compute the Tsallis--like generalized relative 
entropy $D^T_{\alpha}$ between a bipartite state $\omega_{23}$ and the 
product of its partial traces which leads to the {\it generalized Holevo quantity} $\chi^T_{\alpha}$.
If one considers the state (\ref{lucky})
\begin{eqnarray}
D^T_{\alpha}(\omega_{23},\omega_2\otimes\omega_3)
&=&\frac{1}{{\alpha}-1}\Big[1-\tr \omega_{23}^{\alpha}(\omega_2\otimes\omega_3)^{1-{\alpha}}\Big]\\
&=&\frac{1}{{\alpha}-1}\Big[1-\sum_{i=1}^k\tr (p_i\rho_i)^{\alpha}p_i^{1-{\alpha}}\rho'^{1-{\alpha}}\Big]\\
&=&\sum_{i=1}^k p_i\frac{1}{{\alpha}-1}(1-\tr\rho_i^{\alpha}\rho'^{1-{\alpha}})\\
&=&\sum_{i=1}^k p_iD^T_{\alpha}(\rho_i,\rho')\equiv\chi^T_{\alpha}.
\end{eqnarray}

In a similar way we can work with the R\'{e}nyi entropy $S_q^{R}(\rho)=\frac{1}{1-\alpha}\log[\tr\rho^{\alpha}]$. 
The corresponding relative R\'{e}nyi entropy reads \cite{stephanie}
\begin{equation}
D^R_q(\rho_1,\rho_2)=\frac{1}{q-1}\log\tr[\rho_1^{q}\rho_2^{1-q}]
\label{relak}
\end{equation} 
and the {\it R\'{e}nyi--Holevo quantity} is given by
\begin{equation}
\chi^R_{q}=\frac{1}{q-1}\log \tr (\sum_ip_i\rho_i^{q})^{1/q}.
\label{renif}
\end{equation}
Equality between the generalized R\'{e}nyi--Holevo quantity (\ref{renif}) and 
the R\'{e}nyi relative entropy (\ref{relak}) holds if relative entropy
concerns partial traces of (\ref{lucky}) and the state 
 $\rho''=(\sum_ip_i\rho_i^{q})^{1/q}$ as follows
\begin{equation}
\chi^R_{q}=D^R_{q}(\omega_{23},\omega_2\otimes\rho'').
\label{mezo}
\end{equation}
The Holevo quantity (\ref{mezo}) is smaller than $D^R_{q}(\omega_{23},\omega_2\otimes\omega_3)$ \cite{stephanie}.

The monotonicity of relative entropy 
for three considered types of generalized entropies: von Neumann entropy,
Tsallis entropy of order $0\leq \alpha<1$ and R\'{e}nyi entropy
of order $0\leq q\leq 2$ gives
\begin{eqnarray}
\chi\leq D(\omega_{123},\sigma\otimes\rho'),\label{chirelative}\\
\chi^T_{\alpha}\leq D^T_{\alpha}(\omega_{123},\sigma\otimes\rho'),\label{chirelative2}\\
\chi^R_q\leq D^R_q(\omega_{123},\sigma\otimes\rho').\label{chirelative3}
\end{eqnarray}
These relations state 
that the Holevo quantity is bounded by the relative entropy between 
the joint state of the quantum system and its environment and 
the states of these subsystems taken separately.  

In case of von Neumann entropy, inequality
(\ref{chirelative}) can be written explicitly as
\begin{equation}
\chi\leq S(\sigma)+S(\rho')-S(\rho).
\label{threestates}
\end{equation}
Notice that $\rho$ is an initial state and $S(\rho)=S(\omega_{123})$
due to isometry transformation, $F:\rho\rightarrow \omega_{123}$. 
Relation (\ref{threestates}) joints entropies of the initial state, the final state,
the state of the environment and the Holevo quantity in a single formula.
Inequality (\ref{threestates}) which can be rewritten as
\begin{equation}
S(\rho)\leq S(\sigma)+\sum_{i=1}^k p_iS(\rho_i)
\end{equation}
 gives a finer bound  than that provided by the Lindblad inequality: $S(\rho)\leq S(\sigma)+S(\rho')$. 
%That can be expected since the monotonicity of relative 
%entropy is stronger than subadditivity used by Limblad to prove his inequality.
Inequality (\ref{threestates}) can be written as $\chi\leq S(\sigma)+Y$, 
where $|Y|=|S(\rho')-S(\rho)|\leq S(\sigma)$, due to one of the Lindblad inequalities.
In some cases this inequality confines the relation (\ref{propeq1}).

\subsection{Searching for the optimal bound}

The state $\sigma$ can be defined for a triple consisting of a
probability distribution, set of $k$ density matrices of size $N$ and a set of $k$ unitary
matrices, $\{p_i,\rho_i,U_i\}_{i=1}^{k}$.
Every triple $(p_i,\rho_i,U_i)$ defines uniquely
the pure state $|\psi_i\>$ which is the purification
of state $\rho_i$ as follows
\begin{equation}
|\psi_i\>=\sum_{r=1}^N(U_i\otimes\sqrt{\rho_i}V_i)|e_r\>\otimes|e_r\>
\label{smok}
\end{equation}
as shown in (\ref{eq:fid:purif}).
The Holevo quantity depends only 
on $\c E=\{p_i,\rho_i\}_{i=1}^{k}$.
Therefore,  Theorem \ref{prop1}
can be reformulated as follows:
\begin{theorem}\label{psyikoty}
For any ensemble $\{p_i,\rho_i,U_i\}_{i=1}^{k}$
the Holevo quantity is bounded by the entropy 
of the correlation matrix $\sigma$ minimized over all unitary matrices $U_i$
\begin{equation}
\chi(\{p_i,\rho_i\})=S(\sum_{i=1}^k p_i\rho_i)-\sum_{i=1}^k p_iS(\rho_i)\leq 
\min_{\{U_i\}}S(\sigma)=\min_{\{U_i\}} S(\sum_{i=1}^k p_i|\psi_i\>\<\psi_i|),
\label{zmaxem}
\end{equation}
where $|\psi_i\>=\sum_{r=1}^N (U_i\otimes\sqrt{\rho_i}V_i)|e_r\>\otimes|e_r\>$ and
 $\sigma_{ij}=\sqrt{p_ip_j}\tr\sqrt{\rho_i}\sqrt{\rho_j}U_j^{\dagger}U_i$.
\end{theorem}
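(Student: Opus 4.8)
The plan is to recognize that Theorem \ref{psyikoty} is a reformulation of Theorem \ref{prop1}, obtained by feeding it the explicit construction of Kraus operators from an arbitrary ensemble described in Section \ref{caneleone}. First I would recall that for any ensemble $\c E=\{p_i,\rho_i\}_{i=1}^{k}$ and any fixed choice of unitaries $\{U_i\}$, that recipe produces the initial state $\rho=\sum_{i=1}^{k}p_iU_i^{\dagger}\rho_iU_i$ together with Kraus operators $K^i=\sqrt{p_i\rho_i}\,U_i\frac{1}{\sqrt{\rho}}$ which satisfy the identity resolution $\sum_{i=1}^{k}K^{i\dagger}K^i=\idty$ and reproduce $K^i\rho K^{i\dagger}=p_i\rho_i$. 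Thus the prescribed probabilities $p_i$ and output states $\rho_i$ arise from a genuine POVM, and Theorem \ref{prop1} applies verbatim, giving $\chi(\{p_i,\rho_i\})\leq S(\sigma)$ for the correlation matrix $\sigma$ associated with this construction.

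Next I would compute the entries of $\sigma$. Using $\frac{1}{\sqrt{\rho}}\rho\frac{1}{\sqrt{\rho}}=\idty$ one obtains $K^i\rho K^{j\dagger}=\sqrt{p_ip_j}\,\sqrt{\rho_i}\,U_iU_j^{\dagger}\,\sqrt{\rho_j}$, so that $\sigma_{ij}=\tr K^i\rho K^{j\dagger}=\sqrt{p_ip_j}\,\tr\sqrt{\rho_i}\sqrt{\rho_j}\,U_j^{\dagger}U_i$ after a cyclic rearrangement, reproducing the form stated in the theorem.

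The decisive observation comes third: the left-hand side $\chi(\{p_i,\rho_i\})$ depends only on the ensemble $\c E$ and is completely independent of the unitaries $\{U_i\}$, whereas $\sigma$, and therefore $S(\sigma)$, genuinely depend on the $\{U_i\}$. Since the bound $\chi\leq S(\sigma)$ of Theorem \ref{prop1} holds for every admissible choice of $\{U_i\}$, I may take the minimum over the (compact) unitary group on the right-hand side while leaving the left-hand side untouched, yielding $\chi(\{p_i,\rho_i\})\leq\min_{\{U_i\}}S(\sigma)$.

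Finally I would identify this correlation matrix with the Gram matrix of the purifications (\ref{smok}). By the discussion of Section \ref{koza}, the entries $\sigma_{ij}=\sqrt{p_ip_j}\<\psi_j|\psi_i\>$ are exactly those of the Gram matrix of the vectors $|\psi_i\>$, and the eigenvalue identity (\ref{slon}) gives $S(\sigma)=S(\sum_{i=1}^{k}p_i|\psi_i\>\<\psi_i|)$ for each choice of $\{U_i\}$. Substituting this equality inside the minimization yields the second equality in (\ref{zmaxem}) and completes the argument. I expect no genuine obstacle here: the content is entirely bookkeeping on top of Theorem \ref{prop1}, and the only point requiring care is the verification that every choice of $\{U_i\}$ produces a legitimate Kraus representation, which is exactly what Section \ref{caneleone} guarantees.
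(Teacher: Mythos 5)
Your proposal is correct and follows essentially the same route as the paper: it instantiates Theorem \ref{prop1} via the Kraus construction of Section \ref{caneleone}, notes that $\chi$ is independent of the $\{U_i\}$ so the bound survives minimization, and closes with the Gram-matrix identity (\ref{slon}), which is exactly the paper's (largely implicit) argument. One cosmetic remark: the cyclic rearrangement of $\tr\bigl[\sqrt{\rho_i}\,U_iU_j^{\dagger}\sqrt{\rho_j}\bigr]$ gives $\tr\bigl[\sqrt{\rho_j}\sqrt{\rho_i}\,U_iU_j^{\dagger}\bigr]$ rather than literally the form $\tr\bigl[\sqrt{\rho_i}\sqrt{\rho_j}\,U_j^{\dagger}U_i\bigr]$ stated in the theorem, but relabelling $U_i\mapsto U_i^{\dagger}$ turns one family into the transpose of the other, so the spectra, entropies and the minimum coincide and nothing in your argument is affected.
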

The last equality of (\ref{zmaxem}) holds since
the correlation matrix $\sigma$ can be represented as  
the Gram matrix of purifications of $\rho_i$. 
It is known that for any Gram matrix equality (\ref{slon}) holds.

Finding minimization of $S(\sigma)$ over unitaries is
not an easy problem in general. In the following chapter
the problem will be solved for the ensemble of $k=2$ states,
and the solution is written in terms of square root of the fidelity between both states.
A conjecture that the matrix of the square roots of fidelities also
bounds the Holevo quantity for ensembles of $k=3$ states
will be formulated and some weaker bounds will be proved in the next section.

\subsubsection{Optimal bound for  two matrices}

The tightest upper bound on the Holevo quantity occurring in 
Theorem \ref{psyikoty} is obtained by taking minimum of $S(\sigma)$ over the set of unitaries. 
This is equivalent to the POVM which minimizes the correlation matrix 
among all POVM which give the same output states.
For two output states $\rho_1$ and $\rho_2$ occurring with 
probabilities $(\lambda, 1-\lambda)$ the correlation matrix
is given by
\begin{equation}
 \sigma=\begin{pmatrix}
\lambda & \sqrt{\lambda(1-\lambda)}\tr\sqrt{\rho_1}\sqrt{\rho_2}U_2^{\dagger}U_1  \\
\sqrt{\lambda(1-\lambda)}\tr\sqrt{\rho_2}\sqrt{\rho_1}U_1^{\dagger}U_2  & 1-\lambda
\end{pmatrix}.
\label{mamin}
\end{equation}
Its entropy is the lowest, if the absolute values of the off--diagonal elements are
the largest. As has been shown in Eq. (\ref{krasnal}) the expression
$\tr\sqrt{\rho_1}\sqrt{\rho_2}U_2^{\dagger}U_1$
attains its maximum over unitary matrices at the value 
\begin{equation}
\sqrt{F_{12}}=\tr{\sqrt{\sqrt{\rho_1}\rho_2\sqrt{\rho_1}}},
\end{equation}
where for brevity we use $F_{12}$ instead of $F(\rho_1,\rho_2)$.
This quantity is equal to the square root fidelity (\ref{def:fid:mixed}).
Therefore the correlation matrix of the smallest 
entropy can be rewritten in terms of the square root fidelity,
\begin{equation}
 \sigma_{min}=\begin{pmatrix}
\lambda & \sqrt{\lambda(1-\lambda)}\sqrt{F_{12}}  \\
\sqrt{\lambda(1-\lambda)}\sqrt{F_{12}} & 1-\lambda
\end{pmatrix}.
\label{sigmamin}
\end{equation}
%where $F$ is the Uhlmann fidelity $F=\tr\sqrt{\rho_1^{\frac{1}{2}}\rho_2\rho_1^{\frac{1}{2}}}$.

\subsection{Jensen Shannon Divergence}

Minimal entropy of the correlation matrix characterizing 
an ensemble of two 
density matrices is related to the distance between them in the set of
density matrices. 
If the probability distribution in (\ref{sigmamin}) is uniform, $\lambda=1/2$,
the square root of the von Neumann entropy of $\sigma_{min}$ forms a metric \cite{roga4}.
It is called the {\it entropic distance} $D_E(\rho_1,\rho_2)$
\begin{equation}
 D_E(\rho_1,\rho_2)=\sqrt{S(\sigma_{min})},\qquad \sigma_{min}=\frac{1}{2}\begin{bmatrix}1&\sqrt{F(\rho_1,\rho_2)}\\ \sqrt{F(\rho_1,\rho_2)}&1\end{bmatrix}.
\label{myszy}
\end{equation}
Inequality (\ref{zmaxem}) provides the relation between this metric
and another one defined by means of the {\it Jensen--Shannon Divergence}.
The Jensen--Shannon Divergence $JSD(\{ \alpha_\nu P_\nu\})$ 
has been initially defined \cite{topsoe}, \cite{briet} as the divergence of classical probability distributions $P_\nu$ 
occurring with probabilities $\alpha_\nu$
\begin{equation}
JSD(\{ \alpha_\nu P_\nu\})= H\Big(\sum_{\nu} \alpha_\nu P_\nu\Big)-\sum_{\nu} \alpha_\nu H(P_\nu)=\sum_{\nu} \alpha_\nu H(P_\nu||\bar{P})
\end{equation}
where $H(P)$ denotes the Shannon entropy of the probability distribution $P$, 
$H(P_\nu||\bar{P})$ is the
relative entropy between $P_\nu$ and $\bar{P}$, while the average probability distribution reads
 $\bar{P}=\sum_{\nu} \alpha_\nu P_\nu$. 
\begin{figure}[ht]
\centering
\scalebox{.5}{\includegraphics{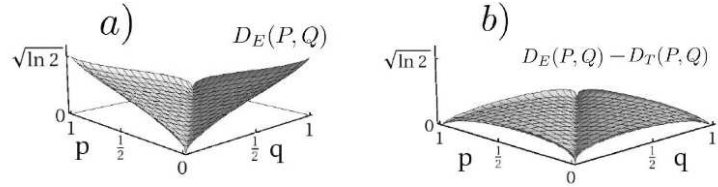}}
\caption{
$a)$ The entropic distance $D_E$ (\ref{myszy}) for two $2$--point probability distributions $P=(p,1-p)$
and $Q=(q,1-q)$. $b)$ The difference between the entropic distance $D_E$ and the transmission distance $D_T$
(\ref{spodnie}).
}
\label{fig:paraka}
\end{figure}

The square root of the Jensen-Shannon divergence 
between two probability distributions $P$ and $Q$,
\begin{equation}
 JSD(P||Q)=\frac{1}{2}H(P||M)+\frac{1}{2}H(P||M),
\end{equation}
where $M=\frac{1}{2}(P+Q)$,
forms a metric in the set of classical probability distributions \cite{briet}, \cite{endres}  
called the {\it transmission distance} $D_T(P,Q)$,
\begin{equation}
 D_T(P,Q)=\sqrt{JSD(P||Q)}.
\label{spodnie}
\end{equation}
A probability distribution can be considered as a diagonal 
density matrix. Therefore,
Eq. (\ref{zmaxem}) in Theorem \ref{psyikoty} demonstrates a relation 
between functions of two distances in the set of diagonal density matrices.
Fig. \ref{fig:paraka} and Fig. \ref{fig:parapara} shows the comparison between these two distances 
for exemplary probability distributions.

A quantum counterpart of the Jensen--Shannon divergence,
in fact coinciding with the Holevo quantity,
was also considered \cite{topsoe}, \cite{briet}. %It is interpreted as ....
Inequality (\ref{zmaxem}) provides thus an upper bound
on the quantum Jensen--Shannon divergence.
%It was shown \cite{lamberti} that if $\lambda=1/2$ in (\ref{sigmamin}) 
%the quantity  $\sqrt{S(\sigma_{min})}$ is also a metric. 

\begin{figure}[ht]
\centering
\scalebox{.7}{\includegraphics{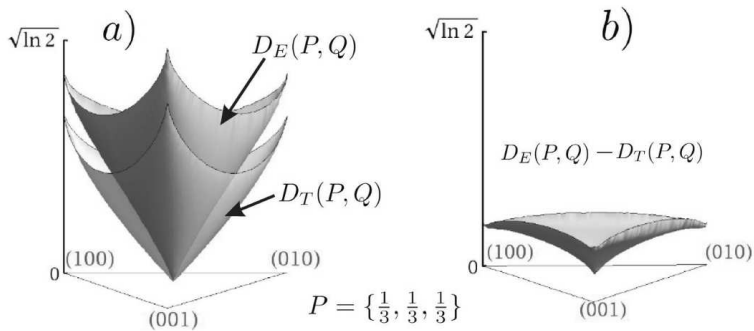}}
\caption{
$a)$ The entropic distance $D_E$ (\ref{myszy}) and the transmission distance $D_{T}$ (\ref{spodnie})
for two probability distributions, $P=(\frac{1}{3},\frac{1}{3},\frac{1}{3})$
and $Q=(q_1,q_2,q_3)$ which is arbitrary distribution of dimension $3$ represented by a point in the simplex -- the base of the figure. 
$b)$ The difference between the entropic distance $D_E$ and the transition distance $D_T$ for the same distributions  $P$ and $Q$.
}
\label{fig:parapara}
\end{figure}

\section{Conjecture on three--fidelity matrix}\label{dusiu}

The minimization problem for 
the entropy of the correlation matrix (\ref{mamin}) has been solved for
an ensemble consisting of $k=2$ quantum states.
In this case the solution is given by the square root fidelity matrix.
In the case of $k=3$ states in the ensemble the optimization over the set of three unitary 
matrices is more difficult.
Our numerical tests support the following conjecture, which is a generalization of
the bound found for the case of $k=2$. 
\begin{conjecture}\label{polny}
For an ensemble of $k=3$ quantum states, 
$\{p_i,\rho_i\}_{i=1}^3$ the entropy of the
square root fidelity matrix $G_{ij}=\sqrt{p_ip_j}\sqrt{F(\rho_i,\rho_j)}$ gives the upper bound 
on the Holevo quantity,
\begin{equation}
\chi(\{p_i,\rho_i\})\leq S\left(\begin{bmatrix}
 p_1 & \sqrt{p_1p_2}\sqrt{F_{12}} & \sqrt{p_1p_3}\sqrt{F_{13}} \\
 \sqrt{p_2p_1}\sqrt{F_{21}} & p_2 & \sqrt{p_2p_3}\sqrt{F_{23}} \\
 \sqrt{p_3p_1}\sqrt{F_{31}} & \sqrt{p_3p_2}\sqrt{F_{32}} & p_3 
\end{bmatrix}\right),
\label{zloto}
\end{equation}
where fidelity between two quantum states reads {\small $F_{ij}=F(\rho_i,\rho_j)=\big({\rm Tr}\sqrt{\sqrt{\rho_i}\rho_j\sqrt{\rho_i}}\big)^2$}.
\end{conjecture}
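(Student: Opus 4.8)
The plan is to reduce the conjecture to Theorem \ref{psyikoty} and then to isolate the one feature that genuinely distinguishes $k=3$ from $k=2$. Recall that Theorem \ref{psyikoty} bounds $\chi$ by $S(\sigma)$ for \emph{every} correlation matrix $\sigma_{ij}=\sqrt{p_ip_j}\,\tr(\sqrt{\rho_i}\sqrt{\rho_j}U_j^{\dagger}U_i)$, and that each such $\sigma$ is the $\sqrt{p_ip_j}$--weighted Gram matrix of purifications $|\psi_i\>$ of the states $\rho_i$, with $|\<\psi_i|\psi_j\>|\le\sqrt{F_{ij}}$ by Uhlmann's theorem. The matrix $G$ of (\ref{zloto}) is obtained by pushing each off--diagonal entry to its Uhlmann maximum $\sqrt{F_{ij}}$ and fixing all phases to zero. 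Hence the cleanest way to prove the statement would be to exhibit, in a sufficiently enlarged Hilbert space, three purifications $|\Psi_i\>$ of $\rho_1,\rho_2,\rho_3$ whose pairwise overlaps are simultaneously $\<\Psi_i|\Psi_j\>=\sqrt{F_{ij}}$: then $G$ is itself a bona fide correlation matrix and $\chi\le S(G)$ follows at once from Theorem \ref{prop1} and the Gram--matrix identity (\ref{slon}).

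Before that, I would settle whether $S(G)$ is even well defined, i.e.\ whether $G\succeq0$. For $k=2$ positivity is automatic, which is why (\ref{sigmamin}) goes through unconditionally; for $k=3$ the weighted root--fidelity matrix is not guaranteed a priori to be positive semidefinite, so I would first delimit the subclass of triples on which $G\succeq0$ and prove the bound there. Positivity of $G$ is precisely the existence of three vectors with real non--negative pairwise overlaps $\sqrt{F_{ij}}$, which is the natural target geometry for the desired purifications.

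The step I expect to be the main obstacle is the simultaneous Uhlmann alignment. Uhlmann's theorem saturates any single pair, but the purifications optimal for the pairs $(1,3)$ and $(2,3)$ demand, in general, two different and incompatible ancilla unitaries on the purification of $\rho_3$; once $|\Psi_1\>$ is fixed and $|\Psi_2\>$ is chosen optimal for the pair $(1,2)$, there is then no freedom left to force both $\<\Psi_1|\Psi_3\>=\sqrt{F_{13}}$ and $\<\Psi_2|\Psi_3\>=\sqrt{F_{23}}$. Equivalently, the minimisation over $\{U_i\}$ in Theorem \ref{psyikoty} controls only the two relative unitaries $U_2^{\dagger}U_1$ and $U_3^{\dagger}U_1$, whereas $G$ prescribes three independent overlaps; generically no choice reaches $G$, so $G$ need not be a correlation matrix of the given ensemble and $\chi\le S(G)$ cannot be deduced from Theorem \ref{psyikoty} alone.

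A direct comparison is therefore needed, and here I would first clarify how $S(G)$ compares with the attainable minimum $S(\sigma_{\min})$. If one could show $S(G)\ge S(\sigma_{\min})$ the conjecture would drop out of Theorem \ref{psyikoty}; the difficulty is that $G$ carries the \emph{largest} possible off--diagonal magnitudes, which lowers its entropy, while forcing all three overlaps real and positive creates a frustration of the triangle of phases that raises it, so the two effects compete and no monotone comparison is available a priori. A natural line of attack is a majorisation analysis of the $3\times3$ spectrum as a function of the three magnitudes and the enclosed phase, trying to show that the real positive configuration is entropically extremal in the required direction. As a provable anchor I would first treat the commuting (and pure--state) triples of Section \ref{sadelko}: there $\sqrt{F_{ij}}$ reduces to the classical root fidelity $\sum_r\sqrt{\lambda^i_r\lambda^j_r}$ of the eigenvalue distributions, $G$ is the honest Gram matrix of the vectors $(\sqrt{p_i\lambda^i_r})_r$ realised by the common--ancilla purifications, hence positive semidefinite, and $\chi\le S(G)$ then follows directly from Theorem \ref{prop1} and (\ref{slon}).
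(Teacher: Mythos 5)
You were asked to prove what the paper itself presents only as a conjecture: the author supports (\ref{zloto}) numerically, proves it for ensembles of three pure states, proves the special case of two pure plus one mixed one--qubit state with uniform weights (Proposition \ref{bungabunga}), and otherwise establishes only weakened forms in which the off--diagonal entries are damped or squared (Propositions \ref{propo1}, \ref{dwa}, \ref{pro4}, \ref{ppp2}). So there is no full proof in the paper to compare against, and your proposal --- which openly stops short of one --- lands on precisely the diagnosis that explains why: $G$ prescribes three independent real overlaps $\sqrt{F_{ij}}$, while the freedom in Theorem \ref{psyikoty} consists only of the relative unitaries $U_2^{\dagger}U_1$ and $U_3^{\dagger}U_1$ (the third, $U_3^{\dagger}U_2$, being their product), so the three Uhlmann bounds cannot in general be saturated simultaneously, $G$ is generically not a correlation matrix of the ensemble, and $\chi\leq S(G)$ does not follow from Theorem \ref{psyikoty} alone. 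This phase frustration is exactly what the paper's Proposition \ref{ppp2} circumvents by saturating only the neighbouring pairs $|i-j|=1$ (producing the layered matrix), and what the factors $b\geq\sqrt{3}$ and $b\geq 2$ in Propositions \ref{dwa} and \ref{propo1} absorb. One item you propose to settle is already settled: positive semi--definiteness of $G$ for $k=3$ is quoted in the paper from \cite{fanesijegostudent}, \cite{roga5} (and fails for $k>3$), so no restriction to a subclass of triples is needed here.

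There is, however, one genuine gap in your anchor cases. The commuting anchor is fine: purifying all three states over a common eigenbasis gives $\langle\Psi_i|\Psi_j\rangle=\sum_r\sqrt{\lambda^i_r\lambda^j_r}=\sqrt{F_{ij}}$ for all three pairs simultaneously, $G$ is then a bona fide correlation matrix, and $\chi\leq S(G)$ follows from Theorem \ref{prop1} and (\ref{slon}); this is a valid special case the paper does not state explicitly. But your parenthetical extension to pure--state triples does not work this way: a generic triple of pure states does not commute, and $\sqrt{F_{ij}}=\lvert\langle\varphi_i|\varphi_j\rangle\rvert$ discards phases whose triple product $\arg\bigl(\langle\varphi_1|\varphi_2\rangle\langle\varphi_2|\varphi_3\rangle\langle\varphi_3|\varphi_1\rangle\bigr)$ is invariant under $|\varphi_i\rangle\mapsto e^{i\theta_i}|\varphi_i\rangle$, so $G$ is in general not the Gram matrix of any phase choice for the given states. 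The paper closes the pure case differently: Theorem \ref{prop1} is saturated there, $\chi=S(\sigma)$ with $\sigma$ the Gram matrix, and passing from $\sigma$ to $G$ is an entrywise modulus of a positive $3\times 3$ matrix; the trace and the second symmetric polynomial $s_2=\sum_{i<j}(p_ip_j-\lvert\sigma_{ij}\rvert^2)$ depend only on the moduli and are unchanged, while the determinant can only grow since $2\,\mathrm{Re}(\sigma_{12}\sigma_{23}\sigma_{31})\leq 2\lvert\sigma_{12}\sigma_{23}\sigma_{31}\rvert$, and monotonicity of the entropy in the symmetric polynomials \cite{mitch} gives $S(G)\geq S(\sigma)=\chi$. (The paper's wording attributes the growth to $s_2$ and constancy to the determinant; it is in fact the other way round, but the entropic conclusion stands.) With that repair, your proposal reproduces everything the paper establishes around Conjecture \ref{polny} --- without, like the paper, proving the conjecture itself, which remains open.
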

It has been shown \cite{fanesijegostudent}, \cite{roga5} that the matrix $G$ containing
square root fidelities is positively semi--defined for $k=3$.
However, the square root fidelity matrix is in general not positive for $k>3$. 
Numerical tests provide several counterexamples for 
positivity of $G$ for $k>3$, even in case of an  ensemble of pure states.
Note that the matrix $G$ is not a special case of the correlation matrix $\sigma$, which is positive by construction.

%Conjecture \ref{polny} can be easily confirmed for three pure states
%by using 
Theorem \ref{prop1} implies that Conjecture \ref{polny} holds for
ensembles containing three pure states.
Inequality (\ref{propeq1}) is in this case saturated as discussed 
in section \ref{sec:prl}. Square root fidelity matrix $G$ is obtained from
the Gram matrix of given pure states by taking modulus
of its matrix entries. Taking modulus of entries of a positive $3\times 3$ matrix
does not change neither the trace nor the determinant of the matrix. Only the
second symmetric polynomial of the eigenvalues is growing. 
Since the entropy is a monotonic increasing function of the second 
symmetric polynomial \cite{mitch}, the entropy of the square root fidelity 
matrix $G$ is larger than the entropy of the Gram matrix and therefore
it is also larger than the Holevo quantity.

\subsection{A strategy of searching for a proof of the conjecture}\label{skrzat}

The proof of Theorem \ref{prop1} consist of two 
steps. In the first step one has to find suitable multipartite state. 
In the second step the strong subadditivity relation of entropy 
has to be applied for the constructed multipartite state. 
 The same strategy will be 
used searching for the proof of Conjecture \ref{polny} or 
for proving other weaker inequalities.
%Hence in the following chapters the properties 
%of the strategy will be studied.

For the purpose of obtaining the Holevo quantity 
from suitable terms of the strong subadditivity relation,
the multipartite state $\omega$ should have a few features:
\begin{itemize}
\item it is a block matrix which is positive, 
\item blocks on the diagonal should contain states $\rho_i$ multiplied by probabilities $p_i$,
\item traces of off-diagonal blocks should give square root fidelities, or some smaller 
numbers if one aims to obtain a weaker bound.
\end{itemize}
%Usually I will not care about the multiplications of blocks by $\sqrt{p_i}$.
The following matrix satisfies above conditions,
\begin{equation}
X=\begin{bmatrix}
p_1\rho_1&0&0\mid&0&*&0\mid&0&0&*\\
0&0&0\mid&0&0&0\mid&0&0&0\\
\underline{0}&\underline{0}&\underline{0}\mid&\underline{0}&\underline{0}&\underline{0}\mid&\underline{0}&\underline{0}&\underline{0}\\
0&0&0\mid&0&0&0\mid&0&0&0\\
*&0&0\mid&0&p_2\rho_2&0\mid&0&0&*\\
\underline{0}&\underline{0}&\underline{0}\mid&\underline{0}&\underline{0}&\underline{0}\mid&\underline{0}&\underline{0}&\underline{0}\\
0&0&0\mid&0&0&0\mid&0&0&0\\
0&0&0\mid&0&0&0\mid&0&0&0\\
*&0&0\mid&0&*&0\mid&0&0&p_3\rho_3\\
\end{bmatrix},
\label{hiena}
\end{equation}
where in place of $*$ one can put any matrix, provided the matrix 
$X$ remains positive. If in place of $*$ one substitutes zeros, the strong subadditivity relation implies
the known formula that $\chi(\{p_i,\rho_i\})\leq S(\{p_i\})$.
Examples presented in the next section use described strategy
to prove some entropic inequalities for the Holevo quantity.

The main problem is to find a suitable positive block matrix.
In order to check positivity the Schur complement method \cite{positive}
is very useful. 
\begin{lemma}[Schur]\label{trolica}
Assume that $A$ is invertible and positive matrix, then
\begin{equation}
X=\begin{bmatrix}
 A & B \\
 B^{\dagger} & C
\end{bmatrix}
\end{equation}
is positive if and only if $\s S=C-B^{\dagger}A^{-1}B$ is positive semi--definite:
\begin{equation}
A>0\ =>\  (X>0\ <=>\ \s S\geq 0).
\end{equation}
\end{lemma}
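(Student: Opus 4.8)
The plan is to reduce the positivity of the $2\times2$ block matrix $X$ to that of the block-diagonal matrix $\mathrm{diag}(A,\s S)$ by an invertible congruence, and this congruence exists precisely because $A$ is invertible. I will treat ``positive'' as positive semi-definite throughout, in line with the convention used in the thesis; the strict case follows by the same argument with strict inequalities. First I would introduce the unipotent lower-triangular block matrix
\begin{equation*}
L=\begin{bmatrix} \idty & 0 \\ B^{\dagger}A^{-1} & \idty \end{bmatrix},
\end{equation*}
which is invertible (indeed $\det L=1$) and whose adjoint is $L^{\dagger}=\begin{bmatrix} \idty & A^{-1}B \\ 0 & \idty \end{bmatrix}$, using that $A=A^{\dagger}$ forces $A^{-1}=(A^{-1})^{\dagger}$. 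A direct multiplication then yields the factorization
\begin{equation*}
X=L\begin{bmatrix} A & 0 \\ 0 & \s S \end{bmatrix}L^{\dagger},\qquad \s S=C-B^{\dagger}A^{-1}B,
\end{equation*}
which is the single algebraic identity the whole argument rests on; I expect verifying it — in particular checking that the lower-right block collapses via $B^{\dagger}A^{-1}B+\s S=C$ — to be the only genuine computation, and it is routine.

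Equivalently, and more transparently for the reader, I would record the completing-the-square identity obtained by evaluating the quadratic form on a vector $v=\begin{bmatrix} x \\ y \end{bmatrix}$:
\begin{equation*}
v^{\dagger}Xv=(x+A^{-1}By)^{\dagger}A\,(x+A^{-1}By)+y^{\dagger}\s S\,y.
\end{equation*}
This is exactly $(L^{\dagger}v)^{\dagger}\,\mathrm{diag}(A,\s S)\,(L^{\dagger}v)$, so it is merely the factorization above read off on vectors, and it makes both implications immediate.

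For the two directions: if $\s S\geq0$, then since $A>0$ both summands on the right are non-negative for every $v$, whence $v^{\dagger}Xv\geq0$ and $X$ is positive. Conversely, if $X$ is positive, then for an arbitrary $y$ I would choose $x=-A^{-1}By$ (legitimate because $A$ is invertible), which annihilates the first summand and leaves $v^{\dagger}Xv=y^{\dagger}\s S\,y\geq0$; since $y$ is arbitrary this gives $\s S\geq0$. Alternatively, because $L$ is invertible one may invoke Sylvester's law of inertia: the congruence $X=L\,\mathrm{diag}(A,\s S)\,L^{\dagger}$ preserves the signature, so $X$ and $\mathrm{diag}(A,\s S)$ have the same inertia, and with $A>0$ the positivity of $X$ is equivalent to that of $\s S$.

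The main obstacle is conceptual rather than technical: spotting the correct invertible congruence $L$. Once the factorization $X=L\,\mathrm{diag}(A,\s S)\,L^{\dagger}$ is in hand the equivalence is forced, and the only point needing care is that invertibility of $A$ is used twice — once to define $L$ (and hence $\s S$ itself), and once in the converse direction to solve $Ax=-By$ so that the first quadratic term can be killed.
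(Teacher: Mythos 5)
Your proof is correct, and complete on every point where care is needed: the factorization $X=L\,\mathrm{diag}(A,\s S)\,L^{\dagger}$ checks out, the completing-the-square identity is its pointwise form, and both directions (including the choice $x=-A^{-1}By$ in the converse) are sound. Note, however, that the thesis offers no proof of this lemma at all — it is stated as a known tool with a citation to Bhatia's \emph{Positive Definite Matrices} — so there is no internal argument to compare against; your congruence argument is in fact the standard proof found in that reference. One remark to your credit: the lemma as printed mixes strict and non-strict positivity ($X>0 \Leftrightarrow \s S\geq 0$), which taken literally is false, since with $A>0$ one has $X>0$ precisely when $\s S>0$, while $\s S\geq 0$ singular yields only $X\geq 0$; your decision to read ``positive'' uniformly as positive semi-definite, flagged explicitly at the outset, is the right repair and matches how the lemma is actually applied in the thesis (e.g.\ to the matrices in Propositions \ref{propo1} and \ref{dwa}, where the Schur complement is allowed to vanish). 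The Sylvester inertia remark is a correct alternative closing argument, though redundant given the quadratic-form computation.
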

The matrix $\s S$ is called the Schur complement.

\subsubsection{Three density matrices of an arbitrary dimension}

The strategy mentioned in the previous section 
will be used to prove the following
\begin{proposition}
\label{propo1}
 For a three states ensemble $\{p_i,\rho_i\}_{i=1,2,3}$ the following
bound for the Holevo quantity $\chi$ holds
\begin{equation}
\chi(p_i,\rho_i)\leq 
S\left(
\begin{bmatrix}
 p_1 & \sqrt{p_1p_2}\sqrt{F_{12}}/b & \sqrt{p_1p_3}\sqrt{F_{13}}/b \\
 \sqrt{p_2p_1}\sqrt{F_{21}}/b & p_2 & \sqrt{p_2p_3}\sqrt{F_{23}}/b \\
 \sqrt{p_3p_2}\sqrt{F_{31}}/b & \sqrt{p_3p_2}\sqrt{F_{32}}/b & p_3 
\end{bmatrix}
\right),
 \label{pro1}
\end{equation}
where $b\geq 2$.
\end{proposition}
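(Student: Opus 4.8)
The plan is to follow the two--step recipe of Section~\ref{skrzat}: first realise a correlation matrix $\sigma$ whose diagonal is $(p_1,p_2,p_3)$ and whose off--diagonal entries are $\sqrt{p_ip_j}\,\sqrt{F_{ij}}/b$, and then apply Theorem~\ref{prop1} (i.e.\ the strong subadditivity relation (\ref{ssa2})) to conclude $\chi(\{p_i,\rho_i\})\le S(\sigma)$ for this particular $\sigma$, which is exactly the right--hand side of (\ref{pro1}). Since Theorem~\ref{prop1} is stated for an arbitrary identity decomposition $\sum_i K^{i\dagger}K^i=\idty$ with Kraus operators that need not be square, it is enough to exhibit purifications $|\Psi_i\rangle$ of the three states $\rho_i$, living in a common but possibly large ancilla, whose overlaps obey $\langle\Psi_j|\Psi_i\rangle=\sqrt{F_{ij}}/b$; the associated Gram matrix $\sigma_{ij}=\sqrt{p_ip_j}\,\langle\Psi_j|\Psi_i\rangle$ is then automatically a correlation matrix, so its positivity---the job of the Schur complement in Lemma~\ref{trolica}---is granted for free.

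The construction I would use spreads each state over the pairs it belongs to. Take the ancilla to be a direct sum of three ``pair slots'' $\c H_{\{1,2\}}\oplus\c H_{\{1,3\}}\oplus\c H_{\{2,3\}}$, each a copy of $\c H_N$, and in the slot $\{i,j\}$ place the two Uhlmann--optimal purifications of $\rho_i$ and $\rho_j$, whose overlap equals $\sqrt{F_{ij}}$. For $b=2$, define $|\Psi_i\rangle$ by putting, with equal weight $1/\sqrt2$, the optimal purification of $\rho_i$ into each of the two slots whose label contains $i$. Orthogonality of distinct slots makes $\langle\Psi_j|\Psi_i\rangle$ collect only the shared slot $\{i,j\}$, weighted by $\tfrac1{\sqrt2}\cdot\tfrac1{\sqrt2}=\tfrac12$, so $\langle\Psi_j|\Psi_i\rangle=\tfrac12\sqrt{F_{ij}}$; moreover $\langle\Psi_i|\Psi_i\rangle=\tfrac12+\tfrac12=1$ and the partial trace over the ancilla returns $\tfrac12\rho_i+\tfrac12\rho_i=\rho_i$. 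This yields precisely $b=2$. For $b>2$ I would dilute each purification by appending a private orthogonal slot carrying weight $\sqrt{(b-2)/b}$ while the two pair slots carry weight $1/\sqrt b$ each; the private slots never overlap across different $i$, so $\langle\Psi_j|\Psi_i\rangle=\tfrac1b\sqrt{F_{ij}}$, while normalisation and the reduced states are unchanged, covering every $b\ge2$.

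Concretely, these purifications come from genuine Kraus operators as in Section~\ref{caneleone}: encode the slot data into isometries $U_i:\c H_N\to\c H_{N'}$, set $\rho=\sum_i p_i\,U_i\rho_iU_i^{\dagger}$ and $K^i=\sqrt{p_i\rho_i}\,U_i^{\dagger}\rho^{-1/2}$, so that $\sum_i K^{i\dagger}K^i=\idty$, $K^i\rho K^{i\dagger}=p_i\rho_i$, and $\sigma_{ij}=\tr K^i\rho K^{j\dagger}=\sqrt{p_ip_j}\,\tr\sqrt{\rho_i}\,U_i^{\dagger}U_j\sqrt{\rho_j}=\tfrac1b\sqrt{p_ip_j}\,\sqrt{F_{ij}}$. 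Theorem~\ref{prop1} then gives (\ref{pro1}) at once. The main obstacle---and the reason the factor $b\ge2$ appears rather than the conjectured $b=1$---is that one is forced to leave the regime of a common $N$--dimensional ancilla: with unitary $U_i$ the products $W_{ij}=U_iU_j^{\dagger}$ satisfy the rigid cocycle constraint $W_{12}W_{23}W_{31}=\idty$, which forbids reaching all three Uhlmann optima simultaneously (exactly what $b=1$ in Conjecture~\ref{polny} would demand). Enlarging the space turns the $U_i^{\dagger}U_j$ into contractions and relaxes this constraint, but the price is that each $\rho_i$ must be shared over two slots, and normalisation over two slots forces the weight $\tfrac12$. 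Breaking below $b=2$, and thus proving the full Conjecture~\ref{polny}, is the genuinely hard part that this direct--sum method does not reach.
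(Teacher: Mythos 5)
Your proposal is correct and is essentially the paper's own proof rewritten in purification language: the paper's block matrix $Y$ in (\ref{blok1}) is exactly your ``spread each state with weight $\tfrac12$ over its two pair slots'' construction, with the Belavkin square roots $\sqrt{\rho_i\rho_j}$ (whose trace is $\sqrt{F_{ij}}$) playing the role of your Uhlmann-optimal overlaps in each shared slot, positivity established by the Schur complement of the pairwise $2\times2$ block matrices rather than inherited from a Gram matrix, and Theorem \ref{prop1} (strong subadditivity) applied at the end just as you do. The only cosmetic divergence is the passage from $b=2$ to $b\geq 2$: the paper multiplies the off-diagonal blocks by $r\in[0,1]$, preserving positivity by convexity, whereas you dilute the purifications with private orthogonal slots carrying weight $\sqrt{(b-2)/b}$ --- both are valid and interchangeable.
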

\begin{proof}
It will be assumed that considered density matrices $\{\rho_i\}_{i=1}^3$ are invertible. 
%The case of invertible matrices will be discussed separately.
After \cite{belavkin} 
the  square root of the product of two density matrices $\sqrt{\rho\sigma}$
will be defined as follows:
\begin{equation}
\sqrt{\rho\sigma}\equiv\rho^{1/2}\sqrt{\rho^{1/2}\sigma\rho^{1/2}}\rho^{-1/2}=\sigma^{-1/2}\sqrt{\sigma^{1/2}\rho\sigma^{1/2}}\sigma^{1/2}.
\label{kamyk}
\end{equation}
In this notation the fidelity between two states $\rho_i$ and $\rho_j$ 
can be written as:
\begin{equation}
F_{ij}=F(\rho_i,\rho_j)=\left(\tr\sqrt{\rho_i^{1/2}\rho_j\rho_i^{1/2}}\right)^2=(\tr\sqrt{\rho_i\rho_j})^2.
\label{pimpus}
\end{equation}
Formula (\ref{pimpus}) can be generalized for non-invertible matrices \cite{roga5}.

One can use the Schur complement Lemma \ref{trolica} 
to prove positivity of the block matrix:
\begin{equation}
X=\begin{bmatrix}
 \rho_1 & \sqrt{\rho_1\rho_2} \\
 \sqrt{\rho_2\rho_1} & \rho_2
\end{bmatrix}.
\label{mozg}
\end{equation}
In this case the matrices $A$ and $\s S$, which enter the Lemma \ref{trolica},
take the form:
 $A=\rho_1$, assume that it is invertible, and $\s S=\rho_2-\sqrt{\rho_2\rho_1}\rho_1^{-1}\sqrt{\rho_1\rho_2}$. 
Notice that
\begin{eqnarray}
\rho_2-\s S&=\sqrt{\rho_2\rho_1}\rho_1^{-1}\sqrt{\rho_1\rho_2}\rho_1\rho_1^{-1}\\ 
&=\sqrt{\rho_2\rho_1}\sqrt{\rho_2\rho_1}\rho_1^{-1}=\rho_2,
\end{eqnarray}
therefore in the case of matrix (\ref{mozg}),
 $\s S=0$ and $X>0$. Hence the following matrix $Y$ is also positive:
\begin{equation}
Y=
\begin{bmatrix}
 \frac{1}{2}\rho_1 & 0 & 0 & \frac{1}{2}\sqrt{\rho_1\rho_2} & 0 & 0 & 0 & 0 & \\
 0 & \frac{1}{2}\rho_1 & 0 & 0 & 0 & 0 & 0 & \frac{1}{2}\sqrt{\rho_1\rho_3} & 0 \\
 0 & 0 & 0 & 0 & 0 & 0 & 0 & 0 & 0 \\
 \frac{1}{2}\sqrt{\rho_2\rho_1} & 0 & 0 & \frac{1}{2}\rho_2 & 0 & 0 & 0 & 0 & 0 \\
 0 & 0 & 0 & 0 & 0 & 0 & 0 & 0 & 0 \\
 0 & 0 & 0 & 0 & 0 & \frac{1}{2}\rho_2 & 0 & 0 & \frac{1}{2}\sqrt{\rho_2\rho_3} \\
 0 & 0 & 0 & 0 & 0 & 0 & 0 & 0 & 0 \\
 0 & \frac{1}{2}\sqrt{\rho_3\rho_1} & 0 & 0 & 0 & 0 & 0 & \frac{1}{2}\rho_3 & 0 \\
 0 & 0 & 0 & 0 & 0 & \frac{1}{2}\sqrt{\rho_3\rho_2} & 0 & 0 & \frac{1}{2}\rho_3
\end{bmatrix}.
\label{blok1}
\end{equation}
Using strong subadditivity as described in section \ref{skrzat}
 to the multipartite state ${\rm Tr}_2Y$ extended by some rows and columns of zeros, one proves inequality (\ref{pro1}) for $b=2$. 
To prove relation (\ref{pro1}) for $b\ge 2$ a small modification of matrix (\ref{blok1}) is needed. 
The off--diagonal elements can be multiplied by the number
$0\leq r\leq 1$ without changing the positivity of the block matrix.
\end{proof}

%%%%%%%%%%%%%%%%%%%%%%%%%%%%%%%%%%%%%%%%%%%%%%%%%%%%%%%%

\subsubsection{Three density matrices of dimension $2$}

Proposition \ref{propo1} can be amended for the case of $2\times 2$
by decreasing the parameter $b$ to the value at least $\sqrt{3}$.
\begin{proposition}\label{dwa}
 For an ensemble of three states of size two, 
$\{p_i,\rho_i\}_{i=1}^3$ one has
\begin{equation}
\chi(p_i,\rho_i)\leq 
S\left(
\begin{bmatrix}
 p_1 & \sqrt{p_1p_2}\sqrt{F_{12}}/b & \sqrt{p_1p_3}\sqrt{F_{13}}/b \\
 \sqrt{p_2p_1}\sqrt{F_{21}}/b & p_2 & \sqrt{p_2p_3}\sqrt{F_{23}}/b \\
 \sqrt{p_3p_2}\sqrt{F_{31}}/b & \sqrt{p_3p_2}\sqrt{F_{32}}/b & p_3 
\end{bmatrix}
\right)
\label{pro2}
\end{equation}
with $b\geq \sqrt{3}$.
\end{proposition}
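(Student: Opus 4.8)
The plan is to reuse the two–step strategy of Section \ref{skrzat} exactly as in the proof of Proposition \ref{propo1}: the strong subadditivity step is left untouched, and only the positive block matrix that feeds it is improved, so that the off–diagonal blocks may be scaled by $1/\sqrt3$ rather than by $1/2$. Concretely, I would search for off–diagonal blocks $M_{ij}$ with $\tr M_{ij}=\frac{1}{\sqrt3}\sqrt{F_{ij}}$ that make the $6\times6$ matrix
\begin{equation}
\mathcal Z=\begin{bmatrix} \rho_1 & M_{12} & M_{13}\\ M_{12}^{\dagger} & \rho_2 & M_{23}\\ M_{13}^{\dagger} & M_{23}^{\dagger} & \rho_3\end{bmatrix}
\end{equation}
positive; tracing out the inner qubit and conjugating by $\mathrm{diag}(\sqrt{p_i}\,\idty)$ then reproduces the matrix on the right–hand side of (\ref{pro2}) with $b=\sqrt3$. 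This isolates the whole content of the proposition in a single positivity statement for $2\times2$ blocks, in the spirit of the matrix (\ref{mozg}) used for the general bound.

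For the blocks I would take the Uhlmann–optimal choice $M_{ij}=\frac{1}{\sqrt3}\rho_i^{1/2}V_{ij}\rho_j^{1/2}$, where $V_{ij}$ is the unitary that saturates (\ref{krasnal}), so that $\tr M_{ij}=\frac{1}{\sqrt3}\tr|\rho_i^{1/2}\rho_j^{1/2}|=\frac{1}{\sqrt3}\sqrt{F_{ij}}$. Assuming, as in Proposition \ref{propo1}, that the $\rho_i$ are invertible, I would factor $\mathcal Z=\mathrm{diag}(\rho_i^{1/2})\,(\idty+\tfrac{1}{\sqrt3}R)\,\mathrm{diag}(\rho_i^{1/2})$, where $R$ is the Hermitian block matrix with vanishing diagonal blocks and off–diagonal blocks $V_{ij}$. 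Since the outer factors are positive, positivity of $\mathcal Z$ is equivalent to $\idty+\frac{1}{\sqrt3}R\ge0$, i.e. to the spectral bound $\lambda_{\min}(R)\ge-\sqrt3$. One may instead verify this directly through the Schur complement $\s S$ of Lemma \ref{trolica}, using the closed form $\sqrt{\rho_i\rho_j}=F_{ij}^{-1/2}\big(\rho_i\rho_j+\sqrt{\det\rho_i\det\rho_j}\,\idty\big)$ valid for qubits (a consequence of (\ref{pimpus}) and the $2\times2$ square–root formula), which turns the condition into a scalar inequality in the traces and determinants of the $\rho_i$.

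The hard part is the bound $\lambda_{\min}(R)\ge-\sqrt3$, and this is precisely where dimension two must enter. If the $V_{ij}$ were arbitrary unitaries the bound would fail: taking $V_{ij}=e^{i\alpha_{ij}}\idty$ reduces $R$ to a scalar triangle matrix whose smallest eigenvalue solves $\lambda^3-3\lambda=2\cos\phi$ with $\phi=\alpha_{12}+\alpha_{23}+\alpha_{31}$, giving $\lambda_{\min}=-2$ at $\phi=\pi$, which is exactly the general–dimension value $b=2$ of Proposition \ref{propo1}. The point to establish is that the unitaries arising as polar parts of the products $\rho_i^{1/2}\rho_j^{1/2}$ of genuine qubit states cannot realise this worst case: their gauge–invariant triangle holonomy $V_{12}V_{23}V_{31}$ is geometrically constrained by the Bloch–ball positions of the $\rho_i$, and the threshold $\lambda_{\min}=-\sqrt3$ corresponds exactly to $\cos\phi=0$ in the scalar model.

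I expect the main obstacle to be controlling $\lambda_{\min}(R)$ for non–scalar $SU(2)$ blocks, where no single flux variable is available. A workable route is to exploit the $SU(2)$ structure of the $V_{ij}$ to diagonalise $R$ in closed form, or else to fall back on the explicit Schur–complement inequality and reduce, by an extreme–point argument in the convex set of qubit triples, to a symmetric trine–like configuration of states on a great circle of the Bloch ball, where the marginal eigenvalue can be computed and shown to vanish exactly at $b=\sqrt3$. Establishing this reduction rigorously, rather than merely matching it to numerics, is the step I would budget the most effort for.
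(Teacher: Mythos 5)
Your construction is, up to the probability weights, identical to the paper's: since $\rho_i^{1/2}V_{ij}\rho_j^{1/2}=\sqrt{\rho_i\rho_j}$ in the notation (\ref{kamyk}), your matrix $\mathcal{Z}$ is exactly the block matrix (\ref{glowna}) with $b=\sqrt{3}$, and your gauge reduction of $R$ to its triangle holonomy is precisely the mechanism behind the paper's Schur--complement condition (\ref{dwarunek}): there one finds $y=2\cos\theta\,\rho_1$, because $\rho_1^{-1/2}y\,\rho_1^{-1/2}=W+W^{\dagger}$ with $W$ a $2\times 2$ unitary of determinant one, hence with eigenvalues $e^{\pm i\theta}$ --- these are your two scalar fluxes $\pm\theta$, and this determinant-one fact is the only place where dimension two enters. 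So your route (a) would indeed succeed in reducing everything to the single angle $\theta$, and your identification of the threshold $b=\sqrt 3$ with $\cos\theta=0$ is correct.

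The genuine gap is your central claim that the Uhlmann unitaries of genuine qubit triples ``cannot realise the worst case,'' i.e.\ that $\lambda_{\min}(R)\geq-\sqrt{3}$ holds for the optimal blocks. This is false: the holonomy phase is a gauge-invariant of the triple that is \emph{not} constrained to $\cos\theta\geq 0$ by the Bloch-ball geometry (for invertible states approaching three pure states spanning a geodesic triangle of solid angle $\Omega$, the phase tends to $\Omega/2$, which sweeps through values with $\cos\theta<0$), and the paper itself explicitly allows the negative case. Your own scalar computation shows what then survives: the condition (\ref{dwarunek}) becomes $b^{2}(b^{2}-3)+2b\cos\theta\geq 0$, which at $\cos\theta=-1$ reads $(b-2)(b+1)^{2}\geq 0$, i.e.\ only the general bound $b\geq 2$ of Proposition \ref{propo1}. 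What the paper adds --- and what your formulation of the whole content as ``a single positivity statement'' for the Uhlmann-optimal blocks forecloses --- is a sign gauge: if $\cos\theta<0$, replace $\sqrt{\rho_1\rho_2}\to-\sqrt{\rho_1\rho_2}$ (and its adjoint block), which maps $\theta\mapsto\pi-\theta$ and restores positivity of the block matrix for every $b\geq\sqrt{3}$, at the price of one negative entry $-\sqrt{p_1p_2F_{12}}/b$ in the reduced $3\times 3$ matrix. This is harmless after the strong-subadditivity step, because for $3\times 3$ matrices replacing each entry by its modulus preserves positive semidefiniteness, leaves the trace and the second symmetric function of the spectrum unchanged and can only increase the determinant, hence can only increase the entropy; so the bound (\ref{pro2}) with the nonnegative fidelity entries still follows. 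Your fallback strategies (closed-form $SU(2)$ diagonalisation, reduction to trine-like configurations on a great circle) would merely rediscover the failure at $\theta$ near $\pi$: the missing idea is the sign-flip combined with the entrywise-modulus argument, not a sharper spectral estimate on $R$.
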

\begin{proof}
The main task in the proof is to show that the block matrix
\begin{equation}
Y=\begin{bmatrix}
 p_1\rho_1 & \sqrt{p_1p_2}\sqrt{\rho_1\rho_2}/b & \sqrt{p_1p_3}\sqrt{\rho_1\rho_3}/b \\
 \sqrt{p_2p_1}\sqrt{\rho_2\rho_1}/b & p_2\rho_2 & \sqrt{p_2p_3}\sqrt{\rho_2\rho_3}/b \\
 \sqrt{p_3p_2}\sqrt{\rho_3\rho_1}/b & \sqrt{p_3p_2}\sqrt{\rho_3\rho_2}/b & p_3\rho_3 
\end{bmatrix}
\label{glowna}
\end{equation}
is positive for $b\geq\sqrt{3}$ as well as the analogous matrix
enlarged  by adding rows and columns  of zeros in order to have a matrix of the form (\ref{hiena}). 
The Schur complement method described in section \ref{skrzat} will be used, where:
\begin{equation}
A=\begin{bmatrix}
 \idty & 0 \\
 0 & p_1\rho_1 
\end{bmatrix}, \qquad
C=\begin{bmatrix}
 p_2\rho_2 & \sqrt{p_2p_3}\sqrt{\rho_2\rho_3}/b \\
 \sqrt{p_2p_3}\sqrt{\rho_3\rho_2}/b & p_3\rho_3 
\end{bmatrix},
\end{equation}
\begin{equation}
B=\begin{bmatrix}
 0 & 0 \\
 \sqrt{p_1p_2}\sqrt{\rho_1\rho_2}/b & \sqrt{p_1p_3}\sqrt{\rho_1\rho_3}/b 
\end{bmatrix}, \qquad
B^{\dagger}=\begin{bmatrix}
 0 & \sqrt{p_1p_2}\sqrt{\rho_2\rho_1}/b \\
 0 & \sqrt{p_1p_3}\sqrt{\rho_3\rho_1}/b 
\end{bmatrix}.
\end{equation}
Due to the fact that $A$ is positive one needs to prove the positivity of $\s S=C-B^{\dagger}A^{-1}B$:
\begin{equation}
{\footnotesize \s S=\begin{bmatrix}
 p_2\rho_2(1-\frac{1}{b^2}) & \sqrt{p_2p_3}(\sqrt{\rho_2\rho_3}/b-\sqrt{\rho_2\rho_1}\rho_1^{-1}\sqrt{\rho_1\rho_3}/b^2) \\
 \sqrt{p_2p_3}(\sqrt{\rho_3\rho_2}/b-\sqrt{\rho_3\rho_1}\rho_1^{-1}\sqrt{\rho_1\rho_2}/b^2) & p_3\rho_3(1-\frac{1}{b^2}) 
\end{bmatrix}}.
\end{equation}
To prove positivity of (\ref{glowna}) the Schur complement $\s S$ should be positive.
One can apply the Schur complement Lemma second time to the matrix $\s S$. 
Positivity condition required by Lemma \ref{trolica} enforces that
\begin{equation}
b^2(b^2-3)\rho_1+by\geq 0,
\label{dwarunek}
\end{equation}
where $y=\sqrt{\rho_1\rho_2}\rho_2^{-1}\sqrt{\rho_2\rho_3}\rho_3^{-1}\sqrt{\rho_3\rho_1}+h.c.$ 
For $2\times 2$ matrices one can assume without lost of generality 
that $\frac{1}{\sqrt{\rho_1}}y\frac{1}{\sqrt{\rho_1}}\geq 0$.
It is so because the matrix $\sqrt{\rho_1}^{-1}\sqrt{\rho_1\rho_2}\rho_2^{-1}\sqrt{\rho_2\rho_3}\rho_3^{-1}\sqrt{\rho_3\rho_1}\sqrt{\rho_1}^{-1}$ 
is a unitary matrix and its determinant is equal to $1$, therefore its eigenvalues are two conjugate numbers. The matrix
$\frac{1}{\sqrt{\rho_1}}y\frac{1}{\sqrt{\rho_1}}$, 
which consists of sum of the unitary matrix and its conjugation, is proportional to identity. If it is negative one can change 
$\sqrt{\rho_1\rho_2}$ into $-\sqrt{\rho_1\rho_2}$ and $\sqrt{\rho_2\rho_1}$ into $-\sqrt{\rho_2\rho_1}$ in (\ref{glowna}).
Transformation changing the sign does not act on the final result because off-diagonal blocks do not take part 
in forming the Holevo quantity
and in the case of $3\times 3$ matrices we can take modulus of each element of the matrix without changing its positivity.

Let us take $y=0$ in the positivity condition (\ref{dwarunek}). This condition implies $b\geq\sqrt{3}$.  
Knowing that (\ref{glowna}) is a positive matrix, the rest of the proof of (\ref{pro2}) goes like in section \ref{skrzat}.
\end{proof}

\subsubsection{Fidelity matrix for one--qubit states}

In previous section some bounds on 
the Holevo quantity were established.
These bounds are weaker than the bound
postulated by Conjecture \ref{polny}, since
decreasing the off--diagonal elements of a matrix one
 increases its entropy. In previous 
proposition the square root fidelities were divided 
by numbers greater than $1$. In the following section
the squares of the off--diagonal elements of the matrix $G$ in  (\ref{zloto})
will be taken. For such modified matrices the following
proposition holds for an arbitrary number of $k$ states in the ensemble.

\begin{proposition}\label{pro4}
Consider the ensemble $\{\rho_i,p_i\}_{i=1}^k$ of arbitrary number $k$ 
of \textbf{one-qubit} states and their probabilities. 
% where $p_i\geq 0$ and $\sum_{i=1}^{N}p_i=1$.
The Holevo information $\chi(\{p_i,\rho_i\})$ 
is bounded by the entropy of the auxiliary state 
$\varsigma$ which acts in the $k$ - dimensional Hilbert space,
\begin{equation}
\chi(\{p_i,\rho_i\})\leq S(\varsigma),
\label{ppp1}
\end{equation}
where $\varsigma_{ij}=\sqrt{p_ip_j}\,(\tr\sqrt{\rho_i\rho_j})^2=\sqrt{p_ip_j}\,F(\rho_i,\rho_j)$.
\end{proposition}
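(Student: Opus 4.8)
The plan is to follow the two–step strategy of Section \ref{skrzat}: first produce a positive block matrix whose diagonal blocks are $p_i\rho_i$ and whose off–diagonal blocks have traces $\sqrt{p_ip_j}\,F(\rho_i,\rho_j)$, and then feed the associated multipartite state into the strong subadditivity relation (\ref{ssa2}) exactly as in the proof of Theorem \ref{prop1}. The qubit hypothesis enters only through two elementary one–qubit identities: the closed formula for the fidelity,
\begin{equation*}
F(\rho_i,\rho_j)=\tr(\rho_i\rho_j)+2\sqrt{\det\rho_i}\sqrt{\det\rho_j},
\end{equation*}
which follows by rearranging the Bloch–vector expression for $F$ recorded in the section on the geometrical interpretation of fidelity, and the Cayley--Hamilton identity $\rho_i^2=\rho_i-(\det\rho_i)\,\idty$ valid for any $2\times2$ density matrix.

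Guided by these two identities I would define the $2\times2$ blocks
\begin{equation*}
M_{ij}=\sqrt{p_ip_j}\Big(\rho_i\rho_j+\sqrt{\det\rho_i}\sqrt{\det\rho_j}\,\idty\Big),\qquad i,j=1,\dots,k,
\end{equation*}
and let $\mathbf M=(M_{ij})$ be the resulting $2k\times2k$ matrix. Two bookkeeping checks are then immediate. The diagonal blocks reduce, by Cayley--Hamilton, to $M_{ii}=p_i(\rho_i^2+\det\rho_i\,\idty)=p_i\rho_i$, so the diagonal is exactly what the subadditivity argument needs. The off–diagonal traces are $\tr M_{ij}=\sqrt{p_ip_j}\big(\tr\rho_i\rho_j+2\sqrt{\det\rho_i\det\rho_j}\big)=\sqrt{p_ip_j}\,F(\rho_i,\rho_j)=\varsigma_{ij}$, using $\tr\idty=2$. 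Finally, since $(\rho_i\rho_j)^\dagger=\rho_j\rho_i$, one has $M_{ji}=M_{ij}^\dagger$, so $\mathbf M$ is Hermitian. Note that, in contrast to Propositions \ref{propo1} and \ref{dwa}, no invertibility of the $\rho_i$ is required here.

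The heart of the argument — and the step I expect to be the main obstacle — is the positivity of $\mathbf M$. Writing $\mathbf M=D\,\mathbf N\,D$ with $D=\mathrm{diag}(\sqrt{p_i}\,\idty)$ reduces the question to positivity of $\mathbf N=(\rho_i\rho_j)_{ij}+\big(\sqrt{\det\rho_i\det\rho_j}\,\idty\big)_{ij}$. The second summand is manifestly positive, being the rank–one pattern $\big(\sqrt{\det\rho_i}\sqrt{\det\rho_j}\big)_{ij}$ tensored with $\idty$. For the first summand I would exhibit it as an operator Gram matrix: for any test vector $(|x_1\rangle,\dots,|x_k\rangle)$,
\begin{equation*}
\sum_{i,j=1}^{k}\langle x_i|\rho_i\rho_j|x_j\rangle=\Big\|\sum_{i=1}^{k}\rho_i|x_i\rangle\Big\|^2\ge0,
\end{equation*}
where Hermiticity of each $\rho_i$ is used in $\langle x_i|\rho_i=\langle\rho_ix_i|$. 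Hence $\mathbf N\ge0$ and $\mathbf M\ge0$. The delicate point, worth flagging, is that the \emph{unsymmetrized} block $\rho_i\rho_j$ is essential: replacing it by the symmetrized $\tfrac12\{\rho_i,\rho_j\}$ destroys positivity already for two pure states, so the construction genuinely relies on the Gram structure rather than on Hermitian blocks.

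With $\mathbf M\ge0$ established, the proof closes exactly as for Theorem \ref{prop1}. The isometric embedding $|i\rangle\mapsto|i\rangle\otimes|i\rangle$ carries $\mathbf M$ to the positive tripartite state $\omega_{123}=\sum_{i,j}|i\rangle\langle j|\otimes|i\rangle\langle j|\otimes M_{ij}$, whose partial traces satisfy $S(\omega_{12})=S(\varsigma)$, $S(\omega_3)=S\big(\sum_i p_i\rho_i\big)$ and $S(\omega_1)-S(\omega_{23})=-\sum_i p_iS(\rho_i)$ by the same computation as in (\ref{om12})--(\ref{average}), since only $M_{ii}=p_i\rho_i$ is used there. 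Applying strong subadditivity in the form (\ref{ssa2}), namely $S(\omega_3)+S(\omega_1)\le S(\omega_{12})+S(\omega_{23})$, then gives $\chi(\{p_i,\rho_i\})=S\big(\sum_i p_i\rho_i\big)-\sum_i p_iS(\rho_i)\le S(\varsigma)$, which is precisely (\ref{ppp1}).
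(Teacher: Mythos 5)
Your proof is correct and essentially identical to the paper's: you build the same block matrix $W_{ij}=\sqrt{p_ip_j}\bigl(\rho_i\rho_j+\sqrt{\det\rho_i\det\rho_j}\,\idty\bigr)$, use the same $2\times2$ facts (Cayley--Hamilton for the diagonal, the qubit identity $F=\tr\rho_i\rho_j+2\sqrt{\det\rho_i\det\rho_j}$, equivalent to the paper's formula (\ref{pierw}) for $\sqrt X$), and close with strong subadditivity as in Section \ref{skrzat}. Your positivity argument, splitting $\mathbf N$ into the Gram matrix of the $\rho_i$ plus the rank-one determinant pattern, is just the two summands $A_iA_j^\dagger+B_iB_j^\dagger$ of the paper's explicit factorization $W=VV^\dagger$ with $M_i=\sqrt{p_i}(\rho_i,\sqrt{\det\rho_i}\,\idty)$, so the two proofs coincide.
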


\begin{proof}
A positive block matrix $W$ is constructed in the following way:
\begin{equation}
W=
\begin{bmatrix}
 M_1&0& ... & 0 \\
 ... &...& ...& ... \\
 M_K &0& ... & 0  
\end{bmatrix}
\begin{bmatrix}
 M_1^{\dagger} & ... & M_K^{\dagger} \\
 0 & ... & 0 \\ 
... & ... & ... \\
 0 & ... & 0  
\end{bmatrix},
\label{gora}
\end{equation}
where $M_i=\sqrt{p_i}(A_i,  B_i)$ are block vectors of size $2\times 4$  
and $A_i=\rho_i$ and $B_i=\sqrt{{\rm det}\rho_i}\idty$ are sub--blocks of size $2\times 2$. 
The blocks of the block matrix $W$ read
\begin{equation}
W_{ij}=\sqrt{p_ip_j}(\rho_i\rho_j+\sqrt{{\rm det}\rho_i\rho_j}\idty).
\end{equation}
This formula can be compared with an expression for
the square root of any 
 $2\times 2$ positive matrix $X$
\begin{equation}
\sqrt{X}=\frac{(X+\sqrt{\det{X}}\idty)}{{\rm Tr}{\sqrt{X}}}.
\label{pierw}
\end{equation}
Therefore the block matrix (\ref{gora}) is given by
\begin{equation}
W_{ij}=\sqrt{p_ip_j}\sqrt{\rho_i\rho_j}\, {\rm Tr}\sqrt{\rho_i\rho_j}.
\end{equation}
%There is no additional problem with multiplying matrix elements by square root of suitable probabilities because %matrix of probabilities:
%\begin{equation}
%\begin{bmatrix}
% p_1 & \sqrt{p_1p_2} & \sqrt{p_1p_3} \\
% \sqrt{p_2p_1} & p_2 & \sqrt{p_2p_3}\\
%\sqrt{p_3p_1} & \sqrt{p_3p_2}& p_3  
%\end{bmatrix}
%\end{equation}
%is positive siemidefined and we can use the fact that Hadamard product ($(a\circle b)_{ij}=a_{ij}b_{ij})$ of two %positive matrices is also positive. 
The matrix $W$ is positive by construction.
Partial trace of this matrix gives matrix of  fidelities (without square root). 
The rest of the proof of Proposition \ref{pro4} goes in analogy to proofs analysed in Section \ref{skrzat}.
\end{proof}
This proposition holds for one-qubit states only 
since we applied relation (\ref{pierw}), 
which holds for matrices of dimension $d=2$. 

The fidelity matrix $\varsigma_{ij}=\sqrt{p_ip_j}\,(\tr\sqrt{\rho_i\rho_j})^2=\sqrt{p_ip_j}\,F_{ij}$
is not positive for a general $k$ and general dimensionality of $\rho_i$.
However the fidelity matrix is positive and bounds the Holevo quantity
in the case of an ensemble containing an arbitrary number
of pure quantum states of an arbitrary dimension. This is shown in the following proposition.

%For $K=2$ one obtains a bound weaker than this proved in \cite{roga}, $\chi\leq S(\sigma)\leq S(\varsigma)$, 
%since the off diagonal elements of $\sigma$ used in \cite{roga} contain $\sqrt{F_{i,j}}$, 
%while $\varsigma_{i,j}$ is proportional to $F_{i,j}$. 

%Let us now formulate another statement, which holds for arbitrary dimension $d$.

%Until now I have no counterexample that the matrix of squared fidelities is not positive and overwhelming %numerical support that it is. So a reasonable conjecture could be that for any set $\{\rho_j\}$ of density %matrices
%\begin{equation}
%F_2 := [\s{Fid}[\rho_i, \rho_j]^2]_{ij} \ge 0.
%\end{equation}
%Using the approach of Wojtek this would lead to general upper bounds for the Holevo quantity, general in the %sense that they hold in any dimension and for any quantum ensemble. Moreover, such an inequality nicely %collapses whenever a fidelity is equal to 1.
%
%Here is a very simple argument for the case of pure states, I have no general argument yet for the case of %mixed states.

\begin{proposition}
\label{prop:fid:square}
Let $\{|\varphi_j\>\}$ be a set of vectors, then
\begin{equation}
\chi(\{p_i,\rho_i\})\leq S(\s F),
\end{equation}
where $\s F_{ij}=\sqrt{p_ip_j}\abs{\< \varphi_i | \varphi_j \>}^2$.
\end{proposition}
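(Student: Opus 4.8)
The plan is to follow the two-step block-matrix strategy of Section \ref{skrzat}: build a positive multipartite state whose diagonal blocks reproduce $p_i\rho_i$ and whose off-diagonal blocks have traces equal to $\s F_{ij}$, and then extract the bound from the strong subadditivity relation (\ref{ssa2}), exactly as in the proof of Theorem \ref{prop1}. The starting observation is that for pure states $\rho_i=|\varphi_i\rangle\langle\varphi_i|$ the entries of $\s F$ are literally traces of products of the states,
\[
\tr(\rho_i\rho_j)=\langle\varphi_i|\varphi_j\rangle\langle\varphi_j|\varphi_i\rangle=|\langle\varphi_i|\varphi_j\rangle|^2=F(\rho_i,\rho_j).
\]
This suggests taking the off-diagonal blocks $A_{ij}=\sqrt{p_ip_j}\,\rho_i\rho_j$, so that $\tr A_{ij}=\s F_{ij}$ while $A_{ii}=p_i\rho_i$, matching the requirements listed in Section \ref{skrzat}.

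The crucial point is positivity, and this is where the pure-state hypothesis pays off. Setting $L_i=\sqrt{p_i}\,\rho_i$ (Hermitian), one has $A_{ij}=L_iL_j^\dagger$, so the candidate tripartite state factorizes as
\[
\omega_{123}=\sum_{i,j}|i\rangle\langle j|\otimes|i\rangle\langle j|\otimes L_iL_j^\dagger=GG^\dagger,\qquad G=\sum_i|i\rangle\otimes|i\rangle\otimes L_i,
\]
which is manifestly positive, with $\tr\omega_{123}=\sum_i p_i\tr\rho_i^2=1$. This is the analogue of the isometric construction $F\rho F^\dagger$ used in Theorem \ref{prop1}; here no Schur-complement argument is needed precisely because $\rho_i\rho_j$ factors through the pure vectors, in contrast with Propositions \ref{propo1}, \ref{dwa} and \ref{pro4}. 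The same factorization shows that $\s F$ is the weighted Gram matrix of the product vectors $|\varphi_i\rangle\otimes\overline{|\varphi_i\rangle}$, which settles the positivity of $\s F$ asserted just before the proposition.

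Next I would read off the marginals exactly as in Theorem \ref{prop1}. Tracing register $3$ gives $\omega_{12}=\sum_{i,j}\s F_{ij}\,(|i\rangle\otimes|i\rangle)(\langle j|\otimes\langle j|)$, whose nonzero spectrum coincides with that of $\s F$, so $S(\omega_{12})=S(\s F)$. Tracing the two index registers gives $\omega_3=\sum_i p_i\rho_i=\rho$, hence $S(\omega_3)=S(\rho)=\chi$, the average-entropy term vanishing because the $\rho_i$ are pure. Finally $\omega_1=\sum_i p_i|i\rangle\langle i|$ and $\omega_{23}=\sum_i|i\rangle\langle i|\otimes p_i\rho_i$ both have entropy $H(P)$, so $S(\omega_1)=S(\omega_{23})$ and the identity $-\sum_i p_iS(\rho_i)=S(\omega_1)-S(\omega_{23})=0$ is reproduced. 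Substituting into (\ref{ssa2}), $S(\omega_3)+S(\omega_1)\le S(\omega_{12})+S(\omega_{23})$, the equal $H(P)$ terms cancel and leave $\chi=S(\rho)\le S(\s F)$.

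For pure states the positivity step, which is the genuinely hard ``main task'' in Propositions \ref{propo1} and \ref{dwa} (two nested Schur complements and the bound $b\ge\sqrt3$), collapses to the one-line factorization $A_{ij}=L_iL_j^\dagger$, so the only thing requiring care is the bookkeeping: checking that the off-diagonal block traces yield the squared overlaps rather than the bare overlaps, and that the pure-state condition forces the two index marginals to share the same entropy so that strong subadditivity returns the Holevo quantity with no leftover terms. An equally short alternative that avoids exhibiting $\omega_{123}$ is to note that $R=\sum_i p_i\,|\varphi_i\rangle\langle\varphi_i|\otimes|\overline{\varphi_i}\rangle\langle\overline{\varphi_i}|$ is separable with $\tr_2 R=\rho$; then $S(\s F)=S(R)$ by the Gram-matrix identity (\ref{slon}), and $S(R)\ge S(\tr_2 R)=S(\rho)=\chi$ follows from the non-negativity of the conditional entropy of a separable state.
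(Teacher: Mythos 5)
Your proof is correct, and in its main line it is essentially the paper's own proof: your tripartite state is literally the same operator as the paper's $\omega$, since for pure states $\sqrt{p_ip_j}\,\rho_i\rho_j=\sqrt{p_ip_j}\,\langle\varphi_i|\varphi_j\rangle\,|\varphi_i\rangle\langle\varphi_j|$, and your marginal bookkeeping followed by strong subadditivity in the form (\ref{ssa2}) matches the paper step for step. The one real (if minor) difference is how positivity is established: the paper exhibits $\omega$ as a partial trace of the rank-one conjugate-doubled operator $\tilde\omega$ built from the product vectors $|\varphi_i\rangle\otimes|\overline{\varphi_i}\rangle$, whereas you factorize $\omega=GG^\dagger$ with $G=\sum_i|ii\rangle\otimes\sqrt{p_i}\,\rho_i$, which is more direct and confines the conjugation trick to where it is actually needed, namely the positivity of $\mathsf{F}$ itself as a Gram matrix (exactly the paper's argument, which you reproduce). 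Your closing alternative is a genuinely different route: identifying $S(\mathsf{F})=S(R)$ for the separable state $R=\sum_i p_i\,|\varphi_i\rangle\langle\varphi_i|\otimes|\overline{\varphi_i}\rangle\langle\overline{\varphi_i}|$ via the Gram identity (\ref{slon}), and then invoking $S(R)\geq S({\rm Tr}_2 R)=S(\rho)=\chi$, is shorter and avoids constructing $\omega_{123}$ altogether; but be aware that the non-negativity of conditional entropy for separable states is a fact the thesis does not prove (it follows, e.g., from concavity of conditional entropy, which is itself equivalent in strength to strong subadditivity), so this variant trades the explicit SSA computation for an imported lemma of comparable depth rather than giving a more elementary argument.
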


\begin{proof}
Introduce a complex conjugation $\varphi \mapsto \overline\varphi$
 by taking complex conjugations of all 
coordinates of the state in a given basis.  Hence for any choice of $\varphi,\ \psi$ one has
\begin{equation}
\< \varphi |\psi \> = \< \overline\psi | \overline\varphi \>.
\end{equation}
The matrix $F_2 := [F(\rho_i, \rho_j)^2]_{ij}$ can be rewritten as
\begin{equation}
\begin{split}
[\abs{\< \varphi_i | \varphi_j \>}^2]_{ij} 
&= [\< \varphi_i | \varphi_j \> \< \varphi_j | \varphi_i \>]_{ij} \\
&= [\< \varphi_i | \varphi_j \> \< \overline{\varphi_i}| \overline{\varphi_j} \>]_{ij} \\
&= [(\< \varphi_i| \otimes \<\overline{\varphi_i}|)(| \varphi_j \>\otimes |\overline{\varphi_j} \>)]_{ij}.
\end{split}
\end{equation}
This last matrix is the Gram matrix of the set of product states
 $\{ |\varphi_j\> \otimes |\overline{\varphi_j}\> \}_{j=1}^k$ and therefore is positively defined.

The next part of the proof continues according to the scheme presented in Section \ref{skrzat}. 
We use the multipartite state
\begin{equation}
\omega=\sum_{ij}\sqrt{p_ip_j}\<\varphi_i|\varphi_j\>|ii\>\<jj|\otimes|\varphi_i\>\<\varphi_j|.
\end{equation}
Its positivity is shown by taking the partial trace of the Gram matrix
\begin{equation}
\tilde{\omega}=\sum_{ij}|ii\>\<jj|\otimes|\varphi_i\>\otimes|\bar{\varphi}_i\>\<\varphi_j|\otimes\<\bar{\varphi}_j|.
\end{equation}
The proof is completed by considering partial traces  of the state $\omega$ and 
using the strong subadditivity relation.
\end{proof}

%%%%%%%%%%%%%%%%%%%%%%%%%%%%%%%%%%%%%%%%%%%%%%%

%^^^^^^^^^^^^^^^^^^6

%\begin{proposition} For the set $\{p_i,\rho_i\}_{i=1...K}$ where $K=3$ and $\rho_i$ are $2\times 2$ matrices
%\begin{equation}
%\chi(p_i,\rho_i)\leq 
%S(
%\begin{bmatrix}
% p_1 & \sqrt{p_1p_2}F(\rho_1,\rho_2)^2 & \sqrt{p_1p_3}F(\rho_1,\rho_3)^2 \\
% \sqrt{p_2p_1}F(\rho_2,\rho_1)^2 & p_2 & \sqrt{p_2p_3}F(\rho_2,\rho_3)^2 \\
% \sqrt{p_3p_2}F(\rho_3,\rho_1)^2 & \sqrt{p_3p_2}F(\rho_3,\rho_2)^2 & p_3 
%\end{bmatrix}
%)
%\label{pro4}
%\end{equation}
%where $F(\rho_i,\rho_j)={\rm Tr}\sqrt{\rho_i\rho_j}$.
%\label{fidel}
%\end{proposition}

%\subsection{Bound on the Holevo information for many states}

%\subsection{Three--fidelity and multi--fidelity matrices}
%
%Next interesting bound on the Holevo quantity is given by the entropy of the matrix $g$ defined as follows:
%\begin{equation}
%g_{ij}=\sqrt{p_ip_j}\tr\sqrt{\rho_i}\sqrt{\rho_j}.
%\end{equation}
%Every element of this matrix is smaller than the square root fidelity, but still has some properties of the fidelity. It is equal to one if $\rho_i=\rho_j$. Otherwise it is smaller what can be shown by using Cauchy--Schwarz inequality:
%\begin{equation}
%\tr\sqrt{\rho_i}\sqrt{\rho_j}^2\leq\tr\rho_i\tr\rho_j.
%\end{equation}
%Another relation of $\tr\sqrt{\rho_i}\sqrt{\rho_j}$ to fidelity quantity is discussed in the next section.

\subsubsection{Special case of the correlation matrix}

The previous propositions use
the strategy from the proof of 
Theorem \ref{prop1} and apply it
to positive block matrices which are
not necessary related to the correlation matrices.
Construction of multipartite states  
 allows one to 
obtain 
the matrices containing fidelities after a partial trace.
The following section deals again with the  
correlation matrices $\sigma_{ij}=\sqrt{p_ip_j}\tr\sqrt{\rho_i}\sqrt{\rho_j}U_j^{\dagger}U_i$.
Since the Holevo quantity does not depend on unitaries $U_i$,
these matrices can be chosen in such a way
that the three--diagonal of $\sigma$ consists of
the square fidelity matrices, $\sigma_{ij}=\sqrt{F(\rho_i,\rho_j)}$,
where $|i-j|\leq 1$.
This construction is used in the following proposition.

\begin{proposition}
\label{ppp2}
Consider an ensemble $\{\rho_i,p_i\}_{i=1}^k$ consisting 
of arbitrary number $k$ of invertible states of  an
\textbf{arbitrary dimension}.
% where $p_i\geq 0$ and $\sum_{i=1}^{N}p_i=1$.
The Holevo information $\chi(\{p_i,\rho_i\})$ 
is bounded by the exchange entropy $S(\sigma)$,
\begin{equation}
\chi(\{p_i,\rho_i\})\leq S(\sigma),
\end{equation}
where the correlation matrix $\sigma$ is given by:
%\begin{equation}
%\sigma=
%\begin{bmatrix}
%    p_1 & \sqrt{f_{1,2}} & & & & & \\
%    \sqrt{f_{1,2}} & p_2 & \ddots & & & & \\
%    & & \ddots & & & & \\
%    & & & p_i & \sqrt{f_{i,i+1}} & \cdots & \sigma_{i,j}  \\
%    & & & \sqrt{f_{i+1,i}} & p_{i+1} & & & & \\
%    & & & \vdots & & \ddots & \vdots &  & \\
%    & & & \sigma_{j,i}=\bar{\sigma}_{i,j} & \cdots & & p_j & \sqrt{f_{j,j+1}} & \\
%    & & & & & & \sqrt{f_{j+1,j}} & p_{j+1} & \\
%    & & & & & & & & \ddots \\
%    & & & & & & & & & p_N \\
%\end{bmatrix}
%\label{mx}
%\end{equation}
%Here the symbols $f_{i,j}$ denote $p_ip_j\,F_{i,j}=p_ip_j\,(\tr\sqrt{\rho_i\rho_j})^2$, 
%and 
\begin{eqnarray}
\sigma_{ii}&=&p_i,\\
\sigma_{ij}&=&\sqrt{p_ip_j}(\tr\sqrt{\rho_i\rho_j}),\quad {\rm iff}\quad |i-j|= 1,
\end{eqnarray}
and the upper off-diagonal matrix elements, where $(j-i)> 1$, read:
\begin{equation} 
\sigma_{ij}=\sqrt{p_ip_j}\,\tr\sqrt{\rho_j\rho_{j-1}}\ \frac{1}{\rho_{j-1}}\ \sqrt{\rho_{j-1}\rho_{j-2}}\ \frac{1}{\rho_{j-2}}\ \ ...\ \ \frac{1}{\rho_{i+1}}\ \sqrt{\rho_{i+1}\rho_{i}},
\label{rec2}
 \end{equation}
while lower off diagonal satisfy $\sigma_{ij}=\bar{\sigma}_{ji}$.
\end{proposition}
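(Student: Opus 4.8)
The plan is to exhibit the matrix $\sigma$ of the proposition as a genuine correlation matrix and then invoke Theorem \ref{prop1} in the form of Theorem \ref{psyikoty}. Recall that for \emph{any} choice of unitaries $\{U_i\}_{i=1}^k$ the matrix $\sigma_{ij}=\sqrt{p_ip_j}\,\tr\sqrt{\rho_i}\sqrt{\rho_j}U_j^{\dagger}U_i$ is positive semidefinite (it is $\sqrt{p_ip_j}$ times the Gram matrix of the purifications $|\Psi_i\rangle$) and satisfies $\chi(\{p_i,\rho_i\})\le S(\sigma)$, because the Holevo quantity does not depend on the $U_i$. Hence it suffices to choose the $U_i$ so that $\sigma$ acquires the entries prescribed in the statement; once this is done the bound is automatic and no further entropy estimate is needed.

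First I would fix the unitaries by a parallel-transport chain, setting the consecutive relative unitary to be $R_m=(\sqrt{\rho_m}\rho_{m+1}\sqrt{\rho_m})^{-1/2}\sqrt{\rho_m}\sqrt{\rho_{m+1}}$, which is the polar factor that renders the overlap of successive purifications real and maximal (this is the polar-decomposition step of Uhlmann's theorem, inequality (\ref{krasnal})). Invertibility of the $\rho_i$ is what makes $R_m$ a well-defined unitary. With this choice one checks directly that $\sigma_{m,m+1}=\sqrt{p_mp_{m+1}}\,\tr\sqrt{\rho_{m+1}}\sqrt{\rho_m}R_m=\sqrt{p_mp_{m+1}}\,\tr(\sqrt{\rho_m}\rho_{m+1}\sqrt{\rho_m})^{1/2}=\sqrt{p_mp_{m+1}}\sqrt{F_{m,m+1}}$, which settles the diagonal and the first off-diagonals of $\sigma$.

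For $j>i+1$ the relative unitary is forced by the chain, so that $\sigma_{ij}=\sqrt{p_ip_j}\,\tr\sqrt{\rho_j}\sqrt{\rho_i}\,R_iR_{i+1}\cdots R_{j-1}$, and the remaining task is to reduce this trace to the product appearing in (\ref{rec2}). I would introduce the Belavkin square root $W_m:=\sqrt{\rho_m\rho_{m+1}}$ of (\ref{kamyk}) and use the identities $W_m^2=\rho_m\rho_{m+1}$ (hence $W_m^{-1}=W_m\rho_{m+1}^{-1}\rho_m^{-1}=\rho_{m+1}^{-1}\rho_m^{-1}W_m$), the simplification $\sqrt{\rho_m}(\sqrt{\rho_m}\rho_{m+1}\sqrt{\rho_m})^{-1/2}\sqrt{\rho_m}=W_m^{-1}\rho_m$, and the fact that the principal square roots obey $\sqrt{\rho_{m+1}\rho_m}=W_m^{\dagger}$. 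Substituting these and repeatedly using cyclicity of the trace, the interior factors $\rho_{m+1}^{1/2}\rho_{m+1}^{-1}\rho_{m+1}^{1/2}$ collapse and one obtains $\sigma_{ij}=\sqrt{p_ip_j}\,\tr W_i\rho_{i+1}^{-1}W_{i+1}\rho_{i+2}^{-1}\cdots\rho_{j-1}^{-1}W_{j-1}$, whose complex conjugate is exactly the telescoping product in (\ref{rec2}); the conjugation merely reflects the orientation of the transport chain and leaves $S(\sigma)$ invariant. I would verify the base case $j=i+2$ in full and then proceed by induction on $j-i$.

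The main obstacle is precisely this last reduction: keeping track of the several admissible square roots (the non-Hermitian Belavkin root $\sqrt{\rho_m\rho_{m+1}}$ versus $\sqrt{\rho_{m+1}\rho_m}$, and the positive root $(\sqrt{\rho_m}\rho_{m+1}\sqrt{\rho_m})^{1/2}$) and confirming that the composition of the optimal consecutive polar unitaries telescopes to (\ref{rec2}) rather than to some other isospectral matrix. The identification $\sqrt{\rho_{m+1}\rho_m}=(\sqrt{\rho_m\rho_{m+1}})^{\dagger}$, both being the principal square root of a matrix similar to a positive operator, is the crux that makes the conjugation convention transparent, while the invertibility hypothesis is what licenses the inverses appearing in (\ref{rec2}) and guarantees uniqueness of the polar factors $R_m$.
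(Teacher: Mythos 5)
Your proposal is correct and follows essentially the same route as the paper: both reduce the bound to Theorem \ref{prop1} (in the form of Theorem \ref{psyikoty}), fix the unitaries by the chain recurrence $U_j^{\dagger}=V_{j-1,j}^{\dagger}U_{j-1}^{\dagger}$ where $V_{j-1,j}$ is the polar factor of $\sqrt{\rho_{j-1}}\sqrt{\rho_j}$ (your $R_m$ is exactly this unitary), and then telescope the product of polar unitaries via the Belavkin square-root identities of (\ref{kamyk}) --- precisely the simplification the paper carries out explicitly for $\sigma_{13}$. The only cosmetic difference is the orientation of your transport chain, which produces the conjugate of (\ref{rec2}) on the corresponding triangle; this is harmless since $\sigma$ is Hermitian and $S(\sigma)$ is unaffected, as you note.
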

The matrix $\sigma$ has a layered structure presented here for $k=4$,
\begin{equation}
\sigma=
\begin{bmatrix}
p_1&0&0&0\\
0&p_2&0&0\\
0&0&p_3&0\\
0&0&0&p_4\\
\end{bmatrix}
+\begin{bmatrix}
0&f_{12}&0&0\\
f_{21}&0&f_{23}&0\\
0&f_{32}&0&f_{34}\\
0&0&f_{43}&0\\
\end{bmatrix}+
\begin{bmatrix}
0&0&f_{13}^{(2)}&f_{14}^{(3)}\\
0&0&0&f_{24}^{(2)}\\
f_{31}^{(2)}&0&0&0\\
f_{41}^{(3)}&f_{42}^{(2)}&0&0\\
\end{bmatrix}
\end{equation}  
with entries of this matrix equal to $\sigma_{ij}$ specified
in Proposition \ref{ppp2}.
\begin{proof}
Consider a correlation matrix: 
\begin{equation}
\sigma_{ij}=\sqrt{p_1p_3}\tr \sqrt{\rho_i}\sqrt{\rho_j}U^{\dagger}_jU_i
\label{correl}
\end{equation}
 where unitaries
$U_i$ are chosen in such a way that elements $\sigma_{i\ i\pm1}$ 
are square root fidelities: $\sqrt{F_{i\ i\pm1}}=\tr\sqrt{\rho_i\ \rho_{i\pm1}}$. Hence
\begin{equation}
 U^{\dagger}_j=V_{j-1,j}^{\dagger}\ U^{\dagger}_{j-1},
\label{recurrence}
\end{equation}
where $V_{j-1,j}^{\dagger}$ is the unitary matrix from the polar decomposition, 
\begin{equation}
\sqrt{\rho_i}\sqrt{\rho_j}=\abs{\sqrt{\rho_i}\sqrt{\rho_j}}V_{i,j}=\sqrt{\rho_i^{1/2}\rho_j\rho_i^{1/2}}V_{i,j}.
\end{equation}
Here the Hermitian conjugated unitary matrix $V_{i,j}^{\dagger}$ reads:
\begin{equation}
V_{i,j}^{\dagger}=\frac{1}{\sqrt{\rho_j}}\frac{1}{\sqrt{\rho_i}}\sqrt{\rho_i^{1/2}\rho_j\rho_i^{1/2}}.
\label{polar}
\end{equation}
The first unitary $U_{1}$ can be chosen arbitrarily. 
The recurrence relation (\ref{recurrence}) allows one
 to obtain formula (\ref{rec2}). 
\end{proof}
To analyse properties of the matrix $\sigma$ consider, for example, the matrix element $\sigma_{13}$.
\begin{eqnarray}
 \sigma_{13}&=&\sqrt{p_1p_3}\tr \sqrt{\rho_1}\sqrt{\rho_3}\,U^{\dagger}_3U_1\nonumber\\
&=&\sqrt{p_1p_3}\tr \sqrt{\rho_1}\sqrt{\rho_3}\,V^{\dagger}_{2,3}U^{\dagger}_{2}U_1\nonumber\\
&=&\sqrt{p_1p_3}\tr \sqrt{\rho_1}\sqrt{\rho_3}\,V^{\dagger}_{2,3}V^{\dagger}_{1,2}.
\end{eqnarray}
Using Eq. (\ref{polar}) one obtains
\begin{eqnarray}
\sigma_{13}&\!\!\!\!\!\!\!=&\!\!\!\!\!\!\!\!\!\!\sqrt{p_1p_3}\tr \sqrt{\rho_1}\sqrt{\rho_3}\frac{1}{\sqrt{\rho_3}}\frac{1}{\sqrt{\rho_2}}\sqrt{\rho_2^{1/2}\rho_3\rho_2^{1/2}}\frac{1}{\sqrt{\rho_2}}\frac{1}{\sqrt{\rho_1}}\sqrt{\rho_1^{1/2}\rho_2\rho_1^{1/2}}\nonumber\\
&=&\sqrt{p_1p_3}\tr \frac{1}{\sqrt{\rho_2}}\sqrt{\rho_2^{1/2}\rho_3\rho_2^{1/2}}\sqrt{\rho_2}\frac{1}{\rho_2}\frac{1}{\sqrt{\rho_1}}\sqrt{\rho_1^{1/2}\rho_2\rho_1^{1/2}}\sqrt{\rho_1}\nonumber\\
&=&\sqrt{p_1p_3}\tr \sqrt{\rho_3\rho_2}\frac{1}{\rho_2}\sqrt{\rho_2\rho_1},
\end{eqnarray}
that gives the matrix element $\sigma_{13}$ of (\ref{rec2}). 
The assumption that the matrices are invertible is used in (\ref{polar}) where the unitary
matrix of the polar decomposition of $\sqrt{\rho_i}\sqrt{\rho_j}$ is given explicitly.
However, the same strategy of the proof leads to analogous proposition 
involving non--invertible matrices. Only the equations (\ref{polar}) and (\ref{rec2})
are changed in this case.

\subsubsection{Hierarchy of estimations}

One can compare  average values of entropies 
from Conjecture \ref{polny} and Propositions \ref{propo1}, \ref{pro4} and \ref{ppp2}.
The average values are situated on the scale in which the 
Holevo quantity is set to $0$ and the entropy $S(P)$ of probability 
distribution is set to unity. 
The variable $\frac{x-\chi}{S(P)-\chi}$ is used,
where $x$ is replaced by the entropy of respective state.
The standard deviations are also computed.
The probability distributions
are generated according to the Dirichlet measure, while 
the set of $k=3$ density matrices 
is chosen randomly according to 
the Hilbert--Schmidt measure \cite{zyczkowskisommers} on the set of density matrices of size $2$. 
\begin{itemize}
\item $<\chi>=0$
\item $<S_{fid}>=0.176 \pm 0.065$, where $S_{fid}$ corresponds to the entropy from
 Conjecture \ref{polny}.
\item $<S_{layered}>=0.193 \pm 0.087$, where $S_{layered}$ corresponds to the entropy from
 Proposition \ref{ppp2} for $k=3$ states in the ensemble.
\item $<S_{fid^2}>=0.37 \pm 0.13$, where $S_{fid^2}$ corresponds to the entropy from
 Proposition \ref{pro4} for $k=3$ states in the ensemble.
\item $<S_{fid/b}>=0.750 \pm 0.015$, where $S_{fid/b}$ corresponds to the entropy from
 Proposition \ref{propo1}.
\item $<S(P)>=1$.
\end{itemize}

For an ensemble of $k=3$ one--qubit states Conjecture \ref{polny}
is the strongest, as it gives on average the lowest bound, while among
the statements proved in Propositions \ref{propo1}, \ref{pro4} and \ref{ppp2}
 the tightest bound (on average) is provided by \mbox{Proposition \ref{ppp2}.}

%%%%%%%%%%%%%%%%%%%%%%%%%%%%%%%%%%%%%%%%%%%%%%%%%%%%%%%%%%%%%%%%

%Due to the similarity transformation $\sqrt{\rho_i}\ \ \dot\ \ \sqrt{\rho_i}^{-1}$ the fidelity can be also %written as:
%\begin{equation}
%F(\rho_i,\rho_j)={\rm Tr}\sqrt{\rho_i\rho_j}.
%\end{equation}

%For brevity we shall somewhere write $F_{i,j}$ for $F(\rho_i,\rho_j)$. 
%We will also use the polar decomposition of the product:

%%%%%%%%%%%%%%%%%%%%%%%%%%%%%%%%%%%%%%%%%

%%%%%%%%%%%%%%%%%%%%%%%%%%%%%%%%%%%%%%%%%%%%%%%%%%%%%%%%

\subsection{Fidelity bound on the Holevo quantity for a special class of states}\label{sadelko}

Although, Conjecture \ref{polny} has been 
 confirmed in several numerical tests,
it has been proved so far for the set of pure states (Section \ref{dusiu}) only.
The aim of the following section is to prove that the square root fidelity 
matrix bounds the Holevo quantity for a restricted set of states.
It will be shown that for one--qubit states 
  among which
two are pure and one is mixed 
and for the uniform probability distribution, 
$\{\frac{1}{3},\frac{1}{3},\frac{1}{3}\}$,
Conjecture \ref{polny} holds.

\begin{proposition}\label{bungabunga}
Consider $k=3$ one--qubit states $\rho_i$ among which
 two  are pure $\rho_1=|\phi_1\>\<\phi_1|$, $\rho_2=|\phi_2\>\<\phi_2|$, and the state $\rho_3$ is mixed. 
The square root fidelity matrix $G$ for these states and the uniform distribution 
$P=\{ \frac{1}{3}, \frac{1}{3}, \frac{1}{3} \}$ bounds the Holevo quantity,
\begin{equation}
\chi(\{\rho_i, p_i\}) \leq S\left(
\frac{1}{3}
\begin{bmatrix}
1 & \sqrt{F_{12}} & \sqrt{F_{13}} \\
\sqrt{F_{21}} & 1 & \sqrt{F_{23}} \\
\sqrt{F_{31}} & \sqrt{F_{32}} & 1    
\end{bmatrix}
\right),
\label{ppmqubit}
\end{equation}
where $F_{ij}=(\tr\sqrt{\sqrt{\rho_i}\rho_j\sqrt{\rho_i}})^2$.
\end{proposition}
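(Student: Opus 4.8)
The plan is to turn the statement into an explicit scalar inequality and to attack it by reduction to the already-settled pure-state case. Because $\rho_1$ and $\rho_2$ are pure, $S(\rho_1)=S(\rho_2)=0$, so the left-hand side collapses to
\begin{equation}
\chi(\{\rho_i,p_i\})=S(\bar\rho)-\tfrac13 S(\rho_3),\qquad \bar\rho=\tfrac13(\rho_1+\rho_2+\rho_3).
\end{equation}
The Holevo quantity and every fidelity $F_{ij}$ are invariant under a simultaneous unitary rotation of the three states, so I would diagonalise $\rho_3=\mathrm{diag}(\mu,1-\mu)$ and use the residual diagonal phase freedom to reduce the data to four real parameters: the eigenvalue $\mu$ of $\rho_3$ and three angles $(\theta_1,\theta_2,\varphi)$ fixing $|\phi_1\rangle,|\phi_2\rangle$ on the Bloch sphere. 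With the one-qubit fidelity formula recalled earlier in the excerpt, the three square root fidelities become explicit, namely $F_{13}=\langle\phi_1|\rho_3|\phi_1\rangle$, $F_{23}=\langle\phi_2|\rho_3|\phi_2\rangle$ and $F_{12}=|\langle\phi_1|\phi_2\rangle|^2$, while $S(\bar\rho)$ and $S(\rho_3)$ are binary entropies of the lengths of the respective Bloch vectors. The claim is thus a concrete inequality between two functions of $(\mu,\theta_1,\theta_2,\varphi)$.

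Before the direct estimate I would record why the block-matrix strategy of Section \ref{skrzat} cannot by itself deliver $G$ here. In the candidate $6\times 6$ matrix with diagonal blocks $\tfrac13\rho_i$, the rank-one blocks $\tfrac13\rho_1,\tfrac13\rho_2$ force every off-diagonal block meeting them to be rank one, so that $A_{12}=c_{12}|\phi_1\rangle\langle\phi_2|$, $A_{13}=|\phi_1\rangle\langle a|$, $A_{23}=|\phi_2\rangle\langle b|$. The trace constraint $|\langle\phi_1|a\rangle|=\tfrac13\sqrt{F_{13}}$ forces $\langle a|\rho_3^{-1}|a\rangle\ge\tfrac19$ (minimised at $a\propto\rho_3|\phi_1\rangle$), while positivity of the Schur complement in the $(1,1)$ slot forces $\langle a|\rho_3^{-1}|a\rangle\le\tfrac19$; the two pin it to $\tfrac19$, and likewise for $b$. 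Feeding this back, positivity then caps the coefficient at $|c_{12}|=|\langle\phi_1|\rho_3|\phi_2\rangle|/(3\sqrt{F_{13}F_{23}})\le\tfrac13$ by the Cauchy–Schwarz inequality for $\rho_3\ge0$, with equality only when $\rho_3^{1/2}|\phi_1\rangle\parallel\rho_3^{1/2}|\phi_2\rangle$. Since attaining $\tfrac13\sqrt{F_{12}}$ in $\tr A_{12}$ requires $|c_{12}|=\tfrac13$, the three square root fidelities cannot be positivity-realised at once, so strong subadditivity alone yields only a strictly weaker bound and the full statement must be proved directly.

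For the direct proof I would exploit the two boundaries. When $\mu\to 1$ the state $\rho_3$ becomes pure and the statement is exactly the three-pure-state case established in Section \ref{dusiu}, where passing to moduli of a positive $3\times 3$ matrix fixes the trace and determinant while increasing the second symmetric polynomial, hence the entropy. I would try to propagate this into the interior by controlling $S(G)$ through the symmetric functions $\tr G=1$, $\tr G^2$ and $\det G$, which determine its spectrum, and comparing them, at fixed trace and determinant, with a reference matrix of known entropy, again via monotonicity of the entropy in the second symmetric polynomial. The main obstacle is precisely this lower bound on $S(G)$: the eigenvalues of a general symmetric $3\times 3$ matrix have no usable closed form, so combining any bound on $S(G)$ with the binary-entropy expression for $\chi$ will require either a reduction to a symmetric worst-case configuration (I expect the extremal case to be $\theta_1=\theta_2$ or a coplanar arrangement, cutting the problem to two parameters) or a convexity and monotonicity analysis in $\mu$ showing that $S(G)-\chi$ is minimised on the pure-state boundary, where it is nonnegative.
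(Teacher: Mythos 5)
Your reduction of the left-hand side to $S(\bar\rho)-\frac{1}{3}S(\rho_3)$ and your four-parameter parametrization are sound, and your instinct that the extremizer is a coplanar, symmetric configuration is exactly what the paper establishes. But your plan stops precisely where the actual proof has to start. The decisive ingredient in the paper's argument is Lemma \ref{ppmlem1}: expanding the purity of the barycenter gives the identity $F_{12}+F_{13}+F_{23}=\frac{1}{2}\left(9\tr\bar{\rho}^2-\tr\rho_3^2-2\right)$ (Eq.~(\ref{ppmsum})), so the sum of the three fidelities --- and hence, by Eq.~(\ref{zwierz}), the second symmetric polynomial of $G$ --- is \emph{constant} on the whole family of triples sharing the barycenter length $a$ and the purity $b$ of $\rho_3$. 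This is what makes the ``entropy is monotone in the symmetric polynomials'' tool usable at all: with the trace and the second symmetric polynomial frozen, minimizing $S(G)$ over the angles collapses to minimizing the single scalar $\det G$, i.e.\ the product $F_{12}F_{13}F_{23}$. Your proposal instead wants to control $S(G)$ through $\tr G$, $\tr G^2$ and $\det G$ jointly, ``at fixed trace and determinant,'' but offers no mechanism for holding one quantity fixed while optimizing the other; without the purity identity the comparison argument does not close, and you explicitly defer the remaining work (``I would try,'' ``I expect the extremal case,'' ``will require either \dots or \dots'').

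Beyond that missing lemma, the paper actually performs the two minimizations you only conjecture: Lemmas \ref{ppmlem2} and \ref{zdzichzdzich} show by explicit computation (Appendices 1 and 2) that $F_{12}F_{13}F_{23}$ attains its minimum at $\beta=0$ and then at $\alpha=0$ or $\alpha=\pi$, which reduces the problem to the two-parameter matrix (\ref{fid21}) in the variables $(F,b)$; the final inequality between $\chi(F,b)$ and $S(G)(F,b)$ is then verified by comparing the two surfaces (Fig.~\ref{fig:zplakatu}), with saturation on the pure boundary $b=1$, $F\geq 1/2$. Note that your proposed ``propagation from the pure-state boundary by convexity or monotonicity in $\mu$'' would need care precisely because the inequality is \emph{saturated} on part of that boundary, so $S(G)-\chi$ is not obviously minimized there by a monotonicity argument alone. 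Your digression showing that the block-matrix machinery of Section \ref{skrzat} cannot positivity-realize all three square-root fidelities simultaneously when two states are pure is a genuinely nice observation --- consistent with the paper's remark that $G$ is not a special case of the correlation matrix $\sigma$ --- but it is an obstruction statement, not a step toward the bound, so as it stands the proposal is a program with the right geometry and a correct boundary case, not a proof.
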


The proof goes as follows. 
First proper parameters characterizing three states will be chosen.
After that the formulas for the left and right side of 
 inequality 
(\ref{ppmqubit}), which are functions of two variables only, 
will be given. 
The fact that one of these functions is greater than the other 
is shown graphically. 

Notice that the left hand side of  
Eq. (\ref{ppmqubit}) depends only on the lengths of the Bloch vectors 
which represent the mixed state $\rho_3$ and the average state $\bar{\rho}=\frac{1}{3}(\rho_1+\rho_2+\rho_3)$
inside the 
Bloch ball.
The same average $\bar{\rho}$ can be realized by many triples $\{\rho_1,\rho_2,\rho_3\}$ 
where $\rho_1,\rho_2$ are pure and $\rho_3$ is mixed of 
given  length of the Bloch vector. 
The family of such triples is parametrized by two numbers $\alpha$ and $\beta$ as shown in Fig. \ref{fig:para}. 
The
points $B, D, E$ denote the following states: $B\rightarrow\rho_3$ which is mixed, $D\rightarrow\rho_1=|\phi_1\>\<\phi_1|$ 
and $E\rightarrow\rho_2=|\phi_2\>\<\phi_2|$, while $A\rightarrow\bar{\rho}$ represents the average state. 
The vector $\vec{OA}$ of length $a$ 
denotes the Bloch vector of the average state $\bar{\rho}$, the vector
$\vec{OB}$ of length $b$ characterizes the mixed state $\rho_3$. 
The position of the vector $\vec{OB}$ with respect to $\vec{OA}$ 
can be parametrized by an angle $\alpha$. 
These two vectors, $\vec{OB}$ and $\vec{OA}$, determine, but not uniquely, two pure states from the same triple characterized by
$\vec{OD}$ and $\vec{OE}$. 
Equivalently one can rotate the vectors $\vec{OD}$ and $\vec{OE}$ by an angle 
$\beta$ around the axis $\vec{OC}$ and obtain
pure states denoted by $F$ and $G$.
The ratio $|AB|:|AC|$ is equal to $2:1$ because in this case the average $A$ is the barycenter of three points $B$, $D$ 
and $E$ or a triple $B$, $F$ and $G$. The method of obtaining the points $C, D, E, F$ and $G$, when $a, b$ and $\alpha$ 
are given, is presented in Appendix 1.
Given a pair of parameters $(a,b)$ 
distinguishes the family of triples $\{|\phi_1\>\<\phi_1|,|\phi_2\>\<\phi_2|,\rho_3\}$
characterized by two angles $\alpha$ and $\beta$. The range of $\alpha$
 is given by condition $|OC|\leq 1$, it is
\begin{equation}
\begin{cases}
\frac{1}{2} \sqrt{9 a^2-6 b \cos (\alpha ) a+b^2}\leq 1\\
0\leq\alpha\leq \pi,
\end{cases}
\end{equation}
while the range of $\beta$ is $(0,\pi)$.
\begin{figure}[ht]
\centering
\scalebox{.6}{\includegraphics{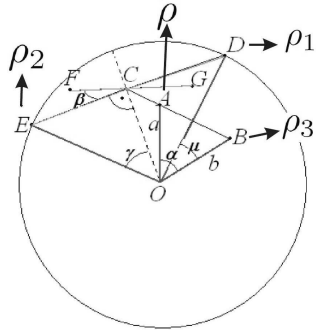}}
\caption{
The Bloch representation of the three states $\{\rho_1=|\phi_1\>\<\phi_1|,\rho_2=|\phi_2\>\<\phi_2|,\rho_3\}$ 
and the parameters used in the proof of Proposition \ref{bungabunga} are presented 
schematically in the Bloch ball. 
Two angles $(\alpha,\beta)$ characterize all possible triples $\{\rho_1,\rho_2,\rho_3\}$
if parameters $(a,b)$ are fixed.
}
\label{fig:para}
\end{figure}
Left hand side of Eq. 
(\ref{ppmqubit}) depends only on the lengths $a$ and $b$, and is independent of
 the concrete realization of the triple. 
Therefore to prove (\ref{ppmqubit}) for given $a$ and $b$ one has to find minimum 
of the entropy of the square root fidelity matrix
over all triples parametrized by the angles $\alpha$ and $\beta$. 

The entropy of the square--root fidelity matrix defined $G_{ij}=\sqrt{p_ip_j}\sqrt{F_{ij}}$
in Eq. (\ref{ppmqubit}) is a function of roots of the characteristic
polynomial:
\begin{equation}
(\frac{1}{3}-\lambda)^3+p(\frac{1}{3}-\lambda)+q=0,
\label{serdel}
\end{equation}
where
\begin{eqnarray}
p &=& -(F_{12} + F_{13} + F_{23})/9\label{ppmp}\\
q &=& 2 \sqrt{F_{12}F_{13}F_{23}}/27.\label{ppmq}
\end{eqnarray}
The parameter $p$ determines the second symmetric polynomial $s_2$ of 
eigenvalues of the square root fidelity matrix $G$
\begin{equation}
s_2=\frac{1}{9}\sum_{i<j}(1-F_{ij}).
\label{zwierz}
\end{equation}
The roots of equation (\ref{serdel}) are equal to:
\begin{eqnarray}
\lambda_k=\frac{1}{3}+2\sqrt{\frac{-p}{3}}\cos\Big[\Big(\frac{1}{3}\arccos{(\frac{3q}{2p}\sqrt{\frac{3}{-p}})}+k\frac{2\pi}{3}\Big)\Big],
\label{ppmeigenvalues}
\end{eqnarray}
where $k=1,...,3$. 

The entropy of the square root fidelity matrix is a function of $p$ and $q$,
which determine the second symmetric polynomial of eigenvalues (\ref{zwierz}) and 
the third symmetric polynomial is in this case equal to the determinant of the $3\times 3$ matrix 
$G_{ij}=\sqrt{p_ip_j}\sqrt{F_{ij}}$. 
The von Neumann entropy is a monotonically increasing function of
all symmetric polynomials of eigenvalues \cite{mitch}.
The parameter $q$ is 
a function of $(a,b,\alpha,\beta)$, while parameter $p$ depends only on $a$ and $b$ which is shown in following lemma:

\begin{lemma}\label{ppmlem1} 
For any triple of two pure and one mixed state of an arbitrary dimension 
the sum of fidelities depends only on the purity of the mixed state and
the barycenter of the ensemble.
\end{lemma}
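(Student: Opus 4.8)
The plan is to reduce the sum of the three fidelities to a purely algebraic quantity that manifestly depends only on the average state and on the purity of $\rho_3$. The key observation is that whenever at least one of the two states is pure, the mixed-state fidelity (\ref{def:fid:mixed}) collapses to the Hilbert--Schmidt overlap. Indeed, for $\rho_1=|\phi_1\>\<\phi_1|$ pure one has $\sqrt{\rho_1}=\rho_1$, so $\sqrt{\rho_1}\rho_3\sqrt{\rho_1}=\<\phi_1|\rho_3|\phi_1\>\,|\phi_1\>\<\phi_1|$ is rank one, whence $\tr\sqrt{\sqrt{\rho_1}\rho_3\sqrt{\rho_1}}=\sqrt{\<\phi_1|\rho_3|\phi_1\>}$ and therefore $F_{13}=\tr(\rho_1\rho_3)$. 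The identical computation gives $F_{23}=\tr(\rho_2\rho_3)$ and, for the two pure states, $F_{12}=\tr(\rho_1\rho_2)$. Thus all three fidelities are Hilbert--Schmidt inner products, and this reduction is valid in arbitrary dimension.

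The next step is to expand the purity of the barycenter. Writing $\bar{\rho}=\frac{1}{3}(\rho_1+\rho_2+\rho_3)$, one has
\[
9\tr(\bar{\rho}^2)=\sum_{i=1}^3\tr(\rho_i^2)+2\sum_{i<j}\tr(\rho_i\rho_j).
\]
Since $\rho_1$ and $\rho_2$ are pure, $\tr(\rho_1^2)=\tr(\rho_2^2)=1$, so that $\sum_i\tr(\rho_i^2)=2+\tr(\rho_3^2)$. Combining this with the previous step yields
\[
F_{12}+F_{13}+F_{23}=\sum_{i<j}\tr(\rho_i\rho_j)=\frac{1}{2}\left(9\tr(\bar{\rho}^2)-2-\tr(\rho_3^2)\right),
\]
which depends only on the purity $\tr(\bar{\rho}^2)$ of the average state and the purity $\tr(\rho_3^2)$ of the mixed state. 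In the one-qubit setting of Proposition \ref{bungabunga} these purities are fixed by the Bloch-vector lengths through $\tr(\bar{\rho}^2)=\frac{1}{2}(1+a^2)$ and $\tr(\rho_3^2)=\frac{1}{2}(1+b^2)$, so the sum of fidelities, and hence the parameter $p$ of Eq. (\ref{ppmp}), is a function of $a$ and $b$ alone, independent of the angles $\alpha$ and $\beta$.

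There is essentially no hard step: the entire content is the pure-state reduction of fidelity together with the elementary expansion of the squared barycenter. The only point demanding care is that the collapse $F_{i3}=\tr(\rho_i\rho_3)$ genuinely requires $\rho_1,\rho_2$ to be pure, so that $\sqrt{\rho_i}$ is itself a projector; for a triple of generic mixed states no such simplification occurs, and the sum of fidelities would acquire real dependence on the relative orientation of the states. I would therefore stress that the lemma is specific to the two-pure-one-mixed configuration, which is precisely the regime exploited in the proof of Proposition \ref{bungabunga}.
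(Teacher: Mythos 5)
Your proof is correct and takes essentially the same route as the paper: both arguments expand the purity $9\tr\bar{\rho}^2$ of the barycenter and rearrange to obtain $F_{12}+F_{13}+F_{23}=\frac{1}{2}\left(9\tr\bar{\rho}^2-\tr\rho_3^2-2\right)$, concluding via $\tr\bar{\rho}^2=\frac{1}{2}(1+a^2)$ and $\tr\rho_3^2=\frac{1}{2}(1+b^2)$ that the sum is angle-independent. The only difference is that you make explicit the pure-state collapse $F(|\phi_i\>\<\phi_i|,\rho_3)=\<\phi_i|\rho_3|\phi_i\>=\tr(\rho_i\rho_3)$, which the paper uses tacitly when writing $F_{13}$ and $F_{23}$ as expectation values.
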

\begin{proof}
Denote by $\bar{\rho}$ the barycenter of a mixed state $\rho$ and two pure states, $\left|\phi_1\right\rangle$, 
$\left|\phi_2\right\rangle$,
\begin{equation}
\bar{\rho}=\frac{1}{3}\rho+\frac{1}{3}\left|\phi_1\right\rangle\left\langle\phi_1\right|
+\frac{1}{3}\left|\phi_2\right\rangle\left\langle\phi_2\right|.
\end{equation}
The purity of $\bar{\rho}$ is given by
\begin{equation}
\tr{\bar{\rho}^2}=\frac{1}{9}\Big(\tr{\rho^2}+2+2\left\langle\phi_1\right|\rho\left|\phi_1\right\rangle
+2\left\langle\phi_2\right|\rho\left|\phi_2\right\rangle
+2\left|\left\langle\phi_1\right|\left.\phi_2\right\rangle\right|^2\Big).
\end{equation}
After reordering the terms one gets
\begin{equation}
F_{12}+F_{13}+F_{23}=\frac{1}{2}(9\tr{\bar{\rho}^2}-\tr{\rho^2}-2),\label{ppmsum}
\end{equation}
where
\begin{eqnarray}
F_{12}=\left|\left\langle\phi_1\right|\left.\phi_2\right\rangle\right|^2,\\
F_{23}=\left\langle\phi_2\right|\rho\left|\phi_2\right\rangle,\\
F_{13}=\left\langle\phi_1\right|\rho\left|\phi_1\right\rangle.
\end{eqnarray}
Since $\tr\bar{\rho}^2=\frac{1}{2}(1+a^2)$ and $\tr\rho^2=\frac{1}{2}(1+b^2)$, the parameter $p$ defined in (\ref{ppmp}) 
does not depend on the angles $\alpha$ and $\beta$.
This completes the proof of \mbox{Lemma \ref{ppmlem1}.}
\end{proof}

The parameter $p$ and the second symmetric polynomial (\ref{zwierz})
 does not depend on the angles $\alpha$ and $\beta$. 
Therefore, for given $a$ and $b$, the entropy of the square root fidelity matrix
 attains its
minimum over $\alpha$ and $\beta$  for minimal value of the determinant of $G$,
since the entropy is an increasing function of the determinant. The
determinant is given by
\begin{equation}
\det{\left(\frac{1}{3}
\begin{bmatrix}
1 & \sqrt{F_{12}} & \sqrt{F_{13}} \\
\sqrt{F_{21}} & 1 & \sqrt{F_{23}} \\
\sqrt{F_{31}} & \sqrt{F_{32}} & 1    
\end{bmatrix}\right)}=\frac{1}{27}\left(1+2\sqrt{F_{12}F_{13}F_{23}}-(F_{12}+F_{13}+F_{23})\right).
\end{equation}
It is the smallest for the smallest value of the parameter $q$
which is the function (\ref{ppmq}) of the off--diagonal elements of the matrix. 
During computations of the minimal value of $q$ another lemma will be useful:

\begin{lemma}\label{ppmlem2}
Among triples
 of one--qubit states which realize the same barycenter,
 where one state is mixed of a given purity and two others are pure, 
the product of three pairwise fidelities is
the smallest if three states and the average lie on the plane containing the great circle of the Bloch ball, 
i.e. $\beta=0$.
\end{lemma}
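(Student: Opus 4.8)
The plan is to carry out the whole argument in the Bloch representation, where the one--qubit states become vectors in $\Rl^3$ and all three fidelities reduce to simple bilinear expressions. Writing the mixed state as $\rho_3=\frac12(\id+\vec b\cdot\vec\sigma)$ with $|\vec b|=b$, and the two pure states as $|\phi_1\>\<\phi_1|=\frac12(\id+\hat d\cdot\vec\sigma)$ and $|\phi_2\>\<\phi_2|=\frac12(\id+\hat e\cdot\vec\sigma)$ for unit vectors $\hat d,\hat e$, one gets $F_{12}=\frac12(1+\hat d\cdot\hat e)$, $F_{13}=\<\phi_1|\rho_3|\phi_1\>=\frac12(1+\vec b\cdot\hat d)$ and $F_{23}=\frac12(1+\vec b\cdot\hat e)$. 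The barycenter condition $\vec{OA}=\frac13(\vec b+\hat d+\hat e)$ fixes $\hat d+\hat e=3\vec{OA}-\vec b$, whose midpoint $\vec{OC}=\frac12(3\vec{OA}-\vec b)$ is therefore determined once $a,b,\alpha$ (equivalently $\vec{OA}$ and $\vec b$) are held fixed. The only remaining freedom is then the rotation by the angle $\beta$ about the axis $\vec{OC}$.

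First I would parametrize this freedom by writing $\hat d=\vec{OC}+\vec v$ and $\hat e=\vec{OC}-\vec v$, where $\vec v\perp\vec{OC}$ and $|\vec v|=\sqrt{1-|\vec{OC}|^2}$; as $\beta$ runs over its range, $\vec v$ rotates on a circle of this fixed radius in the plane orthogonal to $\vec{OC}$. Substituting gives $\hat d\cdot\hat e=|\vec{OC}|^2-|\vec v|^2=2|\vec{OC}|^2-1$, so that $F_{12}=|\vec{OC}|^2$ is independent of $\beta$, in agreement with Lemma \ref{ppmlem1}. Setting $m=\frac12(1+\vec b\cdot\vec{OC})$ and $t=\frac12\,\vec b\cdot\vec v$, one finds $F_{13}=m+t$ and $F_{23}=m-t$, hence
\begin{equation}
F_{12}F_{13}F_{23}=|\vec{OC}|^2\,(m^2-t^2),
\end{equation}
where $|\vec{OC}|^2$ and $m$ are fixed and only $t$ carries the $\beta$--dependence.

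The product is thus a strictly decreasing function of $t^2$, so minimizing it over $\beta$ is equivalent to maximizing $t^2=\frac14(\vec b\cdot\vec v)^2$. Since $\vec v$ is confined to the circle of fixed radius orthogonal to $\vec{OC}$ and $\vec b\cdot\vec v=\vec b_\perp\cdot\vec v$ depends only on the component $\vec b_\perp$ of $\vec b$ orthogonal to $\vec{OC}$, the quantity $|\vec b\cdot\vec v|$ is maximal exactly when $\vec v$ is aligned with $\vec b_\perp$. At that alignment $\vec v$ lies in the plane spanned by $\vec b$ and $\vec{OC}$, which coincides with the plane spanned by $\vec{OA}$ and $\vec b$ because $\vec{OC}=\frac12(3\vec{OA}-\vec b)$; consequently $\hat d=\vec{OC}+\vec v$ and $\hat e=\vec{OC}-\vec v$ both lie in the plane through $O$, $A$ and $B$. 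This is precisely the coplanar configuration $\beta=0$, which establishes the claim.

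I expect the remaining points to be bookkeeping rather than conceptual. One should check that $\beta=0$ lies in the admissible range of the parametrization of Appendix~1 (it does, since the in--plane placement of $\hat d,\hat e$ is always realizable) and that the extremum is genuinely a minimum of the product and not a boundary artifact; the latter is immediate from the monotone dependence on $t^2$ together with $F_{13}F_{23}=m^2-t^2\ge 0$, which forces $t^2\le m^2$ throughout the allowed range, and the degenerate case $\vec b\parallel\vec{OA}$ (where $\vec b_\perp=0$ and the product is constant in $\beta$) is handled trivially. The main obstacle, such as it is, is simply to confirm that the geometric identification of the angle $\beta$ with the azimuth of $\vec v$ about $\vec{OC}$ matches the construction used earlier, so that ``$\vec v$ in the plane of $O,A,B$'' and ``$\beta=0$'' name the same configuration.
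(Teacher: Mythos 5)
Your proof is correct, and it reaches the paper's conclusion by a genuinely cleaner route. The paper delegates the entire argument to an explicit rotation--matrix computation (Appendix 2): after a change of basis it writes the rotated pure states as images of $\vec{OD}',\vec{OE}'$ under the conjugated rotation $U_z(-\tfrac{\pi}{2})\,U_y(\nu)\,U_z(\beta)\,U_y^{\dagger}(\nu)\,U_z^{\dagger}(-\tfrac{\pi}{2})$ and grinds out the closed formula $f=\tfrac{1}{16}\cos^2\gamma\,\bigl((\cos\mu+\cos(2\gamma+\mu)+2)^2-\cos^2\beta\,(\cos\mu-\cos(2\gamma+\mu))^2\bigr)$, from which the minimum at $\cos^2\beta=1$ is simply read off; the proof in the body of the text is essentially a pointer to this formula plus an assertion. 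You instead derive the same structural fact synthetically in the Bloch picture: the barycenter constraint fixes $\hat d+\hat e=2\,\vec{OC}$, the two purity constraints force $\vec v=\tfrac12(\hat d-\hat e)$ onto a circle of fixed radius $\sqrt{1-|\vec{OC}|^2}$ orthogonal to $\vec{OC}$, and the product collapses to $|\vec{OC}|^2(m^2-t^2)$ with only $t=\tfrac12\,\vec b\cdot\vec v$ carrying the $\beta$--dependence; since $t=\tfrac12\,\vec b_\perp\cdot\vec v\propto\cos\beta$, this is exactly the paper's $-\cos^2\beta$ structure, and maximizing $|t|$ aligns $\vec v$ with $\vec b_\perp$, which is precisely the coplanar configuration. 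Your version also supplies points the paper leaves implicit: a check that the circle of $\vec v$ exhausts all admissible triples (so the one--parameter $\beta$--freedom is the only freedom), the immediate identification $F_{12}=|\vec{OC}|^2$ (recovering the relevant instance of Lemma \ref{ppmlem1}), an explicit treatment of the degenerate case $\vec b\parallel\vec{OA}$ where the product is constant in $\beta$, and the observation that $F_{13}F_{23}=m^2-t^2\geq 0$ rules out boundary artifacts; and it naturally recovers both minimizers $\beta=0$ and $\beta=\pi$ ($\vec v\parallel\pm\vec b_\perp$, which merely swaps $\hat d\leftrightarrow\hat e$), matching the paper's statement. The one item you rightly flag — that the azimuth of $\vec v$ about $\vec{OC}$ is the same angle $\beta$ as in the construction of Fig.~\ref{fig:para} — is a routine verification, since the paper's $\beta=0$ configuration places $\vec{OD},\vec{OE}$ in the plane of $O$, $A$, $B$ exactly as your alignment condition does.
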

\begin{proof}
The function $f(a,b,\alpha,\beta)=F_{12}F_{13}F_{31}$
is given explicitly in
Appendix 2 based on Appendix 1. 
For given $a, b$ and $\alpha$ this function has minimum only at $\beta=0$ and equivalently for $\beta=\pi$.
\end{proof}
In consequence, searching for the minimum of 
the entropy of the square root fidelity matrix
we can restrict our attention to the case $\beta=0$. 
In fact, for our purpose it suffices to take the specific value of $\alpha$ which
is shown in the following lemma.
\begin{lemma}\label{zdzichzdzich}
Among triples of one--qubit states which realize the same barycenter,
in which one state is mixed of given purity and two others are pure, 
the product of three pairwise fidelities is
the smallest when 
two pure states are symmetric with respect to the mixed state
%the three vectors form '{\rm Y}' - shaped figure (or '$\begin{sideways}\begin{sideways}{\rm Y}\end{sideways}\end{sideways}$' - shaped) on the plane containing the great circle of the Bloch ball, 
i.e. $\beta=0$ and $\alpha=0$ or $\alpha=\pi$.
\end{lemma}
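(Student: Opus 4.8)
The plan is to collapse the entire optimization onto a single scalar variable and then recognize the objective as a concave function of it. By Lemma \ref{ppmlem2} I may already fix $\beta=0$, so that only the $\alpha$-dependence remains, with $a$ and $b$ held fixed. First I would pass to Bloch vectors: let $\vec a$ (length $a$) represent the barycenter $\bar\rho$, let $\vec b$ (length $b$) represent the mixed state $\rho_3$, and let $\alpha$ be the angle between them. Since $\bar\rho=\frac13(\rho_1+\rho_2+\rho_3)$, the midpoint of the Bloch vectors of the two pure states is $\vec c=\frac{3\vec a-\vec b}{2}$, and the pure states themselves are $\vec c\pm\vec v$ with $\vec v\perp\vec c$ and $\lVert\vec v\rVert^2=1-\lVert\vec c\rVert^2$; the choice $\beta=0$ places $\vec v$ in the plane spanned by $\vec a$ and $\vec b$.

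Next I would use that for qubits every pairwise fidelity has the form $F=\tfrac12(1+\vec x\cdot\vec y)$, valid both for two pure states and for a pure/mixed pair. This gives at once $F_{12}=\lVert\vec c\rVert^2$ and $F_{13}F_{23}=\frac14\big[(1+\vec c\cdot\vec b)^2-(\vec v\cdot\vec b)^2\big]$. Writing $u=\lVert\vec c\rVert^2=\frac14(9a^2+b^2-6ab\cos\alpha)$, the crucial observation is that $u+\vec c\cdot\vec b=\vec c\cdot(\vec c+\vec b)=\tfrac14(9a^2-b^2)=:s$ is independent of $\alpha$ (this is the Bloch counterpart of Lemma \ref{ppmlem1}, since $1+s$ equals the constant sum of fidelities). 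Substituting $\vec c\cdot\vec b=s-u$ together with the value $(\vec v\cdot\vec b)^2=(1-u)\big(b^2-(\vec c\cdot\vec b)^2/u\big)$ forced by $\beta=0$, I expect the cubic term to cancel and the quadratic and linear coefficients to turn out negatives of one another, so that the product collapses to the clean form
\begin{equation}
4\,F_{12}F_{13}F_{23}=(1-b^2)\,u(1-u)+s^2 .
\end{equation}

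The conclusion is then immediate. Because $\rho_3$ is a state, $1-b^2\ge0$, and $u\mapsto u(1-u)$ is concave, so on any interval of admissible $u$ the product $f=F_{12}F_{13}F_{23}$ attains its minimum at an endpoint. Since $u$ is strictly monotone in $\cos\alpha$, the endpoints of the $u$-range correspond to $\cos\alpha=\pm1$, i.e. to $\alpha=0$ and $\alpha=\pi$; both describe the configuration in which $\vec b$ is collinear with $\vec a$ (hence with $\vec c$) and the two pure states sit symmetrically on either side of this common axis. Minimizing $q$ in (\ref{ppmq}), and thereby the entropy of $G$, therefore singles out exactly the symmetric triples, which proves the lemma.

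The step I expect to be most delicate is not the algebra but the feasibility constraint $\lVert\vec c\rVert\le1$ recorded just before the lemma: when $3a+b>2$ the value $\alpha=\pi$ violates $u\le1$ and the admissible $\alpha$-range is truncated, so the relevant endpoint becomes the boundary $\lVert\vec c\rVert=1$, where the two pure states coincide ($\vec v=0$) and $f=s^2/4$ is even smaller. I would therefore have to verify that either this regime does not arise for the pairs $(a,b)$ entering Proposition \ref{bungabunga}, or that the coincident-states boundary is absorbed by the symmetric endpoint $\alpha=0$ in the subsequent graphical comparison with $\chi$. Treating this degenerate edge cleanly, rather than the concavity argument itself, is where the real care is needed.
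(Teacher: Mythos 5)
Your argument is correct, and it reaches the lemma by a genuinely different route than the paper. The paper's proof is purely computational: Appendices 1 and 2 produce the explicit trigonometric expression (\ref{ppmpro}) for $f_0(a,b,\alpha,\beta=0)=F_{12}F_{13}F_{23}$, and the proof then asserts, essentially by inspection of that formula, that the minimum sits at the edge values $\alpha=0$ or $\alpha=\pi$, with no structural justification of why interior critical points cannot be minima. You instead exhibit the closed form
\begin{equation*}
4\,F_{12}F_{13}F_{23}=(1-b^2)\,u(1-u)+s^2,\qquad u=\lVert\vec{c}\rVert^2=\tfrac{1}{4}\bigl(9a^2+b^2-6ab\cos\alpha\bigr),\quad s=\tfrac{1}{4}\bigl(9a^2-b^2\bigr),
\end{equation*}
which checks out: expanding $u(1+w)^2-(1-u)\bigl(ub^2-w^2\bigr)$ with $w=\vec{c}\cdot\vec{b}=s-u$ gives exactly the right-hand side, and numerically it agrees with (\ref{ppmpro}) (e.g.\ $a=0.3$, $b=0.4$, $\alpha=\pi/3$ yields $0.0337$ from both expressions). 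Since $s$ is the $\alpha$-independent invariant guaranteed by Lemma \ref{ppmlem1} (indeed $1+s=F_{12}+F_{13}+F_{23}$) and $u$ is affine and monotone in $\cos\alpha$, concavity of $u\mapsto u(1-u)$ together with $1-b^2\geq 0$ makes endpoint minimality automatic. Your route buys transparency — the unique interior critical point $u=1/2$ is visibly a maximum, and you get the uniform bound $f\geq s^2/4$ for free — where the paper's route buys only the bare verification and leaves the critical-point analysis implicit. On the delicate point you flag: you are right that the constraint $\lVert\vec{c}\rVert\leq 1$, which the paper records just before the lemma, truncates the $\alpha$-range whenever $3a+b>2$, and your formula shows that there the minimum over feasible triples moves to the boundary $u=1$ (coincident pure states, $f=s^2/4$), strictly below the value at $\alpha=0$. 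This regime does occur (e.g.\ $a=0.7$, $b=0.3$), and neither the paper's one-line proof nor the lemma's statement addresses it; so this is a defect of the source that you have correctly localized, not a gap in your argument, though it does mean the subsequent minimization in Proposition \ref{bungabunga} would need the degenerate coincident-states boundary checked separately, exactly as you propose.
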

\begin{proof}
The function $f_0(a,b,\alpha,\beta=0)=F_{12}F_{13}F_{31}$ 
is given directly in Appendix 1. 
It has only one minimum at $\alpha=0$ but in certain cases, 
depending on $a$ and $b$, the value on the edge of variable range, i.e. at $\alpha=0$ or $\alpha=\pi$ is smaller. 
\end{proof}

\subsubsection{Proof of the fidelity bound}
To prove inequality (\ref{ppmqubit}) the smallest entropy of the square root 
fidelity matrix for three states consistent with the left hand side of this inequality should be found. 
Entropy is a function of four parameters, $(a, b, \alpha, \beta)$. 
 The left hand side of (\ref{ppmqubit}), which is the Holevo quantity depends on two parameters $(a,b)$ as follows
\begin{equation}
\chi= S\left(\frac{1}{2}\begin{bmatrix}
                 1+a & 0 \\
		0 & 1-a
                \end{bmatrix}\right)-
\frac{1}{3} S\left(\frac{1}{2}\begin{bmatrix}
                 1+b & 0 \\
		0 & 1-b
                \end{bmatrix}\right).
\end{equation}
For given parameters $a$ and $b$ lemmas 1, 2 and 3 allows us to find specific $\alpha$ and $\beta$ for which
 minimization of right hand side of (\ref{ppmqubit}) is obtained. 
One can fix $\alpha=0$ or $\alpha=\pi$ and $\beta=0$.
That means, that minimal entropy of the 
square root fidelity $G$ over the angles is 
obtained if the three states $\{\rho_1=|\phi_1\>\<\phi_1|,\rho_2=|\phi_2\>\<\phi_2|,\rho_3\}$
 are lying on
the great circle and the two pure states are symmetric
with respect to the mixed state.
In this case the matrix $G$ is characterized by two parameters, $F=F_{12}=F_{23}$ and $b$. 
Here $F$ is the fidelity between the pure state $\rho_1$  and the mixed
state $\rho_3$ whereas $b$ characterize the length of the Bloch
vector of the mixed state $\rho_3$. The matrix $G$ reads
\begin{equation}
G=\frac{1}{3}
\begin{pmatrix}
1 & \sqrt{F} & |\frac{2F-1}{b}| \\
\sqrt{F} & 1 & \sqrt{F} \\
|\frac{2F-1}{b}| & \sqrt{F} & 1    
\end{pmatrix},
\label{fid21}
\end{equation}
where $F$ is a function of $b$, such that $F(b)=\frac{1}{2}(1-bc)$,
and $c$ is the length of the Bloch vector representing the barycenter of two pure states $\rho_1$ and $\rho_2$.
The fidelity $F$
 is equal to $1/2$ if $b$ tends to $0$. 
The parameter $c$ determines also the projection of the Bloch vector
of the pure state $\rho_1$ on the Bloch vector of the mixed state $\rho_3$.
The absolute value $|c|$ is equal to the square root fidelity
between the two pure states. The range of variables
are $0\leq b\leq1$ and $\frac{1}{2}(1-b)\leq F\leq \frac{1}{2}(1+b)$.

Considered case is shown in Fig. \ref{fig:zplakatu}.
There are two surfaces -- functions of two parameters
$F$ and $b$. The lower surface represents the Holevo quantity $\chi$, 
and the upper surface denotes the entropy of the square root
fidelity matrix (\ref{fid21}). The surface $S(G)$
lies always above $\chi$ and is composed of  
two smooth functions characterizing cases in which 
 all vectors lay on the same semicircle or
pure states and the mixed state belong to the opposite semicircles.

Fig. \ref{fig:zplakatu} suggests that in the case of three pure states,
$b=1$, laying on the same semicircle the 
inequality is saturated, 
$\chi=S(G)$. In this case, $F\geq 1/2$, the rank
of the square root fidelity matrix is equal to $2$,
and the nonzero eigenvalues are $(1\pm a)/2$,
where $a=(4F-1)/3$ is the length 
of the Bloch vector of the average state $\bar{\rho}$.  
In general we have $a=\frac{1}{3}(b+2\frac{2F-1}{b})$.
In case of $\chi=S(G)$ the Holevo quantity is equal to
the entropy of the average state $\bar{\rho}$. This finishes the proof of Proposition \ref{bungabunga}.
 \begin{figure}[ht]
 \centering
 \scalebox{.9}{\includegraphics{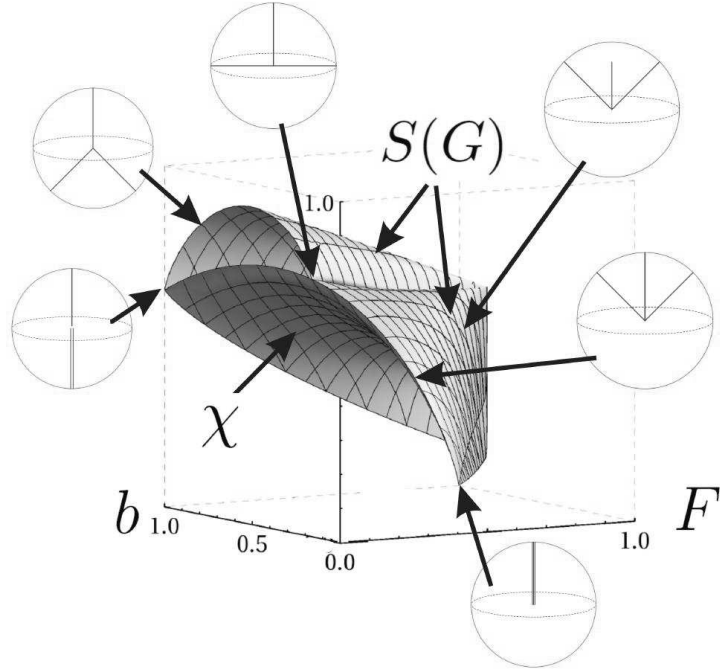}}
 \caption{ 
Evidence in favour of Proposition \ref{bungabunga}.
The Holevo quantity as function of two variables:  fidelity $F$
between the pure state $\rho_1$ and the mixed state $\rho_3$, 
and the length $b$ of the Bloch vector characterizing the state $\rho_3$.
The upper surface representing the square root fidelity matrix $G$
is composed of two smooths parts. Every circle represents schematically
the Bloch ball with exemplary positions of 
Bloch vectors characterizing
 three states $\{\rho_1=|\phi_1\>\<\phi_1|,\rho_2=|\phi_2\>\<\phi_2|,\rho_3\}$
of the ensemble.
}
 \label{fig:zplakatu}
\end{figure}

\part{Minimal output entropy and map entropy}\label{partthree}

\section{Entropies for one-qubit channels}\label{sec:ent}
The question on additivity of the channel capacity
is one of the most interesting problems
in quantum information theory \cite{amosov}.
Shor showed \cite{shor} that this problem has several 
equivalent formulations.
One of them concerns the minimal output entropy,
\begin{equation}
S^{\min}(\Phi)=\min_{\rho}S(\Phi(\rho)).
\end{equation}
In the case of one--qubit channel the
minimal output entropy is 
the entropy of a state 
characterized by point on the ellipsoid, which is the image of the Bloch sphere,
 the closest to this sphere. 
The pure state which is transformed into a state
of the minimal entropy is called minimizer. 

For any setup in which minimal 
output entropy is additive the quantum channel
capacity is additive as well. 
Additivity implies that an
entangled state cannot increase
capacity of two channels with respect
to the sum of their capacities taken separately.
The additivity conjecture can also be formulated as a
statement that capacity of two channels
is minimized for a product state.

The conjecture was confirmed in many
special cases. For instance, additivity holds, 
if one of the channels is arbitrary and the second one is:
bistochastic one--qubit map \cite{king1},
a unitary transformation \cite{amosov},
generalized depolarizing channel \cite{king2},
entanglement breaking channel \cite{shor2},
very noisy map \cite{diVincenzo}
and others. A useful review 
on this subject was written by Holevo \cite{holrev}.
Different strategies of proving the
additivity conjecture are analyzed there.
For a recent relation on the additivity conjecture see also \cite{kingostatni}.

Also counterexamples to the additivity conjecture 
have been found.
One of them was presented
by Hastings \cite{hastings}. 
He found the lower bound for
the output entropy of some channels when
the input was a product state.
Next he estimated the output entropy
for a maximally entangled input. Due to such
estimations it was shown that the entangled
state decreases channel capacity below
the value achievable for product states.

The proof of Hastings used pairs of complementary channels.
His argument was not constructive and works in
high dimensional spaces.
Counterexamples for the additivity hypothesis are also studied in \cite{horo}.
 
It is still an open question,
whether the additivity holds for an arbitrary one--qubit channel.
Originally, the hypothesis on additivity of
minimal output entropy was formulated
for the von Neumann entropy.
One of the approaches to the problem 
uses a one--parameter
family of entropies, called R\'{e}nyi entropies
characterized by a parameter $q$,
\begin{equation}
S_q(\rho):=\frac{1}{1-q}\log\tr\rho^q.
\end{equation}
Calculations are sometimes easier when the
R\'{e}nyi entropies are considered. The quantity $S_q$ tends to the
von Neumann entropy in the limit $q\rightarrow1$.
Additivity of the minimal output R\'{e}nyi entropy 
has been proved only in some range of the parameter $q$
depending on the channels considered \cite{king1,king2,kingostatni}.

Although the R\'{e}nyi entropy is sometimes 
computationally more feasible,
finding minimum over entire set
of quantum states is still a hard problem.
One of the ideas how to omit this difficulty 
tries to use some relations between
minimal output entropy and other quantities 
which are
easier to calculate. In the following chapter
the R\'{e}nyi entropy of a map (the map entropy) 
 will be used to estimate 
the minimal output entropy.
Map entropy (entropy of a map) is defined by the entropy of the
Choi-Jamio{\l}kowski state (\ref{zyrafa}) corresponding to the map.
This  quantity is easy to obtain.
Numerical tests presented
in Fig. \ref{fig:reny1}, \ref{fig:reny2}, \ref{fig:reny3} show
that there is no simple functional 
relation between the map entropy and the 
minimal output entropy. 
Nevertheless being aware of the structure
of the set of quantum maps projected on 
the plane $(S^{\map}_q,S^{\min}_q)$
can be useful.
Knowledge of entropies of  maps at the boundaries of the allowed set 
can be used to estimate
the minimal output entropy by the entropy
of the map.

\subsection{Structure of the set of Pauli channels}\label{eryk}

Quantum channels which preserve the maximally mixed
state are called bistochastic.
All bistochastic one--qubit channels can
be represented as a convex combination of the
identity matrix $\sigma_0=\idty$ and 
three Pauli matrices $\sigma_{i=1,2,3}$ (\ref{malpa})
\begin{equation}
\Phi_{\vec{p}}(\rho)=\sum_{i=0}^{3} p_i\sigma_i\rho\sigma_i, \qquad \sum_{i=0}^3p_i=1, \qquad \forall_i p_i\geq 0.
\label{pauli}
\end{equation}
Bistochastic one--qubit quantum operations are thus
called {\it Pauli channels}.
The structure of the set of all Pauli channels
 forms a regular tetrahedron $\Delta_3$ as shown in 
Fig. \ref{fig:tetra}{\it a}.
There are many channels characterized by the 
points of tetrahedron which can be obtained from
other channels following a unitary transformation.
Our considerations are often 
restricted to the asymmetric tetrahedron $K$ (see Fig. \ref{fig:tetra}{\it b})
which is a subset  of $\Delta_3$.
All maps in $\Delta_3$ can be obtained 
from channels of $K$ by concatenation these channels with unitary transformations.
The set $K$ is formed  by
the convex combination of four vectors $\vec{p}$ from (\ref{pauli}),
$A=(0,0,0,0),\  B=(1/2,1/2,0,0),\ C=(1/3,1/3,1/3,0)$, and $D=(1/4,1/4,1/4,1/4)$.
\begin{figure}[htbp]
	\centering
	\scalebox{0.9}{
		\includegraphics{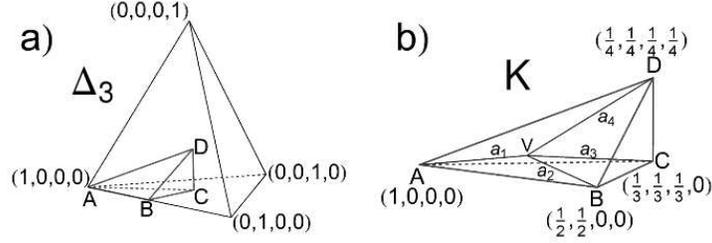}}
\caption{The structure of  one--qubit bistochastic quantum operations 
corresponds to the regular tetrahedron $\Delta_3$.
This figure is spanned by four extremal vectors $\vec{p}$
from formula (\ref{pauli}).
Symmetries of the tetrahedron allow us to distinguish the
asymmetric set $K$ inside $\Delta_3$. 
Any vector $\vec{p}$ characterizing a Pauli channel 
can be obtained by permutation of elements 
of vectors from $K$.
}
	\label{fig:tetra}
\end{figure}
\ 
\begin{figure}[htbp]
	\centering
	\scalebox{1}{
		\includegraphics{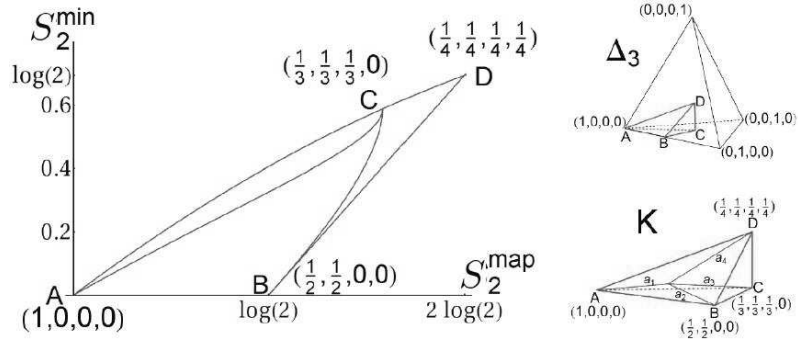}}
\caption{
Lines $AB$, $BD$ and $AD$, which correspond 
to the edges of asymmetric tetrahedron $K$
form the boundaries of the entire set
of Pauli matrices projected on the plane $(S^{\map}_2,S^{\min}_2)$.
}
	\label{fig:reny1}
\end{figure}
Extremal lines of the asymmetric tetrahedron
correspond to the following families of maps:
$AB$ - dephasing channels, $BD$ - classical bistochastic maps, 
$AD$ and $CD$ - depolarizing channels.
The families mentioned above are also shown in
Fig. \ref{fig:reny1} which presents boundaries of the set of all one--qubit
bistochastic channels projected onto the plane 
$(S_2^{\map},S_2^{\min})$. 
A following proposition proved in \cite{roga3} 
characterizes this projection.
\begin{proposition}
Extremal lines of asymmetric tetrahedron
correspond to boundaries of the set of all bistochastic
one--qubit maps on the plot $(S_2^{\map},S_2^{\min})$.
\end{proposition}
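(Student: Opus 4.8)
The plan is to reduce both entropies to single symmetric invariants of the probability vector $\vec p=(p_0,p_1,p_2,p_3)$ and then to optimise one invariant at fixed value of the other. First I would note that the Choi--Jamio{\l}kowski state of a Pauli channel $\Phi_{\vec p}$ is Bell--diagonal with eigenvalues $p_0,\dots,p_3$, so that the R\'enyi map entropy is $S^{\map}_2(\Phi_{\vec p})=-\log\sum_{i=0}^3 p_i^2$, a function of $r:=\sum_i p_i^2$ alone and strictly decreasing in $r$. For the output entropy, a pure input with unit Bloch vector $\hat r$ is sent to a state of Bloch length $|W\hat r|$, where $W=\mathrm{diag}(\eta_1,\eta_2,\eta_3)$ with $\eta_1=p_0+p_1-p_2-p_3$ and the two analogous sign patterns for $\eta_2,\eta_3$. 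A one--qubit state of Bloch length $s$ has $S_2=-\log\frac{1+s^2}{2}$, decreasing in $s$, and $\max_{\hat r}|W\hat r|=\max_i|\eta_i|$, so $S^{\min}_2(\Phi_{\vec p})=-\log\frac{1+m}{2}$ with $m:=\max_i\eta_i^2$, again a decreasing function of a single invariant.

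The link between the two invariants comes from the Walsh--Hadamard structure of the map $\vec p\mapsto(\eta_0,\eta_1,\eta_2,\eta_3)$ with $\eta_0=1$: the defining matrix is a $4\times4$ Hadamard matrix, whence $\sum_{i=1}^3\eta_i^2=4\sum_j p_j^2-1=:R^2$. Thus fixing $S^{\map}_2$ is the same as fixing $R^2=|\vec\eta|^2$. Since both plotted coordinates are monotone functions of the symmetric invariants $r$ and $m$, and since $r$ and $m$ are invariant under the signed permutations of $\vec\eta$ induced by permuting the $p_i$, the image of the whole tetrahedron $\Delta_3$ equals the image of the asymmetric tetrahedron $K$, and the boundary of this image is traced by the channels that extremise $m$ at fixed $R^2$.

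I would then carry out the extremal analysis. The lower extreme is $m\ge\tfrac13\sum_i\eta_i^2=R^2/3$, with equality exactly when $|\eta_1|=|\eta_2|=|\eta_3|$; the depolarising family $\vec\eta=(R/\sqrt3)(1,1,1)$ is an admissible channel for every $R^2\in[0,3]$ and realises it, so the arc $m=R^2/3$ is the image of the edge $AD$ and forms the upper boundary of $S^{\min}_2$. The upper extreme uses the two bounds $m\le\sum_i\eta_i^2=R^2$ and $m\le1$ (since $|\eta_i|\le p_0+p_1+p_2+p_3=1$); for $R^2\le1$ it is attained by $\vec\eta=(R,0,0)$, the edge $BD$, and for $R^2\ge1$ by the dephasing family $\vec\eta=(1,s,s)$, the edge $AB$, giving the lower boundary of $S^{\min}_2$. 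Hence the three boundary arcs are precisely the images of $AD$, $BD$ and $AB$; the remaining depolarising edge $CD$, carrying $\vec\eta=t(1,1,-1)$, also has $m=R^2/3$ and so maps onto the same arc as $AD$.

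It remains to check that the image is the full region enclosed by these arcs rather than merely a subset containing them, i.e. that at each fixed $R^2$ every intermediate value of $m$ between $R^2/3$ and $\min(R^2,1)$ is attained by an admissible channel. I would establish this by exhibiting explicit constant-$R^2$ families interpolating between the minimiser and the maximiser, for instance $\vec\eta=(\sqrt m,t,\pm t)$ with $2t^2=R^2-m\le 2m$, and verifying that they stay inside the positivity tetrahedron. This positivity bookkeeping, together with the connectedness of the fixed-$R^2$ slice of $K$ needed to guarantee that all intermediate $m$ occur, is the main technical obstacle; the crucial structural input that forces the upper boundary to switch from $BD$ to $AB$ at $R^2=1$ is the elementary bound $|\eta_i|\le1$.
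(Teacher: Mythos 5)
Your proposal is sound, and in fact the thesis contains no proof of this proposition to compare against --- it only cites \cite{roga3} and supports the claim with Fig.~\ref{fig:reny1} --- so the substantive question is whether your blind derivation closes, and it does. All of your reductions check out: the Choi state of $\Phi_{\vec p}$ is Bell diagonal with spectrum $\vec p$, so $S_2^{\map}=-\log\sum_i p_i^2$; a bistochastic one--qubit map sends a pure input to an output of Bloch length at most $\max_i|\eta_i|$, so $S_2^{\min}=-\log\frac{1+m}{2}$ with $m=\max_i\eta_i^2$; and the Hadamard identity $\sum_{i=1}^3\eta_i^2=4\sum_j p_j^2-1=R^2$ ties the two invariants together (on the depolarizing family $m=R^2/3$ this reproduces Eq.~(\ref{zabamuminek}) at $N=2$, a useful consistency check against the theorems the thesis does state). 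The step you leave as ``positivity bookkeeping'' indeed goes through immediately: for the interpolating family $\vec\eta=(a,t,t)$ with $0\le t\le a\le 1$ the Pauli weights are $\vec p=\bigl((1+a+2t)/4,\,(1+a-2t)/4,\,(1-a)/4,\,(1-a)/4\bigr)$, nonnegative because $2t\le 2a\le 1+a$, and ordered, hence inside $K$; taking $a=\sqrt{m}$ and $2t^2=R^2-m$ fills every fixed-$R^2$ slice $m\in[R^2/3,\min(R^2,1)]$, so the image is the full region between the arcs and its boundary is exactly the images of $AD$ (maximal $S_2^{\min}$) and of $BD$ joined to $AB$ (minimal $S_2^{\min}$), with the crossover at $R^2=1$ forced, as you say, by $|\eta_i|\le 1$. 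Two minor points: the vertex $A$ in the thesis should read $\vec p=(1,0,0,0)$, the identity channel (the printed $(0,0,0,0)$ is a typo), which is what you implicitly used; and your remark that $CD$, carrying $\vec\eta=t(1,1,-1)$, maps onto a portion of the same depolarizing arc as $AD$ agrees with the thesis's listing of both $AD$ and $CD$ as depolarizing families, and explains why only three edges appear as boundary curves.
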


\subsection{Depolarizing channels}\label{depol}
 
Fig. \ref{fig:reny2} and Fig. \ref{fig:reny3}  
show the projection of the Pauli channels on
the plane $(S_q^{\map},S_q^{\min})$
with parameter $q$ different than $2$.
Comparison of these figures with Fig. \ref{fig:reny1}
shows that the structure of the set
of channels on the plane $(S_q^{\map},S_q^{\min})$
is the simplest in case of the R\'{e}nyi entropy of order $q=2$.
In this case, the depolarizing channels form 
one of the 
edges of the set of all quantum one--qubit maps projected onto the plane.
Indeed the following theorem proved in \cite{roga3} 
confirms the special role
of depolarizing channels in the set of all quantum channels
acting on states of arbitrary dimension $N$. 

\begin{theorem}
\label{glodu}
Depolarizing channels have the smallest map R\'{e}nyi 
entropy $S_2^{\map}$ among all channels with the same
minimal output 
R\'{e}nyi entropy $S_2^{\min}$.
\end{theorem}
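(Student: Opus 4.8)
The plan is to turn the statement into an inequality between two purities and then to recognise that it is saturated precisely by the depolarizing family. Since the order-$2$ Rényi entropy is $S_2(\rho)=-\log\tr\rho^2$, write $P_{\map}(\Phi)=\tr\big[(\tfrac1N D_\Phi)^2\big]=\tfrac1{N^2}\tr D_\Phi^2$ and $P_{\min}(\Phi)=\max_\rho\tr[\Phi(\rho)^2]$, so that $S_2^{\map}=-\log P_{\map}$ and $S_2^{\min}=-\log P_{\min}$. As $-\log$ is strictly decreasing, the theorem is equivalent to the claim that, among all channels with a prescribed value of $P_{\min}$, the depolarizing channel \emph{maximizes} $P_{\map}$. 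Because $\rho\mapsto\tr[\Phi(\rho)^2]$ is convex and the set of states is convex, the maximum defining $P_{\min}$ is attained at an extreme point, so I may use $P_{\min}(\Phi)=\max_{|\psi\>}\tr[\Phi(|\psi\>\<\psi|)^2]$ with $|\psi\>$ pure.

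First I would express both purities in a Hermitian orthonormal operator basis $\{\tau_\mu\}_{\mu=0}^{N^2-1}$ with $\tau_0=\idty/\sqrt N$ and $\tr\tau_\mu\tau_\nu=\delta_{\mu\nu}$. In this basis trace preservation fixes the first row of the superoperator of $\Phi$, which then has the block form $\begin{pmatrix}1&0\\ \vec\kappa&W\end{pmatrix}$, with $\vec\kappa$ the non-unitality vector and $W$ the $(N^2-1)\times(N^2-1)$ distortion block. A state reads $\rho=\tfrac1{\sqrt N}\tau_0+\sum_{\mu\ge1}n_\mu\tau_\mu$ with $\tr\rho^2=\tfrac1N+\norm{\vec n}^2$, so pure states have $\norm{\vec n}^2=1-\tfrac1N$. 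Then $\tr[\Phi(|\psi\>\<\psi|)^2]=\tfrac1N+\norm{\tfrac{\vec\kappa}{\sqrt N}+W\vec n}^2$, while, using that the reshuffling (\ref{zyrafa}) preserves the Hilbert--Schmidt norm and that $D_\Phi$ is Hermitian, $P_{\map}=\tfrac1{N^2}\norm{D_\Phi}_{\mathrm{HS}}^2=\tfrac1{N^2}(1+\norm{\vec\kappa}^2+\norm{W}_{\mathrm{HS}}^2)$.

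The decisive step is a max-versus-average estimate. Averaging over Haar-random pure states, symmetry gives $\mathbb{E}[\vec n]=0$ and $\mathbb{E}[\vec n\vec n^{\,T}]=\tfrac1{N(N+1)}\idty$ (the constant fixed by $\mathbb{E}\norm{\vec n}^2=1-\tfrac1N$), whence $\mathbb{E}_\psi\tr[\Phi(|\psi\>\<\psi|)^2]=\tfrac1N+\tfrac{\norm{\vec\kappa}^2}{N}+\tfrac{\norm{W}_{\mathrm{HS}}^2}{N(N+1)}$. Since the maximum dominates the average, $P_{\min}\ge\tfrac1N+\tfrac{\norm{\vec\kappa}^2}{N}+\tfrac{\norm{W}_{\mathrm{HS}}^2}{N(N+1)}$; eliminating $\norm{W}_{\mathrm{HS}}^2$ against the formula for $P_{\map}$ and discarding the nonnegative term $\norm{\vec\kappa}^2/N$ yields the linear bound \[ P_{\map}\ \le\ \tfrac{N+1}{N}\,P_{\min}-\tfrac1N \] valid for every channel. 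A direct check shows it is saturated exactly by $\Phi_\lambda(\rho)=(1-\lambda)\rho+\lambda\,\idty/N$: there $\vec\kappa=0$ and $\norm{W\vec n}^2=(1-\lambda)^2(1-\tfrac1N)$ is the same on every pure input, so the maximum equals the average. As $\lambda$ sweeps $[0,1]$ the depolarizing channels realise every value $P_{\min}\in[1/N,1]$, hence attain the bound for each prescribed $P_{\min}$; through $S_2=-\log P$ this is exactly the assertion of the theorem.

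I expect the identity $\mathbb{E}[\vec n\vec n^{\,T}]=\tfrac1{N(N+1)}\idty$, together with the fact that it is the \emph{second} moment that enters, to carry the real content: both $S_2^{\map}$ and $S_2^{\min}$ are logarithms of quadratic (Hilbert--Schmidt) quantities, which is precisely what lets the averaging close the loop and is special to $q=2$ (for the von Neumann or other Rényi orders no such closed quadratic relation is available, matching the numerics). The remaining care is the equality analysis — saturation forces $\vec\kappa=0$ together with $W^{T}W\propto\idty$, i.e. a depolarizing channel up to the unitary pre- and post-rotations that leave both $S_2^{\map}$ and $S_2^{\min}$ invariant — and verifying that a matching depolarizing channel exists for every attainable $P_{\min}$, which the range computation above supplies.
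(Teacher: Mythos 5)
Your argument is correct, and it supplies something the thesis itself omits: Theorem \ref{glodu} is stated here without proof, with a citation to \cite{roga3}, so the only quantitative anchor in the text is the depolarizing curve (\ref{zabamuminek}). Your bound reproduces it exactly: writing $P=\r e^{-S_2}$ for the purities, (\ref{zabamuminek}) is precisely $P_{\map}=\frac{N+1}{N}P_{\min}-\frac{1}{N}$, i.e.\ your inequality with equality, so your max-versus-Haar-average estimate plus the saturation check on $\Phi_\lambda$ (whose range covers every attainable $P_{\min}\in[1/N,1]$) is a complete proof of the theorem. The mechanism is also the natural one for $q=2$ and is essentially the published route: your average formula is the swap-trick identity in disguise, since $\int d\psi\,(|\psi\>\<\psi|)^{\otimes 2}=\frac{2}{N(N+1)}\Pi_{\mathrm{sym}}$ gives $\mathbb{E}[n_\mu n_\nu]=\frac{\delta_{\mu\nu}}{N(N+1)}$ for a traceless orthonormal basis and hence
\begin{equation*}
\mathbb{E}_\psi \tr\bigl[\Phi(|\psi\>\<\psi|)^2\bigr]
=\frac{1}{N(N+1)}\Bigl(\tr\bigl[\Phi(\idty)^2\bigr]+\tr D_\Phi^2\Bigr),
\end{equation*}
which matches your expression $\frac{1}{N}+\frac{\norm{\vec\kappa}^2}{N}+\frac{\norm{W}_{\mathrm{HS}}^2}{N(N+1)}$ term by term once $\tr D_\Phi^2=1+\norm{\vec\kappa}^2+\norm{W}_{\mathrm{HS}}^2$ is used. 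You should record this two-design identity (or the irreducibility of the adjoint action of $SU(N)$ on traceless Hermitian matrices) as the justification for $\mathbb{E}[\vec n\vec n^{\,T}]\propto\idty$, since for $N>2$ this is not the elementary Bloch-sphere symmetry.

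One overstatement in your closing paragraph: saturation forcing $\vec\kappa=0$ and $W^{T}W\propto\idty$ gives $W=cO$ with $O$ orthogonal on the $(N^2-1)$-dimensional Bloch space, but for $N>2$ most such orthogonal maps are not induced by unitary conjugation (the adjoint representation realises only a proper subgroup of $O(N^2-1)$, and complete positivity restricts matters further), so "depolarizing up to unitary pre- and post-rotations" is not immediate. This is harmless here: the theorem as stated needs only the universal inequality $P_{\map}\le\frac{N+1}{N}P_{\min}-\frac{1}{N}$ together with its saturation by the depolarizing family, both of which your main argument establishes rigorously; the equality characterization is extra and should either be dropped or proved with more care.
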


The 
family of depolarizing channels is represented
in the plane $(S_2^{\map},S_2^{\min})$ by
the continuous line on the entire range
of $S_2^{\map}$. 
The minimal output entropy of a depolarizing channel $\Lambda_N$ acting
on $\c M_N$ is the following function of the map entropy
\begin{equation}
 S_2^{\r{min}} \Big(  S_2^{\r{map}}(\Lambda_N) \Big) = -\log\Big( \frac{1+N\r e^{- S_2^{\r{map}}(\Lambda_N)}}{N+1} \Big).
\label{zabamuminek}
\end{equation}
This is a monotonously increasing
function from $0$ to $\log{N}$. Therefore the following
theorem holds.

\begin{theorem}
Depolarizing channels have the greatest minimal output 
R\'{e}nyi entropy $S_2^{\min}$ among all maps of the same 
R\'{e}nyi entropy of a map $S_2^{\map}$. 
\end{theorem}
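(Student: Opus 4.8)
The plan is to obtain this statement as an immediate consequence of Theorem \ref{glodu} together with the explicit depolarizing relation (\ref{zabamuminek}), exploiting the fact that the two theorems are geometric duals in the plane $(S_2^{\map},S_2^{\min})$. Theorem \ref{glodu} asserts that the depolarizing family forms the left (smallest $S_2^{\map}$) boundary of the region occupied by all channels, whereas the present statement asserts that it forms the upper (largest $S_2^{\min}$) boundary. These are two descriptions of the \emph{same} boundary curve, and they coincide precisely because that curve is strictly monotone.

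First I would record the relevant monotonicity. Equation (\ref{zabamuminek}) expresses the minimal output entropy of the depolarizing channel $\Lambda_N$ as a function $f$ of its map entropy,
\begin{equation}
f(x)=-\log\Big(\frac{1+N\r e^{-x}}{N+1}\Big),
\end{equation}
and $f$ is strictly increasing, carrying the full admissible range $[0,2\log N]$ of map entropies bijectively onto $[0,\log N]$; in particular it admits a strictly increasing inverse $f^{-1}$. Since the depolarizing family sweeps out the entire interval of attainable values of $S_2^{\map}$ (its upper endpoint $2\log N$ being realized by the completely depolarizing channel, itself depolarizing), for every channel $\Phi$ there is a depolarizing channel of exactly the same map entropy, so the comparison in the statement is always well posed.

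Next I would restate Theorem \ref{glodu} in functional form. For an arbitrary channel $\Phi$, the depolarizing channel sharing its minimal output entropy has map entropy $f^{-1}\big(S_2^{\min}(\Phi)\big)$, and Theorem \ref{glodu} says this is the smallest map entropy compatible with that minimal output entropy, whence
\begin{equation}
S_2^{\map}(\Phi)\ \geq\ f^{-1}\big(S_2^{\min}(\Phi)\big).
\end{equation}
Applying the increasing function $f$ to both sides preserves the inequality and gives
\begin{equation}
f\big(S_2^{\map}(\Phi)\big)\ \geq\ S_2^{\min}(\Phi).
\end{equation}
The left-hand side is exactly the minimal output entropy of the depolarizing channel whose map entropy equals $S_2^{\map}(\Phi)$, so this is the claimed bound: among all maps of a fixed value of $S_2^{\map}$, the depolarizing channel attains the largest $S_2^{\min}$.

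I do not expect any genuine obstacle beyond what Theorem \ref{glodu} already supplies; the only points requiring care are verifying that $f$ is strictly monotone (so that $f^{-1}$ exists and inequalities may be transported across it) and that the depolarizing family covers the whole interval of attainable map entropies, both of which follow from (\ref{zabamuminek}) and from the completely depolarizing channel realizing the global maximum of $S_2^{\map}$. The hard content has been absorbed into Theorem \ref{glodu}, whose proof must establish that the depolarizing curve really is the extremal boundary of the projected set; the present theorem is then merely the reading of that boundary along the vertical rather than the horizontal direction.
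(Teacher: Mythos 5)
Your proof is correct and follows essentially the same route as the paper, which obtains this theorem directly from Theorem \ref{glodu} combined with the observation that the depolarizing curve (\ref{zabamuminek}) is monotonically increasing over the entire range of $S_2^{\map}$, concluding simply ``Therefore the following theorem holds.'' You have merely made this one-line duality argument explicit by introducing the strictly increasing function $f$, its inverse, and the transport of the inequality from Theorem \ref{glodu} across them, which is precisely the reasoning the paper intends.
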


One can try to use the extremal position 
of depolarizing channels to estimate the 
minimal output entropy of some channels.
In the case of Hastings' counterexample for the additivity
conjecture the author showed that
 due to a maximally entangled
input state 
one can obtain smaller output entropy of the product 
of two channels than in the case of any product state
taken as an input. Let us estimate the R\'{e}nyi $q=2$ output
entropy for a product channel when the input
is maximally entangled. 
Following proposition proved in \cite{roga3} presents one of estimations.

\begin{proposition}\label{esest}
For any entropy $\c S$ which is subadditive  
the following inequality holds
\begin{equation}
|\c S^{\map}(\Phi_1)- \c S^{\map}(\Phi_2)|\leq  \c S\Big([\Phi_1\otimes\Phi_2](|\phi_+\>\<\phi_+|)\Big)\leq  \c S^{\map}(\Phi_1)+\c S^{\map}(\Phi_2),
\label{estmax}
\end{equation}
where $|\phi_+\>\<\phi_+|$ is a maximally entangled state.
\end{proposition}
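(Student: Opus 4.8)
The plan is to relate all three quantities to the environments of the Stinespring dilations of the two channels, and then to invoke subadditivity together with the Araki--Lieb inequality, both of which descend to any spectral entropy $\c S$. First I would recall from the discussion following Eq.~(\ref{pusia2000}) that the map entropy equals the entropy of the output of the complementary channel when the input is maximally mixed, so that $\c S^{\map}(\Phi_i)=\c S(\tilde\Phi_i(\rho_*))$ with $\rho_*=\frac1N\idty$. This identification rests only on the fact that complementary reductions of a pure state share the same nonzero spectrum, and therefore survives the passage from the von Neumann entropy to any entropy that is a function of the spectrum.

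Next I would introduce Stinespring isometries $V_i:\c H_N\to\c H_{\mathrm{out},i}\otimes\c H_{E_i}$ with $\Phi_i(\rho)=\tr_{E_i}V_i\rho V_i^{\dagger}$ and $\tilde\Phi_i(\rho)=\tr_{\mathrm{out},i}V_i\rho V_i^{\dagger}$, and form the pure state $|\Psi\>=(V_1\otimes V_2)|\phi_+\>$. By construction $\omega:=[\Phi_1\otimes\Phi_2](|\phi_+\>\<\phi_+|)=\tr_{E_1E_2}|\Psi\>\<\Psi|$, so $|\Psi\>$ purifies $\omega$. Since the reduced state of $|\phi_+\>$ on each factor is $\rho_*$ and $V_i$ acts on one factor only, tracing out the opposite wing gives $\eta_{E_i}=\tilde\Phi_i(\rho_*)$, whence $\c S(\eta_{E_i})=\c S^{\map}(\Phi_i)$. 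The crucial step is that $\omega$ and the environment state $\eta_{E_1E_2}:=\tr_{\mathrm{out}_1\mathrm{out}_2}|\Psi\>\<\Psi|$ are complementary parts of the pure state $|\Psi\>$, so $\c S(\omega)=\c S(\eta_{E_1E_2})$; this is exactly what lets the map entropies, which live on the environments, enter the bound.

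The upper bound then follows by applying subadditivity to $\eta_{E_1E_2}$ on the bipartition $E_1$ versus $E_2$, giving $\c S(\omega)=\c S(\eta_{E_1E_2})\le \c S(\eta_{E_1})+\c S(\eta_{E_2})=\c S^{\map}(\Phi_1)+\c S^{\map}(\Phi_2)$. For the lower bound I would use the Araki--Lieb inequality in the form obtained from subadditivity plus purification: treating $\mathrm{out}_1\mathrm{out}_2$ as the purifying system of $\eta_{E_1E_2}$ and applying subadditivity to $E_2$ together with this purifier yields $\c S(\eta_{E_1})\le \c S(\eta_{E_2})+\c S(\omega)$, and symmetrically with the roles of $E_1$ and $E_2$ reversed, so $|\c S^{\map}(\Phi_1)-\c S^{\map}(\Phi_2)|\le \c S(\omega)$.

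I expect the main obstacle to be bookkeeping rather than a deep difficulty: one must apply subadditivity to the environment state $\eta_{E_1E_2}$ and not to $\omega$ directly, because the marginals of $\omega$ on the output spaces are $\Phi_i(\rho_*)$ and would produce output entropies in place of map entropies. A secondary point requiring care is checking that both the identification $\c S^{\map}(\Phi_i)=\c S(\tilde\Phi_i(\rho_*))$ and the Araki--Lieb step remain valid for the generalized entropy $\c S$; both reduce to the Schmidt-decomposition fact that complementary reductions of a pure state have equal entropy, so they hold for any spectral $\c S$, which is precisely the setting in which the proposition is stated.
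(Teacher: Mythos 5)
Your proposal is correct, and it reaches the result by a genuinely different organization of the same two primitive facts (subadditivity, and equality of entropies of complementary reductions of a pure state). The paper's proof is sequential and asymmetric: it decomposes $\Phi_1\otimes\Phi_2$ as $(\Phi_1\otimes\id)\circ(\id\otimes\Phi_2)$, applies the Lindblad inequality (\ref{lili}) with input state $\sigma_{\Phi_2}$, and then needs two auxiliary lemmas to identify the resulting exchange entropy $\c S\bigl(\varsigma(\Phi_1\otimes\id,\sigma_{\Phi_2})\bigr)$ with $\c S^{\map}(\Phi_1)$ — namely that the exchange entropy is independent of the choice of purification, and that trace preservation gives $\tr_2\sigma_{\Phi_2}=\rho_*$. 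You instead build the single four-partite pure state $|\Psi\>=(V_1\otimes V_2)|\phi_+\>$ and work symmetrically on the environment bipartition: subadditivity of $\eta_{E_1E_2}$ gives the upper bound directly, and the Araki--Lieb step (subadditivity plus the purifier $\mathrm{out}_1\mathrm{out}_2$) gives the lower bound, with the identifications $\eta_{E_i}=\tilde\Phi_i(\rho_*)$ and $\c S(\omega)=\c S(\eta_{E_1E_2})$ doing all the bookkeeping at once. What your route buys is transparency about the hypotheses: every step visibly reduces to subadditivity plus the Schmidt-spectrum symmetry, so the validity for an arbitrary spectral subadditive entropy $\c S$ (the setting the paper needs for the Tsallis-$2$ application) is immediate, whereas in the paper one must check separately that the Lindblad inequality and the purification-independence of the exchange entropy survive the generalization. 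What the paper's route buys is economy within the thesis: it reuses the already-stated Lindblad inequality and the exchange-entropy formalism as black boxes rather than inlining their proofs. Note that at bottom the two arguments coincide, since the Lindblad inequality itself is proved (Section \ref{linneusz}) from exactly the subadditivity and Araki--Lieb steps you carry out explicitly; also, both your proof and the paper's implicitly assume the two input dimensions match, so that both marginals of $|\phi_+\>$ are maximally mixed — this is needed for $\eta_{E_i}=\tilde\Phi_i(\rho_*)$ and is not a defect specific to your version.
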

\begin{proof}
The proof starts form the Lindblad inequality \cite{lindblad}, which is based on the subadditivity of the von Neumann entropy,
 \begin{equation}
\bigl| \c S(\rho) - \c S\big(\varsigma(\Phi,\rho)\big) \bigr| \le \c S(\Phi(\rho)) \le \c S(\rho) + \c S\big(\varsigma(\Phi,\rho)\big),
\label{lili}
\end{equation}
where $\varsigma(\Phi,\rho)=[\id\otimes\Phi]\Big(|\phi\>\<\phi|\Big)$ and $|\phi\>$ is a purification of $\rho$ as in (\ref{smark}).
The entropy of this state, $S(\varsigma(\Phi,\rho))$, is the exchange entropy which
does not depend on the choice of purification \cite{szuma}. The state $\varsigma$ 
defined for a channel $\Phi$ and the maximally mixed state $\rho_*=\idty/N$ 
is equal to the normalized dynamical matrix of $\Phi$ (\ref{miopio}),
\begin{equation}
\varsigma(\Phi,\rho_*)=\sigma_{\Phi}=\frac{1}{N}D_{\Phi},
\end{equation}
The entropy of this state defines $\c S^{\map}(\Phi)$.
Since the map $\Phi$ is trace preserving, the condition ${\rm Tr}_2\sigma_{\Phi}=\frac{1}{N}\idty$ holds, see (\ref{partialtrace}).
Apply Lindblad formula (\ref{lili}) to the state
\begin{equation}
[\Phi_1 \otimes \Phi_2] (|\psi_+\> \<\psi_+|)  = [\Phi_1 \otimes \id] \left( [\id \otimes \Phi_2] (|\psi_+\> \<\psi_+|) \right),
\end{equation}
where $|\psi_+\> =\frac{1}{\sqrt{N}}\sum_{i=1}^N|i\>\otimes|i\>$ is the maximally mixed state which is a purification of $\rho_*$.
Expression (\ref{lili}) applied to this state gives
\begin{eqnarray}
\bigl| \c S^{\r{map}}(\Phi_2) - \c S\bigl( \varsigma(\Phi_1\otimes\id,\, \sigma_{\Phi_2}) \bigr) \bigr| &\le&
\c S\bigl( (\Phi_1 \otimes \Phi_2) (|\psi_+\> \<\psi_+|) \bigr)\\ &\le& \c S^{\r{map}}(\Phi_2) + \c S\bigl( \varsigma(\Phi_1\otimes\id,\, \sigma_{\Phi_2}) \bigr).\nonumber
\end{eqnarray}
The exchange entropy $\c S\bigl( \varsigma(\Phi_1\otimes\id,\, \sigma_{\Phi_2}) \bigr)$ 
is the same as $\c S\bigl( \varsigma(\Phi_1,\, \tr_2 \sigma_{\Phi_2}) \bigr)$ 
since a purification of $\sigma_{\Phi_2}$ is as well the purification of $\tr_2\sigma_{\Phi_2}$ 
and the exchange entropy does not depend on a purification.
Due to the trace preservation formula, $\tr_2\sigma_{\Phi_2}=\rho_*$,
the state $\varsigma(\Phi_1,\, \tr_2 \sigma_{\Phi_2})=\varsigma(\Phi_1,\, \rho_*) = \sigma_{\Phi_1}$ which completes the proof.
\end{proof}

Proposition \ref{esest} is applicable for any 
entropy which is subadditive.
The R\'{e}nyi  entropy of order $q=2$ is not 
subadditive, however, it is a function of the
Tsallis entropy of order $2$ for which
the subadditivity holds. Therefore
Proposition \ref{esest} can be used to estimate the
output R\'{e}nyi $q=2$ entropy of a product channel
if the input state is maximally
entangled.
The following inequality corresponds to
R\'{e}nyi $q=2$ version of the lower bound in (\ref{estmax}),

\begin{equation}
-\log\Big(1-|e^{-S_2^{\map}(\Phi_1)}-e^{-S_2^{\map}(\Phi_2)}|\Big)\leq S_2\Big((\Phi_1\otimes\Phi_2)(|\psi_+\>\<\psi_+|)\Big).
\label{estmax2}
\end{equation}

It is possible to find channels $\Phi_1$ and $\Phi_2$ such that
the left hand side of (\ref{estmax2}) is 
greater than $S^{\min}_2$ of
depolarizing channel $\Lambda$, which has the same map entropy
as $S_2^{\map}(\Phi_1\otimes\Phi_2)$. 
Notice that
for any two channels the map entropy of
their tensor product is characterized by the following result.

\begin{proposition}
The R\'{e}nyi map entropy $S_q^{\map}$ is additive with respect to
tensor product of quantum maps for any parameter $q\geq 0$:
\begin{equation}
S_q^{\map}(\Phi_1\otimes\Phi_2)=S_q^{\map}(\Phi_1)+S_q^{\map}(\Phi_2).
\label{ludwiczek}
\end{equation}
\end{proposition}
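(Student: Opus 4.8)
The plan is to reduce the claim to two elementary facts about the R\'enyi entropy: its invariance under unitary conjugation and its additivity on tensor products. The definition of the map entropy (\ref{pusia2000}) gives $S_q^{\map}(\Phi)=S_q(\sigma_\Phi)$, where $\sigma_\Phi=\frac{1}{N}D_\Phi=[\id_N\otimes\Phi](|\phi^+\rangle\langle\phi^+|)$ is the Jamio\l{}kowski state (\ref{miopio}). Thus it suffices to understand how the Jamio\l{}kowski state of $\Phi_1\otimes\Phi_2$ is assembled from those of $\Phi_1$ and $\Phi_2$.

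First I would establish the key structural identity: the Jamio\l{}kowski state of $\Phi_1\otimes\Phi_2$ equals $\sigma_{\Phi_1}\otimes\sigma_{\Phi_2}$ up to a fixed permutation of tensor factors. Writing the four Hilbert spaces as $a_1,a_2$ (reference copies of dimensions $N_1,N_2$) and $b_1,b_2$ (the systems on which $\Phi_1,\Phi_2$ act), the maximally entangled state on $\mathcal{H}_{a_1a_2}\otimes\mathcal{H}_{b_1b_2}$ factorizes, after reordering the factors from $(a_1a_2)(b_1b_2)$ to $(a_1b_1)(a_2b_2)$, as $|\phi^+_{N_1}\rangle_{a_1b_1}\otimes|\phi^+_{N_2}\rangle_{a_2b_2}$. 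Since $\Phi_1$ acts only on $b_1$ and $\Phi_2$ only on $b_2$, while the identity acts on $a_1a_2$, the extended map $\id\otimes(\Phi_1\otimes\Phi_2)$ splits across the two pairs, yielding $\sigma_{\Phi_1}\otimes\sigma_{\Phi_2}$ in the reordered basis.

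The remaining steps are routine. The reordering of tensor factors is implemented by a permutation unitary (a SWAP of $a_2$ and $b_1$), and $\tr\rho^q$ — hence $S_q$ — is invariant under unitary conjugation, so $S_q(\sigma_{\Phi_1\otimes\Phi_2})=S_q(\sigma_{\Phi_1}\otimes\sigma_{\Phi_2})$. Finally, from $\tr[(\rho_1\otimes\rho_2)^q]=\tr[\rho_1^q]\,\tr[\rho_2^q]$ together with the definition $S_q(\rho)=\frac{1}{1-q}\log\tr\rho^q$ one obtains $S_q(\rho_1\otimes\rho_2)=S_q(\rho_1)+S_q(\rho_2)$ for every $q\ge 0$ with $q\neq 1$, the von Neumann case $q\to 1$ following by continuity. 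Chaining these equalities gives the claim. The one step needing genuine care is the structural identity, since it rests entirely on correct index bookkeeping for the four subsystems appearing in (\ref{miopio}); once the factorization of $|\phi^+_{N_1N_2}\rangle$ is pinned down, everything else is a direct consequence of standard entropy properties, and the argument transfers verbatim to the von Neumann and Tsallis variants.
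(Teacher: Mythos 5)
Your proof is correct and takes essentially the same route as the paper: both arguments reduce the claim to the structural fact that $D_{\Phi_1\otimes\Phi_2}$ is unitarily equivalent to $D_{\Phi_1}\otimes D_{\Phi_2}$ via the permutation unitary that swaps the middle two tensor factors (the paper's $U=\sum_{a,b,c,d}|abcd\rangle\langle acbd|$ is exactly your SWAP of $a_2$ and $b_1$), and then conclude by spectral invariance of $S_q$ together with $\tr\left[(\rho_1\otimes\rho_2)^q\right]=\tr\left[\rho_1^q\right]\tr\left[\rho_2^q\right]$. The only difference is one of execution, not of approach: you derive the permutation equivalence conceptually by factorizing the maximally entangled state $|\phi^+_{N_1N_2}\rangle$ into $|\phi^+_{N_1}\rangle\otimes|\phi^+_{N_2}\rangle$ after reordering factors, whereas the paper verifies the same identity by explicit matrix-element bookkeeping through the reshuffling formula.
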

%The proof of this proposition was given in \cite{roga3}.

\begin{proof}
The map entropy $S_q^{\map}(\Phi)$ is defined as the entropy of normalized dynamical matrix $D_{\Phi}$.
The matrix representation of $D_{\Phi_1 \otimes \Phi_2}$
is related to superoperator matrix of the quantum operation $\Phi_1 \otimes \Phi_2$, 
due to formula (\ref{zyrafa}).
Using explicit calculations on matrix elements one can show that $D_{\Phi_1 \otimes \Phi_2}$ 
is unitarily equivalent with $D_{\Phi_1} \otimes D_{\Phi_2}$. 
That implies the additivity of the map entropies, since the quantum 
R\'{e}nyi entropy of any order of a given state is a function of its spectrum.
 
Consider a set of $N$-dimensional matrices equipped with the Hilbert-Schmidt inner product
\begin{equation}
\< A | B \>_{\b h} := \tr A^\dagger B.
\end{equation}
In this space the matrix units $\bigl\{ |i\> \<j| \,\bigm|\, i,j = 1,2,\ldots,N\bigr\}$ 
form an orthonormal basis. The elements of this basis are denoted by $|i\> \<j| := |ij\>_{\b h}$. 
A quantum operation $\Phi$ is represented by a matrix $\hat\Phi$:
\begin{equation}
\<ij|\hat\Phi| k\ell\>_{\b h} = \tr  \Bigl( |j\> \<i|\, \Phi(|k\> \<\ell|) \Bigr),
\end{equation}
hence
\begin{equation}
\Phi(|k\> \<\ell|) = \sum_{i,j} \<ij|\hat\Phi| k\ell\>_{\b h}\, |i\> \<j|.
\label{rozklad}
\end{equation}
Due to the reshuffling procedure (\ref{zyrafa}),
the entries of the dynamical matrix $D_{\Phi}$ read
\begin{equation}
\<ab |D_\Phi|cd\>_{\b h} = \<ac|\hat\Phi| bd\>_{\b h}.
\label{tas}
\end{equation}
The entries of $D_{\Phi_1 \otimes \Phi_2}$ are 
obtained by using unnormalized maximally entangled state $|\Psi_+\> := \sum_{i,\ell} |i\ell\> \otimes |i\ell\>$
according to definition (\ref{miopio}) as follows 
\begin{align}
\, \<abcd|D_{\Phi_1 \otimes \Phi_2}|efgh \>
&= \<abcd |\bigl[ (\Phi_1 \otimes \Phi_2) \otimes \id \bigr] \bigl( |\Psi_+\> \<\Psi_+| \bigr)|efgh\>
\nonumber \\
&= \sum_{i,\ell,j,m} \<abcd |\bigl[ (\Phi_1 \otimes \Phi_2)(|i\ell\> \<jm|) \otimes |i\ell\> \<jm| \bigr]|efgh\>.
\label{rez}
\end{align}
Now expression (\ref{rozklad}) is used and the matrix elements of $D_{\Phi_1 \otimes \Phi_2}$ read
\begin{equation}
\<abcd | D_{\Phi_1 \otimes \Phi_2}|efgh\> = \sum_{\alpha,\beta,\gamma,\delta} \<\alpha\beta | \widehat{\Phi_1}| ij\>_{\b h}\, \<\gamma\delta |\widehat{\Phi_2}| ij\>_{\b h}\, \<abcd |\alpha\gamma i\ell\>\, \<\beta\delta jm |efgh\>.
\end{equation}
Since $\<abcd |\alpha\gamma i\ell\>$ is expressed in terms of Kronecker deltas $\delta_{a\alpha}\delta_{b\gamma}\delta_{ci}\delta_{d\ell}$ 
and $\<\beta\delta jm |efgh\>$ analogously, 
the summation over the Greek indexes gives,
\begin{align}
\<abcd| D_{\Phi_1 \otimes \Phi_2}|efgh\>
&= \<ac | D_{\Phi_1}| eg\>\, \<bd | D_{\Phi_2}| fh\>
\nonumber \\
&= \<acbd | D_{\Phi_1} \otimes D_{\Phi_2}| egfh\>.
\end{align}
The matrix $D_{\Phi_1 \otimes\Phi_2}$ is related to $D_{\Phi_1} \otimes D_{\Phi_2}$ 
by a unitary matrix \newline $U = \sum_{a,b,c,d} |abcd\> \<acbd|$. 
Therefore both matrices have the same eigenvalues and the same entropies.
\end{proof}

Since minimal output entropy of a depolarizing 
channel $\Lambda$ is a function of its map entropy (\ref{zabamuminek}),
the estimation on the left hand side of (\ref{estmax2})
can be made in terms of such $\Lambda$ for which $S_2^{\map}(\Lambda)=S_2^{\map}(\Phi_1)+S_2^{\map}(\Phi_2)$.
As a result of this estimation one obtains condition on
the pair of channels, for which a
maximally mixed input state 
does not decrease the output entropy
below the smallest value obtained by the product input
state,
\begin{equation}
1-\frac{MN+1}{MN}|e^{-S_2^{\map}(\Phi_1)}-e^{-S_2^{\map}(\Phi_2)}|\leq e^{-S_2^{\map}(\Phi_1\otimes\Phi_2)}=e^{-\big[S_2^{\map}(\Phi_1)+S_2^{\map}(\Phi_2)\big]},
\label{estmax3}
\end{equation}
where $\Phi_1$ acts on $\c M_N$ and $\Phi_1$ on $\c M_M$.
\begin{figure}[htbp]
	\centering
	\scalebox{0.9}{
		\includegraphics{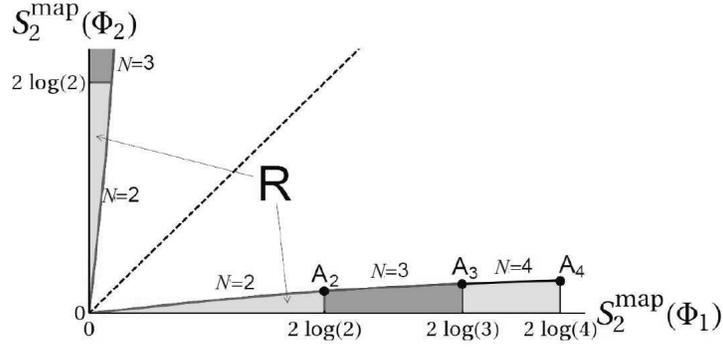}}
\caption{
Colored parts of the figure denote the
region described by inequality (\ref{estmax3}).
This region contains pairs of maps characterized
by their map entropy for which the additivity
is conjectured. The dotted line contains the
pairs of complementary channels. The region
is enlarged if a larger dimensions are considered.
}
	\label{fig:reg}
\end{figure}
\begin{figure}[htbp]
	\centering
	\scalebox{0.9}{
		\includegraphics{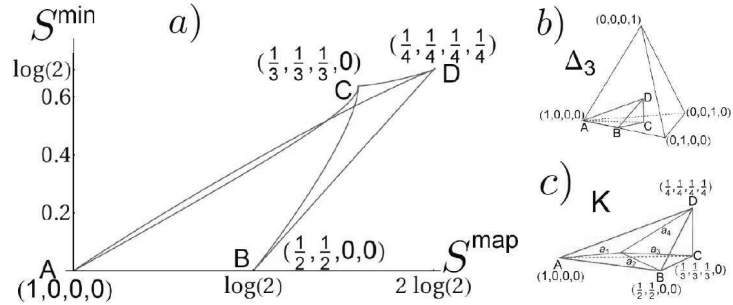}}
\caption{
The set of the Pauli channels projected on the plane spanned by the
map entropy $S^{\map}$ and the minimal output entropy $S^{\min}$. 
The von Neumann entropies are considered.
Solid curves correspond to the edges of the asymmetric
tetrahedron K. The curve AD characterizing the family of
depolarizing channels does not belong to the boundary of the set.
}
	\label{fig:reny2}
\end{figure}
Fig. \ref{fig:reg} presents the region defined 
by (\ref{estmax3}). Such a set 
is not empty and contains maps, for which
$S^{\map}_2(\Phi_1)<<S^{\map}_2(\Phi_2)$
or $S^{\map}_2(\Phi_2)<<S^{\map}_2(\Phi_1)$.
The dotted line represents the set
of complementary channels for which both map entropies are equal. This set contains
the channels breaking the conjecture of 
additivity of minimal output entropy according to the proof of Hastings.
The region defined by (\ref{estmax3}) does not 
intersect the set. 
It was also shown \cite{amosov}, \cite{diVincenzo} 
%[...znalazlem informacje o tym w prezentacji Haydena, 
%musze tylko znalezc referencje...]
that additivity holds if one of the channels
is unitary or if one of the channels is
very noisy. These both cases are covered
by condition (\ref{estmax3}). These
examples support formulation of 

\begin{conjecture}[\cite{roga3}]
The additivity of minimal output R\'{e}nyi $q=2$
entropy holds for pair of channels satisfying inequality (\ref{estmax3}).                                            
\label{conjadd}
\end{conjecture}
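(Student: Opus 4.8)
The plan is to recast the statement as multiplicativity of the maximal output purity and then to isolate the one place where hypothesis (\ref{estmax3}) must intervene. Writing $\nu(\Phi)=\max_{\rho}\tr[\Phi(\rho)^2]$ one has $S_2^{\min}(\Phi)=-\log\nu(\Phi)$, so additivity of the minimal output R\'{e}nyi-$2$ entropy is exactly the multiplicativity $\nu(\Phi_1\otimes\Phi_2)=\nu(\Phi_1)\,\nu(\Phi_2)$. The direction $\geq$ is immediate from a product input, so the content is the submultiplicative bound $\nu(\Phi_1\otimes\Phi_2)\leq\nu(\Phi_1)\nu(\Phi_2)$: no entangled input may beat the optimal product input. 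Since $\rho\mapsto\tr[(\Phi_1\otimes\Phi_2)(\rho)^2]$ is convex, its maximum over states is attained at a pure $|\psi\>\in\c H_N\otimes\c H_M$, so I may restrict to pure, possibly entangled, inputs throughout.

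Next I would use the swap trick. Let $\Theta$ denote the swap on a doubled space, so that $\tr[\sigma^2]=\tr[(\sigma\otimes\sigma)\Theta]$; reordering the two copies into the $N$- and $M$-sectors gives
\begin{equation}
\tr[(\Phi_1\otimes\Phi_2)(\rho)^2]=\tr\Big[\rho^{\otimes2}\,\big(A\otimes B\big)\Big],\qquad A=(\Phi_1^{\otimes2})^{*}(\Theta_N),\quad B=(\Phi_2^{\otimes2})^{*}(\Theta_M),
\end{equation}
with $\Phi^{*}$ the adjoint channel and $\Theta_N,\Theta_M$ the swaps on the paired $N$- and $M$-copies. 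Because $\rho^{\otimes2}$ sits in the symmetric subspace, $\nu(\Phi_j)$ is the maximum of the quadratic form $A$ (resp.\ $B$) over symmetric product vectors $|\phi\>^{\otimes2}$, while $\nu(\Phi_1\otimes\Phi_2)=\max_{|\psi\>}\<\psi|^{\otimes2}(A\otimes B)|\psi\>^{\otimes2}$. Submultiplicativity is therefore the assertion that the optimising $|\psi\>$ may be taken of product form $|\phi_1\>\otimes|\phi_2\>$ across the $N|M$ cut.

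The hinge is to control the gain from entanglement for a general input. For a pure $|\psi\>$ with reduced states $\rho_A=\tr_B|\psi\>\<\psi|$, $\rho_B=\tr_A|\psi\>\<\psi|$, applying the Lindblad inequality (\ref{lili}) to $\Phi_1\otimes\id$ with input $(\id\otimes\Phi_2)(|\psi\>\<\psi|)$ — the manipulation of Proposition \ref{esest}, but now keeping $|\psi\>$ general — yields
\begin{equation}
S_2\big((\Phi_1\otimes\Phi_2)(|\psi\>\<\psi|)\big)\;\geq\;-\log\Big(1-\big|\,e^{-S_2^{\r{exch}}(\Phi_1,\rho_A)}-e^{-S_2^{\r{exch}}(\Phi_2,\rho_B)}\,\big|\Big),
\end{equation}
where $S_2^{\r{exch}}(\Phi,\rho)=S_2(\varsigma(\Phi,\rho))$ collapses to the map entropy at $\rho=\rho_*$, recovering (\ref{estmax2}). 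Map-entropy additivity (\ref{ludwiczek}) fixes $S_2^{\map}(\Phi_1\otimes\Phi_2)=S_2^{\map}(\Phi_1)+S_2^{\map}(\Phi_2)$, and by the extremal role of depolarising channels (Theorem \ref{glodu}) this total fixes the depolarising benchmark against which (\ref{estmax3}) compares the maximally entangled input. Thus (\ref{estmax3}) is exactly the guarantee that product inputs are not beaten \emph{at the symmetric point} $\rho_A=\rho_B=\rho_*$.

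The genuine difficulty — and the reason the statement is only conjectured — is to promote this symmetric-point guarantee to all inputs. Expanding the output purity of a Schmidt-decomposed $|\psi\>=\sum_k\sqrt{\lambda_k}\,|a_k\>\otimes|b_k\>$ through $A\otimes B$ splits off a diagonal part, whose maximum over the Schmidt data is exactly $\nu(\Phi_1)\nu(\Phi_2)$, from off-diagonal coherence terms that entanglement switches on; these coherences are the very mechanism by which Hastings' construction defeats additivity in the unconstrained case, and they can in principle raise the entangled-input purity above the product value. What (\ref{estmax3}) must purchase is a \emph{uniform} bound on those coherence terms, over all Schmidt spectra and bases, ensuring the diagonal part dominates for every $|\psi\>$. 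I would attempt this by majorising the coherence sum by a function of the exchange entropies and then controlling that function, through the Lindblad upper bound in (\ref{lili}) and the additivity (\ref{ludwiczek}), by the single scalar quantity that (\ref{estmax3}) constrains. Turning that chain of majorisations from plausible into tight — against the backdrop of known counterexamples outside any such region — is the crux and the main obstacle.
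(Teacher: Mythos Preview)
The statement is a \emph{conjecture}, and the paper does not prove it. The surrounding text is explicit: the authors derive (\ref{estmax2}) as a lower bound on the output entropy at the maximally entangled input, compare it via Theorem \ref{glodu} and the map-entropy additivity (\ref{ludwiczek}) to the depolarising benchmark, and arrive at (\ref{estmax3}) as the region in which that single input is ruled out as a counterexample. They then observe that this region covers the known additive endpoints (one channel unitary, one very noisy) and misses the Hastings regime of equal map entropies, and on that purely heuristic basis they \emph{formulate} Conjecture \ref{conjadd}. No proof is offered or claimed.

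Your proposal is therefore not being measured against a proof but against a heuristic, and you have read the situation correctly. You identify exactly the gap the paper leaves open: condition (\ref{estmax3}) controls only the maximally entangled input, and promoting that guarantee to \emph{all} pure entangled inputs is what is missing. Your sketched route --- the swap-trick expansion, separating the diagonal product piece from the coherence terms, and hoping to majorise the latter by something (\ref{estmax3}) bounds uniformly in the Schmidt data --- is a reasonable programme, but you yourself concede the final majorisation is not carried through. So neither the paper nor your proposal contains a proof; you have accurately diagnosed why, and there is nothing further to compare.
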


Recent literature does not answer the question, whether the
additivity conjecture
is broken for low dimensional channels and the R\'{e}nyi
entropy of order $q=2$. Our Conjecture \ref{conjadd}
suggests for which pairs of channels finding a counterexample of additivity
is unlikely. 
\begin{figure}[htbp]
	\centering
	\scalebox{1}{
		\includegraphics{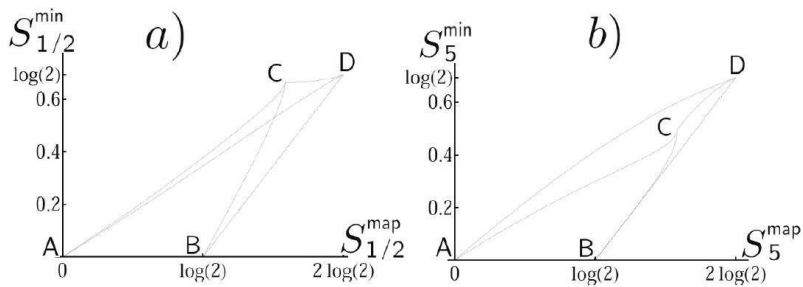}}
\caption{
As in Fig. \ref{fig:reny2}{\it a}: projection of the set
of Pauli channels onto the plane spanned
by the R\'{e}nyi entropy of a map $S_q^{\map}$
and the minimal R\'{e}nyi output entropy $S_q^{\min}$
obtained for $a)$ $q=1/2$ and $b)$ $q=5$. 
%The plane of Pauli channels describing by their
%map entropy and minimal output entropy. 
%The R\'{e}nyi entropies of order $q=1/2$ are considered
%in Panel $a)$ while Panel $b)$ contains R\'{e}nyi 
%entriopies of order $q=5$.
%Lines correspond with the edges of asymmetrical
%tetrahedron $K$ like in the previous figures. 
}
	\label{fig:reny3}
\end{figure}
Conjecture \ref{conjadd} uses the map entropy and is formulated for the
R\'{e}nyi entropy of order $2$, for which 
the theorem about extremal position of the
depolarizing channels was proved. This
is the key theorem which allows us to derive
estimations (\ref{estmax2}) and (\ref{estmax3}).
Numerical tests (Fig. \ref{fig:reny2} and \ref{fig:reny3})
suggest that the depolarizing channels are not 
situated at the boundary of the set of all channels
in the plane $(S_q^{\map},S_q^{\min})$ for
$q\leq2$, while their extremal position could
be confirmed in case $q\geq 2$.
Nevertheless, the R\'{e}nyi entropy is a smooth 
function of $q$. Therefore, a conjecture 
similar to Conjecture \ref{conjadd} may hold also
for other values of the R\'{e}nyi parameter $q$.

\subsection{Transformations preserving minimal output entropy}\label{mlotek}

In previous chapter the set of one--qubit quantum 
operations was considered in context of the plot $(S_q^{\map},S_q^{\min})$.
One could ask, whether the family of
maps lying at the same vertical or horizontal line can be 
characterized.
The following section gives a partial answer to this question.
Transformations of 
one qubit maps which preserve the
minimal output entropy will be considered. 
Such a transformation changes the quantum 
channel
and moves the corresponding point in the plane $(S_q^{\map},S_q^{\min})$
along a given
 horizontal line.
In the following section we consider
the geometrical picture 
of one--qubit maps acting on the set of pure states.
One--qubit quantum operation transforms the Bloch ball
into an ellipsoid inside the ball.
A transformation of quantum operation
 which changes the lengths of the axes
of the ellipsoid and their orientation 
and leaves the minimal output entropy unchanged
will be studied. 

Consider the superoperator matrix 
of a one--qubit quantum operation:
\begin{equation}
\Phi=
\begin{pmatrix}
\Phi_{11} &\Phi_{12} &\overline{\Phi_{12}} &\Phi_{14} \\
\Phi_{21} &\Phi_{22} &\overline{\Phi_{32}} &\Phi_{24} \\
\overline{\Phi_{21}} &\Phi_{32} &\overline{\Phi_{22}} &\overline{\Phi_{34}} \\
1-\Phi_{11} &-\Phi_{12} &-\overline{\Phi_{12}} & 1-\Phi_{14}
\end{pmatrix}.
\label{operacja}
\end{equation}
Parameters $\Phi_{11}$ and $\Phi_{14}$ are real, the complex conjugation of $\Phi_{ij}$ is denoted by $\overline{\Phi_{ij}}$. 
The form (\ref{operacja}) guarantees 
that the dynamical matrix of $\Phi$ is Hermitian and the trace preserving condition (\ref{partialtrace}) is satisfied.

Assume that the quantum operation $\Phi_1$ has the output entropy minimizer at the point 
\begin{equation}
\rho_p=\begin{pmatrix}
p &\sqrt{p(1-p)}  \\
\sqrt{p(1-p)} &1-p  
\end{pmatrix}.
\label{czystybezfazy}
\end{equation}
Such an assumption is not restrictive since
one can always treat the operation $\Phi_1$ as a
 concatenation of a given operation with a
unitary rotation which does not change the minimal output entropy.
The
quantum operation (\ref{operacja}) acting on 
a pure state 
\begin{equation}
\rho_{in}=\begin{pmatrix}
a &\sqrt{a(1-a)}  \\
\sqrt{a(1-a)} &1-a  
\end{pmatrix}
\end{equation}
gives an output state
\begin{equation}
\begin{aligned}\rho_{out}\ \ =\ \ 
&a\begin{pmatrix}
\Phi_{11} &\Phi_{21}  \\
\overline{\Phi_{21}} &1-\Phi_{11}  
\end{pmatrix}+
(1-a)\begin{pmatrix}
\Phi_{14} &\Phi_{24}  \\
\overline{\Phi_{24}} &1-\Phi_{14}  
\end{pmatrix}+\\
&+\sqrt{a(1-a)}\begin{pmatrix}
2\Re(\Phi_{12}e^{i\phi}) &\Phi_{22}e^{i\phi}+\overline{\Phi_{32}}e^{-i\phi}  \\
\overline{\Phi_{22}}e^{-i\phi}+\Phi_{32}e^{i\phi} &-2\Re(\Phi_{12}e^{i\phi})  
\end{pmatrix},
\end{aligned}
\label{output}
\end{equation}
which attains the minimum entropy if $a=p$.
\begin{itemize}
\item Transformation changing the lengths of the axes of the ellipsoid.

Consider a quantum operation $\Phi_{ellipsoid}$, which transforms
the Bloch ball into such an ellipsoid that the end of its longest axis touches
the Bloch sphere in the "North Pole",
\begin{equation}
\Phi_{ellipsoid}=\left(
\begin{array}{cccc}
 1 & 0 & 0 & 1-\text{$\eta $3} \\
 0 & \frac{\text{$\eta_1 $}+\text{$\eta_2 $}}{2} & \frac{\text{$\eta_1 $}-\text{$\eta_2 $}}{2} & 0 \\
 0 & \frac{\text{$\eta_1 $}-\text{$\eta_2 $}}{2} & \frac{\text{$\eta_1 $}+\text{$\eta_2 $}}{2} & 0 \\
 0 & 0 & 0 & \text{$\eta_3 $}
\end{array}
\right).
\end{equation}
Suitable rotations of the Bloch ball before and after the action of
 $\Phi_{ellipsoid}$ 
guarantees that the point of contact with the Bloch sphere is the minimizer of $\Phi_1$. 
Therefore the concatenation of $\Phi_1\cdot\Phi_{rotation}\cdot\Phi_{ellipsoid}\cdot\Phi_{rotation}$ 
has the same minimal output entropy and the same minimizer that $\Phi_1$. The rotation operation is given by
\begin{equation}
\Phi_{rotation}=\left(
\begin{array}{cccc}
 p & -\sqrt{(1-p) p} & -\sqrt{(1-p) p} & 1-p \\
 \sqrt{(1-p) p} & p & p-1 & -\sqrt{(1-p) p} \\
 \sqrt{(1-p) p} & p-1 & p & -\sqrt{(1-p) p} \\
 1-p & \sqrt{(1-p) p} & \sqrt{(1-p) p} & p
\end{array}
\right),
\end{equation}
where $p$ is defined by the minimizer of output entropy for $\Phi_1$.
This transformation changes the lengths of axes of the ellipsoid but it does not change the point at the ellipsoid which is
 the closest to the Bloch sphere. In other words, this transformation does not change the
 directions of the axes of the image of $\Phi_1$ into the Bloch ball, but only their lengths.

\item Transformation changing directions of the axis. 

The next transformation changes directions of axes of an ellipsoid but preserves the entropy minimizer. 
In particular, if the image of the minimizer is on the longest axis of an ellipsoid, after the transformation the point 
which is the closest to the Bloch sphere 
is no longer on the main axis of the ellipsoid.
%However, it has the same minimizer and the same image of the minimizer. 

Entropy of an output state (\ref{output}) is a function of its determinant. The
 minimum of the determinant determines the minimum of the entropy. 
Consider a transformation which preserves the value of the determinant and compute its derivative in a minimizer. 
It is useful to introduce the compact notation of Eq. (\ref{output}):
\begin{equation}
\rho_{out}=aA+(1-a)B+\sqrt{a(1-a)}C,
\label{output2}
\end{equation}
where matrices $A, B$ and $C$ correspond to the
matrices (\ref{output}).
Consider  a transformation $\Phi_1\rightarrow\Phi_1+\Phi_{direction}$.
The output of  $\Phi_1+\Phi_{direction}$ is given by
\begin{equation}
\rho'=a(A+\frac{1}{2}\frac{\sqrt{1-p}}{\sqrt{p}}X)+(1-a)(B+\frac{1}{2}\frac{\sqrt{p}}{\sqrt{1-p}}X)+\sqrt{a(1-a)}(C-X),
\label{output3}
\end{equation}
where $X$ is a matrix, which is hermitian and has trace equal to zero.
Moreover, the matrix $X$ satisfies the condition guaranteeing that $\Phi_1+\Phi_{direction}$ is completely
positive.
The state $\rho'$ coincides with (\ref{output2}) if $a=p$. 
Moreover, the derivative of formula (\ref{output2}) 
with respect to $a$ is the same as 
the derivative of Eq. (\ref{output3}) at the point $a=p$. 
Therefore, the determinants of (\ref{output2}) and (\ref{output3})
 are the same and
the derivative at $a=p$ is equal to zero.
A proper choice of parameters in $X$ guarantees that 
 there is a minimum at point $a=p$.
Hence both maps, $\Phi_1$ and $\Phi_1+\Phi_{direction}$ have the same minimal output entropy. 

%The conditions for matrix $X$ which guarantees that  
%$\Phi_1+\Phi_{direction}$ is still quantum operation 
%%and the derivative over $\phi$ is also $0$ in point $\phi=0$ 
%are:
%\begin{equation}
%\begin{aligned}
%X_{11}&=\bar{X_{11}},\\
%X_{22}&=-X_{11},\\
%X_{12}&=\bar{X_{21}},
%\end{aligned}
%\end{equation}
The part $\Phi_{direction}$ can be characterized by two parameters $(t,n)$,
%The same numbers are added to the second and third column of
%(\ref{operacja}). 
\begin{equation}
\Phi_{direction}=\frac{1}{2}\left(
\begin{array}{cccc}
  \sqrt{\frac{1-p}{p}}\ t & -t & -t & \sqrt{\frac{p}{1-p}}\ t \\
  i \sqrt{\frac{1-p}{p}}\ n & -i\ n & -i\ n & i \sqrt{\frac{p}{1-p}}\ n \\
 -i \sqrt{\frac{1-p}{p}}\ n & i\ n & i\ n & -i \sqrt{\frac{p}{1-p}}\ n \\
 - \sqrt{\frac{1-p}{p}}\ t & t & t & - \sqrt{\frac{p}{1-p}}\ t
\end{array}
\right).
\end{equation}
\end{itemize}
Such a form guarantees that the output state of $\Phi_1+\Phi_{direction}$ is given by Eq. (\ref{output3}).

The map $\Phi_2$ of the same minimal output entropy as $\Phi_1$
obtained by joint action of three transformations, 
$\Phi_{rotation}$, $\Phi_{ellipsoid}$ and $\Phi_{direction}$, on $\Phi_1$  can be given by:
\begin{equation}
\Phi_2=\Phi_1\Phi_{rotation}\cdot\Phi_{ellipsoid}\cdot\Phi^{T}_{rotation}+\Phi_{direction}.
\end{equation}
We are not able to prove that this transformation contains
all possibilities of obtaining maps with the same 
minimal output entropy as a given one,
however, the
transformation is characterized by $5$ parameters and also $5$ 
parameters are needed to have all different (up to one rotation) 
ellipsoids tangent to the sphere on its inner side in a given point. 
Three parameters are associated with the lengths of axes $|\eta_1|, |\eta_2|, |\eta_3|$, while
 two parameters define the direction of the longest axis $n, t$.

Above considerations introduce a $5$-parameter transformation
of a quantum map $\Phi_1\rightarrow\Phi_2$. 
The transformation preserves the minimal output entropy. 
Therefore, it determines the family of maps which are situated
at the same horizontal line of the plot $(S^{\map},S^{\min})$.
Characterization of the family of quantum maps parametrized
by the minimal output entropy can be useful to
further investigations of relations between 
$S^{\min}$ and $S^{\map}$ and their consequences. 
  
%These considerations finish the analyzing on the 
%relation between the minimal output entropy $S_q^{\min}$ and 
%the map entropy $S_q^{\map}$ for the set of 
%one--qubit channels. In the next chapter we
%investigate the special class of such channels.

\section{Davies maps for qubits and qutrits}\label{bida}

Explicit description of general continuous dynamics 
of an open quantum system
is difficult in practice. 
Exact formulas describing the time
evolution are known in some special cases only.
One of the cases in which the problem can be
solved uses the assumption of a week coupling \cite{alickii} 
of a low dimensional quantum system interacting with
much bigger reservoir in the thermal equilibrium.
Such an interaction changes only the state of the system 
whereas the state of the environment remains unchanged.
By analogy to the classical process, 
in which the evolution of a state does not depend on the history,
such an evolution is called a {\it Markov process}.

However, while analysing the continuous evolution of the input state,
sometimes there is no need to know the entire time evolution
since only the output state is relevant.
The "black box" description is useful in such cases.
A "black box" acts like an evolution discrete in time and can be 
 described using completely positive maps, 
represented as matrices of superoperators. 

%Even in the case of one--qubit operations 
%the family of matrices describing quantum 
%processes can be complicated
%and it is non trivial to determine
%which matrix corresponds to a specific process. 

\begin{figure}[htbp]
	\centering
	\scalebox{0.9}{
		\includegraphics{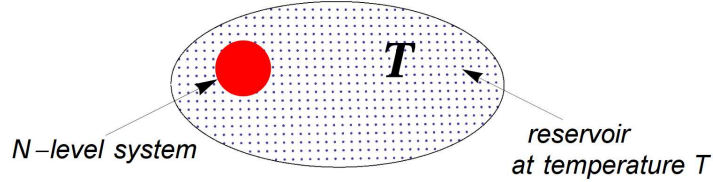}}
\caption{
Model of a quantum $N$--level system
characterized by Hamiltonian $H$
interacting with a much larger environment
in a thermal equilibrium at temperature $T$.
}
	\label{fig:ter}
\end{figure}
The following chapter distinguishes a concrete class of physical processes 
described by a {\it Davies map} \cite{davies}.
Such a process is compatible with the interaction of a quantum state 
with an environment in a given temperature, see Fig. \ref{fig:ter}.
%[ and determines 
%the family of matrices which correspond to it. ]
Due to a suitable choice of the entries of a superoperator matrix $\Phi$ and relations between them 
one can say whether some continuous time evolution is described by a given discrete quantum map. 
The solution concerns the maps acting on one--qubit, $N=2$,
and one--qutrit, $N=3$.
In the case of one-qubit maps we determine the state which is the most resistant 
on Davies channels. 
It will be shown that the maximal output $2$--norm of Davies maps
is additive with respect to the tensor product of two such maps.

\subsection{Quantum Markov process}

The quantum Markov process is characterized 
by quantum maps belonging to the one-parameter 
completely positive semigroup, $\Phi_t=\exp{\c Gt}$, where 
$\c G$ denotes a generator and  positive parameter $t$ is
associated with time. 

The most general form of the generator of a 
completely positive semigroup was given by
Gorini, Kossakowski, Sudarshan \cite{gorini} and Lindblad \cite{lindbl}.
It can be written as
\begin{equation}
\c G=i\delta+\c L,
\label{gorini}
\end{equation}
where $\delta$, given by the commutator with
the effective Hamiltonian of the system $\delta:\rho\rightarrow[\rho,H]$, describes the unitary part of the evolution. The dissipative part $\c L$
has the Lindblad form
\begin{equation}
\c L:\rho\rightarrow\sum_{\alpha}\Big(K^{\alpha}\rho K^{\alpha\dagger}-\frac{1}{2}\{K^{\alpha\dagger}K^{\alpha},\rho\}\Big),
\label{lifor}
\end{equation}
where $\{A,B\}=AB+BA$ is anticommutator, while operators $K^{\alpha}$
can be associated with the Kraus representation of the quantum operation.

%To verify whether a given generator has the form (\ref{gorini})
%one needs to check the positivity of the Choi-Jamio{\l}kowski
%state in the subspace perpendicular to the maximally entangled state \cite{wolfa1}. 

Deciding whether a given superoperator matrix belongs
 to the completely positive semigroup 
was shown \cite{wolfa2} to be a
problem 'NP' hard with respect to the dimension $N$.
Nevertheless, some additional assumptions
allow one to characterize matrices
from completely positive semigroups at least
for a few low dimensions. In following chapter,
such a solution will be given for $N=2$, and $N=3$,  under additional conditions:
independence of unitary and dissipative parts of the
evolution and the detailed balance condition. 
These three conditions define the
so--called Davies maps \cite{davies}. Sometimes
the uniqueness of the invariant state is 
also added to the definition.
%[upewnic sie ze piaty warunek z naszego
%artykulu jest automatycznie spelniony.]

\subsection{Characterization of the model}

Consider a quantum $N$ - level system characterized
by the Hamiltonian in its eigenbasis,
\begin{equation}
H=\sum_{i=1}^N\epsilon_i|i\>\<i|.
\end{equation}
Assume that such a system is weekly coupled
to the environment of a given temperature $T$, see Fig. \ref{fig:ter}.
An interaction with the environment
preserves one invariant state, which is
the Gibbs state
\begin{equation}
\rho_{\beta}=\frac{1}{\c Z}\exp(-\beta H),
\label{knur}
\end{equation}
where $\c Z=\sum_{i=1}^N\exp(-\beta \epsilon_i)$ is a partition function and $\beta=\frac{1}{kT}$. Here
$k$ represents the Boltzmann constant.
A quantum map $\Phi$ satisfies
the {\it detailed balance} condition if it is 
Hermitian with respect to the scalar product 
defined by the Gibbs state
\begin{equation}
\tr \rho_{\beta}A\Phi^*(B)=\tr\rho_{\beta}\Phi^*(A)B,
\label{detbal}
\end{equation}
where $A$ and $B$ are arbitrary observables and $\Phi^*$ 
the quantum operation in the Heisenberg picture.
Detailed description of this condition
can be found in \cite{db1}.

The name "detailed balance" was taken from
the theory of stochastic processes.
Detailed balance means that in an equilibrium state any two levels
of the evolving system remain in an equilibrium: 
the rate of transition from the level $i$ to $j$ 
and the transition rate from $j$ to $i$ are equal. 
Mathematical formula describing this fact reads
\begin{equation}
\c F_{ij}p_{i}=\c F_{ji}p_j,
\label{cdetbal}
\end{equation}
where $\c F_{ij}$ are entries of a stochastic transition matrix
and $p_i$ represent the components of the invariant probability vector.

%Formula (\ref{cdetbal}) is a special case of (\ref{detbal})
%obtained if diagonal density matrices
%and stochastich matrices are taken in (\ref{detbal})
%in place of ... 
%[to trzeba dobrze przemyslec, ]

\subsection{Matrix representation of Davies maps}\label{sec:mat}

One qubit map in the "black box" description is represented 
by a superoperator matrix. It is a matrix 
acting on the vector formed by the
entries of a density matrix
ordered in a single column.
%The following chapter determines the question, whether
%a given superoperator
%represents the Davies map or not.
%In case of $N=2$ and $N=3$ we derive conditions for 
%the superoperator matrix $\Phi$, which imply that $\Phi$
%represents a Davies map.
A superoperator $\Phi$ represents a Davies map,
if the following conditions are satisfied.
\begin{itemize}
\item The map $\Phi$ is completely positive. 

This condition 
is guaranteed if the Choi--Jamio{\l}kowski matrix $D_{\Phi}$ (\ref{miopio})
of  the map is positive. One has to reshuffle the elements of the matrix $\Phi$
according to (\ref{zyrafa}) and check positivity of the resulting dynamical matrix $D_{\Phi}=\Phi^{R}$.

\item Superoperator $\Phi$ belongs to 
the semigroup of completely positive maps.

This is equivalent to existence 
of a generator $\c G$ of the Lindblad form (\ref{lifor})
and the parameter
 $t\geq 0$ such that $\c Gt=\log\Phi$.
Knowing the logarithm of $\Phi$ one has to 
determine whether it is of the Lindblad form.
It was shown in \cite{wolfa1} that if the
Choi-Jamio{\l}kowski matrix of a given generator is positive
in the subspace orthogonal to the maximally entangled state, then
the generator can be written in the Lindblad form.

It is not a trivial task to write an
analytical expression for the logarithm of a
given matrix if its dimension is greater than two. 
Such a 
problem for $3\times 3$ stochastic matrices is discussed 
in the last section of the following chapter.

\item Since the rotational part of
the evolution is independent 
of the dissipative (contractive) part, 
the structure of the superoperator 
is restricted to the block diagonal form. 
Off--diagonal elements of the density matrix
are just multiplied by numbers, while
the diagonal elements can be mixed between themselves.
More detailed discussion on this property
is given in Section \ref{qut}.

\item The detailed balance condition
introduces further restrictions
on the elements of 
the block acting on the diagonal
part of the density matrix.
This block is a stochastic matrix, the entries of which
 satisfy Eq. (\ref{cdetbal}).
\end{itemize}

Since now, only the dissipative part of the 
evolution will be considered.
Due to the above conditions 
the dissipative part of the generator of the one--qubit Davies maps
can be written as
\begin{equation}
 \c L_{\alpha,\lambda,p}=\begin{pmatrix}
-\alpha &0 &0&\alpha\frac{p}{1-p}  \\
0 &\lambda &0&0  \\
0 &0 &\lambda&0  \\
\alpha &0 &0&-\alpha\frac{p}{1-p}  \end{pmatrix},
\end{equation}
while
the corresponding superoperator acting on two-dimensional 
states (in the Hamiltonian basis) has the form
\begin{equation}
\Phi_{a,c,p}=\begin{pmatrix}
1-a &0 &0&a\frac{p}{1-p}  \\
0 &c &0&0  \\
0 &0 &c&0  \\
a &0 &0&1-a\frac{p}{1-p}  
\label{dav}
\end{pmatrix}.
\end{equation}
Here, $p$ is a function of temperature,
$p=\big(1+\exp{(-\frac{\epsilon}{(kT}})\big)^{-1}$, which determines 
the invariant state
\begin{equation}
\Phi_{a,c,p}(\rho_*)=\rho_*=\begin{pmatrix}p&0\\0&1-p\end{pmatrix}.
\end{equation}
Notice that (\ref{dav}) has a block diagonal form
which is a consequence of independence of 
rotational and contractive evolution.
This is also equivalent to
independence of changes in diagonal and off--diagonal
entries of a density matrix. The detailed balance
condition $(\ref{cdetbal})$ implies the form of the outer block in Eq. (\ref{dav}).
One--qubit Davies maps form a three-parameter family characterized
by $(a,c,p)$, where $p$ is a function of the temperature.
Conditions that
such a matrix is an element of the semigroup
of completely positive maps introduce the following
restrictions on the parameters $(a,c,p)$:
\begin{equation}
a+p<1,\qquad 0<c<\sqrt{1-\frac{a}{1-p}}.
\label{davies3}
\end{equation}

Equality $\Phi=\exp{\c Lt}$ allows one to write
explicit formulas for time dependence
of parameters $a$ and $c$,
\begin{equation}
a=(1-p)\Big(1-\exp(-At)\Big), \qquad c=\exp(-\Gamma t),
\end{equation}
where $A$ and $\Gamma$ are parameters such that $A\geq\frac{1}{2}\Gamma\geq0$.
The entire paths of the semigroup are
showed in Fig. \ref{fig:pat}
\begin{figure}[htbp]
	\centering
	\scalebox{1}{
		\includegraphics{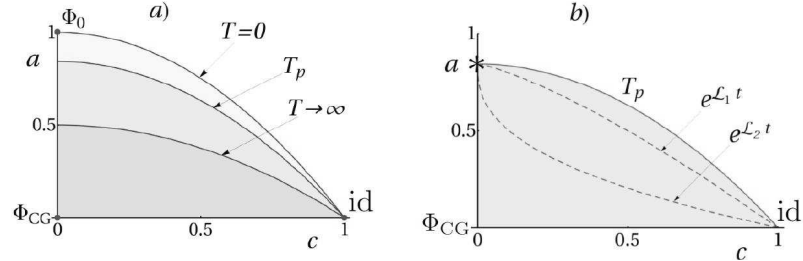}}
\caption{
Panel $a)$ contains the region 
of parameters $(a,c)$ allowed by
 relation (\ref{davies3}) and describing the one--qubit Davies maps. 
The upper border lines
are also drawn for different temperature $T$.
Panel $b)$ shows the region allowed for a given
temperature $T$. The lines describe two
semigroup corresponding to two different
randomly chosen generators $\c L_1$ and $\c L_2$.
The extremal lines corresponding to 
the solid line on panel $a$ describe
the semigroup with the smallest ratio 
of decoherence to the damping rate.
Maps $\id, \Phi_0, \Phi_{CG}$ are
the identity channel, completely depolarizing
and coarse graining channel respectively.
}
	\label{fig:pat}
\end{figure}

One--qubit Davies map can be
written
using the Bloch parametrization (\ref{bloch})
\begin{equation}
\Phi=\bordermatrix{ 
&{}\cr
& 1  & 0 & 0 & 0\cr 
& 0  & \eta_1 & 0 & 0\cr
& 0  & 0 & \eta_1 & 0\cr
& \kappa_3 & 0 & 0 & \eta_3 \cr}.
\label{dt}
\end{equation}
where $|\eta_i|$ denote the lengths
of   axes of the ellipsoid and $\vec{\kappa}$ is the translation vector. 
These parameters are related to the parameters $(a,c,p)$
\begin{eqnarray}
\eta_1=c\geq0,\qquad \eta_3=1-\frac{a}{1-p}\geq0,\nonumber\\
\kappa_1=\kappa_2=0,\qquad \kappa_3=a\frac{2p-1}{1-p}\geq0. \label{kolor}
\end{eqnarray}
The image of the set of pure states under an action of one--qubit Davies map
is shown in Fig. \ref{fig:eli}.
The image of the Bloch ball forms an ellipsoid with rotational symmetry.
Fig. \ref{fig:eli} presents the image of an exemplary one--qubit
Davies map for which $\eta_1\geq\eta_3$,
however conditions (\ref{davies3}) admits also
the case $\eta_3\geq\eta_1$.

\begin{figure}[htbp]
	\centering
	\scalebox{0.6}{
		\includegraphics{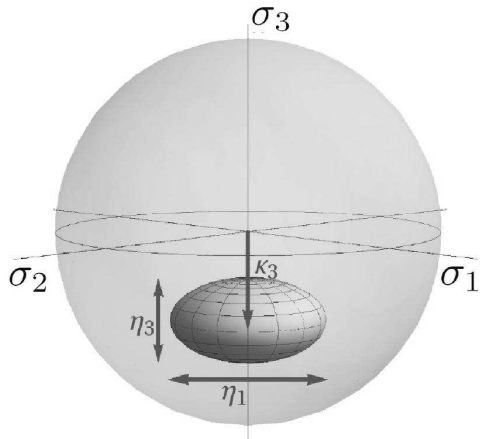}}
\caption{
Ellipsoid obtained by an action of a 
one--qubit Davies channel
on the Bloch sphere. The channel is
characterized by parameters $(\eta_1,\eta_3,\kappa_3)$
defined in (\ref{kolor}).
}
	\label{fig:eli}
\end{figure}

\subsection{Physical examples}

Qubit maps of the structure similar to (\ref{dt})
were analysed before in context
of quantum optics. 
The unitary evolution is induced by the laser field,
while the dissipative dynamics is caused by an interaction with
the environment. The state of a two level atom is 
characterized by the Bloch vector $(x,y,z)$, where $z$ represents
the difference between the diagonal entries of a density matrix 
equal to the inversion of populations of the atomic levels.
Variables $x$ and $y$ are associated with the atomic
dipole operators. 
The evolution in this set
has been defined by means of variables
describing the  decay  rate  $\tau_1$ of the coherences 
and the rate $\tau_2$ of attaining the equilibrium state. 
These parameters
correspond to the variables considered in 
the Section \ref{sec:mat}, $\eta_1=\exp{(-t/\tau_1)}$ and $\eta_3=\exp{(-t/\tau_3)}$ 
which are related
to squeezing of the axes of the ellipsoid.
%while
%the parameter $\kappa_3=(1-\exp{-t/\tau_3})(2p-1)$ 
%causes the translation of the elipsoid or
%denoted the rate
%of attaining to the state of equilibrium.
Formula (\ref{davies3}) corresponds
to the relation between the decay rates:
\begin{equation}
\tau_1\leq 2\tau_3.
\end{equation}
This relation was obtained 
by analysing a concrete physical model
of the evolution of the two level system by means 
of Bloch equations \cite{kimura}. 
The one--qubit operations (\ref{dt})
were also studied by \cite{wodkiewicz}.

\subsection{Minimal output entropy of Davies maps}

In context of transmission of quantum information,
it is natural to ask, which pure states are the most
resistant with respect to the changes caused by the Davies maps.
The answer depends on the selected measure of decoherence.
Such a measure can be described, for example, by means of some 
matrix norm of the output state maximized over
the input states.
Among quantities measuring the decoherence,
the minimal output entropy is of special importance
because 
some questions concerning the channel capacity,
such as additivity problem, can be related with similar problem written
 in terms of minimal output entropy. 
The minimal output entropy is related to the maximal norm of 
the output state if the input is pure. 
%More about
%the maximal norm will be said in the next section.
%In the case of one--qubit states the
%minimal output entropy is 
%the entropy of a state 
%characterized by point on the ellipsoid Fig. \ref{fig:eli}
%which is the closest to the Bloch sphere. 
%The pure state which is transformed into a state
%of minimal entropy is called minimizer. 

Since a Davies map has rotational symmetry,
the minimizer can be chosen to be a real state:
\begin{equation}
\rho=\begin{pmatrix}
  \mu&\nu\\
\nu&1-\mu    
     \end{pmatrix},
\end{equation}
where $\nu^2=(1-\mu)\mu$ since the state is pure.
After an action of the operation (\ref{dav}) this state is transformed into
\begin{equation}
\rho'=\begin{pmatrix}
  (1-a)\mu+b(1-\mu)&c\nu\\
c\nu& a\mu+(1-b)(1-\mu)    
     \end{pmatrix},
\end{equation}
where $b=ap/(1-p)$.
Computing the eigenvalues and minimizing the entropy over $\mu$
one can characterize the minimizer in two cases:
\begin{itemize}
 \item If $c^2 \leq (1-a-b)(1-2b)$ the minimizer is characterized by $\mu=0$ and it 
forms an eigenstate of the Hamiltonian $H$.
\item If $c^2 \geq (1-a-b)(1-2b)$ the minimizer is characterized by
\begin{equation}
 \mu=\frac{(a+b-1)(2b-1)-c^2}{2(a+b-1)^2-2c^2}.
\end{equation}
It is no longer the eigenvalue of the Hamiltonian,
however, after some time of the evolution $t\gg 0$ the second
case changes into the first one and the 
minimizer is a state ${\rm diag} (0,1)$. 
This is an eigenstate of the Hamiltonian.
The situation that the minimizer is in the vector ${\rm diag} (0,1)$ reminds the
classical evolution of two--dimensional vector governed 
by the stochastic matrix. In this case the extremal vector like $(0,1)$ is 
the minimizer of the Shannon entropy of the output.
\end{itemize}

\subsection{Multiplicativity of maximal output norm of one--qubit Davies maps}\label{multipy}

As discussed in the introduction 
to Chapter \ref{sec:ent}, 
the question of additivity of 
minimal output von Neumann entropy with respect
to the tensor product of quantum operations
is one of the most interesting problem
in quantum information theory.
This problem can be equivalently stated 
in terms of channel capacity.
In general, the conjecture on additivity
of channel capacity is false,
however, there is still an interesting 
problem, for which class
of maps the conjecture can be confirmed.

Recent studies of the problem use the notion 
of the R\'{e}nyi entropy of order $q$.
This entropy tends to the von Neumann version as $q\rightarrow 1$.
The problem of additivity of minimal output R\'{e}nyi $q$ entropy is
directly related to multiplicativity of the maximal output Schatten $q$--norm. This norm is defined as
\begin{equation}
\|X\|^S_q=(\tr |X|^q)^{1/q},
\end{equation}
where $|X|=\sqrt{X^{\dagger}X}$.
Maximal Schatten $q$ norm of a quantum map $\Phi$ is:
\begin{equation}
 \|\Phi\|^S_q:=\max_{\rho}(\tr|\Phi(\rho)|^q)^{1/q},
\label{normoperat2}
\end{equation}
where maximization is taken over the entire set of density matrices $\rho$.
The R\'{e}nyi entropy of order $q$ of a state $\rho$ 
can be defined  as follows \cite{alic}
\begin{equation}
 S_q(\rho)=\frac{q}{1-q}\log\|\rho\|^S_q.
\end{equation}
Due to logarithm in this formula
the multiplicativity of maximal $q$--norm
is equivalent to the additivity of minimal output entropy $S^{\min}_q$.

In this section, multiplicativity of operator 2--norm
induced by the Euclidean vector norm will be proved for
the quantum one--qubit Davies maps. 
This vector induced norm 
is not related to the R\'{e}nyi entropy by such an elegant formula 
 like it is in the case for Schatten norm, however, it is a bit easier to calculate than  the Schatten counterpart.
% The explicit calculations containing
% the proof of multiplicativity  of the vector induced 2--norm of the one--qubit Davies maps
% are presented.
These particular results support the general solution for multiplicativity problem 
 for Schatten 2--norm
which implies the additivity property for minimal output R\'{e}nyi entropy of order 2 and which
% In general case of one--qubit maps the multiplicativity of Schatten 2--norm 
has been already proved for general one--qubit quantum operations \cite{koldman} (see also \cite{kingostatni}).

% This vector induced norm 
% is not related to the R\'{e}nyi entropy by such an elegant form 
%  like it is in the case for Shatten norm, but the 
% multiplicativity of this norm
% is at least a suggestion that the additivity
% of entropy may also hold.

% However, these results form only the particular case of a general result proved in \cite{koldman} (see also \cite{kingostatni}),
% in which multiplicativity of the Schatten 2--norm for all one--qubit channels has been
% shown.

\subsubsection{Outline of the proof of  multiplicativity}\label{puszak}

The Euclidean norm (2--norm) of a vector $x=(x_{1},...,x_{n})$ is defined as:
\begin{equation}
\left\|x\right\|_2=\sqrt{\sum_{i=1}^n\ \left|x_{i}\right|^{2}}.
\end{equation}
This vector norm induces the 2--norm of an operator $A$:
\begin{equation}
\left\|A\right\|_2={\rm max}_{x\neq0}\frac{\left\|Ax\right\|_2}{\left\|x\right\|_2}.
\end{equation}
One of the property of this norm (see \cite{ORT}) is
that $\|A\|_2$ is equal to square root of the spectral radius of $A^{\dagger}A$ 
or equivalently to the greatest singular value 
of the matrix $A$,
\begin{equation}
\left\|A\right\|_2=\left[r(A^{\dagger}A)\right]^{1/2},
\label{specrad}
\end{equation} 
where a spectral radius $r(A^{\dagger}A)={\rm max}_{i}\left|\xi_{i}\right|$
and $\xi_{i}$ are eigenvalues of  $A^{\dagger}A$. 
In this section the maximal two norm of the output of a quantum map $\Phi:\c M_N\rightarrow\c M_N$
will be considered
\begin{equation}
M_{\Phi}=\max_{\rho\in\c M_N}\|\Phi(\rho)\|_2=\max_{A\geq0}\frac{\|\Phi(A)\|_2}{\tr A}.
\label{defztr}
\end{equation}
One can ask, whether the maximal two--norm is multiplicative in a sense:
\begin{equation}
M_{\Phi\otimes\Omega}=M_{\Phi}M_{\Omega}.
\end{equation}
It will be shown that if $\Phi$ is one--qubit Davies map and $\Omega$ 
is an arbitrary quantum map acting on $N$--dimensional state the multiplicativity holds. 

The idea of the proof of the theorem given below
is borrowed from the paper of King and Ruskai \cite{king}. 
These authors prove an analogical theorem about a 
bistochastic quantum
map $\Phi$. 
They noted that the same proof holds as well for 
%some classes of 
stochastic one--qubit maps.
%Among them, maps for which the translation vector $\vec\kappa$ in (\ref{dt}) 
%is directed along the longest axis $|\eta_3|\geq|\eta_1|$ are mentioned. 
%This case includes part of the set of one--qubit Davies maps
%for which $|\eta_3|\geq|\eta_1|$ and will not be considered in the following chapter.
Here we will present an explicit calculations for the case 
of Davies maps with $|\eta_3|\leq|\eta_1|$.
%
% it will be shown that also for the case of other one--qubit 
%Davies maps, for which the translation vector is directed along the
%shortest axis $$, the method of King and Ruskai can be used.

\begin{theorem}
Let $\Phi:\c M_2\rightarrow\c M_2$ be an one--qubit Davies map and 
$\Omega:\c M_N\rightarrow\c M_N$ be an arbitrary quantum map. 
The maximal two norm of the output is multiplicative:
\begin{equation}
M_{\Phi\otimes\Omega}= M_{\Phi}M_{\Omega}.
\label{puchatek}
\end{equation}
\label{thor}
\end{theorem}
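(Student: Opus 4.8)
The plan is to prove the two inequalities $M_{\Phi\otimes\Omega}\ge M_\Phi M_\Omega$ and $M_{\Phi\otimes\Omega}\le M_\Phi M_\Omega$ separately, adapting the argument of King and Ruskai \cite{king} to the Davies structure. The lower bound is immediate: for a product input $\rho\otimes\tau$ one has $(\Phi\otimes\Omega)(\rho\otimes\tau)=\Phi(\rho)\otimes\Omega(\tau)$, and since the singular values of a tensor product are the products of singular values, $\|\Phi(\rho)\otimes\Omega(\tau)\|_2=\|\Phi(\rho)\|_2\,\|\Omega(\tau)\|_2$; optimizing over $\rho\in\c M_2$ and $\tau\in\c M_N$ gives $M_{\Phi\otimes\Omega}\ge M_\Phi M_\Omega$. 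The whole difficulty lies in the reverse inequality.

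For the upper bound I would take an arbitrary $R\ge 0$ on $\c H_2\otimes\c H_N$ with $\tr R=1$ and write it as a qubit block operator $\bigl(\begin{smallmatrix}R_{00}&R_{01}\\ R_{10}&R_{11}\end{smallmatrix}\bigr)$ with $R_{00},R_{11}\ge 0$ and $R_{10}=R_{01}^{\dagger}$. The block-diagonal form of the Davies superoperator (\ref{dt}), namely $\Phi(\sigma_0)=\sigma_0+\kappa_3\sigma_3$, $\Phi(\sigma_{1})=\eta_1\sigma_{1}$, $\Phi(\sigma_{2})=\eta_1\sigma_{2}$, $\Phi(\sigma_3)=\eta_3\sigma_3$, makes $T=(\Phi\otimes\Omega)(R)$ again a qubit block operator, with off-diagonal block $T_{01}=\eta_1\,\Omega(R_{01})$ and diagonal blocks
\begin{equation}
T_{00}=(1-a)\,\Omega(R_{00})+\tfrac{ap}{1-p}\,\Omega(R_{11}),\qquad
T_{11}=a\,\Omega(R_{00})+\bigl(1-\tfrac{ap}{1-p}\bigr)\,\Omega(R_{11}),
\end{equation}
whose coefficients are exactly the entries of the stochastic block of $\Phi$ in (\ref{dav}). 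Since $R\ge0$ and both channels are completely positive, $T\ge0$ and $\|T\|_2=\lambda_{\max}(T)$. Splitting a unit vector as $|0\>\otimes|\psi_0\>+|1\>\otimes|\psi_1\>$ and bounding each quadratic form by the corresponding operator norm yields the variational estimate $\|T\|_2\le\lambda_{\max}\bigl(\begin{smallmatrix}\|T_{00}\|_2&\|T_{01}\|_2\\ \|T_{01}\|_2&\|T_{11}\|_2\end{smallmatrix}\bigr)$, reducing the problem to a $2\times2$ norm matrix.

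It then remains to control the three norms. For the off-diagonal block I would use $2$-positivity of $\Omega$: since $[\id_2\otimes\Omega](R)\ge0$, one gets $\|\Omega(R_{01})\|_2\le\sqrt{\|\Omega(R_{00})\|_2\,\|\Omega(R_{11})\|_2}$, hence $\|T_{01}\|_2\le|\eta_1|\sqrt{uw}$ with $u:=\|\Omega(R_{00})\|_2$, $w:=\|\Omega(R_{11})\|_2$. For the diagonal blocks the crucial point is that the mixing coefficients $1-a,\ \tfrac{ap}{1-p},\ a,\ 1-\tfrac{ap}{1-p}$ are all nonnegative, so $T_{00},T_{11}$ are nonnegative combinations of the positive operators $\Omega(R_{00}),\Omega(R_{11})$ and their norms are bounded by the same combinations of $u$ and $w$. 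Because $R_{00},R_{11}\ge0$ one has $u\le M_\Omega\tr R_{00}$ and $w\le M_\Omega\tr R_{11}$ with $\tr R_{00}+\tr R_{11}=1$; substituting these upper bounds (which only increase $\lambda_{\max}$ of the monotone norm matrix) and factoring out $M_\Omega$, the remaining $2\times2$ matrix is precisely the Davies output $\Phi(|\psi\>\<\psi|)$ of the pure qubit $|\psi\>=\sqrt{\tr R_{00}}\,|0\>+\sqrt{\tr R_{11}}\,|1\>$, the entry $|\eta_1|\sqrt{\tr R_{00}\,\tr R_{11}}$ realising the maximal coherence of a pure state. Its largest eigenvalue is $\|\Phi(|\psi\>\<\psi|)\|_2\le M_\Phi$, giving $\|T\|_2\le M_\Omega M_\Phi$ and hence the claim.

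The main obstacle, and the only place the Davies structure is genuinely used, is the nonnegativity of the four mixing coefficients: I would verify $1-a\ge0$, $a\ge0$, $\tfrac{ap}{1-p}\ge0$ and $1-\tfrac{ap}{1-p}\ge0$ directly from the admissibility constraint $a+p<1$ of (\ref{davies3}) (the last one via $p(1+a)<p(2-p)\le1$, using $a<1-p$), which is what lets the diagonal blocks be manifestly positive combinations and makes the reduced $2\times2$ problem coincide with a legitimate one-qubit output. I expect the restriction $|\eta_3|\le|\eta_1|$ to be needed only if one wants the explicit value of $M_\Phi$ — i.e. whether the maximiser of $\|\Phi(\cdot)\|_2$ sits at the equator or at a pole of the image ellipsoid — whereas the multiplicativity argument itself never uses the explicit maximiser, only the inequality $\|\Phi(|\psi\>\<\psi|)\|_2\le M_\Phi$.
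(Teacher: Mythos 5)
Your proof is correct, and while it shares the overall skeleton of the paper's argument --- the immediate lower bound from product inputs, the exploitation of the block-diagonal Davies superoperator so that $(\Phi\otimes\Omega)(R)$ has diagonal blocks $(1-a)\Omega(R_{00})+b\,\Omega(R_{11})$ and $a\,\Omega(R_{00})+(1-b)\Omega(R_{11})$ with $b=ap/(1-p)$ and off-diagonal block $c\,\Omega(R_{01})$, and the lemma $\|\Omega(R_{01})\|_2^2\leq\|\Omega(R_{00})\|_2\,\|\Omega(R_{11})\|_2$ for positive block operators --- your closing step is genuinely different from the paper's. The paper, following King and Ruskai, proves the implication $z\geq M_\Phi M_\Omega\ \Rightarrow\ z\idty-(\Phi\otimes\Omega)(\rho)\geq 0$ via the Schur complement, reducing to the scalar inequality $\|LL^{\dagger}\|_2\leq(z-\|P\|_2)(z-\|Q\|_2)$; this forces it to compute the explicit value (\ref{rex}) of $M_\Phi$ and to verify two ad hoc algebraic facts, the squared inequality $\left(\kappa_3\eta_3-(\eta_1^2-\eta_3^2)\right)^2>0$ and a convexity-in-$x$ argument whose second derivative is $2(\eta_1^2-\eta_3^2)$, so the written proof is tied to the case $|\eta_3|\leq|\eta_1|$ (the other case is only asserted to go analogously) and explicitly invokes $\kappa_3\geq 0$. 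You instead compress the positive output to the $2\times 2$ matrix of block norms --- a legitimate step, since the test vector $(\|\psi_0\|,\|\psi_1\|)$ is entrywise nonnegative, so entrywise domination of a symmetric nonnegative matrix bounds the quadratic form --- then dominate that matrix entrywise by $M_\Omega\,\Phi(|\psi\>\<\psi|)$ with $|\psi\>=\sqrt{x}\,|0\>+\sqrt{1-x}\,|1\>$, $x=\tr R_{00}$, and conclude from the tautology $\|\Phi(|\psi\>\<\psi|)\|_2\leq M_\Phi$. This never evaluates $M_\Phi$, needs only nonnegativity of the stochastic coefficients $1-a,\ a,\ b,\ 1-b$ (which you correctly derive from $a+p<1$ in (\ref{davies3}), e.g.\ $p(1+a)<p(2-p)\leq 1$), and therefore handles both orderings of $|\eta_1|$ and $|\eta_3|$ uniformly --- confirming your own remark that the restriction $|\eta_3|\leq|\eta_1|$ enters the paper only through the explicit maximizer, not through multiplicativity itself. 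What each approach buys: the paper's route yields the closed formula (\ref{rex}) for $M_\Phi$ as a by-product, while yours is shorter, avoids the case split and the sign of $\kappa_3$, and in fact applies verbatim to any qubit map whose superoperator in the Hamiltonian basis is block diagonal with a nonnegative stochastic diagonal block and uniformly scaled coherences.
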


In this section the sketch of the proof will be given,
while some details of the
calculation will be presented in the next
section.
 In order to present the proof we need
to introduce the following set. 
An arbitrary density matrix on $\c H_{2}\otimes \c H_{N}$ 
can be written as a block matrix
\begin{equation}
\rho=\bordermatrix{         
&{}\cr
& \rho_{1}  & \gamma\cr 
& \gamma^{\dagger}  & \rho_{2}\cr},
\label{blockro}
\end{equation}
where $\rho_{1},\ \rho_{2},\ \gamma$ are $N\times N$ matrices and 
the trace condition ${\rm Tr} (\rho_{1} + \rho_{2})=1$
is satisfied. 
The output  state of the product of two quantum operations
 $\Phi\otimes\Omega$,
can be described by: 
\begin{equation}
\big(\Phi\otimes\Omega\big)(\rho)=\bordermatrix{
&{}\cr
& P  & L\cr 
& L^{\dagger}  & Q\cr}.
\label{peelku}  
\end{equation}
Here $\Phi$ denotes an one--qubit operation, while the map $\Omega$ 
acts on $\c M_{N}$.
Also other block matrices will occur and
their positivity will play 
an important role during the proof.
The Schur complement lemma \cite{HJ} ensures 
positivity of block matrices, see Lemma \ref{trolica}, Section \ref{skrzat}.
%\begin{lemma}[Schur]
%Let $X=\bordermatrix{&{}\cr
%& A  & B\cr 
%& B^{\dagger}  & C\cr}$ be a block matrix  with $A>0$ 
%and $C>0$. The matrix $X$ is positive semidefinite if and only if $A\geq BC^{-1}B^{\dagger}$ if and only if $C\geq B^{\dagger}A^{-1}B.$
%\label{lem1}
%\end{lemma}
%\medskip
%(There is a similar corollary 8 by \cite{king} concerns positive semidefiniteness).
 
To demonstrate additivity (\ref{puchatek}) we shall analyse the inequality
$M_{\Phi\otimes\Omega}\geq M_{\Phi}M_{\Omega}$ which
is almost immediate since
the equality is attained by  
a product of states which 
maximize output norm of each map. 
Because the entire set of states is larger, it
contains product and entangled states,
the result of maximizing over the entire set 
can give only a better result. 
Therefore to prove multiplicativity 
of maximal output $2$--norm with respect to
the tensor product of two maps it is enough to show that
\begin{equation}
z\geq M_{\Phi}M_{\Omega} \Rightarrow z\idty-\big(\Phi\otimes\Omega\big)(\rho)\geq 0.
\label{nier35}
\end{equation}
Insert the block matrix form (\ref{peelku}) to (\ref{nier35}).
Due to the Schur complement lemma the right hand side of (\ref{nier35})
is positive if and only if
\begin{equation}
L(z\idty-P)^{-1}L^{\dagger} \leq\  z\idty-Q.
\end{equation}
Notice that this inequality holds if 
\begin{equation}
\left\|LL^{\dagger}\right\|_2\leq(z-\left\|P\right\|_2)(z-\left\|Q\right\|_2),
\label{nierefekt}
\end{equation}
since using the general property $P\leq\left\|P\right\|$ one gets:
\begin{eqnarray}
L(z\mathbbm{1}-P)^{-1}L^{\dagger} & \leq & L(z-\left\|P\right\|_2)^{-1}L^{\dagger}\ \leq\ \left\|LL^{\dagger}\right\|_2(z-\left\|P\right\|_2)^{-1}\\ & \leq & (z-\left\|Q\right\|_2)\ \leq\  z\mathbbm{1}-Q.
\end{eqnarray}
Therefore the positivity of $(z-\left\|P\right\|_2)$ 
and $(z-\left\|Q\right\|_2)$ and inequality (\ref{nierefekt})
are the only relations needed to prove Theorem \ref{thor}. 
These relations will be proved in the next section for the case 
of $\Phi$ being an arbitrary one--qubit Davies map with $|\eta_3|\leq|\eta_1|$.

\subsubsection{Details of the proof of multiplicativity}

\begin{proof} of Theorem \ref{thor}.
It is necessary to find the specific form of $M_{\Phi},\ P$ an $Q$ in (\ref{peelku}),
then to check positivity of $(M_{\Phi}M_{\Omega}-\left\|P\right\|_2)$ 
and $(M_{\Phi}M_{\Omega}-\left\|Q\right\|_2)$, and finally to prove (\ref{nierefekt}).
Let us restrict our considerations to the case of  Davies maps $\Phi$, for which 
$\eta_3^{2}\leq\eta_1^{2}$ in (\ref{dt}) as discussed in Section \ref{puszak}.

\begin{itemize}

\item{Maximal 2--norm of the output, $M_{\Phi}$}.

Use the Bloch parametrization of $\Phi$ as in (\ref{dt}), 
let it act on the Bloch vector $(1,x,y,z)^{\dagger}$, where $x,y,z$ are real.
Moreover $x^{2}+y^{2}+z^{2}=1$ guarantees restriction to  pure states. 
It is enough to take pure input state because the 2--norm 
is convex on the set of density matrices and it
attains maximum at the boundary of the set. 
The spectral radius of the square of the output state reads according to (\ref{specrad}):
\begin{equation}
\sqrt{[r(\Phi(\rho)^{\dagger}\Phi(\rho))]}=\frac{1}{2}\big(1+\sqrt{(\kappa_3+z\eta_3)^{2}+(1-z^{2})\eta_1^{2}}\big).
\label{specrad12}
\end{equation}
Since the image of the Davies map has rotational symmetry, 
there are no parameters $x$ and $y$ 
in this formula.
Second derivative of the function (\ref{specrad12})  with respect to
 $z$ is negative under the
condition: $\eta_3^{2}\ \leq\ \eta_1^{2}$. 
Therefore function (\ref{specrad12}) has a maximum:
\begin{equation}
M_{\Phi}=\frac{1}{2}\left(1+\sqrt{\eta_1^{2}+\frac{\kappa_3^{2}\eta_1^{2}}{\eta_1^{2}-\eta_3^{2}}}\right).
\label{rex}
\end{equation}
%Let us mention that whenever we have bistochastic 
%maps $t=0$ then this is the same as the result presented in \cite{king}.

\item{Output of a product map.}
 
Now the explicit form of matrices $P,\ Q$ and $L$ 
of the output state (\ref{peelku}) will be given. 
Consider an one--qubit input state. 
A vector $(1,x,y,z)^{\dagger}$ corresponds to the density matrix:
\begin{equation}
\rho=\frac{1}{2}\bordermatrix{
&{}\cr
& 1+z & x+iy\cr 
& x-iy & 1-z\cr}.
\label{blochro}
\end{equation}
Its image with respect to a Davies map (\ref{dt}) reads:
\begin{equation}
\Phi(\rho)=\frac{1}{2}\bordermatrix{
&{}\cr
& 1+z\eta_3+\kappa_3 & \eta_1(x+iy)\cr 
& \eta_1(x-iy) & 1-z\eta_3-\kappa_3\cr}.
\label{poopro}  
\end{equation} 
In the analogous way the initial state 
in a space $\c M_{2N}$ can be given by
(\ref{blockro}) 
\begin{equation}
\rho=\frac{1}{2}\bordermatrix{
&{}\cr
& \rho_{1}+\rho_{2}+\hat{z} & \hat{x}+i\hat{y}\cr 
& \hat{x}-i\hat{y} & \rho_{1}+\rho_{2}-\hat{z}\cr},
\end{equation}
where $\hat{z}=\rho_{1}-\rho_{2}$ and $\hat{x}-i\hat{y}=2\gamma$ are $N\times N$ matrices. 
The output state of a map $\Phi\otimes\idty$ is:
\!\!\!\!\!\!{\footnotesize\begin{eqnarray}
 \!\!\!\!\!&\ &\big(\Phi\otimes\id\big)(\rho)=\label{lola}\\
\!\!\!\!\!\!\!&\ &\bordermatrix{
&{}\cr
& \frac{1}{2}\big(\rho_{1}+\rho_{2}+\eta_3(\rho_{1}-\rho_{2})+\kappa_1(\rho_{1}+\rho_{2})\big) & \eta_1\gamma\cr 
& \eta_1\gamma^{\dagger} & \frac{1}{2}\big(\rho_{1}+\rho_{2}-\eta_3(\rho_{1}-\rho_{2})-\kappa_3(\rho_{1}+\rho_{2})\big)\cr}.\nonumber
\end{eqnarray}}
Finally the matrices $P, Q$ and $L$ are defined by
comparison of suitable blocks of two block matrices:
{\footnotesize \begin{eqnarray}
&\ &\big(\Phi\otimes\Omega\big)(\rho)=\bordermatrix{
&{}\cr
& P  & L\cr 
& L^{\dagger}  & Q\cr}\\
\!\!\!\!\!&=&\!\!\!\!\!\!\!\!\!\!\! \bordermatrix{
&{}\cr
& \frac{1}{2}\Omega\big(\rho_{1}+\rho_{2}+\eta_3(\rho_{1}-\rho_{2})+\kappa_3(\rho_{1}+\rho_{2})\big) & \eta_1\Omega\big(\gamma\big)\cr 
& \eta_1\Omega\big(\gamma\big)^{\dagger} & \frac{1}{2}\Omega\big(\rho_{1}+\rho_{2}-\eta_3(\rho_{1}-\rho_{2})-\kappa_3(\rho_{1}+\rho_{2})\big)\cr}.\nonumber
\end{eqnarray}}

\item{Multiplicativity}.

One can use the property $\left\|\Omega(\rho)\right\|_2\leq{\rm Tr}(\rho)M_{\Omega}$  (\ref{defztr}) to show that $(M_{\Phi}M_{\Omega}-\left\|P\right\|_2)$ is positive. 
It is so if
\begin{equation}
\frac{1}{2}M_{\Omega}\Big(1+\sqrt{\eta_1^{2}+\frac{\kappa_3^{2}\eta_1^{2}}{\eta_1^{2}-\eta_3^{2}}}\Big)>\frac{1}{2}M_{\Omega}\Big({\rm Tr}(\rho_{1}+\rho_{2})+\eta_3(\rho_{1}-\rho_{2})+\kappa_3(\rho_{1}+\rho_{2})\Big).
\end{equation}
Notice that ${\rm Tr}(\rho_{1}+\rho_{2})=1$. 
To show that the above inequality is true, it is sufficient to prove:
\begin{equation}
\sqrt{\eta_1^{2}+\frac{t^{2}\eta_1^{2}}{\eta_1^{2}-\eta_3^{2}}}>\eta_3+\kappa_3.
\label{koko}
\end{equation}
Taking the square of both sides one gets the expression:
\begin{equation}
\left(\kappa_3\eta_3-(\eta_1^{2}-\eta_3^{2})\right)^{2}>0. 
\end{equation}
This implies that $(M_{\Phi}M_{\Omega}-\left\|P\right\|_2)>0$.
In a similar way we prove the positivity of $(M_{\Phi}M_{\Omega}-\left\|Q\right\|_2)$.
The last step is to prove inequality (\ref{nierefekt}).
Consider a positive block matrix
$\big(\idty\otimes\Omega\big)(\rho)=\bordermatrix{         &{}\cr
& \Omega(\rho_{1})  & \Omega(\gamma)\cr 
& \Omega(\gamma^{\dagger})  & \Omega(\rho_{2})\cr}$.
Assume that $\Omega(\rho_1)>0$ (if $\Omega(\rho_1)\geq 0$
one can add $\epsilon\idty$ to $\rho_1$ and eventually 
take the limit $\epsilon\rightarrow 0$). 
Due to the inequality $\Omega(\rho_1)\leq \|\Omega(\rho_1)\|_2$ one can write
\begin{equation}
\<v|\Omega(\gamma)\Omega(\gamma)^{\dagger}|v\>\leq
\|\Omega(\rho_1)\|_2
\<v|\Omega(\gamma)\Omega(\rho_1)^{-1}\Omega(\gamma)^{\dagger}|v\>.
\end{equation}
Due to the Schur complement lemma we have $\Omega(\rho_2)\geq \Omega(\gamma)\Omega(\rho_1)^{-1}\Omega(\gamma)^{\dagger}$ and therefore,
\begin{equation}
\|\Omega(\rho_1)\|_2\<v|\Omega(\gamma)\Omega(\rho_1)^{-1}\Omega(\gamma)^{\dagger}|v\>\leq
\|\Omega(\rho_1)\|_2\<v|\Omega(\rho_2)|v\>\leq
\|\Omega(\rho_1)\|_2\|\Omega(\rho_2)\|_2.
\end{equation}
Hence the inequality $
\|\Omega(\gamma)\Omega(\gamma)^{\dagger}\|_2\leq
\|\Omega(\rho_1)\|_2\|\Omega(\rho_2)\|_2$ holds. This inequality together with definition (\ref{defztr}) implies 
\begin{equation}
\|\Omega(\gamma)\Omega(\gamma)^{\dagger}\|_2\leq M_\Omega^2\tr \rho_1\tr \rho_2.
\end{equation}
Denote ${\rm Tr}(\rho_{1})$ by $x$. 
To prove inequality (\ref{nierefekt}) it is enough to show that the second inequality
holds in the expression below
\begin{equation}
\left\|LL^{\dagger}\right\|_2=\eta_1^{2}\left\|\Omega(\gamma)\right\|_2^2
\leq\eta_1^{2}x(1-x)M_{\Omega}^{2}
\leq(M_{\Phi}M_{\Omega}-\left\|P\right\|_2)(M_{\Phi}M_{\Omega}-\left\|Q\right\|_2),
\end{equation}
and this is true if
\begin{equation}
\eta_1^{2}x(1-x)M_{\Omega}^{2}\leq\frac{1}{4}M_{\Omega}^{2}\Big[\eta_1^{2}+\frac{\kappa_3^{2}\eta_1^{2}}{\eta_1^{2}-\eta_3^{2}}-(\eta_3(2x-1)+\kappa_3)^{2}\Big].
\end{equation}
This inequality can be shown by taking the function which is 
the difference between the right hand side and the left hand side. 
The second derivative of this function is equal 
to $2(\eta_1^{2}-\eta_3^{2})$. 
Therefore whenever $(\eta_1^{2}>\eta_3^{2})$ 
the difference is a convex function which has minimum at $0$. 
That finishes the proof of the last inequality. 
Therefore inequality (\ref{nierefekt}) 
holds and it proves Theorem \ref{thor}.
\end{itemize}
\end{proof}

In the case  $|\eta_1|\leq|\eta_3|$ the proof
goes analogously. The maximal output norm (\ref{rex})
has in this case a simpler form, since 
the maximizer is a pure state described by the Bloch vector $(x=0,y=0,z=1)$. 
%In this case function ... and ...
%were convex and attain the minimum on the edge of the 
%set that is in point $z=1$.
The specific form of the Davies map was used in this 
proof in (\ref{rex}) when the formula of the maximal output norm
was computed and in formula (\ref{lola}). 
%Moreover the relation $\eta_1\geq\eta_3$ was used in (\ref{}) and
%(\ref{})
Moreover, positivity of $\kappa_3$ is used in (\ref{koko}).

%\subsection{Bistochastic one qubit Davies maps}

%Under condition of infinite temperature, $T\rightarrow\infty$
%the equilibrium state became maximally mixed and therefore
%Davies map (\ref{}) is then a bistochastic map.
%The set of one qubit bistochastic operations
%form a tetrahedron. 
%In Bloch representation, the bistochastic 
%operation can be written in form of
%\begin{equation}
% [bloch\ representation\ of\ an\ bistochastic\ operation]
%\end{equation}
%[wazny jest tez podpis pod ta macierza]
%up to orhtogonal rotation before and after operation.
%Operations which can be written as an
%logarithm form smaller region 
%near the identity operation, see fig...
%since in this case all the coefficient $\eta_i$
%are positive and the logarithm exists.
%Two parameter family of Davies maps
%which satisfy property of
%semigroup of completely positve maps (\ref{}).

%[uwaga, mam zastrzezenie do podpisu w rysunku
%przy strzalce semigroup property,
%mozna nie zmienac rysunku, bo semigroup 
%nie musi byc completely positive]

\subsection{Davies maps acting on qutrits}\label{qut}

In this chapter a characterization of the Davies maps
for qutrits, $N=3$, will be given.
Going to higher dimensions demands more abstract
and systematic approach than in the case of one--qubit maps.
The entire evolution consists of the unitary part and the
dissipative part and such is the structure of the
generator $\c G=i\delta+\c L$. The unitary evolution $\delta$ is governed
by the Hamiltonian which in its eigenbasis has a form
$H=\sum_{i=1}^3\epsilon_i|i\>\<i|$,
where $\epsilon_1>\epsilon_2>\epsilon_3$.  
Differences of energies $\{\omega_{ij}=\epsilon_i-\epsilon_j\}$ are called Bohr frequencies.
They are eigenvalues of the unitary part of the evolution,
$\delta$ given by $\rho\rightarrow[H,\rho]$, while the eigenvectors 
of $\delta$ are $|i\>\<j|$ for $i,j=1,2,3$.
Assume that the set of Bohr frequencies is not degenerated 
beside the zero frequency case, $\omega_{ii}$. The
subspace related to the zero frequency is $3$--dimensional.
Since the dissipative part $\c L$ of the evolution commutes with
the unitary part, it has the same eigenvectors
and therefore it does not couple the non-degenerated subspaces. Thus the
off diagonal entries of a density matrix are not mixed with
the diagonal ones, if the matrix is written in the eigenbasis of
the Hamiltonian.

Like in the case of one--qubit maps only the dissipative part
of the evolution will be analysed. 
An one--qutrit Davies map has a structure
{\footnotesize \begin{equation}
\Phi=\bordermatrix{         &{}\cr
& 1-\c F_{21}-\c F_{31}  & 0 & 0 & 0 & \c F_{12} & 0 & 0 & 0 & \c F_{13}\cr 
 & 0  & \mu_1 & 0 & 0 & 0 & 0 & 0 & 0 & 0\cr 
& 0  & 0 & \mu_2 & 0 & 0 & 0 & 0 & 0 & 0\cr 
 & 0  & 0 & 0 & \mu_1 & 0 & 0 & 0 & 0 & 0\cr 
& \c F_{21}  & 0 & 0 & 0 & 1-\c F_{12}-\c F_{32} & 0 & 0 & 0 & \c F_{23}\cr 
 & 0  & 0 & 0 & 0 & 0 & \mu_3 & 0 & 0 & 0\cr 
& 0  & 0 & 0 & 0 & 0 & 0 & \mu_2 & 0 & 0\cr 
 & 0  & 0 & 0 & 0 & 0 & 0 & 0 & \mu_3 & 0\cr 
& \c F_{31}  & 0 & 0 & 0 & \c F_{32} & 0 & 0 & 0 & 1-\c F_{13}-\c F_{23}\cr},
\label{cyc}
\end{equation}}
where $\c F_{21}, \c F_{31}, \c F_{32}$ and $\mu_1, \mu_2, \mu_3$
parametrize the map. The off--diagonal elements are related by the detailed balance
formula
\begin{equation}
\c F_{ij}p_j=\c F_{ji}p_i. 
\end{equation}
Here $p_i$ determine the invariant Gibbs state (\ref{knur}).
The Choi-Jamio{\l}kowski matrix of (\ref{cyc}) preserves the same structure:
  {\footnotesize  \begin{equation}
D_{\Phi}=\frac{1}{3}\bordermatrix{         &{}\cr
&1-\c F_{31}-\c F_{21}  & 0 & 0 & 0 & \mu_1 & 0 & 0 & 0 & \mu_2\cr 
& 0  & \c F_{12} & 0 & 0 & 0 & 0 & 0 & 0 & 0\cr 
& 0  & 0 & \c F_{13} & 0 & 0 & 0 & 0 & 0 & 0\cr 
& 0  & 0 & 0 & \c F_{21} & 0 & 0 & 0 & 0 & 0\cr 
& \mu_1  & 0 & 0 & 0 & 1-\c F_{32}-\c F_{21} & 0 & 0 & 0 & \mu_3\cr 
& 0  & 0 & 0 & 0 & 0 & \c F_{23} & 0 & 0 & 0\cr 
& 0  & 0 & 0 & 0 & 0 & 0 & \c F_{31} & 0 & 0\cr 
& 0  & 0 & 0 & 0 & 0 & 0 & 0 & \c F_{32} & 0\cr 
& \mu_2 & 0 & 0 & 0 & \mu_3 & 0 & 0 & 0 & 1-\c F_{32}-\c F_{31}\cr}.
\end{equation}}
The generator and its Choi-Jamio{\l}kowski matrix have also the same structure.

Block of the superoperator $\Phi$ of the Davies quantum operation
which is related to zero frequency space is a $3\times 3$ stochastic
matrix
\begin{equation}
\c F=\bordermatrix{         &{}\cr
& 1-\c F_{31}-\c F_{21}  & \c F_{12} &\c F_{13}\cr 
& \c F_{21}  & 1-\c F_{21}-\c F_{32} & \c F_{23}\cr
& \c F_{31}  & \c F_{32} & 1-\c F_{13}-\c F_{23}\cr},	
\label{gil}
\end{equation}
where 
$\c F_{32}, \c F_{31}, \c F_{21} \geq 0.
%\label{co0}
$ 
%The off--diagonal entries are related by the detailed
%balance condition $F_{12}p_2=F_{21}p_1$, where ${\rm diag}\vec{p}$ is a Gibbs state (\ref{knur}). 
Due to the definition of the quantum detailed balance condition (\ref{detbal})
the Davies map is Hermitian with respect to scalar
product $\<X,Y\>_{\beta}:=\tr \rho_{\beta}^{-1}X^{\dagger}Y$
and therefore it has a real spectrum. Moreover, the spectrum is positive,
 since there is real logarithm of the matrix $\Phi$ represented the Davies map.
The positivity of the zero frequency block implies that 
\begin{equation}
\begin{array}{c}
\c F_{32}+\c F_{31}+\c F_{21}\leq 1, \\
3-4(\c F_{32}+\c F_{31}+\c F_{21})+3(\c F_{32}\c F_{31}+\c F_{31}\c F_{21}+\c F_{21}\c F_{32})\geq 0.
\end{array}
\label{co1}
\end{equation}
The question considered in this chapter 
concerns explicit analytical relations for entries of the
superoperator (\ref{cyc}), which imply that
the superoperator represents a Davies map.
One of the condition for $\Phi$ is that there exists an 
exponential form
\begin{equation}
\c F= e^{Lt}.
\end{equation}
Operator $L$ is the zero frequency part of the contractive part of the
 generator of completely positive Davis semigroup. 
It is parameterized as follows:
\begin{equation}
L=\bordermatrix{&{}\cr
 & -L_{21}-L_{31}  & L_{12} & L_{13}\cr 
& L_{21}  & -L_{12}-L_{32} & L_{23}\cr
& L_{31}  & L_{32} & -L_{13}-L_{23}\cr}. 
\label{co2}
\end{equation}
This is only the zero frequency block which satisfies the detailed balance condition. 
The entire dissipative part of the generator 
is represented by a $9\times 9$ matrix.
Its Choi matrix has on diagonal elements $L_{32}, L_{31}, L_{21}$.
Since the Choi state of the generator has to be 
positive on the subspace perpendicular to the maximally entangled state
we need to require that
$L_{32}, L_{31}, L_{21}\geq 0$.
 
In the next section an explicit calculation of 
the logarithm of a stochastic matrix of order three (\ref{gil})
is presented.

\subsubsection{Logarithm of a stochastic matrix of size three}

To compute analytically the logarithm of a positive matrix (\ref{gil})
one may relay on the following construction. As matrix $\c F$
has the eigenvalues $\left\{1,x+y,x-y\right\}$, where 
\begin{equation}
\begin{array}{l}
x=\frac{1}{2}(\tr \c F-1),\\ y=\frac{1}{2}\sqrt{2\tr \c F^{2} -(\tr \c F)^2 +2\tr\c F-3},
\end{array}
\label{eig}
\end{equation}
the logarithm has the form
\begin{equation}
\log (\c F)=\log\Big[U\bordermatrix{         &{}\cr
& 1  & 0 & 0\cr 
& 0  & x+y & 0\cr
& 0  & 0 & x-y\cr}
U^{-1}\Big],
\end{equation}
where $U$ is a unitary matrix which transforms $\c F$ into its diagonal form. 
Let us evaluate $\log(\c F)$ without computing the matrix $U$ explicitly. 
One can write that
\begin{equation}
\log(\c F)=\frac{1}{2}\Big[\log(x^{2}-y^{2})Z^{2}+\log(\frac{x+y}{x-y})Z\Big],
\label{log}
\end{equation}
where
\begin{equation}
Z=U\bordermatrix{         &{}\cr
& 0  & 0 & 0\cr 
& 0  & 1 & 0\cr
 & 0  & 0 & -1\cr}U^{-1}.
\end{equation}
The matrix $\c F$ can be given in terms of $Z$:
\begin{equation}
\c F=1-Z^{2}+xZ^{2}+yZ.
\end{equation}
This relation allows one to compute $yZ=(\c F-1)-(x-1)Z^{2}.$
Formula for $Z^2$ can be calculated by
taking the square of this equation and using the fact that $Z^{4}=Z^{2}$ and that $Z^{2}(\c F-1)=(\c F-1)Z^{2}=(\c F-1)$.
The last formula holds since operator $(\c F-1)$ is defined 
in the subspace for which $Z^{2}$ is the identity,
\begin{equation}
Z^{2}=\frac{(\c F-1)\left[(\c F-1)-2(x-1)\right]}{y^{2}-(x-1)^{2}}.
\end{equation}
Therefore the logarithm of the matrix $\c F$ can be expressed according to Eq. (\ref{log}). 
By comparing a suitable entries of $\log(\c F)$ with the parameters of $L$,
 one gets the parameter $L_{21}$ as a function of $(\c F_{32},\c F_{31},\c F_{21})$,
\begin{equation}
 L_{21}= y_2 (1 - x - y) \log(x - y)-(y_1 (1 - x + y) \log(x + y)),\label{cece}
% c=\frac{1}{6} \left(-\frac{(\c F_{32} +\c F_{31} -2 \c F_{21} )}{y} \log
%    \left(\frac{x+y}{x-y}\right)-\log (-2 (\c F_{32} +\c F_{31} +\c F_{21} )+3
%    (\c F_{32}  \c F_{31} +\c F_{21}  \c F_{31} +\c F_{21}\c F_{32} )+1)\right),
%    
\end{equation}
where $x$ and $y$ are given by Eq. (\ref{eig}) and
\begin{eqnarray}
 y_1 &:=& 2y - \c F_{12} - \c F_{21} + \c F_{13} - \c F_{31} + \c F_{23} - \c F_{32} + 2\frac{\c F_{23}\c F_{31}}{\c F_{21}},\\
y_2 &:=& 4 - 2y - \c F_{12} - \c F_{21} + \c F_{13} - \c F_{31} + \c F_{23} - \c F_{32}+2 \frac{\c F_{23}\c F_{31}}{\c F_{21}}.
\end{eqnarray}

The set of points $\left\{\c F_{23},\c F_{13},\c F_{12}\right\}$, 
which defines the set of symmetric bistochastic matrices from the dynamical semigroup, 
is shown in Fig. \ref{fig:mata} and Fig. \ref{fig:mata23} denoted by $\s E$. This set is 
inside the set of all bistochastic $3\times 3$ matrices which is denoted by $\s D$. 
The boundaries of the set are stated by the constraints $L_{21},L_{31},L_{32}\geq 0$. 

\begin{figure}[htbp]
	\centering
	\scalebox{.7}{
		\includegraphics{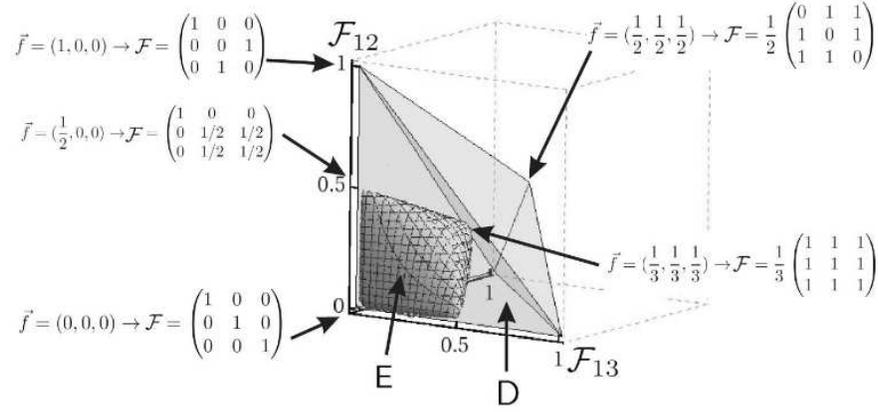}}
	\caption{The set $\s E$ of $3\times3$ bistochastic matrices $\c F$ (\ref{gil}),
which form the zero frequency block 
of the Davies channel $\Phi$ (\ref{cyc}) under condition $T\rightarrow \infty$,
is represented by the vector of the off--diagonal elements $\vec{f}=\{\c F_{12}, \c F_{13}, \c F_{23}\}$.
The set $\s E$ is inside the set of all bistochastic $3\times 3$ matrices $\s D$. 
The characteristic points are denoted by  $\vec{f}$ and the corresponding matrix $\c F$.
}
\label{fig:mata}
\end{figure}

%\newpage
Expression (\ref{cece}) allows one to check that the set
of stochastic matrices belonging to the semigroup of completely positive maps
 with the detailed balance condition
 is not convex. 
Consider two exemplary points which lie near the border of the 
cross-section and belong to the set ($L_{21}\geq0$): $\left\{0.5,0,0\right\}$ and $\left\{0.22744,0.22744,0.04512\right\}$. 
Their convex combination does not belong to the set.
Therefore the set of Davies map is not convex.
Fig. \ref{fig:mata23} presents the cross-section of the set of bistochastic matrices which
form the zero frequency part of the Davies map represented in the space of
parameters
$\c F_{21}, \c F_{31}, \c F_{32}$.
Fig. \ref{fig:mata23}{\it b} plots a non--convex cross--section of the set $\s E$ by the plane $\s M$.

\begin{figure}[htbp]
	\centering
		\scalebox{0.7}{
		\includegraphics{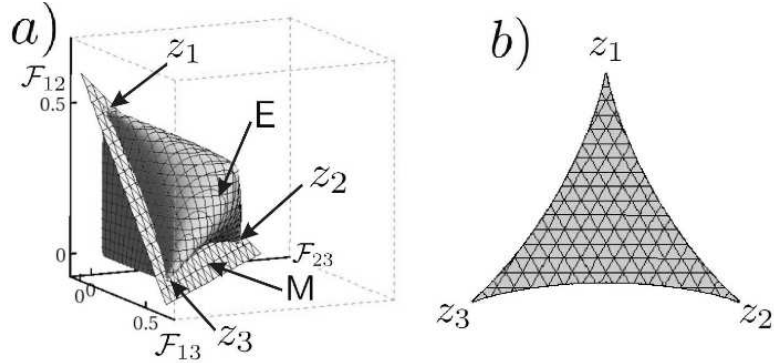}}
\caption{ $a)$ The set $\s E$ of $3\times3$ bistochastic matrices $\c F$ (\ref{gil}),
which form the zero frequency block 
of the Davies channel $\Phi$ (\ref{cyc}) under condition $T\rightarrow \infty$,
is represented by the vector of the off--diagonal elements $\{\c F_{12}, \c F_{13}, \c F_{23}\}$.
The set $\s E$ is cut by the plane $\s M: \c F_{12}+\c F_{13}+\c F_{23}=\frac{1}{2}$.
The cross--section is presented in Panel $b)$ and shows that the set $\s M$ is not convex.
}
\label{fig:mata23}
\end{figure}

%\begin{figure}[htbp]
%	\centering
%	\scalebox{0.65}{
%		\includegraphics{przekroj2.jpg}}
%	\label{fig:mat}
%\end{figure}
%\newpage

%\begin{figure}[htbp]
%	\centering
%	\scalebox{0.6}{
%		\includegraphics{disjoint.jpg}}
%	\label{fig:mat}
%\end{figure}

In order to obtain a
full characterization of 
the Davies map for qutrits, not only its zero frequency par have to be analysed.
One needs to take into consideration also the
complete positivity condition and the condition
on the semigroup related to the Choi--Jamio{\l}kowski matrices of the superoperator
and its generator. These conditions allow us to specify 
the matrix entries $\mu_i$ from Eq. (\ref{cyc}).

In this way the full characterization of
the Davies channels for one--qubit and one--qutrits is provided.

\section{Concluding remarks and open problems}

The aim of this thesis was to investigate quantum channels 
on different levels of generality and using different approaches. 
For instance, general properties of quantum channels were considered in Chapter \ref{sec:prl},
while some particular classes of 
one--qubit and one--qutrit quantum channels were analysed in Part \ref{partthree} of these thesis. The
Davies maps motivated by a specific physical model were studied in Chapter \ref{bida}.
Some useful characteristics of a quantum channel are provided by different kinds of 
entropies. Among them we used the minimal output entropy, the entropy of a map, 
the entropy of an environment which takes part in an evolution described by a channel. 
Apart of the standard von Neumann entropy which is the quantum counterpart of 
the Shannon entropy, the quantum R\'{e}nyi and Tsallis entropies
were also applied.

In Part \ref{parttwo} 
the universal entropic inequality for an arbitrary ensemble of quantum states is proved for the von Neumann entropy.
This part of the thesis treats a quantum channel as a device  
preparing a quantum ensemble. 
The Holevo quantity
of this ensemble is shown to be bounded by the entropy of an  
environment, used in the preparation process.
The state of the environment after a quantum operation $\Phi$
is equivalent to the output of the complementary channel $\tilde{\Phi}$.

One can define {\it selfcomplementary} channels 
for which $\Phi(\rho)=\tilde{\Phi}(\rho)$ for any $\rho$. 
Relation (\ref{pantalon}) between the Kraus operators of $\Phi$
and the Kraus operators of $\tilde{\Phi}$ is useful to specify selfcomplementary channels.
Since the coherent information (\ref{zwierzyna}) of such channels is
equal to $0$, the same holds also for the quantum channel capacity (\ref{katemidlton}).
Identification of selfcomplementary channels, as well as 
investigation of their properties are worth to
be studied in future.

Chapter \ref{dusiu} contains the conjecture which 
establishes a relation between the Holevo quantity,
%which characterize a statistical property of an ensemble of quantum states,
and the matrix of fidelities. This leads to a
geometric characterization of the states in the ensemble.
The bound on the Holevo quantity proved in Chapter \ref{sec:prl}
can be also related to other notions 
of quantum information theory, such as the {\it quantum discord} \cite{coles}, \cite{ollivier} 
which measures the quantum correlations in a two--partite system.
%  $\rho_{ab}$
% is define  as  
% \begin{equation}
% \delta(P_a,\rho_b):= S(\rho_a:\rho_b) �?� \chi(\rho_{bj},p_j),
% \end{equation}
% where POVM measurement, characterized by operators $P_aj$, is made on the subsystem $a$,
% the states $\rho_a$ and $\rho_b=$ are states characterizing the subspaces, and
% $p_i\rho_{bi}={\rm Tr}_aP_{bi}\rho_{ab}$. The upper bound on 
% the Holevo quantity provides an lower bound on the quantum discord.

%In quantum information theorem studying of quantum channels 
%is an important subject and still far from being completely explored.
The study of quantum channels is an important task of the modern
theory of quantum information. 
For example, the problem of additivity of the channel capacity, 
or equivalently, additivity of the minimal output entropy
remains open even for channels acting on a single qubit.
Results presented in this thesis could be 
further developed to investigate the additivity conjecture
for different classes of quantum channels.

Some results of Chapter \ref{sec:ent} concern general properties
of quantum channels.
For instance we proved the additivity of the map entropy (\ref{ludwiczek}), and
Theorem \ref{glodu} establishing the
extremal position of depolarizing channels
in the set of all channels characterized by the R\'{e}nyi entropies $S_2^{\min}(\Phi)$ and $S_2^{\map}(\Phi)$. 
These results allow us to pose Conjecture \ref{conjadd} specifying 
pairs of maps for which the additivity of channel capacity may hold.

In Part \ref{partthree}, some specific types of channels
are investigated. Properties of one--qubit channels are
analysed in Chapter \ref{sec:ent}.
Some transformations on one--qubit quantum channels
 defined in Section \ref{mlotek} lead to new results 
on the characterization of the set of quantum channels in the plane
$(S^{\min}(\Phi), S^{\map}(\Phi))$.
The aim of this analysis is to find some conditions
that enable one to estimate the minimal output entropy,
which is difficult to compute, by the entropy of the map easy to calculate.

The Davies channels, which correspond to 
a concrete physical model, are studied in Chapter \ref{bida}. 
Superoperators of the Davies maps are specified in the case of 
one--qubit maps and one--qutrit maps.
The question whether the channel capacity of the Davies maps
is additive is still open, although, Davies
maps acting on $N$--level system 
compose the set of only $d=N^2-1$ dimensions, while the set
of all quantum operations acting on system of the same $N$
has $N^2(N^2-1)$ dimensions.

The quantum information theory is a modern
field of science which  
creates an environment for new future applications and
opens new paths for 
development of technology.
Quantum channels, which describe any possible evolution
of a quantum state, play an important role in possible applications.
Quantum channels describe decoherence caused by the interaction with an environment.
Knowledge of their properties allows one to choose 
the most efficient quantum protocols for a given purpose. 
Theoretical investigations
uncover new possibilities, new laws and fundamental restrictions on
processing of quantum information. 

The classical theory of information began with investigations on communication
in a given language through given technological tools. 
However, very fast, the laws of information became treated
as fundamental properties of nature.  
Therefore, studies in the field of quantum information are so exciting.

%\label{app:scripts}
%Blah blah appendix content blah blah blah.
%\appendix
%\chapter{}\nonumber
%
\newpage
%\addcontentsline{toc}{chapter}{Appendix 1}
{\Large\textbf{Appendix 1}}

\medskip\medskip\medskip
In Appendix we analyze ensembles of three
one--qubit states $\{\rho_1=|\phi_1\>\<\phi_1|,\rho_2=|\phi_2\>\<\phi_2|,\rho_3\}$
and provide calculations related to Fig. \ref{fig:para} necessary to 
prove Lemma \ref{zdzichzdzich} in Section \ref{sadelko}.

The Bloch vector characterizing the average states can be given by
\begin{equation}
\vec{OA}=a(0,0,1).
\end{equation}
The Bloch vector representing the mixed state $\rho_3$ is
 parameterized by an angle $\alpha$ 
\begin{equation}
\vec{OB}=b(0,\sin{\alpha},\cos{\alpha}).
\end{equation}
The vector $\vec{OC}$ is chosen in such a way that the ratio $|CA|:|AB|$ is $1:2$. Therefore one has
\begin{equation}
\vec{OC}=\frac{1}{2}(3\vec{OA}-\vec{OB}).
\end{equation}
The point $C$ is in the center of the interval $DE$, between two pure states $|\phi_1\>$ and $|\phi_2\>$  
characterized 
by the points $D$ and $E$. Both vectors $\vec{OD}$ 
and $\vec{OE}$ form with vector $\vec{OC}$ the angle $\gamma$ so that 
\begin{equation}
 \cos{\gamma}=|OC|.
\end{equation}
This is in turn the square root of the fidelity $|\<\phi_1|\phi_2\>|^2$,
because the angle $\gamma$ is half of the angle between two pure states,
\begin{equation}
F_{23}=\cos^2{\gamma}.
\end{equation}
The fidelity between two one--qubit states represented by Bloch vectors $\vec{x}$ and $\vec{y}$ reads
\begin{equation}
F=\frac{1}{2}(1+\vec{x}\cdot\vec{y}).
\label{ppmqufid}
\end{equation}
The scalar product of $\vec{OB}$ and $\vec{OD}$ is equal to:
\begin{equation}
\vec{OB}\cdot\vec{OD}=b\cos{(\mu+\gamma-\gamma)}=b\Big[\cos{(\mu+\gamma)}\cos{\gamma}+\sin{(\mu+\gamma)}\sin{\gamma}\Big].
\end{equation}
Hence
\begin{equation}
F_{12}=\frac{1}{2}(1+\vec{OC}\cdot\vec{OB}+b\sqrt{1-\frac{(\vec{OC}\cdot\vec{OB})^2}{b^2\vec{OC}\cdot\vec{OC}}}\sqrt{1-\vec{OC}\cdot\vec{OC}}).
\end{equation}
The third fidelity $F_{13}$ can by obtained using Lemma \ref{ppmlem1}. %see page \ref{}.
For $\beta=0$ the product of three fidelities used in Lemma \ref{zdzichzdzich} 
is a function $f_0(a,b,\alpha,\beta=0)$ given by
\begin{eqnarray}
&f_0(a,b,\alpha,\beta=0)=F_{12}F_{13}F_{23}=\\
&\frac{1}{64} \left(9 a^2-6 b \cos\alpha
   a+b^2\right) \left(b^2-3 a \cos\alpha
   b-2\right)^2\\
&+\frac{1}{64} \left(9 a^2 b^2 \left(9 a^2-6 b \cos\alpha a+b^2-4\right) \sin^2\alpha\right).
\label{ppmpro}
\end{eqnarray}

\quad \\

%\appendix
%\addtocontents{toc}{\protect\contentsline{chapter} {Appendix}{}}

%\chapter{Deriving the Einstein equations for the .... Spacetimes}

\newpage
%\addcontentsline{toc}{chapter}{Appendix 2}
{\Large\textbf{Appendix 2}}
%{\Large\textbf{Appendix 2}}

\medskip\medskip\medskip

In this appendix we present computations necessary to prove
Lemma \ref{ppmlem2}. 
It is convenient to change the basis such that the vector $\vec{OB}$ (see Fig. \ref{fig:para}) is transformed into
\begin{equation}
\vec{OB}'=b(0,0,1).
\end{equation}
Denote the angle $\nu:=\mu+\gamma$, where $\mu$ is the angle between $\vec{OB}$ and $\vec{OD}$. 
The vectors $\vec{OD}$ and $\vec{OE}$ 
in the new basis can be obtained by rotating the state $(0,0,1)$ 
around the axis $x$ by angles:
\begin{eqnarray}
\label{ppmvec}
\vec{OD}'&=&U_x(\mu)(0,0,1),\\
\vec{OE}'&=&U_x(\mu+2\gamma)(0,0,1).
\end{eqnarray}
The vectors $\vec{OG}'$ and $\vec{OF}'$ are obtained by rotating 
the above vectors around the axis $U_x(\nu)(0,0,1)$ by angle $\beta$. 
Such a rotation can be defined as an action of a unitary matrix $U$ on vectors (\ref{ppmvec}). 
The unitary matrix is given by
\begin{equation}
U=U_z(-\frac{\pi}{2})\,U_y(\nu)\,U_z(\beta)\,U^{\dagger}_y(\nu)\,U^{\dagger}_z(-\frac{\pi}{2}),
\end{equation}
where the rotation matrices read
\begin{eqnarray*}
&\ &U_x(\alpha)=
\begin{pmatrix}
1 & 0 & 0\\
0 & \cos{\alpha} & -\sin{\alpha}\\
0 & \sin{\alpha} & \cos{\alpha}
\end{pmatrix},\quad
U_y(\alpha)=
\begin{pmatrix}
\cos{\alpha} & 0 & \sin{\alpha}\\
0 & 1 & 0\\
 -\sin{\alpha} & 0 & \cos{\alpha}
\end{pmatrix},\\
&\ &U_z(\alpha)=
\begin{pmatrix}
\cos{\alpha} & -\sin{\alpha} & 0\\
\sin{\alpha} & \cos{\alpha} & 0\\
0 & 0 & 1
\end{pmatrix}.
\end{eqnarray*}

One can use formula (\ref{ppmqufid}) to calculate the product of three fidelities for 
three considered states, $\vec{OB}', \vec{OG}'$ and $\vec{OF}'$ as a function of the angle $\beta$. 
\begin{eqnarray}
 f&=&F_{12}F_{13}F_{23}\\
&=&\!\!\!\!\!\!\frac{1}{16} \cos^2\gamma \left((\cos\mu+\cos
   (2 \gamma +\mu )+2)^2-\cos ^2\beta (\cos\mu-\cos (2 \gamma +\mu ))^2\right).\nonumber
\end{eqnarray}
The product of three pairwise fidelities attains its minimum at $\beta=0$ 
as stated in Lemma \ref{ppmlem2}.

\newpage

\end{document}